\documentclass[numberwithinsect,cleveref,thm-restate]{no-lipics-v2019}
\usepackage{bm}

\SetKwComment{Comment}{$\triangleright$\ \rm}{}

\usetikzlibrary{arrows,arrows.meta,shapes.misc, decorations.pathreplacing}
\tikzset{dotmark/.style={circle,fill,inner sep=1.5pt}}
\tikzset{emptymark/.style={circle,draw,fill=white,inner sep=1.5pt}}
\tikzset{crossmark/.style={thick,inner sep=1.5pt}}

\newcommand{\Oh}{\mathcal{O}}
\newcommand{\tOh}{{\tilde{\Oh}}}

\newcommand{\eps}{\varepsilon}
\newcommand{\lcp}{\mathsf{lcp}}

\newcommand{\cO}{\mathcal{O}}

\newcommand{\ceil}[1]{\lceil #1 \rceil}
\newcommand{\floor}[1]{\lfloor #1 \rfloor}
\newcommand{\per}{\operatorname{per}}
\newcommand{\rot}{\operatorname{rot}}

\newcommand{\hd}{\delta_H}
\newcommand{\OccEx}{\mathrm{Occ}}
\newcommand{\Occ}{\mathrm{Occ}^H}
\newcommand{\OccE}{\mathrm{Occ}^E}
\newcommand{\MIS}{\mathrm{Mis}}
\newcommand{\ed}{\delta_E}

\newcommand{\G}{\mathcal{G}}

\newcommand{\X}{\mathcal{X}}

\newcommand{\gen}{\textsf{gen}}
\newcommand{\val}{\textsf{val}}
\newcommand{\Tr}{\mathsf{PT}}

\newcommand{\sub}{\subseteq}

\newcommand{\edl}[2]{{\delta_E}(#1,{}^*\!#2^*)}
\newcommand{\eds}[2]{{\delta_E}(#1,{}^*\!#2)}

\def\modelname{{\tt PILLAR}\xspace}
\def\lceOp#1#2{{\tt LCP}(#1, #2)}
\def\lcbOp#1#2{{\tt LCP}^R(#1, #2)}
\def\ipmOp#1#2{{\tt IPM}(#1, #2)}
\def\perOp#1{{\tt Period}(#1)}

\def\accOp#1#2{#1\position{#2}}
\def\ipmOpName{{\tt IPM}\xspace}
\def\misOpName{{\tt Mismatches}\xspace}

\def\misOpLName{{\tt MismGenerator}\xspace}
\def\mibOpLName{{\tt MismGenerator$^R$}\xspace}
\def\accOpName{{\tt Access}\xspace}
\def\extractOpName{{\tt Extract}\xspace}
\def\lenOpName{{\tt Length}\xspace}
\def\perOpName{{\tt Period}\xspace}
\def\lceOpName{{\tt LCP}\xspace}
\def\lcbOpName{{\tt LCP$^R$}\xspace}
\def\cycEqOpName{{\tt Rotations}\xspace}

\def\misOp#1#2{{\tt Mismatches}(#1,#2)}

\def\misOpL#1#2{{\tt MismGenerator}(#1,{#2}^*)}
\def\mibOpL#1#2{{\tt MismGenerator}^R(#1,{}^*\!{#2})}
\def\cycEqOp#1#2{{\tt Rotations}(#1,#2)}

\newcommand{\concat}{\ensuremath{\mathtt{concat}}}
\newcommand{\makestring}{\ensuremath{\mathtt{makestring}}}
\newcommand{\splitOp}{\ensuremath{\mathtt{split}}}

\DeclareMathOperator*{\polylog}{polylog}

\def\substr{\ensuremath \preccurlyeq}
\def\fragmentco#1#2{\bm{[}\,#1\,\bm{.\,.}\,#2\,\bm{)}}
\def\fragmentoc#1#2{\bm{(}\,#1\,\bm{.\,.}\,#2\,\bm{]}}
\def\fragmentoo#1#2{\bm{(}\,#1\,\bm{.\,.}\,#2\,\bm{)}}
\def\fragment#1#2{\bm{[}\,#1\,\bm{.\,.}\,#2\,\bm{]}}
\def\position#1{\bm{[}\,#1\,\bm{]}}

\makeatother

\title{Faster Approximate Pattern Matching:\texorpdfstring{\\}{}
A Unified Approach}
\titlerunning{Faster Approximate Pattern Matching: A Unified Approach}

\author{Panagiotis Charalampopoulos}{Department of~Informatics, King's College London,
United Kingdom \and Institute of~Informatics, University of~Warsaw, Poland}{panagiotis.charalampopoulos@kcl.ac.uk}{https://orcid.org/0000-0002-6024-1557}{Partially supported by ERC grant TOTAL  under the
European Union’s Horizon 2020 Research and Innovation Programme (agreement no. 677651).}
\author{Tomasz Kociumaka}{Department of Computer Science, Bar-Ilan University, Ramat Gan, Israel
\and University of California, Berkeley, U.S.}{kociumaka@berkeley.edu}{https://orcid.org/0000-0002-2477-1702}{Supported by ISF grants no. 1278/16 and 1926/19, by a BSF grant no. 2018364, and by an ERC grant MPM under the EU's Horizon 2020 Research and Innovation Programme (agreement no. 683064).}
\author{Philip Wellnitz}{Max Planck Institute for Informatics,
    Saarland Informatics Campus (SIC),
Saarbrücken, Germany}{wellnitz@mpi-inf.mpg.de}{https://orcid.org/0000-0002-6482-8478}{}

\authorrunning{P. Charalampopoulos, T. Kociumaka, and P. Wellnitz}

\Copyright{Panagiotis Charalampopoulos, Tomasz Kociumaka, and Philip Wellnitz}



\nolinenumbers



\usepackage{pgf}
\usepgflibrary{fpu}

\makeatletter
\newcommand{\eviltwo}[1]{
    \pgfkeys{/pgf/fpu}
    \pgfkeys{/pgf/fpu/output format=float}
    \pgfmathsetmacro{\@Res}{#1}%
    \pgfmathfloattoint{\@Res}
    \pgfkeys{/pgf/fpu=false}%
}
\makeatother

\def\alphav{128}
\def\betav{8}
\def\deltavN{3}
\def\deltavD{8}
\pgfmathsetmacro{\ubjNu}{int(\betav*\deltavN+2*\deltavD)}
\pgfmathsetmacro{\ubjDu}{int(\betav*\deltavD)}
\pgfmathsetmacro{\ubjNr}{int(\ubjNu/gcd(\ubjNu,\ubjDu))}
\pgfmathsetmacro{\ubjDr}{int(\ubjDu/gcd(\ubjNu,\ubjDu))}
\def\ubjv{\ubjNr/\ubjDr\cdot}
\pgfmathsetmacro{\alphavd}{int(2*\alphav)}
\def\gammav{\alphavd}
\pgfmathsetmacro{\betavh}{int(\betav/2)}

\pgfmathsetmacro{\gammap}{int(12*\betav)}
\pgfmathsetmacro{\tbetav}{int(2*\betav)}

\pgfmathsetmacro{\ialphavbh}{int(2*\alphav/\betav)}

\pgfmathsetmacro{\alphavq}{int(\alphav/4)}
\pgfmathsetmacro{\alphavh}{int(\alphav/2)}
\pgfmathsetmacro{\alphavb}{int(\alphav/\betav)}
\def\threehalfs{{}^3{\mskip -4mu/\mskip -3.5mu}_2\,}
\pgfmathsetmacro{\gbetavh}{int(12 * \tbetav)}

\pgfmathsetmacro{\gammabv}{int(\gbetavh * \betavh*\deltavN / (\betavh*\deltavN - \deltavD))}

\pgfmathsetmacro{\thmbound}{int(max(\gammap, max(\gammabv, \gammav)))}
\def\thmboundt{\alphav}

\def\qvarphiv{152}

\pgfmathsetmacro{\pvarphiv}{int(2 * \qvarphiv)}
\pgfmathsetmacro{\varphiv}{int(4 * \pvarphiv)}

\pgfmathsetmacro{\alphavt}{int(4 * \alphav)}
\pgfmathsetmacro{\alphavdt}{int(4 * \alphavd)}
\pgfmathsetmacro{\varphibq}{int(8 * \varphiv)}
\eviltwo{\tbetav * \varphiv}
\let\varphivb\pgfmathresult

\pgfmathsetmacro{\betavpf}{int(3+ \betav)}

\eviltwo{\tbetav * \tbetav * \varphiv}

\eviltwo{\betavpf * \varphivb }
\let\varphivbq\pgfmathresult

\eviltwo{\varphivbq * \betavh * (\deltavN/(\deltavN*\betavh-\deltavD))}
\let\varphig\pgfmathresult

\eviltwo{\tbetav * \varphiv}

\eviltwo{9 * \varphibq}
\let\gammaEp\pgfmathresult

\eviltwo{max(\alphavdt, max(\gammaEp, \varphig))}
\let\Ethmboundt\pgfmathresult

\begin{document}
\maketitle

\begin{abstract}
    In the approximate pattern matching problem, given a
    text $T$, a pattern $P$, and a threshold $k$, the task is to find (the starting positions of) all
    substrings of~$T$ that are at distance at most $k$ from $P$.
    We consider the two most fundamental string metrics:
    Under the \emph{Hamming distance}, we search for substrings of~$T$ that have at most $k$
    \emph{mismatches} with~$P$, while under the \emph{edit distance}, we search for substrings of~$T$
    that can be transformed to $P$ with at most $k$ \emph{edits}.

    Exact occurrences of~$P$ in $T$ have a very simple structure: If we assume for simplicity
     that $|P|<|T|
    \le \threehalfs |P|$ and that $P$ occurs both as a prefix and as a suffix of $T$, then
    both $P$ and $T$ are periodic with a common period.
    However, an analogous characterization for occurrences
    with up to $k$ mismatches was proved only recently by Bringmann~et~al.\ [SODA'19]:
    Either there are $\Oh(k^2)$ $k$-mismatch occurrences of~$P$ in~$T$,
    or both $P$ and $T$ are at Hamming distance
    $\Oh(k)$ from strings with a common string period of length $\Oh(m/k)$.
    We tighten this characterization by showing that there are $\Oh(k)$ $k$-mismatch
    occurrences in the non-periodic case,
    and we lift it to the edit distance setting,
    where we tightly bound the number of~$k$-error occurrences by $\Oh(k^2)$ in the
    non-periodic case.
    Our proofs are constructive and let us obtain a
    unified framework for approximate pattern matching with respect to both considered distances.
    In particular, we provide meta-algorithms that only rely on a small set of
    primitive operations.
    We showcase the generality of~our meta-algorithms with results for the
    following settings:
    \begin{itemize}
        \item The \emph{fully compressed} setting, where
            both $T$ and $P$ are given as straight-line programs
            of~sizes $n$ and $m$, respectively.
            Here, we obtain an $\tOh((n + m) k^2)$-time
            and an $\tOh((n + m) k^4)$-time
            algorithm for pattern matching with mismatches and edits, respectively.
            Note that while our algorithms are the first
            to work in the fully compressed setting (that is, without first decompressing the input),
            they also improve the state of the art for the setting where only
            the text is compressed: For pattern matching with mismatches, we improve
            the dependency on $k$ from $\tOh((n + |P|) k^4)$ [Bringmann~et~al.
            SODA'19]; for pattern matching with edits, we improve the overall running
            time from $\tOh(n\sqrt{|P|}\,k^3)$ [Gawrychowski, Straszak, ISAAC'13].
        \item The \emph{dynamic setting}, where we maintain a collection of~strings
            $\mathcal{X}$ of~total length~$N$ using the data structure of~Gawrychowski
            et~al.\ [SODA'18], which supports each of~the operations ``split'',
            ``concatenate'' and ``insert a length-$1$ string'' in
            $\Oh(\log N)$ time with high probability.
            Here, for any two strings $T, P \in \mathcal{X}$, we can compute all
            occurrences of~$P$ in $T$ with up to $k$ mismatches in time $\tOh(|T|/|P|\cdot k^2)$
            or up to $k$ edits in time $\tOh(|T|/|P|\cdot k^4)$.
        \item The \emph{standard setting}, where $T$ and $P$ are given explicitly.
            Here, we obtain an $\Oh(|T| + |T|/|P|\cdot k^2\log\log k)$-time algorithm for the Hamming
            distance case (improving $\polylog |T|$ factors compared to the deterministic algorithm by Clifford et~al.\ [SODA'18] and matching, up to the $\log\log k$ factor, the randomized algorithm by Chan et al.\ [STOC'20], the state of the art for $k\le \sqrt{|P|}$),
            and an $\Oh(|T| + |T|/|P|\cdot k^4)$-time algorithm for the edit
            distance case (matching the algorithm by Cole and Hariharan [J.~Comput.'02],
            the state of the art for $k\le \sqrt[3]{|P|}$).
    \end{itemize}
\end{abstract}

\clearpage
\section{Introduction}

The \emph{pattern matching} problem is perhaps the most fundamental problem on strings:
Given a pattern $P$ and a text $T$, the task is to find all occurrences of $P$ in $T$.
However, in most applications, finding
all \emph{exact} occurrences of~a pattern is not enough: Think of~human spelling
mistakes or DNA sequencing errors, for example. In this work, we focus on
\emph{approximate} pattern matching, where we are interested in finding substrings of~the
text that are ``similar'' to the pattern.
While various similarity measures are imaginable, we study the two
most commonly encountered metrics in this context: the \emph{Hamming distance}
and the \emph{edit distance}.

\paragraph*{Hamming Distance}
Recall that the Hamming distance of~two
(equal-length) strings is the number of~positions where the strings differ.
Now, given a text $T$ of~length $n$, a~pattern $P$ of~length $m$, and an integer threshold $k>0$,
we want to compute the \emph{$k$-mismatch occurrences} of~$P$ in $T$,
that is, all length-$m$ substrings of~$T$ that are at Hamming distance at most
$k$ from $P$.
This \emph{pattern matching with mismatches} problem has been extensively studied.
In the late 1980s, Abrahamson~\cite{Abrahamson} and Kosaraju~\cite{Kosaraju} independently
proposed an FFT-based $\cO(n\sqrt{m \log m})$-time algorithm for computing the Hamming
distance of~$P$ and all the length-$m$ fragments of~$T$.
While their algorithms can be used to solve the pattern matching with mismatches problem,
the first algorithm to benefit from the threshold $k$ was given by
Landau and Vishkin~\cite{LandauV86} and slightly improved by Galil and Giancarlo~\cite{GG86}: Based on so-called ``kangaroo
jumping'', they obtained an $\Oh(nk)$-time algorithm, which is faster than $\cO(n\sqrt{m
\log m})$ even for moderately large $k$.
Amir et al.~\cite{AmirLP04} developed two algorithms with running time
$\cO(n\sqrt{k \log k})$ and $\tOh(n+k^3 n/m)$, respectively;
the latter algorithm was then improved upon by Clifford et
al.~\cite{CliffordFPSS16}, who presented an $\tOh(n+k^2 n/m)$-time solution.
Subsequently, Gawrychowski and Uznański~\cite{GawrychowskiU18} provided a smooth trade-off between
the running times $\tOh(n\sqrt{k})$ and $\tOh(n+k^2 n/m)$ by designing an
$\tOh(n+ kn/\!\sqrt{m})$-time algorithm.
Very recently, Chan et al.~\cite{cgkkp20} removed most of the $\polylog n$ factors in the
latter solution at the cost of (Monte-Carlo) randomization.
Furthermore, Gawrychowski and Uznański~\cite{GawrychowskiU18} showed that a significantly
faster ``combinatorial'' algorithm would
have (unexpected) consequences for the complexity of~Boolean matrix multiplication.
Pattern matching with mismatches on strings is thus well understood in the standard setting.
Nevertheless, in the settings where the strings are not given explicitly,
a similar understanding is yet to be obtained.
One of~the main contributions of~this work is to improve the upper bounds for two
such settings, obtaining algorithms with running times analogous to the algorithm
of Clifford et al.~\cite{CliffordFPSS16}.

\paragraph*{Edit Distance}
Recall that the edit distance (also known as the Levenshtein distance) of two strings $S$ and $T$ is the minimum number of~edits required to transform $S$ into $T$.
Here, an edit
is an insertion, a substitution, or a deletion of a single character.
In the \emph{pattern matching with edits} problem, we are given a text $T$, a pattern $P$,
and an integer threshold $k>0$, and the task is to find the starting positions of all the \emph{$k$-edit} (or \emph{$k$-error}) \emph{occurrences}
of~$P$ in $T$. Formally, we are to find all positions $i$ in $T$ such that the edit distance between ${T\fragment{i}{j}}$ and $P$ is at most $k$ for some position $j$.
Again, a classic algorithm by Landau and Vishkin~\cite{LandauV89} runs in $\Oh(nk)$ time.
Subsequent research~\cite{SV96,ColeH98} resulted in an $\cO(n+k^4 n/m)$-time algorithm (which is faster for $k\le \sqrt[3]{m}$).
From a lower-bound perspective, we can benefit from the discovery that the classic quadratic-time
algorithm for computing the edit distance of two strings is essentially optimal:
Backurs and Indyk~\cite{bi18} recently proved that a significantly faster algorithm would
yield a major breakthrough for the satisfiability problem.
For pattern matching with edits, this means that there is no hope for an
algorithm that is significantly faster than $\Oh(n + k^2n/m)$; however,
apart from that ``trivial'' lower bound and the 20-year-old conjecture of~Cole and
Hariharan~\cite{ColeH98} that an $\Oh(n + k^3n/m)$-time algorithm \emph{should be possible},
nothing is known that would close this gap.
While we do not manage to tighten this gap, we do believe that
the structural insights we obtain may be useful for
doing so. What we do manage, however, is to significantly improve the running time of~the
known algorithms in two settings where $T$ and $P$ are not given explicitly,
thereby obtaining running times that can be seen as analogous to the running time of~Cole
and Hariharan's algorithm~\cite{ColeH98}.

\paragraph*{Grammar Compression}

One of~the settings that we consider in this paper is the
\emph{fully compressed} setting, where both the text~$T$ and the pattern $P$
are given as straight-line programs. Compressing the text and the pattern is, in general, a natural
thing to do---think of~huge natural-language texts or genomic databases, which are easily compressible.
While one approach to solve pattern matching in the fully compressed setting is to first
decompress the strings and then run an algorithm for the standard setting, this voids most benefits
of compression in the first place. Hence, there has been a long line of~research
with the goal of designing text algorithms directly operating on compressed strings.
Naturally, such algorithms highly depend on the chosen compression method.
In this work, we consider \emph{grammar compression},
where a string $T$ is represented using a context-free grammar that generates the singleton language $\{T\}$;
without loss of generality, such a grammar is a \emph{straight-line program} (SLP).

Straight-line programs are popular due to mathematical
elegance and equivalence~\cite{r03,KP18,KK20} (up to logarithmic factors and moderate constants) to
widely-used dictionary compression schemes,
including the LZ77 parsing~\cite{lz77} and the run-length-encoded Burrows--Wheeler transform~\cite{BWT}.
Many more schemes, such as
byte-pair encoding~\cite{shishia99}, Re-Pair~\cite{LM00}, Sequitur~\cite{nw97}, and further members of~the Lempel--Ziv family~%
\cite{lz78, w84}, to name but a few, can be expressed as straight-line programs.
We refer an interested reader to~\cite{r04,ab10, l12, sa14}
to learn more about grammar compression.

Working directly with a compressed representation of~a text, intuitively at
least, seems to be hard in general---in fact, Abboud et al.~\cite{abbk17} showed that, for some problems,
decompress-and-solve is the best we can hope for, under some reasonable assumptions from
fine-grained complexity theory. Nevertheless, Jeż~\cite{talg/Jez15} managed to prove
that exact pattern matching can be solved on grammar-compressed strings in near-linear
time: Given an SLP of~size $n$ representing a string $T$ and an SLP of~size
$m$ representing a string~$P$, we can find all exact occurrences of~$P$ in $T$
in $\Oh((n + m)\log |P|)$ time.
For fully compressed \emph{approximate} pattern matching, no such near-linear time
algorithm is known, though. While the $\tOh((n + |P|) k^4)$-time algorithm by Bringmann et al.~\cite{bkw19}
for pattern matching with mismatches comes close, it works in an easier setting where only the text is compressed.
We fill this void by providing the first algorithm for fully compressed pattern matching
with mismatches that runs in near-linear time.
Denote by $\Occ_k(P, T)$ the set of (starting positions of) $k$-mismatch occurrences
of~$P$ in $T$; then, our result reads as follows.

\begin{restatable}{mtheorem}{gchdalgmain}\label{gc_hd_alg_intro}
    Let $\G_T$ denote an SLP of~size~$n$ generating a text~$T$,
    let~$\G_P$ denote an SLP of~size~$m$ generating a pattern~$P$,
    let $k>0$ denote an integer threshold,
    and set $N := |T| + |P|$.

    Then, we can compute $|\Occ_k(P, T)|$ in time
    $\Oh(m\log N + n\, k^2 \log^2 N \log\log N)$.
    The elements of $\Occ_k(P, T)$ can be reported within $\Oh(|\Occ_k(P, T)|)$ extra
    time.\lipicsEnd
\end{restatable}

\noindent For pattern matching with edits, near-linear time
algorithms are not known even in the case that the pattern is given explicitly.
Currently, the best pattern matching algorithms on an SLP-compressed text run in
time $\Oh(n|P|\log
|P|)$~\cite{t14} and $\Oh(n(\min\{|P|k, k^4 + |P|\} + \log |T|))$~\cite{BilleLRSSW15}. Moreover, an $\tOh(n\sqrt{|P|} k^3)$-time solution~\cite{DBLP:conf/isaac/GawrychowskiS13}
is known for (weaker) LZW compression~\cite{w84}.
 Again, we obtain a near-linear time algorithm for fully compressed
pattern matching with edits.
Denote by $\OccE_k(P, T)$ the set of~all
starting positions of~$k$-error occurrences of~$P$ in $T$; then, our result reads as
follows.

\begin{restatable}{mtheorem}{gcedalgmain}\label{gc_ed_alg_intro}
    Let $\G_T$ denote an SLP of~size~$n$ generating a string~$T$,
    let~$\G_P$ denote an SLP of~size~$m$ generating a string~$P$,
    let $k>0$ denote an integer threshold,
    and set $N := |T| + |P|$.

    Then, we can compute $|\OccE_k(P, T)|$ in time
    $\Oh(m\log N + n\, k^4 \log^2 N \log\log N)$.
    The elements of $\OccE_k(P, T)$ can be reported within $\Oh(|\OccE_k(P, T)|)$ extra time.\lipicsEnd
\end{restatable}
Note that our algorithms also improve the state of the art when the pattern is given in an
uncompressed form; this is because any string $P$ admits a trivial SLP of size $\Oh(|P|)$.

\paragraph*{Dynamic Strings}
While compression handles large static data sets,
a different approach is available if the data changes
frequently.
Several works on pattern matching in dynamic strings considered the indexing problem,
assuming that the text is maintained subject to updates and the pattern is given explicitly at query time;
we refer an interested reader to~\cite{Gu94,FG98,SV96,abr00,nii20} and references therein.

Recently, Clifford et al.~\cite{CGLS18} considered the problem of maintaining a data
structure for a text $T$ and a pattern~$P$,
both of which undergo character substitutions, in order to be able to efficiently compute
the Hamming distance between $P$ and any given fragment of $T$.
Among other results, for the case where $|T|\leq 2|P|$ and constant alphabet size, they
presented a data structure with $\cO(\sqrt{m \log m})$
time per operation, and they proved that, conditioned on the Online Boolean Matrix-Vector
Multiplication (OMv) Conjecture~\cite{HKNS15}, one cannot simultaneously achieve
$\cO(m^{1/2-\varepsilon})$ for the query and update time for any constant
$\varepsilon>0$.

We consider the following more general setting: We maintain an initially empty
collection of strings $\X$ that can be modified via the following
``update'' operations:
\begin{itemize}
    \item $\makestring(U)$: Insert a string $U$ to $\X$.
    \item $\concat(U,V)$: Insert $UV$ to $\X$, for $U,V \in \X$.
    \item $\splitOp(U,i)$: Insert $U\fragmentco{0}{i}$ and $U\fragmentco{i}{|U|}$ in $\X$,
        for $U \in \X$ and $i \in \fragmentco{1}{|U|}$.
\end{itemize}
The strings in $\X$ are \emph{persistent}, meaning that \concat{} and \splitOp{} do not destroy their
arguments.

The main goal in this model (and for the dynamic setting in general) is to provide
algorithms that are faster than recomputing the answer from scratch after every update.
Specifically for dynamic strings, already the task of testing equality of strings in $\X$ is challenging.
After a long line of research~\cite{st94,ksu97,abr00}, Gawrychowski et al.~\cite{ods} proved the following:
There is a data structure that supports equality queries in $\Oh(1)$ time, while each of~the update operations takes
$\Oh(\log N)$ time, where $N$ is an upper bound on the total length of~all strings in
$\X$.\footnote{Strictly speaking, $\makestring(U)$ costs
$\Oh(|U| + \log N)$ time.}
In fact, as shown in~\cite{ods},
one can also support $\Oh(1)$-time queries for the longest common prefix of two strings in $\X$ with no increase in the update times.
The data structure of~\cite{ods} is Las-Vegas randomized:
the answers are correct, but the update times are guaranteed only with high probability.
Randomization can be avoided at the cost of extra logarithmic factors in the running times (see~\cite{ksu97,nii20}),
and the same is true for our results.

We extend the data structure of~Gawrychowski et al.~\cite{ods} with approximate pattern
matching queries:
\begin{restatable}{mtheorem}{dynalgmain}\label{thm:dynalgmain}
    A collection $\X$ of non-empty persistent strings of~total length $N$ can be
    maintained subject to
    $\makestring(U)$, $\concat(U,V)$, and $\splitOp(U,i)$ operations requiring $\cO(\log N +|U|)$,
    $\cO(\log N)$, and $\cO(\log N)$ time, respectively, so that given two strings $P,T \in \X$
    with $|P|=m$ and $|T|=n$ and an integer threshold $k>0$, we can compute $|\Occ_k(P, T)|$ in
    time $\Oh(n/m \cdot k^2 \log^2 N)$ and $|\OccE_k(P, T)|$ in time $\Oh(n/m \cdot k^4 \log^2 N)$.\footnote{All running time bounds hold with high probability (i.e., $1-N^{\Omega(1)}$).}
    The elements of $\Occ_k(P, T)$ and $\OccE_k(P, T)$ can be reported in $\Oh(|\Occ_k(P, T)|)$ and $\Oh(|\OccE_k(P, T)|)$ extra time, respectively.
    \lipicsEnd
\end{restatable}

In the Hamming distance case, for $k < \sqrt{m}/\log N$, the data structure of \cref{thm:dynalgmain}
is faster than recomputing the occurrences from scratch after each update:
Recall that, in the standard setting, the fastest known algorithm for pattern matching with mismatches
costs $\tOh(n + kn/\sqrt{m}) = \tOh(n + k\sqrt{m}\cdot n/m)$ time;
in particular, the additive $\tOh(n)$ term dominates the time complexity for the considered parameter range.
Observe further that, for any~$k$, \cref{thm:dynalgmain} is not slower (ignoring ${\polylog}$ factors) than running the $\tOh(n+k^2 n/m)$-time algorithm by Clifford~et~al.~\cite{CliffordFPSS16} after every update.

In the edit distance case, for $k < (m/\log^2N)^{1/3}$, the data structure of \cref{thm:dynalgmain}
is faster than running the $\Oh(nk)$-time Landau--Vishkin algorithm for the standard
setting. Note further that, for any $k$,
\cref{thm:dynalgmain} is not slower (ignoring $ \polylog N$ factors) than running  after every update the $\Oh(n+k^4 n/m)$-time
Cole--Hariharan algorithm, whose bottleneck for $k<m^{1/4}$ is the additive $\Oh(n)$ term.

\paragraph*{Structure of~Pattern Matching with Mismatches}

As in~\cite{bkw19}, we obtain our algorithms by exploiting new structural insights for
approximate pattern matching.
Our contribution in this area is two-fold:
We strengthen the structural result of~\cite{bkw19} for pattern matching with mismatches,
and we prove a similar result for pattern matching with edits.

Before we describe our new structural insights, let us recall the structure of
exact pattern matching. Let $P$ denote a pattern of~length $m$ and let $T$ denote a text
of~length $n \le \threehalfs m$. Assume that $T$ is trimmed so that $P$ occurs both at the beginning and
at the end of~$T$, that is, $P=T{\fragmentco{0}{m}} = T\fragmentco{n - m}{n}$.
By the length constraints, we have $P{\fragmentco{n - m}
{m}} = P\fragmentco{0}{2m - n}$. Repeating this argument for the overlapping parts of~$P$,
we obtain the following well-known characterization (where $X^\infty$ denotes the concatenation
of infinitely many copies of~a string $X$):

\begin{fact}[folklore~\cite{BG95}]\label{ft:per}
    Let $P$ denote a pattern of~length $m$ and let $T$ denote a text of~length $n \le
    \threehalfs m$.
    If $T{\fragmentco{0}{m}} = T\fragmentco{n-m}{m} = P$, then there is a
    string $Q$ such that $P = Q^\infty\fragmentco{0}{m}$ and $T =
    Q^\infty\fragmentco{0}{n}$, that is, both the text and the pattern are periodic with a common string period~$Q$,
    and the starting positions of~all exact occurrences of~$P$ in $T$ form an arithmetic
    progression with difference $|Q|$.
    \lipicsEnd
\end{fact}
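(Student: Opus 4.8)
The plan is to bootstrap periodicity from the two assumed occurrences of~$P$ and then argue that this already forces a fully synchronized structure on~$T$. Write $d := n - m$ for the shift between the prefix occurrence $T\fragmentco{0}{m}=P$ and the suffix occurrence $T\fragmentco{n-m}{n}=P$; since $n \le \threehalfs m$ we have $0 \le d \le m/2$, so in particular $d < m$ and $m \ge 2d$. Comparing the two occurrences on the overlapping range of positions $[d\dd m)$ of~$T$ gives $P\position{t} = T\position{t} = P\position{t-d}$ for every $t \in [d\dd m)$, that is, $P\position{i} = P\position{i+d}$ for all $i \in [0\dd m-d)$; this is precisely the statement that $d$ is a period of~$P$, equivalently $P = (P\fragmentco{0}{d})^\infty\fragmentco{0}{m}$.

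Next I would pass to a primitive period. Let $Q$ be the primitive root of~$P\fragmentco{0}{d}$, so that $P\fragmentco{0}{d} = Q^{d/|Q|}$ and, in particular, $|Q|$ divides~$d$. Then $(P\fragmentco{0}{d})^\infty = Q^\infty$, so the previous paragraph yields $P = Q^\infty\fragmentco{0}{m}$. It remains to lift this to~$T$: for $t \in [0\dd m)$ we have $T\position{t} = P\position{t} = Q^\infty\position{t}$, while for $t \in [m\dd n)$ the suffix occurrence gives $T\position{t} = P\position{t-d}$ (note $d \le t-d < m$ here, using $m \ge 2d$ and $t < m+d$), which equals $Q^\infty\position{t-d} = Q^\infty\position{t}$ because $|Q| \mid d$. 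Hence $T = Q^\infty\fragmentco{0}{n}$, so $P$ and $T$ are periodic with the common string period~$Q$.

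For the occurrences, a position~$p$ hosts an exact occurrence of~$P$ in~$T$ iff $0 \le p \le n-m = d$ and $Q^\infty\fragmentco{p}{p+m} = Q^\infty\fragmentco{0}{m}$. Since $m \ge |Q|$, a length-$m$ window of~$Q^\infty$ covers every residue modulo~$|Q|$, so this equality is equivalent to $Q$ being fixed by the cyclic rotation by $p \bmod |Q|$ positions; primitivity of~$Q$ forces $p \equiv 0 \pmod{|Q|}$, and conversely every such~$p$ clearly works. As $|Q|$ divides~$d$, the set of occurrence positions is exactly $\{0, |Q|, 2|Q|, \dots, d\}$, an arithmetic progression with common difference~$|Q|$.

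I expect the only genuine subtlety to be this last step: the rotation argument really needs $Q$ to be \emph{primitive} (for a non-primitive choice of~$Q$ the progression's difference would drop to the primitive period of~$Q$), and it needs $m \ge |Q|$ so that a length-$m$ window of~$Q^\infty$ pins down its offset modulo~$|Q|$; both are guaranteed by $d \le m/2$. An alternative route to the divisibility $|Q| \mid d$, which additionally identifies $Q$ with $P\fragmentco{0}{\per(P)}$, is to apply the Fine--Wilf periodicity lemma to the two periods $\per(P)$ and~$d$ of~$P$, whose sum is at most~$m$.
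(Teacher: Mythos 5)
Your proof is correct and follows exactly the folklore overlap argument that the paper itself only sketches before stating the fact without proof (citing~\cite{BG95}): the shift $d=n-m$ is a period of $P$, its primitive root $Q$ propagates to $T$ via $|Q|\mid d$, and primitivity of $Q$ pins every exact occurrence to a multiple of $|Q|$, giving the arithmetic progression. The only loose end is the degenerate case $n=m$ (so $d=0$), where $P\fragmentco{0}{d}$ is empty and has no primitive root; there you should instead take $Q$ to be the primitive root of $P$ itself, and the single occurrence at position $0$ is trivially an arithmetic progression.
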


Hence, it is justified to say that the structure of~exact pattern matching is
fully understood.
Surprisingly, a similar characterization for approximate pattern matching was missing for
a long time. Only recently, Bringmann et al.~\cite{bkw19} proved a similar result for
pattern matching with mismatches (we write $\hd(S, T)$ for the Hamming distance of~$S$
and~$T$):

\begin{theorem}[{\cite[Theorem~1.2]{bkw19}}, simplified]\label{thm:old}
    Given a pattern $P$ of~length $m$, a text $T$ of~length $n \le\threehalfs m$, and a
    positive integer threshold $k\le m$, at least one of~the following holds:
    \begin{itemize}
        \item The number of~$k$-mismatch occurrences of~$P$ in $T$ is bounded by
            $\Oh(k^2)$.
        \item There is a primitive string $Q$ of~length $\Oh(m/k)$ such that
            $\hd(P, Q^{\infty}\fragmentco{0}{m}) \le 6k$.\lipicsEnd
    \end{itemize}
\end{theorem}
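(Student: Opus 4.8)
I would prove the contrapositive: assuming that $P$ has more than $c k^2$ $k$-mismatch occurrences in $T$ for a suitably large constant $c$, I will construct a primitive string $Q$ of length $\Oh(m/k)$ with $\hd(P, Q^\infty\fragmentco{0}{m}) \le 6k$. List the occurrences as $p_0 < p_1 < \dots < p_{s-1}$; since every occurrence starts at a position $\le n-m \le m/2$, all of them lie in an interval of length at most $m/2$.

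The first ingredient is a triangle inequality over overlapping windows: if $p<p'$ are $k$-mismatch occurrences with $d := p'-p \le m/2$, then $T\fragmentco{p}{p+m}$ and $T\fragmentco{p'}{p'+m}$ share a text fragment of length $m-d \ge m/2$; reading $P$ off this shared fragment from each of the two alignments and applying the triangle inequality for $\hd$ yields $\hd(P\fragmentco{0}{m-d}, P\fragmentco{d}{m}) \le 2k$. Hence every small difference of two occurrences is an \emph{approximate period} of $P$ carrying at most $2k$ mismatches. The second ingredient converts abundance into \emph{short} periods: the $s-1$ gaps between consecutive occurrences sum to at most $m/2$, so an averaging argument over blocks of $2k+1$ consecutive gaps shows that, once $s$ exceeds a suitable $\Theta(k^2)$ threshold, some $2k+2$ occurrences $q_0 < \dots < q_{2k+1}$ lie inside a sub-window of length $\Oh(m/k)$. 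By the previous paragraph, $P$ then has at least $2k+1$ distinct approximate periods (for instance $q_i - q_0$ for $i \ge 1$, as well as all pairwise differences), all of length $\Oh(m/k)$ and each with at most $2k$ mismatches.

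The final — and I expect hardest — step is to synthesize these approximate periods into one genuine string period of the required quality. The naive attempt, taking the shortest approximate period $\pi = \Oh(m/k)$ and letting $Q$ be the position-wise majority character of $P$ within each residue class modulo $\pi$, only gives $\hd(P, Q^\infty\fragmentco{0}{m}) = \Oh(k\cdot m/\pi)$, which is far larger than $6k$: a single residue class of $P$ could consist of two long runs of distinct characters, contributing about $m/\pi$ mismatches while spending only one breakpoint of the $2k$ budget. The key point is that such a ``balanced'' residue class is incompatible with $P$ also possessing the other $\Omega(k)$ short approximate periods we extracted, since the transition it contains, viewed through a differently shifted short period, would spawn far more than $2k$ breakpoints; thus, under the abundance hypothesis, each residue class modulo a carefully chosen short period is in fact constant outside a globally bounded ($\Oh(k)$) number of positions, and $Q$ is then read off as the class-wise majority. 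If the resulting $Q$ is not primitive, replacing it by its primitive root only shortens it and preserves the Hamming bound. Turning this into a rigorous argument — in particular pinning down the $\Theta(k^2)$ occurrence threshold and tuning the bookkeeping so the period has length $\Oh(m/k)$ with exactly $6k$ mismatches — is where the real work lies, and it is precisely this synthesis step that forces the quadratic, rather than linear, dependence on $k$ in the non-periodic case.
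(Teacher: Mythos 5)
Your first two steps are sound: the triangle-inequality argument showing that if $p$ and $p+d$ are both $k$-mismatch occurrences with $d\le m/2$ then $\hd(P\fragmentco{0}{m-d},P\fragmentco{d}{m})\le 2k$, and the pigeonhole/averaging step placing $2k+2$ occurrences inside a window of length $\Oh(m/k)$ once there are $\Omega(k^2)$ occurrences, are both correct. The gap is exactly where you suspect it, but it is worse than ``real work left to do'': the mechanism you propose for the synthesis step is false. You want to argue that a ``balanced'' residue class (one containing a transition between two long runs) is incompatible with $P$ having $\Omega(k)$ further short approximate periods, because the transition ``viewed through a differently shifted short period would spawn far more than $2k$ breakpoints.'' Take $P=\texttt{a}^{m/2}\texttt{c}^{m/2}$ (the very string in the paper's Figure~1a). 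For every shift $d$, $\hd(P\fragmentco{0}{m-d},P\fragmentco{d}{m})=d$, so \emph{all} shifts $d\in\fragment{1}{2k}$ are approximate periods with at most $2k$ mismatches each --- a transition costs only $d\le 2k$ breakpoints under every short shift, never ``far more than $2k$.'' Yet $P$ is at Hamming distance $\Theta(m)$ from every string with period $\Oh(m/k)$. So the implication you need --- ``$2k+1$ approximate self-periods of length $\Oh(m/k)$, each with $\le 2k$ mismatches, force the residue classes to be near-constant'' --- is simply not true, and no tuning of constants rescues it.

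The underlying problem is that after extracting the self-overlap bounds you discard the text, and the approximate self-periods of $P$ alone carry too little information (the example above has plenty of them while having only $\Theta(k)$, not $\Omega(k^2)$, $k$-mismatch occurrences in any text). Any correct argument has to keep counting mismatches against the text. This is what the known proofs do: Bringmann et al.\ (and the sketch of their proof in this paper's overview) work forward, decomposing $P$ into $\Theta(k)$ blocks, extending each block's exact period across $P$ with mismatches, and then running a marking argument in $T$ in which each $k$-mismatch occurrence must align most of the $\Omega(k)$ pattern-versus-$Q_i^\infty$ mismatches with text-versus-$Q_i^\infty$ mismatches; similarly, the periodic-case analysis in this paper (\cref{lem:aux,lem:rle}) charges mismatches of \emph{both} $P$ and $T$ against $Q^\infty$. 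If you want to keep your contrapositive framing, you would need to convert the $\Omega(k^2)$ occurrences into a statement about $T$ (e.g.\ that $T$ itself is close to $Q^\infty$ on a long stretch) before concluding anything about $P$, rather than reasoning from $P$'s self-overlaps alone.
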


\noindent Motivated by the absence of~examples proving the tightness of~their result,
Bringmann et al.~\cite{bkw19} conjectured that the bound on the number of $k$-mismatch occurrences in
\cref{thm:old} can be improved to $\Oh(k)$.
We resolve their conjecture positively by proving the following stronger variant of
\cref{thm:old}.

\begin{mtheorem}[Compare~\cref{thm:old}]\label{hd:mthm_intro}
    Given a pattern $P$ of~length $m$, a text $T$ of~length $n \le\threehalfs m$, and a
    positive integer threshold $k\le m$, at least one of~the following holds:
    \begin{itemize}
        \item The number of~$k$-mismatch occurrences of~$P$ in $T$ is bounded by
            $\Oh(k)$.
        \item There is a primitive string $Q$ of~length $\Oh(m/k)$ that satisfies
            $\hd(P, Q^{\infty}\fragmentco{0}{m}) < 2k$.
            \lipicsEnd
    \end{itemize}
\end{mtheorem}

\begin{figure}[ht]
    \begin{subfigure}[b]{.48\textwidth}
        \centering
    \centering
\begin{tikzpicture}
    \node at (-.5, .25) {$T$};
    \node at (-.5, -.5) {$P$};
    \foreach\c[count=\x from 1] in {a,\phantom{a},a,{\phantom{a}},a}{
        \node(a\x) [inner sep=.3em] at (\x/2.5,.25) {\tt \c};
    }
    \foreach\c[count=\x from 6] in {c, {\phantom{c}},c,\phantom{c},c}{
        \node(a\x) [inner sep=.3em] at (\x/2.5,.25) {\tt \c};
    }
    \draw (a1.north west) rectangle (a5.south east);
    \draw[fill=lipicsYellow!80] (a6.north west) rectangle (a10.south east);
    \foreach\c[count=\x from 6] in {c, {\phantom{c}},c,\phantom{c},c}{
        \node(a\x) [inner sep=.3em] at (\x/2.5,.25) {\tt \c};
    }
    \foreach\c[count=\x from 3] in {a,\phantom{a}, a}{
        \node(b\x)[inner sep=.3em] at (\x/2.5,-.5) {\tt \c};
    }
    \foreach\c[count=\x from 6] in {c,\phantom{c}, c}{
        \node(b\x)[inner sep=.3em] at (\x/2.5,-.5) {\tt \c};
    }
    \draw (b3.north west) rectangle (b5.south east);
    \draw[fill=lipicsYellow!80] (b6.north west) rectangle (b8.south east);
    \foreach\c[count=\x from 6] in {c,\phantom{c}, c}{
        \node(b\x)[inner sep=.3em] at (\x/2.5,-.5) {\tt \c};
    }
    \foreach\c in {2,4,7,9}{
        \node at (\c/2.5+.03,.25) {\tt$\scriptsize\cdots$};
    }
    \foreach\c in {4,7}{
        \node at (\c/2.5+.03,-.5) {\tt$\scriptsize\cdots$};
    }
    \node at (3/2.5, .9) {${\tt a}^{3m/4}$};
    \node at (8/2.5, .9) {${\tt c}^{3m/4}$};
    \node at (4/2.5, -1.1) {${\tt a}^{m/2}$};
    \node at (7/2.5, -1.1) {${\tt c}^{m/2}$};
    \draw [decorate,decoration={brace,amplitude=4pt}] (5.5/2.5,-.8) -- (2.5/2.5,-.8);
    \draw [decorate,decoration={brace,amplitude=4pt}] (8.5/2.5,-.8) -- (5.5/2.5,-.8);
    \draw [decorate,decoration={brace,amplitude=4pt}] (.5/2.5,.5) -- (5.5/2.5,.5);
    \draw [decorate,decoration={brace,amplitude=4pt}] (5.5/2.5,.5) -- (10.5/2.5,.5);
\end{tikzpicture}
\caption{Consider a text $T := \texttt{a}^{3m/4}\texttt{c}^{3m/4}$ and a pattern $P := \texttt{a}^{m/2}\texttt{c}^{m/2}$, neither of which is approximately periodic.
Then, shifting the exact occurrence of~$P$ in $T$
by up to $k$ positions in either direction still yields a $k$-mismatch
occurrence. Hence, we need $\Omega(k)$ distinct $k$-mismatch occurrences
to derive approximate periodicity~of~$P$.}\label{nfig:exa}
\end{subfigure}%
~~~%
\begin{subfigure}[b]{.48\textwidth}
\centering
    \centering
\begin{tikzpicture}
    \node at (-.5, .25) {$T$};
    \node at (-.5, -.75) {$P$};
    \foreach\c[count=\x from 1] in {a,\phantom{a},\phantom{a},a,\phantom{a},a,\phantom{a},a,\phantom{a},a,\phantom{a}}{
        \node(a\x) [inner sep=.3em] at (\x/2.5,.25) {\tt \c};
    }
    \foreach\c[count=\x from 3] in {\phantom{a}, a, \phantom{a}, a, \phantom{a}, a}{
        \node(b\x)[inner sep=.3em] at (\x/2.5,-.75) {\tt \c};
    }

    \draw (a1.north west) rectangle (a4.south east);
    \draw (a6.north west) rectangle (a6.south east);
    \draw (a8.north west) rectangle (a10.south east);
    \draw[fill=lipicsYellow!80] (a5.north west) rectangle (a5.south east);
    \draw[fill=lipicsYellow!80] (a7.north west) rectangle (a7.south east);
    \draw[fill=lipicsYellow!80] (a11.north west) rectangle (a11.south east);

    \draw (b4.north west) rectangle (b6.south east);
    \draw (b8.north west) rectangle (b8.south east);
    \draw[fill=lipicsYellow!80] (b3.north west) rectangle (b3.south east);
    \draw[fill=lipicsYellow!80] (b7.north west) rectangle (b7.south east);

    \node(c) at (5.5/2.5, -.25){\scriptsize${\tt c}$ at $k/2$ random positions in each string};
    \foreach\x in {5, 7, 11}{
        \node(e\x) [inner sep=.3em] at (\x/2.5,.25)
        {\tt c};
        \draw {(c.north)++(0,-.1)} -- (e\x.south);
    }
    \foreach\x in {3, 7}{
        \node(d\x)[inner sep=.3em] at (\x/2.5,-.75)
        {\tt c};
        \draw {(c.south)++(0,.1)}-- (d\x.north);
    }
    \foreach\c in {2.5,9}{
        \node at (\c/2.5+.03,.25) {\tt$\scriptsize\cdots$};
    }
    \foreach\c in {5}{
        \node at (\c/2.5+.03,-.75) {\tt$\scriptsize\cdots$};
    }

    \node at (5.5/2.5, .95) {${\tt a}^{3m/2}$};
    \node at (5.5/2.5, -1.45) {${\tt a}^{m}$};
    \draw [decorate,decoration={brace,amplitude=4pt}] (8.5/2.5,-1.07)--(2.5/2.5,-1.07);
    \draw [decorate,decoration={brace,amplitude=4pt}] (.5/2.5,.57) -- (10.5/2.5,.57);
\end{tikzpicture}
    \caption{Consider a text $T$
    and a pattern $P$ obtained from $\texttt{a}^{3m/2}$ and $\texttt{a}^m$,
    respectively, by substituting $\texttt{a}$ to $\texttt{c}$ at $k/2$
    random positions. Then, all length-$m$ fragments of $T$
    are $k$-mismatch occurrences of~$P$, but, with high probability, neither $T$ nor
    $P$ is perfectly periodic. Hence, we need a relaxed periodicity notion  allowing for $\Omega(k)$ mismatches.}\label{nfig:exb}
\end{subfigure}
\caption{Examples (1) and (2) from~\cite{bkw19}.}\label{nfig:ex}
\end{figure}

Examples from~\cite{bkw19}, illustrated in \cref{nfig:ex},
prove the asymptotic tightness of \cref{hd:mthm_intro}.

As in the exact pattern matching case, we can also characterize the (approximately) periodic
case in more detail.
\begin{mtheorem}[{Compare~\cite[Claim 3.1]{bkw19}}]\label{lem:aux_intro}
    Let $P$ denote a pattern of~length $m$, let $T$ denote a text of~length $n\le
    \threehalfs m$, and let $0 \le k\le m$ denote an integer threshold.
    Suppose that both $T\fragmentco{0}{m}$ and $T\fragmentco{n-m}{n}$ are
    $k$-mismatch occurrences of~$P$.
    If there is a positive integer $d\ge 2k$
    and a primitive string $Q$ with $|Q|\le m/8d$ and $\hd(P,Q^\infty\fragmentco{0}{m})
    \le d$, then each of~following holds:
    \begin{enumerate}[(a)]
        \item Every $k$-mismatch occurrence of~$P$ in $T$ starts at a
            position that is a multiple of~$|Q|$.
        \item The string $T$ satisfies $\hd(T,Q^\infty\fragmentco{0}{n})\le 3d$.
        \item The set $\Occ_k(P,T)$ can be decomposed into $\Oh(d^2)$ arithmetic
            progressions with difference $|Q|$.\label{hdinnew}
        \lipicsEnd
    \end{enumerate}
\end{mtheorem}

Note that~\cref{thm:old}, as originally formulated in~\cite{bkw19}, includes a weaker version of~\cref{lem:aux_intro}.
We also observe that part~\eqref{hdinnew} of the new characterization is asymptotically tight, as justified by
modifying the example of \cref{nfig:exb}: Let $P$ be obtained from $\texttt{a}^m$
by placing $\texttt{c}$ at $(k+1)/2$ random positions,
and let~$T$ be obtained from $\texttt{a}^{3m/2}$ by placing $\texttt{c}$ at $(k+1)/2$ random positions within the middle third of $\texttt{a}^{3m/2}$.
Then, each $k$-mismatch occurrence must align at least one $\texttt{c}$ from $P$ with one
$\texttt{c}$ from $T$ and, conversely, each such alignment results in a $k$-mismatch occurrence.
Hence, the number of $k$-mismatch occurrences is $\Theta(k^2)$.
Furthermore, for every $q$, with high probability,
$\Occ_k(P,T)$ can only be decomposed into $\Theta(k^2)$ progressions with difference~$q$.

\paragraph*{Structure of~Pattern Matching with Edits}

Having understood the structure of~pattern matching with mismatches, we turn to the more
complicated situation of pattern matching with edits.
First, observe that the examples of \cref{nfig:exa,nfig:exb} are still valid: Any
$k$-mismatch occurrence is also a $k$-error occurrence. However, as the edit distance
allows insertions and deletions of~characters, we can construct an example where neither $P$
nor $T$ is approximately periodic, yet the number of~$k$-error occurrences is $\Omega(k^2)$;
see \cref{fig:intro_ex2}.
In the example of \cref{fig:intro_ex2}, there are still only $\Oh(k)$ regions of~size $\Oh(k)$ each
where $k$-error occurrences start. In fact, we can show that this is the worst that can
happen
(we write $\ed(S, T)$ for the edit distance of~$S$
and~$T$):

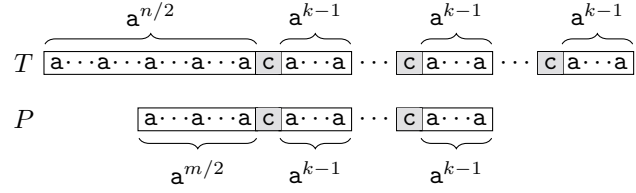
\begin{figure}[ht]
    \begin{center}
    \begin{tikzpicture}
        \node at (.5, .25) {$T$};
        \node at (.5, -.5) {$P$};
        \foreach\c[count=\x from 3] in
        {a,\phantom{a},a,\phantom{a},a,\phantom{a},a,\phantom{a},a,{\phantom{a}},a,
            {\phantom{a}},a,\phantom{a},\phantom{a},{\phantom{a}},a,{\phantom{a}},a
            ,\phantom{a},\phantom{a},{\phantom{a}},a,{\phantom{a}},a}{
            \node(a\x) [inner sep=.2em] at (\x/3.2,.25) {\tt \c};
        }
        \draw (a3.north west) rectangle (a11.south east);
        \draw[fill = lipicsYellow!80] (a12.north west) rectangle (a12.south east);
        \draw (a13.north west) rectangle (a15.south east);
        \draw[fill = lipicsYellow!80] (a18.north west) rectangle (a18.south east);
        \draw (a19.north west) rectangle (a21.south east);
        \draw[fill = lipicsYellow!80] (a24.north west) rectangle (a24.south east);
        \draw (a25.north west) rectangle (a27.south east);
        \foreach\c in {12,18,24}{
            \node(a\c) [fill=lipicsYellow!80, inner sep=.2em] at
            (\c/3.2,.25) {\tt c};
        }
        \foreach\c[count=\x from 7] in
        {a,\phantom{a},a,\phantom{a},a,{\phantom{a}},a,{\phantom{a}},a,\phantom{a},\phantom{a},
            {\phantom{a}},a,{\phantom{a}},a}{
            \node(b\x)[inner sep=.2em] at (\x/3.2,-.5) {\tt \c};
        }

        \draw (b7.north west) rectangle (b11.south east);
        \draw[fill = lipicsYellow!80] (b12.north west) rectangle (b12.south east);
        \draw (b13.north west) rectangle (b15.south east);
        \draw[fill = lipicsYellow!80] (b18.north west) rectangle (b18.south east);
        \draw (b19.north west) rectangle (b21.south east);

        \foreach\c in {12,18}{
            \node(b\c)[fill=lipicsYellow!80, inner sep=.2em] at
            (\c/3.2,-.5) {{\tt c}};
        }
        \foreach\c in {4,6,8,10,14,16.5,20,22.5,26}{
            \node at (\c/3.2+.03,.25) {\tt$\scriptsize\cdots$};
        }
        \foreach\c in {8,10,14,16.5,20}{
            \node at (\c/3.2+.03,-.5) {\tt$\scriptsize\cdots$};
        }
        \node at (7/3.2, .9) {${\tt a}^{n/2}$};
        \node at (14/3.2, .9) {${\tt a}^{k-1}$};
        \node at (20/3.2, .9) {${\tt a}^{k- 1}$};
        \node at (26/3.2, .9) {${\tt a}^{k- 1}$};
        \node at (9/3.2, -1.2) {${\tt a}^{m/2}$};
        \node at (14/3.2, -1.2) {${\tt a}^{k-1}$};
        \node at (20/3.2, -1.2) {${\tt a}^{k-1}$};
        \draw [decorate,decoration={brace,amplitude=4pt}] (11.5/3.2,-.75) -- (6.5/3.2,-.75);
        \draw [decorate,decoration={brace,amplitude=4pt}] (15.5/3.2,-.75) -- (12.5/3.2,-.75);
        \draw [decorate,decoration={brace,amplitude=4pt}] (21.5/3.2,-.75) -- (18.5/3.2,-.75);
        \draw [decorate,decoration={brace,amplitude=4pt}] (2.5/3.2,.5) -- (11.5/3.2,.5);
        \draw [decorate,decoration={brace,amplitude=4pt}] (12.5/3.2,.5) -- (15.5/3.2,.5);
        \draw [decorate,decoration={brace,amplitude=4pt}] (18.5/3.2,.5) -- (21.5/3.2,.5);
        \draw [decorate,decoration={brace,amplitude=4pt}] (24.5/3.2,.5) -- (27.5/3.2,.5);
    \end{tikzpicture}
\end{center}
    \caption{Consider a text
    $T := \texttt{a}^{n/2} \cdot (\texttt{c}\cdot \texttt{a}^{k - 1})^{n/2k}$ and a
    pattern $P := \texttt{a}^{m/2}\cdot (\texttt{c}\cdot \texttt{a}^{k - 1})^{m/2k}$
    for $n := m + 2k^2$.
    Now, for every $i \in {\fragment{-k}{k}}$,
    an $|i|$-mismatch occurrence of $P$ starts at position $n/2 - m/2 + i\cdot k$ in $T$.
    The remaining budget on the number of errors can be spent on
    shifting the starting positions, so for every $j\in \fragment{|i|-k}{k-|i|}$,
    there is a $k$-error occurrence starting at position $n/2 - m/2 + i\cdot k + j$ in $T$.
    Overall, the number of $k$-error occurrences of $P$ in $T$ is $\Omega(k^2)$,
    but neither $P$ nor $T$ is approximately periodic.}\label{fig:intro_ex2}
\end{figure}

\begin{mtheorem}\label{ed:mthm_intro}
    Given a pattern $P$ of~length $m$, a text $T$ of~length $n\le\threehalfs m$,
    and a positive integer threshold $k\le m$, at least one of~the following holds:
    \begin{itemize}
        \item The starting positions of~all $k$-error occurrences of~$P$ in $T$ lie
            in $\Oh(k)$ intervals of~length $\Oh(k)$ each.
        \item There is a primitive string $Q$ of~length $\Oh(m/k)$ and integers
            $i, j$ such that $\ed(P, Q^{\infty}\fragment{i}{j}) < 2k$.\lipicsEnd
    \end{itemize}
\end{mtheorem}

Again, we treat the (approximately) periodic case separately, thereby obtaining a result similar to
\cref{lem:aux_intro}:

\begin{mtheorem}\label{lem:Eaux_intro}
    Let $P$ denote a pattern of~length $m$, let~$T$ denote a text of~length $n$,
    and let $0 \le k\le m$ denote an integer threshold such that $n < \threehalfs m+k$.
    Suppose that the $k$-error occurrences of~$P$ in $T$ include a prefix of~$T$ and a
    suffix of~$T$.
    If there is a positive integer $d\ge 2k$ and a primitive string~$Q$
    satisfying $|Q|\le m/8d$ and $\ed(P, Q^{\infty}{\fragment{i}{j}}) \le d$ for some integers $i$, $j$, then each of~following holds:
    \begin{enumerate}[(a)]
        \item For every $p\in \OccE_k(P,T)$, we have $p\bmod |Q|\le 3d$ or $p\bmod |Q|\ge
            |Q|-3d$.
        \item The string $T$ satisfies $\ed(T, Q^{\infty}\fragment{i'}{j'}) \le 3d$
            for some integers $i'$ and $j'$.
        \item The set $\OccE_k(P,T)$ can be decomposed into $\Oh(d^3)$ arithmetic
            progressions with difference $|Q|$.\label{edintroc}
        \ifx\edauxt\undefined\lipicsEnd\fi
    \end{enumerate}
\end{mtheorem}


\subsection*{Technical Overview}
\paragraph*{Gaining Structural Insights}
To highlight the novelty of~our approach, let us first outline the proof
\cref{thm:old} by Bringmann et al.~\cite{bkw19}.
Consider a pattern $P$
of~length~$m$ and a text $T$ of~length $n \le \threehalfs m$.
Split the pattern into
$\Theta(k)$ blocks of length $\Theta(m/k)$ each and process each such block $P_i$ as follows:
Compute the shortest string period $Q_i$ of $P_i$
and align $P$ with a substring of
$Q_i^\infty$, starting from $P_i=Q_i^\infty\fragmentco{0}{|P_i|}$ and extending to both
directions, with mismatches allowed.
If there are $\Oh(k)$ mismatches for any block $P_i$,
then $P$ is approximately periodic;
otherwise, there are many mismatches for every block $P_i$.
In particular, in every $k$-mismatch occurrence where a block $P_i$ is matched exactly,
all but at most $k$ of~these mismatches between $P$ and $Q_i^\infty$ must be aligned to
the corresponding mismatches
between~$T$ and $Q_i^\infty$. Observing that, in any $k$-mismatch occurrence, all but at most $k$
of~the blocks must be matched exactly, this yields an $\Oh(k^2)$ bound on the number of~$k$-mismatch
occurrences of~$P$ in $T$.

The main shortcoming of~this approach is the initial treatment of~the pattern:
Since the pattern $P$ is independently aligned with $Q_i^\infty$ for every block $P_i$,
the same position in $P$ may be accounted for as a mismatch for multiple blocks $P_i$.
In particular, this happens if several adjacent blocks share the same period.
This leads to an overcounting of the $k$-mismatch occurrences that is hard to control.

What we do instead is a more careful analysis of~the pattern. Instead of~creating all
blocks $P_i$ at once, we process $P$ from left to right, as described below.
Suppose that $P\fragmentco{j}{m}$
is the unprocessed suffix of $P$.
We first consider the length-$m/8k$ prefix $P'$ of $P\fragmentco{j}{m}$
and compute its shortest string period $Q$.
If $|Q|$ exceeds a certain constant fraction of $|P'|$,
we set $P'$ aside as a \emph{break} and continue processing $P\fragmentco{j+|P'|}{m}$.
Now, if $P'$ is the $2k$-th break that we set aside, our process stops,
and we continue to work only with the breaks.
If $P'$ does not form a break, we try extending $P'$ to a prefix $R$ of $P\fragmentco{j}{m}$
that satisfies $\hd(R, Q^\infty\fragmentco{0}{|R|}) = \Theta(k\cdot |R|/m)$.
If such a prefix $R$ exists, we set it
aside as a \emph{repetitive region} and continue processing $P\fragmentco{j+|R|}{m}$.
Now, if all the repetitive regions collected so far have a total length of~at least $3/8 \cdot m$, we stop our process
and continue to work only with the repetitive regions computed so far.
A repetitive region $R$ does not exist only if $P\fragmentco{j}{m}$ has too few
mismatches with $Q^\infty$.
In this case, we try extending $P\fragmentco{j}{m}$ to a
suffix~$R'$ of~$P$ that satisfies $\hd(R', \overline{Q}^\infty\fragmentco{0}{|R'|}) =
\Theta(k\cdot |R'|/m)$; where $\overline{Q}$ is a suitable rotation of~$Q$.
If we fail again, we report that $P$ is approximately periodic; otherwise, we continue to work
with the single repetitive region~$R'$, disregarding the previously generated repetitive regions.
For this, we note that $|R'| \ge 3/8\cdot m$ because all
breaks and repetitive regions found beforehand have a total length of~at most $5/8\cdot
m$.

Overall, for every pattern $P$, we obtain either $2k$ disjoint breaks, or disjoint repetitive regions of total~length at least $3/8 \cdot m$, or a string with period $\Oh(m/k)$ at Hamming distance $\Oh(k)$ from $P$ (see \cref{prp:I}).

If the analysis results in breaks, we observe
that at least $k$ breaks need to be matched
exactly in every $k$-mismatch occurrence of~$P$ in $T$.
As both the length and the shortest period of each break are $\Theta(n/k)$, there are at most $\Oh(k)$ exact
matches of each break in the text. Now, a simple marking argument shows that
the number of $k$-mismatch occurrences of~$P$ in $T$ is $\Oh(k)$ (see \cref{lm:hdC}).

If the analysis results in repetitive regions,
for each region $R_i$, we consider its $k_i$-mismatch occurrences in $T$ with $k_i := \Theta(k \cdot |R_i|/m)$.
Intuitively, this distributes the available
budget of~$k$ mismatches among the repetitive regions according to their lengths.
Next, we try extending each $k_i$-mismatch occurrence of each $R_i$
to an approximate occurrence of $P$, and we assign $|R_i|$ marks to this extension.
Using insights gained in the periodic case, we bound the total number of~marks
by $\Oh(k \cdot \sum_i |R_i|)$.
Independently, we show that each $k$-mismatch occurrence of~$P$ has at least $\sum_i |R_i| - m/4$ marks. Using $\sum_i |R_i| \ge 3/8\cdot m$, we finally obtain
a bound of~$\Oh(k)$ on the number of $k$-mismatch occurrences of~$P$ in~$T$ (see \cref{lm:hdB}).

In total, this proves \cref{hd:mthm_intro}. For the characterization of~the periodic
case (\cref{lem:aux_intro}), we use a reasoning similar to that in~\cite{bkw19}.
As in the theorem, assume that $P$ has $k$-mismatch occurrences
both as a prefix and as a suffix of $T$.
Further, fix a threshold $d \ge 2k$ and a primitive string $Q$ such that $\hd(P,
Q^{\infty}\fragmentco{0}{m}) \le d$.
First, we show that every $k$-mismatch occurrence
of~$P$ in $T$ starts at a multiple of~$|Q|$. In particular, $|Q|$ divides $n-m$ and, using this observation,
we bound $\hd(T,Q^\infty\fragmentco{0}{n})$.
Finally, to decompose $\Occ_k(P, T)$ into
$\Oh(k^2)$ arithmetic progressions, we analyze the sequence of~Hamming distances between
$P$ and the length-$m$ fragments of $T$ starting at the multiples of~$|Q|$:
we observe that the number of~changes in this
sequence is bounded by $\Oh(d^2)$, which then yields the claim.

For pattern matching with edits, surprisingly few modifications in our arguments are
necessary. In fact, the analysis of~the pattern stays essentially the same. The main
difference in the subsequent arguments is that we need to account for shifts of~up to $\Oh(k)$ positions; this causes
the increase in the bound on the number of~occurrences.
Unfortunately, for the periodic case of~pattern matching with edits, the situation is
messier.
The key difficulty that we overcome is that an alignment corresponding to a specific edit
distance may not be unique.
In particular, due to insertions and deletions, combining
(the arguments for) two disjoint substrings is not as easy as in the Hamming distance case.
We address these issues by enclosing individual errors between a string and its approximate period with so-called \emph{locked fragments}, which admit a unique canonical alignment.
(A similar idea was used by Cole and Hariharan~\cite{ColeH98}.)
Combining this with a more
involved marking scheme, we then obtain \cref{lem:Eaux_intro}.

\paragraph*{A Unified Approach to Approximate Pattern Matching}
The proofs of our new structural insights are already essentially algorithmic.
To obtain algorithms for all the considered settings at once,
we proceed in two steps.
In the first step, we devise meta-algorithms that only rely on a core set of abstract operations;
in the second step, we implement these operations in various settings.
Specifically, we introduce the \modelname model---a novel abstract
interface to handle strings represented in a setting-specific manner.
For two strings $S$ and $T$, the following operations are supported:
\begin{itemize}
    \item ${\tt Extract}(S, \ell, r)$: Retrieve a string $S\fragment{\ell}{r}$.
    \item $\lceOp{S}{T}$: Compute the length of~the longest common prefix of~$S$ and $T$.
    \item $\lcbOp{S}{T}$: Compute the length of~the longest common suffix of~$S$ and $T$.
    \item $\ipmOp{S}{T}$: Assuming that $|T|\le 2|S|$, compute the starting positions of
        all exact occurrences of~$S$ in~$T$.
    \item $\accOpName(S,i)$: Retrieve the character $\accOp{S}{i}$.
    \item $\lenOpName(S)$: Compute the length $|S|$ of~the string $S$.
\end{itemize}

Using the \modelname-model operations, the meta-algorithms for both
pattern matching with mismatches and with errors follow the same overall
structure:
\begin{itemize}
    \item First, we implement the analysis of~the pattern.
        Here, the key difficulty is to detect repetitive regions. Our algorithm
        finds the \emph{shortest} repetitive region: Starting from the prefix $P'$
        of the unprocessed suffix $P\fragmentco{j}{m}$, we enumerate the mismatches (or errors) between
        $P\fragmentco{j}{m}$ and $Q^\infty$. We stop
        when the number of~mismatches (or errors) within the constructed region $R$ reaches $\Theta(k/m \cdot |R|)$. Intuitively, this is correct because the number
        of~mismatches (or errors) increases at most as fast as the length of~$|R|$.
        We treat the special case when we reach the end of~the pattern symmetrically.

        Note that computing the next mismatch between two strings is a prime application
        of~the \lceOpName operation. For finding a next edit, we adapt the Landau--Vishkin algorithm~\cite{LandauV89},
        based on \lceOpName operations as well.
    \item Next, we deal with the periodic case. This turns out to be the main difficulty.
        For the Hamming distance, implementing the proof~of
        \cref{lem:aux_intro} is rather straightforward. However, for the edit distance
        case, the more complicated proof~of~\cref{lem:Eaux_intro} gets complemented with
        even more sophisticated algorithms. Hence, we do not discuss them in this outline.
    \item Finding the occurrences in the presence of~$2k$ breaks is
        easy: We first use \ipmOpName operations to locate exact occurrences of
        the breaks in the text and then perform a straightforward marking step;
        for the Hamming distance, we lose an $\Oh(\log\log k)$ factor for
        sorting marks.
    \item Finding the occurrences in the presence of~repetitive regions
        is implemented similarly; the key difference is that we use our algorithm for the
        periodic case to find approximate occurrences of repetitive regions.
\end{itemize}
Overall, this approach then yields the main technical results of~this work (stated below for
strings of arbitrary lengths):
\begin{restatable}{mtheorem}{hdalg}\label{thm:hdalg}
    Given a pattern $P$ of~length $m$, a text $T$ of~length $n$, and a positive integer
    $k\le m$, we can compute (a representation of) the set $\Occ_k(P,T)$
    using $\Oh(n/m \cdot k^2 \log\log k)$ time plus
    $\Oh(n/m \cdot k^2)$ \modelname operations.\ifx\hdalgt\undefined\lipicsEnd\fi
\end{restatable}

For pattern matching with edits, the number of \modelname-model operations matches
the time cost of non-\modelname-model operations; hence the simplified theorem statement.%
\begin{restatable}{mtheorem}{edalgI}\label{thm:edalgI}
    Given a pattern $P$ of~length $m$, a text $T$ of~length $n$, and a positive integer
    $k\le m$, we can compute (a representation of) the set $\OccE_k(P,T)$
    using $\Oh(n/m \cdot k^{4})$ time in the \modelname model.
    \ifx\edalgIt\undefined\lipicsEnd\fi
\end{restatable}

Finally, we show how to implement the \modelname model in the
settings that we consider:
\begin{itemize}
    \item As a toy example, we start with the standard setting. Here, implementing the
        \modelname-model operations boils down to collecting known tools on strings.
    \item For the fully compressed setting, we heavily rely on the recompression
        technique by Jeż~\cite{talg/Jez15,jacm/Jez16} (especially for internal
        pattern matching queries) and on other works on
        straight-line programs~\cite{BilleLRSSW15,I17}.
    \item Finally, for the dynamic setting, we use the data structure by
        Gawrychowski et al.~\cite{ods} (for \lceOpName and \lcbOpName operations).
        Furthermore, we reuse some tools from the fully compressed setting, because the data
        structure of~\cite{ods} actually works with (a form of) straight-line programs.
\end{itemize}

As the primitive operations of the \modelname model are rather simple, we believe that
they can be efficiently implemented in further settings not considered here.

\clearpage
{\tableofcontents}
\clearpage
\section{Preliminaries}
\paragraph*{Sets and Arithmetic Progressions}
For $n\in \mathbb{Z}_{\ge 0}$, we write $\position{n}$ to denote the set $\{0, \dots, n-1\}$.
Further, for $i,j\in \mathbb{Z}$,
we write $\fragment{i}{j}$ to denote $\{i, \dots, j\}$ and
$\fragmentco{i}{j}$ to denote $\{i ,\dots, j - 1\}$;
the sets $\fragmentoc{i}{j}$ and $\fragmentoo{i}{j}$ are defined similarly.

For $a,d,\ell\in \mathbb{Z}$ with $\ell > 0$,
the set $\{ a + j \cdot d \mid j \in \fragmentco{0}{\ell}\}$
is an \emph{arithmetic progression} with starting value $a$, difference $d$, and
length~$\ell$.
Whenever we use arithmetic progressions in an algorithm, we store them
as a triple $(a,d,\ell)$ consisting of the first value, the difference, and the length.

For a set $X\sub \mathbb{Z}$, we write $kX$ to denote the set containing all elements of~$X$ multiplied
by $k$, that is, $kX := \{ k\cdot x \mid x \in X\}$. Similarly, we define
$\floor{X/k} := \{ \floor{x/k} \mid x \in X \}$ and $k\floor{X/k} := \{k \cdot \floor{x/k}
\mid x \in X \}$.

\paragraph*{Strings}

We write $T=T\position{0}\, T\position{1}\cdots T\position{n-1}$ to denote a \textit{string} of
length $|T|=n$ over an alphabet $\Sigma$. The elements of~$\Sigma$ are called \textit{characters}.
We write $\varepsilon$ to denote the \emph{empty string}.

For a string $T$, we denote the \emph{reverse string} of~$T$ by $T^R$, that is,
$T^R :=T\position{n-1}T\position{n-2}\cdots T\position{0}$.
For two positions $i\le j$ in $T$, we write
$T\fragmentco{i}{j + 1} := T\fragment{i}{j} := T\position{i}\cdots
T\position{j}$ for the \textit{fragment} of~$T$ that starts at position $i$
and ends at position $j$.
We set $T\fragment{i}{j} := \varepsilon$ whenever $j < i$.

A \emph{prefix} of~a string $T$ is a fragment that starts at position~$0$ (that is, a
prefix is a fragment of~the form $T\fragmentco{0}{j}$ for some $j \ge 0$).
A \emph{suffix} of~a string $T$ is a fragment that ends at position ${|T|-1}$ (that is,
a suffix is a fragment of~the form $T\fragmentco{i}{|T|}$ for some $i \le |T|$).
We denote the length of the \emph{longest common prefix} (\emph{longest common suffix}) of two strings $U$ and $V$, defined as~the longest string that occurs as a prefix (suffix) of~both $U$ and~$V$, by $\lcp(U, V)$ (respectively, $\lcp^R(U, V)$).

A string $P$ of~length $m\in \fragment{0}{|T|}$ is a \emph{substring} of a string~$T$ (denoted $P \substr T$)
 if there is a fragment $T\fragmentco{i}{i + m}$ matching $P$.
In this case, we say that there is an \emph{exact occurrence} of~$P$ at position $i$
in~$T$, or, more simply, that $P$ \emph{exactly occurs in} $T$.

For two strings $U$ and $V$, we write $UV$ or $U\cdot V$ to denote their concatenation.
We also write $U^k := U\cdots U$ to denote the concatenation of~$k$ copies of~the
string $U$. Furthermore, $U^\infty$ denotes an infinite string obtained by concatenating
infinitely many copies of~$U$.
A string $T$ is called \emph{primitive} if it cannot be expressed as $T=U^k$
for a string~$U$ and an integer~$k > 1$.

A positive integer $p$ is called a \emph{period} of~a string $T$ if $T[i] = T[i + p]$ for
all $i \in \fragmentco{0}{|T|-p}$. We refer to the smallest
period as \emph{the period} $\per(T)$ of~the string.
The string $T\fragmentco{0}{\per(T)}$ is called the \emph{string period} of $T$.
We call a string \emph{periodic} if its period is at most half of~its length.

For a string $T$, we define the following \emph{rotation} operations:
The operation $\rot(\cdot)$ takes as input a string, and moves its last character to the
front; that is,~$\rot(T) := T\position{n-1}T\fragment{0}{n-2}$.
The inverse operation $\rot^{-1}(\cdot)$ takes as input a string and
moves its initial character to the end; that is,~$\rot^{-1}(T) := T\fragment{1}{n-1}T\position{0}$.
Note that a primitive string $T$ does not match any of~its non-trivial rotations,
that is, we have $T=\rot^j(T)$ if and only if $j \equiv 0 \pmod{|T|}$.

Finally, the \emph{run-length encoding} (RLE) of~a string $T$ is a decomposition of~$T$ into
maximal blocks such that each block is a power of~a single character.
(For instance, the RLE of~the string \texttt{aaabbabbbb} is~\texttt{a$^3$b$^2$ab$^4$}.)
Note that each block of~the RLE can be represented in $\cO(1)$ space.

\paragraph*{Hamming Distance and Pattern Matching with Mismatches}
For two strings $S$ and $T$ of~the same length $n$,
we define the set of~\emph{mismatches} between $S$ and $T$  as $\MIS(S,T) :=\{i\in
    \position{n} \mid S\position{i}\ne T\position{i}\}$.
Now, the \emph{Hamming distance} of~$S$ and $T$ is
defined as the number of~mismatches between $S$ and $T$, that is, $\hd(S,
T) := |\MIS(S,T)|$.

It is easy to verify that the Hamming distance satisfies the triangle inequality:

\begin{fact}[Triangle inequality for Hamming distance]\label{tria}
    Any strings $A$, $B$, and $C$ of~the same length satisfy
        $\hd(A, C) + \hd(C, B) \ge \hd(A, B) \ge |\hd(A, C) - \hd(C, B)|$.
    \lipicsEnd
\end{fact}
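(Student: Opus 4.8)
The plan is to prove the left inequality directly via a set inclusion on mismatch positions, and then derive the right inequality as a formal consequence by permuting the roles of the three strings.

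First I would argue that $\MIS(A,B) \subseteq \MIS(A,C) \cup \MIS(C,B)$. Indeed, if a position $i$ lies in neither $\MIS(A,C)$ nor $\MIS(C,B)$, then $A\position{i} = C\position{i}$ and $C\position{i} = B\position{i}$, so $A\position{i} = B\position{i}$ and hence $i \notin \MIS(A,B)$. Taking cardinalities and using subadditivity of $|\cdot|$ on unions yields $\hd(A,B) = |\MIS(A,B)| \le |\MIS(A,C)| + |\MIS(C,B)| = \hd(A,C) + \hd(C,B)$, which is the first claimed inequality.

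For the second inequality, I would apply the inequality just proved to two different orderings of the strings. Using $\hd(A,C) \le \hd(A,B) + \hd(B,C)$ gives $\hd(A,C) - \hd(C,B) \le \hd(A,B)$, and symmetrically $\hd(C,B) \le \hd(C,A) + \hd(A,B)$ gives $\hd(C,B) - \hd(A,C) \le \hd(A,B)$; combining these two bounds yields $|\hd(A,C) - \hd(C,B)| \le \hd(A,B)$. Throughout, the hypothesis that $A$, $B$, $C$ have a common length is exactly what makes all three Hamming distances well defined, so no separate case analysis is needed.

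There is essentially no obstacle here: the only point requiring any care is making the set-inclusion argument precise (that a non-mismatch for both $(A,C)$ and $(C,B)$ forces a non-mismatch for $(A,B)$), and the observation that the right-hand inequality is not an independent statement but a purely formal corollary of the left-hand one. I expect the proof to be a few lines.
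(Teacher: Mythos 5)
Your proof is correct, and since the paper states this as a fact left to the reader ("it is easy to verify"), your set-inclusion argument $\MIS(A,B) \subseteq \MIS(A,C) \cup \MIS(C,B)$ together with the formal derivation of the reverse inequality is exactly the standard verification intended.
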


As we are often concerned with the Hamming distance of~a string $S$ and a prefix
of~$T^{\infty}$ for a string $T$, we write $\MIS(S,T^*) := \MIS(S, T^{\infty}\fragmentco{0}{|S|})$
and $\hd(S, T^*) := |\MIS(S,T^*)|$.

Now, for a string $P$ (also called a \emph{pattern}), a string $T$ (also called a \emph{text}),
and an integer $k\ge 0$ (also called a \emph{threshold}), we say that there is a
\emph{$k$-mismatch occurrence} of~$P$ in $T$ at position $i$ if $\hd(P, T\fragmentco{i}{i+|P|})\leq k$.
We write $\Occ_k(P,T)$ to denote the set of~all positions of~$k$-mismatch occurrences of
$P$ in $T$, that is, $\Occ_k (P,T):= \{i \mid \hd(P,T\fragmentco{i}{i+|P|}\leq k)\}$.
Lastly, we define the \emph{pattern matching with mismatches} problem.
\begin{problem}[Pattern matching with mismatches]
    Given a pattern $P$, a text $T$, and a threshold $k$, compute the set $\Occ_k(P,
    T)$.\lipicsEnd
\end{problem}
Note that, depending on the use case (especially if the set $\Occ_k(P, T)$
is relatively large), we may only want to compute the size $|\Occ_k(P,T)|$
or a space-efficient representation of the set $\Occ_k(P, T)$,
e.g., as the union of disjoint arithmetic progressions.

\paragraph*{Edit Distance and Pattern Matching with Edits}

The \emph{edit distance} (also known as \emph{Levenshtein distance}) between two
strings $S$ and $T$, denoted $\ed(S,T)$, is the minimum number of
character insertions, deletions, and substitutions required to transform $S$ into $T$.

Again, it is easy to verify that the edit distance satisfies the triangle inequality:
\begin{fact}[Triangle inequality for edit distance]\label{Etria}
    Any strings $A$, $B$, and $C$ of~the same length satisfy
        $\ed(A, C) + \ed(C, B) \ge \ed(A, B) \ge |\ed(A, C) - \ed(C, B)|$.
    \lipicsEnd
\end{fact}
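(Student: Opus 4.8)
The plan is to derive both inequalities from the single observation that $\ed$ is defined as a minimum over sequences of elementary edits and that such sequences compose, together with the symmetry $\ed(X,Y)=\ed(Y,X)$.

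First I would prove the upper bound $\ed(A,B) \le \ed(A,C) + \ed(C,B)$. Fix an optimal sequence $\sigma$ of $\ed(A,C)$ elementary edits (single-character insertions, deletions, and substitutions) that transforms $A$ into $C$, and an optimal sequence $\tau$ of $\ed(C,B)$ elementary edits that transforms $C$ into $B$. Performing $\sigma$ and then $\tau$ transforms $A$ into $B$ using $\ed(A,C)+\ed(C,B)$ edits in total; since $\ed(A,B)$ is the minimum length of any transformation of $A$ into $B$, the inequality follows. This is exactly the left inequality in the statement.

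Next I would handle the lower bound $\ed(A,B) \ge |\ed(A,C) - \ed(C,B)|$. I first record that $\ed$ is symmetric: inverting each edit of an optimal sequence from $X$ to $Y$ (an insertion becomes a deletion and vice versa, a substitution stays a substitution) and reversing the order yields a sequence from $Y$ to $X$ of the same length, so $\ed(X,Y)=\ed(Y,X)$. Applying the subadditivity just proved with the three strings permuted gives $\ed(A,C) \le \ed(A,B) + \ed(B,C) = \ed(A,B)+\ed(C,B)$, i.e.\ $\ed(A,B) \ge \ed(A,C)-\ed(C,B)$; by the symmetric argument $\ed(A,B)\ge\ed(C,B)-\ed(A,C)$. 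Taking the maximum of the two bounds yields $\ed(A,B)\ge|\ed(A,C)-\ed(C,B)|$, completing the proof.

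I do not expect any genuine obstacle: the only point requiring a little care is to make ``composing two edit sequences'' precise, namely that $\tau$ is applied to the string produced by $\sigma$, which equals $C$ by construction, so the concatenated sequence is well defined and indeed ends at $B$; this is immediate once each elementary edit is viewed as a partial function on strings. I would also note in passing that the hypothesis $|A|=|B|=|C|$ is never used — the fact holds for arbitrary strings and is stated this way only to parallel the Hamming-distance version in \cref{tria}.
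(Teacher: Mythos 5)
Your proof is correct and is exactly the standard argument the paper has in mind: the paper states \cref{Etria} without proof (as an easy-to-verify fact), and composing optimal edit sequences for subadditivity plus symmetry and a permutation of the three strings for the reverse bound is the intended verification. Your closing remark is also accurate — the same-length hypothesis is never needed and is carried over only to mirror the Hamming-distance statement in \cref{tria}.
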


Further, similarly to the Hamming distance, we write $\ed(S, T^*) := \min
\{\ed(S,T^\infty\fragmentco{0}{j}) \mid j \in \mathbb{Z}_{\ge 0}\}$  to denote the minimum edit
distance between a string $S$ and any prefix of~a string $T^\infty$.
Further, we write $\edl{S}{T}:= \min\{\ed(S,T^\infty\fragmentco{i}{j}) \mid i, j \in \mathbb{Z}_{\ge 0}, i \le j\}$
to denote the minimum edit distance between $S$ and any substring of~$T^\infty$,
and we set $\eds{S}{T} :=  \min\{\ed(S,T^\infty\fragmentco{i}{j|T|}) \mid i, j \in \mathbb{Z}_{\ge 0}, i \le j|T|\}$.

Now, for a string $P$ (also called a \emph{pattern}), a string $T$ (also called a \emph{text}),
and an integer $k\ge 0$ (also called a \emph{threshold}), we say that there is a
\emph{$k$-error occurrence} of~$P$ in $T$ at position $i$ if
$\ed(P, T\fragmentco{i}{j})\leq k$ for some $j \ge i$.
We write $\OccE_k(P,T)$ to denote the set of~all positions of~$k$-error occurrences of
$P$ in $T$, that is, $\OccE_k (P,T):= \{i \mid \exists_{j \ge i} \ed(P,T\fragmentco{i}{j}\leq k)\}$.
Lastly, we define the \emph{pattern matching with edits} problem.
\begin{problem}[Pattern matching with edits]
    Given a pattern $P$, a text $T$, and a threshold $k$, compute the set $\OccE_k(P,
    T)$.\lipicsEnd
\end{problem}
Again, we may only want to compute the size $|\OccE_k(P,T)|$
or a space-efficient representation of $\OccE_k(P, T)$.

\subsection{The \modelname Model}\label{sec:pillar}

In order to unify the implementations of~our approach for approximate pattern matching
in different settings, we introduce the \modelname model.
The \modelname model captures certain primitive operations which can be
implemented efficiently in all considered settings (see~\cref{sec:model} for the
actual implementations).
Thus, in the algorithmic sections of~this work, \cref{sec:pmm,sec:pme}, we bound the
running times in terms of the number of~\modelname operations;
if this value is asymptotically smaller than the time complexity of the remaining computations,
we also specify the extra running time.

In the \modelname model, we are given a family of~strings $\X$ for preprocessing.
The elementary objects are fragments $X\fragmentco{\ell}{r}$ of~strings $X\in \X$.
Each such fragment $S$ is represented via a \emph{handle}, which is how $S$ can be passed as input to \modelname operations.
Initially, the model provides a handle to each $X\in \X$, interpreted as $X\fragmentco{0}{|X|}$.
Handles to other fragments can be obtained through an \extractOpName operation:
\begin{itemize}
    \item $\extractOpName(S,\ell,r)$: Given a fragment $S$ and positions $0 \le \ell \le r
        \le |S|$, extract the (sub)fragment $S\fragmentco{\ell}{r}$. If
        $S=X\fragmentco{\ell'}{r'}$ for $X\in \X$, then $S\fragmentco{\ell}{r}$ is defined
        as $X\fragmentco{\ell'+\ell}{\ell'+r}$.
\end{itemize}
Furthermore, the following primitive operations are supported in the \modelname model:
\begin{itemize}
    \item $\lceOp{S}{T}$: Compute the length of~the longest common prefix of~$S$ and $T$.
    \item $\lcbOp{S}{T}$: Compute the length of~the longest common suffix of~$S$ and $T$.
    \item $\ipmOp{P}{T}$: Assuming that $|T|\le 2|P|$, compute $\OccEx(P,T)$ (represented
        as an arithmetic progression with difference $\per(P)$).
    \item $\accOpName(S,i)$: Assuming $i\in \position{|S|}$, retrieve the character $\accOp{S}{i}$.
    \item $\lenOpName(S)$: Retrieve the length $|S|$ of~the string $S$.
\end{itemize}

We now provide a small toolbox that is to be used in~\cref{sec:pmm,sec:pme}.
First, we note that $\lceOp{S}{T}$, $\lcbOp{S}{T}$, or $\ipmOp{S}{T}$ can be used to check whether two strings $S$ and $T$ are equal.
\begin{lemma}[Equality,~{\cite[Fact 2.5.2]{thesis}}]\label{lm:streq}
    Given strings $S$ and $T$, we can  check whether $S$ and $T$ are equal
    in $\Oh(1)$ time in the \modelname model.\lipicsEnd
\end{lemma}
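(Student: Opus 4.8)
The plan is to reduce the equality test to a length comparison followed by a single longest-common-prefix query. First I would invoke $\lenOpName(S)$ and $\lenOpName(T)$ to obtain $|S|$ and $|T|$; if $|S|\neq |T|$, then $S$ and $T$ cannot be equal and we return \textsf{false}. This costs $\Oh(1)$ \modelname operations.

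If $|S|=|T|=:\ell$, I would call $\lceOp{S}{T}$ to obtain the length $p$ of the longest common prefix of $S$ and $T$. Since $S$ and $T$ have the same length, we have $S=T$ if and only if $p\geq \ell$ (equivalently, $p=\ell$): indeed, if $p=\ell$ then $S$ and $T$ agree on all $\ell$ positions, and if $p<\ell$ then they differ at position $p$. Returning \textsf{true} exactly when $p=\ell$ thus correctly decides equality, again using only $\Oh(1)$ \modelname operations and $\Oh(1)$ additional arithmetic. Summing the two phases gives the claimed $\Oh(1)$ bound.

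As the lemma statement hints, one could alternatively route through $\lcbOp{S}{T}$ (symmetric argument, comparing the longest common suffix length to $\ell$) or through $\ipmOp{S}{T}$ (checking whether $0\in\OccEx(S,T)$); I would favour the $\lceOpName$-based argument because $\ipmOpName$ carries the precondition $|T|\leq 2|S|$, which is only guaranteed after we have already performed the length check. There is no real obstacle here: the only point worth stating explicitly is that the length comparison must precede the common-prefix (or common-suffix) query, since otherwise a short prefix match between strings of different lengths could be mistaken for equality.
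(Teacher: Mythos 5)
Your argument is correct and matches the intended (and cited) proof: compare lengths via \lenOpName and then test whether $\lceOp{S}{T}$ equals the common length, all in $\Oh(1)$ \modelname operations. The remark that the length check must come first is a sensible clarification but does not change the substance.
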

A more involved operation allows checking if a given string is periodic and, if so, computing the period.
\begin{lemma}[$\perOp{S}$,~\cite{IPM,thesis}]
    Given a string $S$, we can compute $\per(S)$ or declare that $\per(S) > |S| / 2$
    in $\Oh(1)$ time in the \modelname model.\lipicsEnd
\end{lemma}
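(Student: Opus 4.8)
The plan is to reduce the computation to a single \ipmOpName{} query plus one equality test, exploiting that a periodic string is ``pinned down'' by the occurrences of its length-$\ceil{n/2}$ prefix. Concretely: set $n:=\lenOpName(S)$, and if $n\le 1$ report $\per(S)>n/2$, so assume $n\ge 2$. Using \extractOpName{}, form $B:=S\fragmentco{0}{\ceil{n/2}}$; since $n\le 2\ceil{n/2}=2|B|$, the query $\ipmOp{B}{S}$ is admissible and returns $\OccEx(B,S)$ as an arithmetic progression whose smallest element is $0$ (because $B$ is a prefix of $S$). If this progression is $\{0\}$, report $\per(S)>n/2$. Otherwise let $q$ be its smallest positive element -- note $q\le n-|B|=\floor{n/2}\le n/2$ -- and, via \extractOpName{} and \cref{lm:streq}, test whether $S\fragmentco{0}{n-q}=S\fragmentco{q}{n}$, i.e., whether $q$ is a period of $S$: if so, report $\per(S)=q$; if not, report $\per(S)>n/2$. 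This uses $\Oh(1)$ \modelname operations, so it remains to justify correctness.

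The easy observation is that if $\per(S)=:p\le n/2$ then $S=R^\infty\fragmentco{0}{n}$ for its string period $R:=S\fragmentco{0}{p}$, so $B$ occurs in $S$ at position $p$ (and $1\le p\le\floor{n/2}=n-|B|$, a valid position); hence $|\OccEx(B,S)|\ge 2$, which justifies the ``$\{0\}$'' case. Everything else reduces to one \emph{key claim}: whenever $\per(S)\le n/2$, the smallest positive occurrence $q$ of $B$ in $S$ equals $\per(S)$. Granting the claim: if the equality test succeeds, then $q$ is a period of $S$ with $q\le n/2$, and it is the smallest one, since any period $p'<q$ would (by the same reasoning) force $B$ to occur at some $p'\in\fragmentco{1}{q}$, contradicting the minimality of $q$; thus $\per(S)=q$. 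If the test fails, then $\per(S)\le n/2$ is impossible, since otherwise the claim gives $q=\per(S)$, a genuine period of $S$ for which the test would have succeeded.

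To prove the key claim I would argue as follows. Assume $p:=\per(S)\le n/2$; by the observation above $q\le p$, and suppose for contradiction that $q<p$. Since $B$ occurs at position $q$, the integer $q$ is a period of the prefix $W:=S\fragmentco{0}{q+|B|}$ of $S$ (and $q+|B|\le n$), while $W$ inherits the period $p$ from $S$. The key inequality is $|W|=q+|B|\ge q+p$, valid because $|B|=\ceil{n/2}\ge p$; hence the periodicity lemma (Fine--Wilf) shows that $W$ -- and therefore its length-$p$ prefix $R=S\fragmentco{0}{p}$ -- has period $\gcd(q,p)<p$. Then $R$ is a proper power, contradicting the standard fact that the string period of a string is primitive. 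Hence $q=p$, proving the claim.

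The step I expect to be the main obstacle is exactly this key claim. The point is that the candidate $q$ is obtained from occurrences of the prefix $B$, and such occurrences can be ``accidental'' even when $S$ is not periodic -- e.g.\ $S=\texttt{abababaaaa}$ with $B=\texttt{ababa}$ yields $q=2$ although $\per(S)=9$ -- so the explicit equality test against $S$ is indispensable, and the Fine--Wilf-style argument above is what guarantees that in the genuinely periodic case no spurious shorter candidate survives. Everything else is routine within the \modelname model.
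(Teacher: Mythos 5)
Your proposal is correct, and it is essentially the argument behind this lemma: the paper itself gives no proof but defers to~\cite{IPM,thesis}, where $\perOp{S}$ is realized in exactly this way --- one \ipmOpName{} query locating the length-$\ceil{|S|/2}$ prefix inside $S$, taking the smallest non-trivial occurrence as the candidate period, and verifying it with a single longest-common-prefix/equality test. Your key claim (that in the periodic case the smallest positive occurrence of the half-length prefix coincides with $\per(S)$), proved via Fine--Wilf together with primitivity of the string period, is the same synchronization fact that underpins the cited solution, so there is no gap.
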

Next, we introduce an operation that checks cyclic equivalence and retrieves the witness shifts.
\begin{lemma}[$\cycEqOp{S}{T}$,~\cite{IPM,thesis}]
    Given strings $S$ and $T$, we can find all integers $j$ such that $T = \rot^j(S)$
    in $\Oh(1)$ time in the \modelname model. The output is represented as an arithmetic
    progression.\lipicsEnd
\end{lemma}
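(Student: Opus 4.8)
Using \lenOpName, first check that $|S|=|T|=:n$; otherwise no rotation of~$S$ can equal~$T$ and the output is the empty progression. The computation rests on the following structural fact: let $W:=S\fragmentco{0}{q}$ be the primitive root of~$S$ (so $S=W^{n/q}$ with $W$ primitive), where $q$ is its length. Then the solution set $\{j\in\fragmentco{0}{n}\colon\rot^j(S)=T\}$ is, modulo~$n$, either empty or a coset of~$q\mathbb{Z}$: indeed $\rot^d(S)=S$ holds exactly when $q\mid d$, since $\rot^d(W^{n/q})=(\rot^d W)^{n/q}$ and $W$ is primitive, so any two solutions differ by a multiple of~$q$. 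Moreover $\rot^j(W)$ is again primitive, so $\rot^j(S)=T$ forces $T$ to be an $(n/q)$-th power whose primitive root $V:=T\fragmentco{0}{q}$ has length~$q$. Thus it suffices to (i)~compute~$q$, (ii)~verify $|V|=q$, and (iii)~find the unique shift $i\in\fragmentco{0}{q}$ (if any) with $\rot^i(W)=V$; the answer is then the progression $(i,q,n/q)$ of representatives in $\fragmentco{0}{n}$. Steps~(i) and~(ii) cost $\Oh(1)$ via $\perOpName$: a string is primitive iff $\perOpName$ reports a period exceeding half its length, and otherwise its primitive-root length is the reported period if that divides the length and the length itself otherwise.

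For step~(iii), assume first that $S$ is a proper power, i.e.\ $n/q\ge2$. Then $WW=S\fragmentco{0}{2q}$ is available directly through \extractOpName, so $\ipmOp{V}{WW}$ is legal ($|WW|=2q=2|V|$) and returns $\OccEx(V,WW)$ as an arithmetic progression with difference $\per(V)=q$; within the position range $\fragment{0}{q}$ this leaves at most two occurrences. An occurrence at position~$p$ certifies $V=W^\infty\fragmentco{p}{p+q}=\rot^{(q-p)\bmod q}(W)$ and thus yields~$i$; if there is no occurrence, the output is the empty progression.

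The remaining case, $S$ primitive ($q=n$), is the main obstacle: one cannot materialise~$SS$ inside the \modelname model, so the doubling trick is unavailable. The plan here is to probe~$T$ with a constant number of length-$\Theta(n)$ fragments of~$S$: if $\rot^j(S)=T$ then $S\fragmentco{0}{n-j}$ is the length-$(n-j)$ suffix of~$T$ and $S\fragmentco{j}{n}$ is a suffix of~$S$ occurring in~$T$, so querying $\ipmOp{\cdot}{T}$ with a suitable length-$\ceil{n/2}$ prefix and a length-$\ceil{n/2}$ suffix of~$S$ (both legal as patterns in the length-$n$ text~$T$) is guaranteed to see a genuine solution. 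Each returned occurrence is a candidate shift, which is confirmed or discarded in $\Oh(1)$ time by extracting the two pieces of the induced rotation of~$S$ and testing them against the matching fragments of~$T$ (\cref{lm:streq}); by primitivity at most one survives. The delicate point — and the step I expect to require the most care — is keeping the candidate list of constant size: a long prefix or suffix of a primitive string may still be highly periodic (e.g.\ $(\mathtt{ab})^k\mathtt{a}$), so a single \ipmOpName call could report $\Theta(n)$ occurrences. This is handled by a case analysis on the period structure of the relevant prefix/suffix of~$S$, obtained with $\perOpName$ and a few $\lceOpName$/$\lcbOpName$ queries: if $S$ has a long aperiodic prefix or suffix, it is used as the probe (an aperiodic length-$\Theta(n)$ pattern occurs $\Oh(1)$ times in a length-$n$ text); otherwise $S$ itself is periodic, and then $S$ and each of its rotations carry their sole ``phase defect'' inside a window of length $\Oh(\per(S))$, so locating and aligning these windows in~$S$ and in~$T$ (again via $\lceOpName$/$\lcbOpName$ and $\perOpName$) pins down the shift directly. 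Converting the recovered occurrence position to a rotation amount, combining it with~$q$, and emitting a single arithmetic progression is then immediate.
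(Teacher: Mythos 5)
Your reduction for the non-primitive case is sound: computing the primitive-root length $q$ of $S$ via \perOpName, checking that $T$ has root length $q$, and running $\ipmOp{T\fragmentco{0}{q}}{S\fragmentco{0}{2q}}$ (legal because $S\fragmentco{0}{2q}$ is a genuine fragment of $S$ when $n/q\ge 2$) correctly pins down the unique shift modulo $q$ and yields one arithmetic progression. The gap is in the case you yourself flag as delicate, $S$ primitive, and it is not just a matter of missing details: the case analysis you propose there is wrong. A prefix probe of length $L_p$ only witnesses rotations $j\le n-L_p$ and a suffix probe of length $L_s$ only witnesses rotations $j\ge L_s$, so covering all $j\in\fragmentco{0}{n}$ forces $L_p+L_s\le n+1$, while the precondition of \ipmOp{}{} forces $L_p,L_s\ge \ceil{n/2}$. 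Hence the two probes are essentially \emph{forced} to be the length-$\ceil{n/2}$ prefix and suffix of $S$; you do not have the freedom to swap in a longer aperiodic prefix or suffix, and a single aperiodic probe never covers all rotation amounts.

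Consequently the problematic configuration is ``the half-length prefix (or suffix) of $S$ is periodic'', and your fallback claim that this forces $S$ itself to be periodic is false. Take $S=(\mathtt{ab})^{2k-1}\mathtt{c}$: it is primitive with $\per(S)=|S|$ (so there is no ``phase defect'' structure to align), its length-$\ceil{n/2}$ suffix is aperiodic, but every prefix shorter than $S$ has period $2$. A rotation amount such as $j=2$ is witnessed only by the half-length prefix probe $(\mathtt{ab})^k$, which occurs $\Theta(n)$ times in $T=\rot^2(S)$, all on one arithmetic progression with difference $2$; verifying these candidates one by one, as you propose, costs $\Theta(n)$ \modelname operations, not $\Oh(1)$. (The same phenomenon occurs with both halves periodic, e.g.\ $S=(\mathtt{ab})^k(\mathtt{cd})^k$, which is aperiodic.) What is missing is an argument that filters an entire arithmetic progression of candidate shifts at once — exploiting that consecutive candidates differ by $\per$ of the probe and using $\Oh(1)$ extra \lceOpName/\lcbOpName/\perOpName queries to decide which single element (if any) extends to a full rotation match. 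This is exactly the kind of reasoning the cited sources carry out; note that the paper itself does not prove this lemma but imports it from~\cite{IPM,thesis}, so your easy case matches the intended route, while the primitive case as written does not meet the $\Oh(1)$ bound.
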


Our subsequent goal is to generalize the primitive \lceOpName operation to also support
fragments of~infinite powers of strings. We start with a special case,
and then we cover the general case.
\begin{lemma}[$\lceOp{S}{Q^{\infty}}$,~{\cite[Fact 2.5.2]{thesis}; see also~\cite{BabenkoGKKS16}}]\label{lm:inflcpold}
    Given strings $S$ and $Q$, we can compute
    $\lceOp{S}{Q^{\infty}}$ in $\Oh(1)$ time in the \modelname model.\lipicsEnd
\end{lemma}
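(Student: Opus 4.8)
The plan is to reduce the computation to one preliminary comparison of $S$ against a single copy of $Q$, followed by at most one auxiliary \lceOpName call exploiting periodicity. Write $q:=|Q|$ and $s:=|S|$ (obtained via \lenOpName) and set $\ell_0:=\lceOp{S}{Q}$. First I would observe that $Q^{\infty}\fragmentco{0}{\min(s,q)}=Q\fragmentco{0}{\min(s,q)}$, so if $\ell_0<q$ then the first mismatch between $S$ and $Q^{\infty}$ — or the end of $S$, in case $s\le\ell_0$ — already occurs within the first $q$ positions, and hence $\lceOp{S}{Q^{\infty}}=\ell_0$. This disposes of all cases with $s\le q$ as well, since then $\ell_0\le s\le q$.

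The remaining case is $\ell_0=q$, which forces $q\le s$ and means that $S$ starts with a full copy of $Q$. Here I would use the standard characterization that a string whose length-$q$ prefix equals $Q$ is a prefix of $Q^{\infty}$ exactly as far as it admits $q$ as a period: for every $\ell\le s$ we have $S\fragmentco{0}{\ell}=Q^{\infty}\fragmentco{0}{\ell}$ if and only if $q$ is a period of $S\fragmentco{0}{\ell}$, i.e.\ $S\fragmentco{q}{\ell}=S\fragmentco{0}{\ell-q}$ (the ``only if'' is clear, and the ``if'' follows since $S[i]=S[i-q]=\dots=S[i\bmod q]=Q[i\bmod q]=Q^{\infty}[i]$). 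Consequently $\lceOp{S}{Q^{\infty}}=q+L$, where $L:=\lceOp{\extractOpName(S,q,s)}{\extractOpName(S,0,s-q)}$ is the length of the longest common prefix of the two length-$(s-q)$ fragments produced by two \extractOpName calls. Note that $q+L\le s$ automatically; if $L=s-q$ then $S$ is a prefix of $Q^{\infty}$ in its entirety, and otherwise the position $q+L$ witnessing the mismatch behind $L$ is, by periodicity of $Q^{\infty}$, also the position of the first mismatch between $S$ and $Q^{\infty}$.

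Counting up, the procedure uses two \lenOpName, at most two \lceOpName, and at most two \extractOpName calls, i.e.\ $\Oh(1)$ \modelname operations and $\Oh(1)$ additional time, which gives the claimed bound. This lemma is essentially routine; the only points that need a careful (but short) argument are the periodicity equivalence in the case $\ell_0=q$ and the degenerate boundary cases $s\le q$ and $s=q$ (where the auxiliary fragments are empty and $L=0$), all of which are handled by the discussion above.
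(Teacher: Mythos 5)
Your proof is correct, and it uses the same standard periodicity trick (one \lceOpName against $Q$, then, if $S$ starts with a full copy of $Q$, one self-\lceOpName of $S$ shifted by $|Q|$) that underlies the result the paper simply cites from \cite{thesis,BabenkoGKKS16} rather than reproving. The boundary cases ($\ell_0<|Q|$, $|S|\le|Q|$, empty auxiliary fragments) are handled correctly, so nothing is missing.
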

\begin{corollary}[$\lceOp{S}{Q^{\infty}\fragmentco{\ell}{r}}$]\label{lm:inflcp}
    Given strings $S$ and $Q$ and integers $0 \le \ell \le r$, we can compute
    $\lceOp{S}{Q^{\infty}\fragmentco{\ell}{r}}$ in $\Oh(1)$ time in the \modelname model.
\end{corollary}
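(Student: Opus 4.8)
The plan is to reduce directly to \cref{lm:inflcpold}. The only discrepancy between $Q^{\infty}$ and $Q^{\infty}\fragmentco{\ell}{r}$ is that the latter is the length-$(r-\ell)$ prefix of the one-way infinite string $Q^{\infty}\fragmentco{\ell}{\infty}$, which is merely a ``shifted'' copy of $Q^{\infty}$. Concretely, writing $q:=|Q|$ (obtained by a \lenOpName operation) and $\ell':=\ell\bmod q$, we have
\[
  Q^{\infty}\fragmentco{\ell}{\infty}=Q\fragmentco{\ell'}{q}\cdot Q^{\infty}.
\]
Equivalently $Q^{\infty}\fragmentco{\ell}{\infty}=(\rot^{-\ell'}(Q))^{\infty}$; we cannot materialize a handle to the rotated string $\rot^{-\ell'}(Q)$ within the \modelname model, but the displayed decomposition shows we do not need to: the ``partial period'' $Q\fragmentco{\ell'}{q}$ (a fragment of $Q$, retrievable via \extractOpName) is the only irregular piece, and everything after it is a genuine $Q^{\infty}$.

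Carrying this out: first compute $v:=\lceOp{S}{Q\fragmentco{\ell'}{q}}$ using one \extractOpName and the primitive \lceOpName operation, and note $v\le q-\ell'$ always. If $v<q-\ell'$, then $S$ and $Q^{\infty}\fragmentco{\ell}{\infty}$ already diverge (or $S$ is exhausted) within the partial period, so $\lceOp{S}{Q^{\infty}\fragmentco{\ell}{\infty}}=v$ and we return $\min(v,\,r-\ell)$. Otherwise $v=q-\ell'\le|S|$, so $S$ has $Q\fragmentco{\ell'}{q}$ as a prefix; we extract $S':=S\fragmentco{q-\ell'}{|S|}$ (another \extractOpName, using a \lenOpName operation for $|S|$, with the convention that an empty range yields $\varepsilon$), invoke \cref{lm:inflcpold} to get $w:=\lceOp{S'}{Q^{\infty}}$, and return $\min((q-\ell')+w,\,r-\ell)$. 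Correctness in both branches follows from the displayed identity together with the elementary fact that the longest common prefix of a string with a length-$L$ prefix of another string equals $\min$ of $L$ and the longest common prefix with the full string. Each branch uses $\Oh(1)$ \modelname operations plus one call to \cref{lm:inflcpold}, which itself runs in $\Oh(1)$ \modelname operations, giving the claimed $\Oh(1)$ bound.

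There is essentially no genuine obstacle here; the only thing needing mild care is the alignment bookkeeping (recognizing $Q^{\infty}$ read from offset $\ell$ as the partial period $Q\fragmentco{\ell'}{q}$ followed by $Q^{\infty}$, so that no rotated copy of $Q$ is ever formed) and the degenerate cases ($r=\ell$, $|S|\le q-\ell'$, or $S$ matching $Q\fragmentco{\ell'}{q}$ exactly up to its end), all of which are absorbed by the two $\min$ operations with $r-\ell$. This is a one-line reduction rather than a substantive argument.
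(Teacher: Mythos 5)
Your proposal is correct and matches the paper's own proof: both first compute $\lceOp{S}{Q\fragmentco{\ell\bmod|Q|}{|Q|}}$ against the partial period, then (if the end of $Q$ is reached) continue with an $\lceOp{S\fragmentco{|Q|-\ell\bmod|Q|}{|S|}}{Q^\infty}$ query via \cref{lm:inflcpold}, and finally cap the result by $r-\ell$. Your explicit handling of the degenerate cases via the two $\min$ operations is just a more careful spelling-out of the same reduction.
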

\begin{proof}
    We first compute $\lceOp{S}{Q\fragmentco{\ell\bmod{|Q|}}{|Q|}}$ using a primitive
    operation. If we reach the end of~the string $Q$, we continue with an $\lceOp{S\fragmentco{|Q|-\ell\bmod |Q|}{|S|}}{Q^\infty}$ query, implemented using~\cref{lm:inflcpold}.
    This yields  $\lceOp{S}{Q^{\infty}\fragmentco{\ell}{}}$,
    so we cap the obtained value with $r-\ell$ to retrieve   $\lceOp{S}{Q^{\infty}\fragmentco{\ell}{r}}$.
\end{proof}
A similar procedure yields an analogous generalization of the primitive \lcbOpName operation.
\begin{corollary}[$\lcbOp{S}{Q^{\infty}\fragmentco{\ell}{r}}$]\label{lm:inflcpR}
    Given strings $S$ and $Q$ and integers $0 \le \ell \le r$, we can compute
    $\lcbOp{S}{Q^{\infty}\fragmentco{\ell}{r}}$ in $\Oh(1)$ time in the \modelname model.\lipicsEnd
\end{corollary}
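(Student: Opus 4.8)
The plan is to reduce to \cref{lm:inflcp} by reversing everything, exploiting that the longest common suffix of two strings is the longest common prefix of their reverses: $\lcbOp{S}{Q^{\infty}\fragmentco{\ell}{r}} = \lceOp{S^R}{(Q^{\infty}\fragmentco{\ell}{r})^R}$. So the first step is to recognize $(Q^{\infty}\fragmentco{\ell}{r})^R$ as a fragment of $(Q^R)^{\infty}$, after which a single invocation of \cref{lm:inflcp} finishes the job.

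For this recognition step I would carry out the short index computation: reversing the window $\fragmentco{\ell}{r}$ of $Q^{\infty}$ produces a window of $(Q^R)^{\infty}$ of the same length $r-\ell$, but starting at offset $\ell' := (-r)\bmod|Q|$ (intuitively, reversal realigns the period, so the new starting phase is governed by $r$ rather than by $\ell$). Concretely, position $j$ of $(Q^{\infty}\fragmentco{\ell}{r})^R$ holds $Q\position{(r-1-j)\bmod|Q|}$, whereas position $\ell'+j$ of $(Q^R)^{\infty}$ holds $Q\position{(|Q|-1-\ell'-j)\bmod|Q|}$; equating these forces $\ell'\equiv-r\pmod{|Q|}$. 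I would then call \cref{lm:inflcp} on $S^R$, $Q^R$, $\ell'$, and $r':=\ell'+(r-\ell)$, which returns the value in $\Oh(1)$ time. I would also note that in each concrete setting the \modelname model is symmetric under reversal — one simply preprocesses $\{X^R : X\in\X\}$ alongside $\X$ — so handles to $S^R$ and $Q^R$, and the underlying \extractOpName and \lceOpName operations they need, incur no asymptotic overhead.

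An alternative plan, avoiding reversal altogether, is to mimic the proof of \cref{lm:inflcp} verbatim but from right to left: first run the primitive \lcbOpName operation against the (possibly partial) last block $Q\fragmentco{a}{b}$ of $Q^{\infty}\fragmentco{\ell}{r}$; if that match reaches the left boundary of the block, continue with the reverse analogue of \cref{lm:inflcpold}, matching the remaining prefix of $S$ against $Q^{\infty}$ read leftwards; and finally cap the accumulated length with $r-\ell$. In either plan I expect the only genuine obstacle to be the boundary bookkeeping — pinning down $\ell'$ (resp. $a$, $b$) exactly and handling the degenerate case in which $\ell$ and $r$ fall within a single block of $Q^{\infty}$, where the answer is just $\min\{\,r-\ell,\ \lcbOp{S}{Q\fragmentco{\ell\bmod|Q|}{((r-1)\bmod|Q|)+1}}\,\}$ — but this is a routine, constant-size manipulation, so the overall running time remains $\Oh(1)$ in the \modelname model.
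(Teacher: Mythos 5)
Your second plan is exactly the paper's proof: the paper disposes of this corollary with one sentence, saying that the procedure of \cref{lm:inflcp} mirrored right-to-left (primitive \lcbOpName against the last, possibly partial, block of $Q^{\infty}\fragmentco{\ell}{r}$, then the reverse analogue of \cref{lm:inflcpold}, then capping at $r-\ell$) gives the result, and your boundary bookkeeping for that plan is sound. Your primary plan, reduction by reversal, is also mathematically correct --- the index computation $\ell'=(-r)\bmod|Q|$ checks out --- but note that it is not an argument \emph{inside} the \modelname model as stated: the interface gives you handles only to fragments of the preprocessed strings, so $S^R$ and $Q^R$ are not available objects unless every implementation is augmented to preprocess $\{X^R : X\in\X\}$ as well. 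You flag this, and it is indeed true of all three implementations in the paper (which is precisely why \lcbOpName is offered as a primitive), but it makes the reversal route a statement about implementations rather than a model-level proof, and it would silently double the preprocessing in each setting. The mirrored procedure avoids this entirely by only invoking operations the model already provides, which is why the paper (and your alternative plan) is the cleaner route; both give $\Oh(1)$ time.
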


Lastly, we discuss an operation finding all exact occurrences of~a given string $P$ in
a given string~$T$.
\begin{lemma}[{\tt ExactMatches}$(P, T)$]\label{lm:emath}
    Let $T$ denote a string of~length $n$ and let $P$ denote a string of~length~$m$.
    We can compute the set $\OccEx(P, T)$ using $\Oh(n/\per(P))$ time and $\Oh(n/m)$
    \modelname operations.
\end{lemma}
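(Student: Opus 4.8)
The plan is to reduce to the \ipmOpName primitive, which handles the case $|T|\le 2|P|$ directly. The first step is to dispose of the easy sub-case $m>n$, where $\OccEx(P,T)=\emptyset$, and the almost-as-easy sub-case where $P$ is not periodic (i.e.\ $\per(P)>m/2$, detected via \perOpName in $\Oh(1)$ time): then occurrences of $P$ in any window of length $2m$ cannot overlap, so we can tile $T$ with $\Oh(n/m)$ overlapping windows of length $\le 2m$ (consecutive windows shifted by $m$, say $T\fragmentco{jm}{\min\{(j+2)m,n\}}$ for $j\in\fragmentco{0}{\lceil n/m\rceil}$), run \ipmOpName on each, and take the union; every occurrence lies entirely inside at least one window, and the total work is $\Oh(n/m)$ \modelname operations and $\Oh(n/m)=\Oh(n/\per(P))$ time. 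Each call returns an arithmetic progression which, in the non-periodic case, is really a single point, so merging is trivial.

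The main case is when $P$ is periodic with string period $Q:=P\fragmentco{0}{p}$, $p=\per(P)\le m/2$. Here I would still tile $T$ with overlapping length-$\le 2m$ windows as above and call \ipmOpName on each; within a window the occurrences of $P$ form an arithmetic progression of difference $p$. The subtlety is that an occurrence may straddle a window boundary and thus be missed if windows overlap by less than $m$ — which is why consecutive windows are shifted by $m$ and have length $2m$, guaranteeing any length-$m$ fragment of $T$ is contained in some window. The remaining task is to stitch the $\Oh(n/m)$ returned progressions (all of difference $p$) into a global representation of $\OccEx(P,T)$; since every occurrence position of $P$ in $T$ lies on the ``global'' lattice determined by the extension of the runs of $Q^\infty$ inside $T$, adjacent progressions either are disjoint or merge into one, and this cleanup costs $\Oh(n/m)$ time, well within the budget. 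The bound $\Oh(n/\per(P))=\Oh(n/p)$ dominates $\Oh(n/m)$ since $p\le m$, so the stated time bound holds; the number of \modelname operations is $\Oh(n/m)$ as required.

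The step I expect to be the main obstacle is bookkeeping the boundary-straddling occurrences while keeping the count of \modelname operations at $\Oh(n/m)$ rather than $\Oh(n/p)$: a naive approach that shifts windows by $p$ would use $\Oh(n/p)$ \ipmOpName calls, violating the operation bound, so it is essential that the overlap/length of the windows is tied to $m$ (not $p$) and that \ipmOpName is invoked only $\Oh(n/m)$ times. A secondary technical point is verifying that the progressions returned by different \ipmOpName calls are consistent (same difference $p$, compatible offsets) so the final merge genuinely produces $\OccEx(P,T)$ without double-counting; this follows from the standard fact that exact occurrences of a periodic pattern in a text correspond to maximal runs with period $p$ in $T$, each contributing one progression, and each such run intersects $\Oh(\text{run length}/m)$ windows whose partial progressions glue back together.
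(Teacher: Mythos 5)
Your proposal is correct and follows essentially the same route as the paper: tile $T$ with $\Oh(n/m)$ windows of length about $2m$ shifted by $m$, run one \ipmOpName query per window, and note that each returned progression has difference $\per(P)$, so the total output (and hence time) is $\Oh(n/\per(P))$. The paper needs neither your periodic/non-periodic case split nor the stitching discussion: it takes windows $T\fragmentco{im}{\min\{n,(i+2)m-1\}}$ of length $2m-1$, so that each occurrence $j$ lies in exactly one window (namely $i=\floor{j/m}$, via $j\bmod m\in\OccEx(P,T_{\floor{j/m}})$), whereas your length-$2m$ windows can report an occurrence starting at a multiple of $m$ twice and rely on the deduplication you budget in the merge step.
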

\begin{proof}
    We perform an $\ipmOp{P}{T_i}$ query with $T_i:= T\fragmentco{i{m}}{\min\{n, (i+2){m}-1\}}$
    for each $i\in \position{\floor{n/m}}$; that is a total of~$\Oh(n/m)$ \modelname operations.
    Each occurrence $j\in \OccEx(P,T)$ corresponds to a single occurrence of~$P$ in a single $T_i$,
    namely, $j \bmod m \in \OccEx(P, T_{\floor{j/m}})$, and vice versa.
    Furthermore, each $\ipmOp{P}{T_i}$ query returns an arithmetic progression with
    difference $\per(P)$, which thus consists of~$\Oh(m/\per(P))$ elements.
    Hence, the total number of~elements of~all arithmetic progressions is $\Oh(n/\per(P))$.
\end{proof}

We conclude this section with introducing the concept of~a generator.
\begin{definition}[Generator of~a set]
    For an (ordered) set $S$, an {\em $(\Oh(P), \Oh(Q))$-time generator of~$S$}
    is a data structure that after $\Oh(P)$-time initialization in the \modelname model,
    supports the following operation:
    \begin{itemize}
        \item {\tt Next}: In the $i$th call of~{\tt Next}, return the $i$th smallest
            element of~the set $S$ (or $\bot$ if $i > |S|$), using $\Oh(Q(i))$ time in the \modelname model.\lipicsEnd
    \end{itemize}
\end{definition}

Note that a generator is not specific to the \modelname model. We use generators
to obtain positions where two strings differ---either by a mismatch or by an edit. Consult
\cref{sc:auxphd,sc:auxped} for the details, as well as for other \modelname operations
specific to pattern matching with mismatches or edits, respectively.

\section{Improved Structural Insights into Pattern Matching with Mismatches}\label{sec:km}
In this section, we provide insight into the structure of~$k$-mismatch occurrences of~a pattern $P$
in a text $T$.
In particular, we improve the result of~\cite{bkw19} and show the following asymptotically tight
characterization (which is \cref{hd:mthm_intro} with explicit constants and without the
restriction on the length of~$T$).

\begin{restatable}{theorem}{hdmain}\label{thm:hdmain}
    Given a pattern $P$ of~length $m$, a text $T$ of~length $n$, and a threshold $k \in \fragment{1}{m}$,
    at least one of~the following holds:
    \begin{itemize}
        \item The number of~$k$-mismatch occurrences of~$P$ in $T$ is bounded by
            $|\Occ_k(P,T)|\le \thmbound \cdot n/m \cdot k$.
        \item There is a primitive string $Q$ of~length $|Q| \le m/\thmboundt k$ that
            satisfies $\hd(P, Q^*) < 2k$.\ifx\hdmaint\undefined\lipicsEnd\fi
    \end{itemize}
\end{restatable}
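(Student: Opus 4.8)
The plan is to follow the ``analysis of the pattern'' strategy sketched in the technical overview, which refines the approach of Bringmann et al.\ by processing $P$ left to right rather than pre-cutting it into blocks. First I would formalize the decomposition step: scanning the unprocessed suffix $P\fragmentco{j}{m}$, I take the length-$\Theta(m/k)$ prefix $P'$, compute its shortest period $Q$ via $\perOpName$, and branch. If $|Q|$ is a large constant fraction of $|P'|$, I record $P'$ as a \emph{break}; if I have collected $2k$ breaks, I stop the pattern analysis with ``breaks''. Otherwise $P'$ is periodic, and I try to extend it to a prefix $R$ of $P\fragmentco{j}{m}$ with $\hd(R,Q^*) = \Theta(k\cdot|R|/m)$ (found by enumerating mismatches via $\lceOpName$ against $Q^\infty$ and stopping at the right threshold, using that the mismatch count grows no faster than the length). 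If $R$ exists it becomes a \emph{repetitive region}; once the total length of repetitive regions reaches $\tfrac38 m$ I stop with ``repetitive regions''. If $R$ does not exist, $P\fragmentco{j}{m}$ is too close to $Q^\infty$, and I symmetrically try to extend leftward to a suffix $R'$ of $P$ with $\hd(R',\bar Q^*) = \Theta(k\cdot|R'|/m)$ for the appropriate rotation $\bar Q$; if that also fails I output the approximately periodic case, and a counting argument (all breaks and regions found so far have total length $\le \tfrac58 m$) shows $|R'|\ge\tfrac38 m$, so I may work with $R'$ alone. The key invariant to check is that this process terminates and produces exactly one of the three outcomes: $2k$ disjoint breaks, disjoint repetitive regions of total length $\ge\tfrac38 m$, or a primitive $Q$ of length $O(m/k)$ with $\hd(P,Q^*)<2k$.

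Next I would handle the two combinatorial cases separately. \textbf{Breaks.} If $P$ has $2k$ disjoint breaks, then in any $k$-mismatch occurrence of $P$ in $T$ at most $k$ of them incur a mismatch, so at least $k$ are matched exactly. Each break $B_i$ has both length and shortest period $\Theta(m/k)$, hence at most $O((n/m)\cdot\frac{m/k}{m/k})=O(n/m)$ exact occurrences in any window of length $O(m)$ and $O(n/(m/k)\cdot 1)=O(nk/m)$ total; more carefully, by $\perOpName$ each break occurs $O(n/\per(B_i))=O(nk/m)$ times in $T$, but within the span of a single occurrence of $P$ it occurs $O(m/\per(B_i))=O(1)$ times. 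A marking argument then bounds $|\Occ_k(P,T)|$: each occurrence of $P$ is charged to (a constant number of) exact break occurrences, giving $O(n/m\cdot k)$ occurrences of $P$. \textbf{Repetitive regions.} For regions $R_i$ of total length $\ge\tfrac38 m$, I set $k_i:=\Theta(k|R_i|/m)$ so that $\sum_i k_i = \Theta(k)$, distributing the mismatch budget proportionally. Using the periodic-case machinery (the analogue of Claim~3.1 of~\cite{bkw19}, proved here as the periodic lemma): within each $R_i$, whose underlying approximate period has length $O(m/k)$, the set of $k_i$-mismatch occurrences has controlled structure, and I assign $|R_i|$ marks to each extension of a $k_i$-mismatch occurrence of $R_i$ to an approximate occurrence of $P$. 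The total number of marks is $O(k\sum_i|R_i|)$, while each genuine $k$-mismatch occurrence of $P$ collects at least $\sum_i|R_i| - m/4 \ge \tfrac18 m$ marks (since a $k$-mismatch occurrence of $P$ restricts to a $k_i$-mismatch occurrence of $R_i$ for all but a set of regions of total length $\le m/4$). Dividing, $|\Occ_k(P,T)| = O(k)\cdot n/m$.

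The main obstacle I expect is the repetitive-regions case: making precise the claim that extensions of $k_i$-mismatch occurrences of $R_i$ only carry $O(k|R_i|)$ marks in total, and that each $k$-mismatch occurrence of $P$ is ``seen'' with $\ge \sum_i |R_i| - m/4$ marks. This requires the periodic lemma to be in hand (so this theorem should be proved after it, or in tandem), and care with the constants: the thresholds $\Theta(k|R|/m)$ inside the region-construction must be chosen so that (i) a region of the prescribed mismatch density always exists when $P\fragmentco{j}{m}$ is far from $Q^\infty$, (ii) $d := k_i$ satisfies $d\ge 2k_i$ and $|Q|\le |R_i|/(8d)$ as required to invoke the periodic lemma on $R_i$, and (iii) the budget split $\sum k_i \le k$ holds so that $k$-mismatch occurrences of $P$ really do induce $k_i$-mismatch occurrences of the regions. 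The book-keeping that ties the constant $\mathtt{thmbound} = \max(12\beta,\ldots)$ and $\mathtt{thmboundt}=\alpha$ to these inequalities is routine but must be done consistently; everything else is a combination of $\lceOpName/\perOpName$-style primitives and counting. I would also double-check the reduction from general $T$ to the $n\le\tfrac32 m$ regime by covering $T$ with $O(n/m)$ overlapping windows of length $<\tfrac32 m$, which accounts for the $n/m$ factor in the first bullet and lets me apply the window version of the argument.
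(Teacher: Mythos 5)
Your overall route is the same as the paper's: analyze $P$ left to right into breaks / repetitive regions / an approximate period, then use a marking argument for the break case and a budget-splitting marking argument (on top of the periodic-case lemma) for the repetitive-region case, with a reduction to windows of length $<\threehalfs m$ supplying the $n/m$ factor. However, there is one genuine gap: you claim the pattern analysis terminates with ``a primitive $Q$ of length $\Oh(m/k)$ with $\hd(P,Q^*)<2k$'', and no choice of constants can make this true. The failure of the final (leftward) extension only certifies $\hd(P,Q^*)<\delta m$, where $\delta$ is the mismatch-density threshold used to define repetitive regions. Your own constraints force $\delta$ to be well above $2k/m$: you need $d_i:=\hd(R_i,Q_i^*)=\delta|R_i|\ge 2k_i$ and, in the lower bound on marks, $\sum_{i\notin I}|R_i|\le \sum_{i \notin I} k'_i/(k_i/|R_i|) \le m k/(c k)=m/c$ where $k_i=c\cdot k|R_i|/m$; for the marks bound $\sum_i|R_i|-m/c$ to be a positive constant fraction of $m$ given $\sum_i|R_i|\ge \tfrac38 m$ you need $c>8/3$, hence $\delta\ge 2c\,k/m>\tfrac{16}{3}\,k/m$. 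So the analysis can only certify $\hd(P,Q^*)<\beta k$ for some constant $\beta>2$ (the paper uses $\beta=8$), and the regime $2k\le \hd(P,Q^*)<\beta k$ is not covered by any of your three cases.

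The paper closes exactly this hole with a separate lemma (its Lemma~\ref{lm:hdA}): if $2k\le \hd(P,Q^*)\le \beta k$ and $|Q|\le m/\alphav k\le m/8d$ with $d:=\hd(P,Q^*)$, then the periodic-case corollary applied to $P$ itself (the ``moreover, if $\hd(P,Q^*)=d$ then $|\Occ_k(P,T)|\le 12\cdot n/m\cdot d$'' part) yields $|\Occ_k(P,T)|=\Oh(n/m\cdot k)$, so the first alternative of the theorem still holds; only when $\hd(P,Q^*)<2k$ does one output the second alternative. You already have the periodic-case machinery in hand, so the fix is short, but as written your claimed trichotomy from the decomposition is false and the intermediate case must be added as an explicit step. (Minor additional point: in the break case, the counting should be ``each $k$-mismatch occurrence receives at least $k$ marks while the total number of marks is $\Oh(n/m\cdot k^2)$''; your phrase about charging each occurrence to a constant number of exact break occurrences does not by itself give the $\Oh(n/m\cdot k)$ bound.)
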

\def\hdmaint{1}

\subsection{Characterization of~the Periodic Case}
In order to prove \cref{thm:hdmain}, we first discuss
the (approximately) periodic case, that is, the case when we have $\hd(P, Q^*) < 2k$.
In particular, we prove the following theorem, which strengthens~\cite[Claim~3.1]{bkw19}.

\begin{restatable}[{Compare \cref{lem:aux_intro} and~\cite[Claim 3.1]{bkw19}}]{theorem}{lemaux}\label{lem:aux}
    Let $P$ denote a pattern of~length $m$, let $T$ denote a text of~length $n\le \threehalfs m$,
    and let $k\in \fragment{0}{m}$ denote a threshold.
    Suppose that both $T\fragmentco{0}{m}$ and $T\fragmentco{n-m}{n}$ are $k$-mismatch occurrences
    of~$P$ (that is, $\{0,n-m\}\subseteq \Occ_k(P,T)$).
    If there is a positive integer $d\ge 2k$ and a primitive string $Q$
    with $|Q|\le m/8d$ and $\hd(P,Q^*)\le d$, then each of~following holds:
     \begin{enumerate}[(a)]
        \item Every position in $\Occ_k(P,T)$ is a multiple of~$|Q|$.\label{it:mult}
        \item The string $T$ satisfies $\hd(T,Q^*)\le 3d$.\label{it:text}
        \item The set $\Occ_k(P,T)$ can be decomposed into $3d(d+1)$ arithmetic
            progressions with difference $|Q|$.\label{it:seq}
        \item If $\hd(P,Q^*)=d$, then $|\Occ_k(P,T)|\le
        6d$.\label{it:few}\ifx\lemauxt\undefined\lipicsEnd\fi
     \end{enumerate}
\end{restatable}
\def\lemauxt{1}

Before proving \cref{lem:aux}, we characterize the values $\hd(T\fragmentco{j|Q|}{j|Q|+m},P)$
under an extra assumption that $\hd(T,Q^*)$ is small as well; this assumption is dropped in~\cref{lem:aux}.
\begin{lemma}\label{lem:rle}
    Let $P$ denote a pattern of~length $m$ and let $T$ denote a text of~length $n\le
    \threehalfs m$.
    Further, let~$Q$ denote a string of~length $q$ and set $d:=\hd(P,Q^*)$ and $d':=\hd(T,Q^*)$.
    Then, the sequence of~values $h_j := \hd(T\fragmentco{jq}{jq+m},P)$
    for $0\le j \le (n-m)/q$ contains at most $d'(2d+1)$ entries $h_j$ with $h_j\ne h_{j+1}$
    and, unless $d=0$, at most $2d'$ entries $h_j$ with $h_j \le d/2$.
\end{lemma}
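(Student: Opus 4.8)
The plan is to exploit the fact that, once we fix the comparison against the periodic string $Q^\infty$, the two relevant strings $P$ and $T$ differ from $Q^\infty$ only at few positions, and shifting by a multiple of $q=|Q|$ preserves the alignment with $Q^\infty$. Concretely, set $M_P := \MIS(P, Q^*) \subseteq \position{m}$ and $M_T := \MIS(T, Q^*)\subseteq\position{n}$, so $|M_P|=d$ and $|M_T|=d'$. For a shift $jq$ with $0\le j\le (n-m)/q$, the fragment $T\fragmentco{jq}{jq+m}$ agrees with $Q^\infty\fragmentco{jq}{jq+m}$ exactly outside $M_T - jq$ (restricted to $\position{m}$), and $Q^\infty\fragmentco{jq}{jq+m}$ is just a cyclic rotation of the window that $P$ is compared against; since $q\mid jq$, in fact $Q^\infty\fragmentco{jq}{jq+m}=Q^\infty\fragmentco{0}{m}$. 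Hence the only positions $i\in\position{m}$ at which $T\fragmentco{jq}{jq+m}$ and $P$ can possibly differ are those in $M_P$ together with those in $(M_T-jq)\cap\position{m}$; outside this set both strings equal $Q^\infty[i]$. This already gives the structural picture: $h_j$ is determined by how the "shifted copy" of $M_T$ interacts with the fixed set $M_P$.

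The first key step is to bound the number of $j$ with $h_j\ne h_{j+1}$. I would argue that $h_j$ can change only when the shift-by-$q$ either moves some element of $M_T$ past a position of $M_P$, or moves an element of $M_T$ into or out of the window $\position{m}$. More precisely, write $h_j = \bigl|\{i\in\position m : T[jq+i]\ne P[i]\}\bigr|$. Decompose this count as (number of $i\in M_P$ with $T[jq+i]=Q^\infty[i]$, i.e. $jq+i\notin M_T$) — wait, more cleanly: for $i\notin M_P$, $P[i]=Q^\infty[i]$, so $i$ contributes iff $jq+i\in M_T$; for $i\in M_P$, $P[i]\ne Q^\infty[i]$, so $i$ contributes iff $T[jq+i]=P[i]$ fails, which — since $T[jq+i]$ is either $Q^\infty[i]$ (if $jq+i\notin M_T$) or some other value (if $jq+i\in M_T$) — happens unless $jq+i\in M_T$ and $T[jq+i]$ happens to equal $P[i]$. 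So $h_j = \bigl|\{i\notin M_P: jq+i\in M_T\}\bigr| + \bigl|\{i\in M_P : jq+i\notin M_T\}\bigr| + \bigl|\{i\in M_P : jq+i\in M_T,\ T[jq+i]\ne P[i]\}\bigr|$. Each term changes, as $j\mapsto j+1$, only because of an element of $M_T$ crossing a position associated with $M_P$ (there are $d$ such positions, each crossed at most once over the whole range since the shift is monotone) or crossing the window boundary $\{0\}$ or $\{m\}$ (at most $d'$ boundary-crossing events total). Counting: each of the $d'$ elements of $M_T$ can cause a change when it aligns with one of the $d$ positions in $M_P$ or when it exits the window; a careful accounting gives at most $d'\cdot d$ changes of the "alignment type" plus at most $d'$ changes from one boundary and $d'$ from the other — but since $n\le\frac32 m$ the window has width $m$ and the shift range has length at most $m/(2q)$, each $M_T$-element spends the whole range inside the window or exits at most once, so the boundary contributes at most $d'$ overall, not $2d'$; combined with the fact that passing a single $M_P$-position can toggle the contribution twice (in/out), one gets the stated bound $d'(2d+1)$.

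For the second claim, suppose $d\ge 1$ and consider a $j$ with $h_j\le d/2$. By the triangle inequality (\cref{tria}) applied to $P$, $T\fragmentco{jq}{jq+m}$, and $Q^\infty\fragmentco{jq}{jq+m}=Q^\infty\fragmentco0m$, we get $\hd(T\fragmentco{jq}{jq+m},Q^*)\ge \hd(P,Q^*) - \hd(P,T\fragmentco{jq}{jq+m}) = d - h_j \ge d/2$. But $\hd(T\fragmentco{jq}{jq+m},Q^*)$ counts exactly $|(M_T - jq)\cap\position m| \le$ the number of elements of $M_T$ that land in the window, and distinct values of $j$ (differing by a shift of $q$) correspond to disjoint "slots" for each fixed element of $M_T$. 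Summing $\hd(T\fragmentco{jq}{jq+m},Q^*)$ over all such $j$ is at most $d'$ (each of the $d'$ mismatch positions of $T$ lies in at most one window among the $q$-spaced ones... actually in at most $\lceil m/q\rceil$ of them, so I need to be slightly more careful), so the number of $j$ with this quantity $\ge d/2$ is at most $2d'$.

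The main obstacle I expect is the bookkeeping in the first part: precisely charging each "$h_j\ne h_{j+1}$" event to a unique (element of $M_T$, position of $M_P$ or window boundary) pair without double-counting, while squeezing out exactly the constant $d'(2d+1)$ rather than a looser $O(dd')$. Getting the clean disjointness in the second part (that each $T$-mismatch is "used" by a bounded number of windows so the sum telescopes to $\le 2d'$) is a secondary but related subtlety; the length restriction $n\le\frac32 m$ is what makes both accountings tight.
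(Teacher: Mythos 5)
Your treatment of the first claim is essentially the paper's own argument in a different notation: your three-term decomposition of $h_j$ is equivalent to the identity $h_j=\hd(P,Q^*)+\hd(T\fragmentco{jq}{jq+m},Q^*)-\mu_j$, where $\mu_j$ counts the aligned pairs $(\tau,\pi)\in\MIS(T,Q^*)\times\MIS(P,Q^*)$ with $\tau-\pi=jq$ (weighted by $2-\hd(P\position{\pi},T\position{\tau})$), and your event-counting---each aligned pair toggles a contribution on and off (two transitions), each $T$-mismatch causes at most one window-boundary event because $n\le\threehalfs m$ makes the entering set ($\tau\ge m$) and the leaving set ($\tau<n-m$) disjoint---is exactly how the bound $d'(2d+1)$ is obtained. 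That part is fine, modulo writing the charging argument carefully.

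The second claim is where your proposal has a genuine gap. From $h_j\le d/2$ you correctly deduce $\hd(T\fragmentco{jq}{jq+m},Q^*)\ge d/2$, but you then try to conclude by summing $\hd(T\fragmentco{jq}{jq+m},Q^*)$ over $j$ and treating this sum as roughly $d'$. It is not: the windows at shifts $0,q,2q,\dots$ overlap heavily, so each of the $d'$ mismatches of $T$ with $Q^\infty$ lies in up to $\lceil m/q\rceil$ windows, the sum can be as large as $d'\lceil m/q\rceil$, and a Markov-type bound only yields about $d'm/(qd)$ shifts with $\hd(T\fragmentco{jq}{jq+m},Q^*)\ge d/2$---which can vastly exceed $2d'$ in the regime where the lemma is applied ($q\le m/8d$). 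Indeed, the number of windows with at least $d/2$ mismatches against $Q^*$ genuinely can be much larger than $2d'$ (cluster all of $T$'s mismatches together: every window containing the cluster qualifies), so bounding that quantity is the wrong target. The missing idea is to pigeonhole over \emph{aligned pairs} rather than window mismatch counts: from the identity above and the inequality $\mu_j\le 2\hd(T\fragmentco{jq}{jq+m},Q^*)$ one gets $h_j\ge d-\mu_j/2$, so $h_j\le d/2$ forces $\mu_j\ge d$, i.e., at least $d$ marks at that specific shift; and since each pair $(\tau,\pi)$ is aligned at at most one shift $j=(\tau-\pi)/q$ and contributes at most two marks, the total of all $\mu_j$ is at most $2dd'$, which caps the number of shifts with $\mu_j\ge d$ at $2d'$. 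Without this step the exact bound $2d'$ (whose constant is later used, e.g., in the $6d$ bound of the periodic-case theorem) does not follow from your argument.
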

\begin{proof}
    For every $\tau \in \MIS(T,Q^*)$ and $\pi \in \MIS(P,Q^*)$,
    let us put $(2 - \hd(P\position{\pi}, T\position{\tau}))$ marks at position $\tau - \pi$
    in~$T$, if it exists.
    For each $0\le j\le (n-m)/q$, let $\mu_j(\tau, \pi)$ denote the number of~marks placed
    at position $jq$ due to the mismatches $\tau$ in~$T$ and $\pi$ in $P$,
    that is,
    \[
        \mu_j(\tau, \pi) := \left\{
            \begin{array}{c l}
                2 - \hd(P\position{\pi}, T\position{\tau}) &
                \text{if $\pi \in \MIS(P,Q^*)$ and $\tau = jq + \pi \in \MIS(T,Q^*)$,}\\
                0 & \text{otherwise.}
        \end{array}\right.
    \]
    Further, define $\mu_j := \sum_{\tau, \pi} \mu_j(\tau,\pi)$ as the total number of~marks at
    position~$jq$.

    Next, for every $0\le j\le (n-m)/q$, we relate the Hamming distance
    $h_j := \hd(T\fragmentco{jq}{jq+m},P)$ to the
    number of~marks $\mu_j$ at position $jq$ and the Hamming distances
    $\hd(T\fragmentco{jq}{jq+m}, Q^*)$ and $\hd(P, Q^*)$;
    consult \cref{fig:lem_aux} for an illustration.

\begin{figure}[t]
    \centering
    \begin{tikzpicture}
                \begin{scope}[yshift=-1cm]
                    \draw[{Latex[length=1.5mm, width=1mm]}-{Latex[length=1.5mm, width=1mm]}] (0,0.55) -- (2,0.55);
                    \node[label = {above: $q$}]  at (1,0.4) {};
                	\node[label = {left: $T$}]  at (0,0.2) {};
                    \draw (0,0) rectangle (11,0.4);
                    \foreach \x in {2,4,6,8,10}{
                    	\draw (\x,0) -- (\x,0.4);
                    }
                    \node[label = {above: $\texttt{b}$}]  at (1.5,-0.15) {};
                    \node[label = {above: $\texttt{b}$}]  at (3.5,-0.15) {};
                    \node[label = {above: $\texttt{c}$}]  at (5,-0.15) {};
                    \node[label = {above: $\texttt{a}$}]  at (5.5,-0.15) {};
					\node[label = {above: $\texttt{c}$}]  at (7.5,-0.15) {};
					\node[label = {above: $\texttt{a}$}]  at (10.5,-0.15) {};
					\node[label = {below: $jq$}]  at (2.25,0.2) {};
					\foreach \x/\y in {1.5/1, 3.5/2, 5/3, 5.5/4, 7.5/5, 10.5/6}{
                    	\node[label = {below: $\tau_{\y}$},xshift=0.05 cm]  at (\x,0.15) {};
                	}
                \end{scope}

                \node[label = {left: $P$}]  at (2,0.2) {};
				\draw (2,0) rectangle (9.5,0.4);
				\foreach \x in {4,6,8}{
                    	\draw (\x,0) -- (\x,0.4);
                }
				\node[label = {above: $\texttt{a}$}]  at (3.5,-0.15) {};
				\node[label = {above: $\texttt{a}$}]  at (5.5,-0.15) {};
				\node[label = {above: $\texttt{b}$}]  at (7.5,-0.15) {};
				\foreach \x/\y in {3.5/1, 5.5/2, 7.5/3}{
                    	\node[label = {above: $\pi_{\y}$},xshift=0.1 cm]  at (\x,0.2) {};
                }
                \foreach \x in {3.5, 5, 7.5}{
                    	\node[label = {below: $*$}]  at (\x,0) {};
                }
\end{tikzpicture}
    \caption{In both strings, all blocks apart from the last ones are of~length $q$.
        For each string $X\in \{P,T\}$, we show the characters at positions in $\MIS(X,Q^*)$ only.
        At position $jq$ in $T$, we place
        $\mu_j(\tau_2,\pi_1)+\mu_j(\tau_4,\pi_2)+\mu_j(\tau_5,\pi_3)=1+2+1=4$ marks.
        We have $\hd(P,Q^*)=|\{\pi_1,\pi_2,\pi_3\}|=3$ and $\hd(T\fragmentco{jq}{jq+m}, Q^*)=|\{\tau_2,\tau_3,\tau_4,\tau_5\}|=4$.
        Using~\cref{{clm:hj}}, we obtain that $h_j=3$;
        the three corresponding mismatches are indicated by asterisks.
    }\label{fig:lem_aux}
\end{figure}
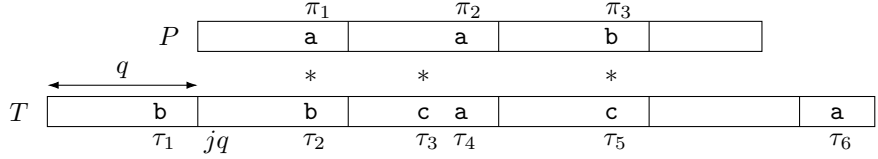

    \begin{claim}\label{clm:hj}
        For each $0\le j\le (n-m)/q$, we have $h_j =
        \hd(P,Q^*)+\hd(T\fragmentco{jq}{jq+m}, Q^*)-\mu_j$.
    \end{claim}
    \begin{claimproof}
        We show the following equivalent statement:
        \begin{equation}
            |\MIS(T\fragmentco{jq}{jq+m},P)|
            = |\MIS(P, Q^*)| + |\MIS(T\fragmentco{jq}{jq+m},Q^*)| -
            \sum_{\tau, \pi} \mu_j(\tau,\pi).\label{eq:hj}
        \end{equation}
        By construction, $\mu_j(\tau, \pi) = 0$ whenever $\tau \ne \pi + jq$.
        Hence, we can prove~\eqref{eq:hj} by showing that for every position $\pi \in \fragmentco{0}{m}$ in $P$
        and every position $\tau := jq + \pi$ in $T$, the following equation holds:
\[
            \hd(T\position{\tau}, P\position{\pi})
            =
            \hd(P\position{\pi}, Q^\infty\position{\pi})
            + \hd(T\position{\tau}, Q^\infty\position{\tau})
            - \mu_j(\tau, \pi).
        \]
        We proceed by case distinction on whether $\pi \in  \MIS(P,Q^*)$ and
        whether $\tau \in \MIS(T,Q^*)$.
        \begin{itemize}
            \item If $\pi \notin  \MIS(P,Q^*)$ and $\tau \notin \MIS(T,Q^*)$, then we have
                $P\position{\pi}=Q^\infty\position{\pi}=Q^\infty\position{\tau}
                = T\position{\tau}$ and thus \[
                    \hd(T\position{\tau}, P\position{\pi})=
                    0
                    =0+0-0 =
                    \hd(P\position{\pi}, Q^\infty\position{\pi})
                    + \hd(T\position{\tau}, Q^\infty\position{\tau})
                - \mu_j(\tau, \pi).\]
            \item If $\pi \in \MIS(P,Q^*)$ and $\tau \notin  \MIS(T,Q^*)$, then we have
                $P\position{\pi}\ne Q^\infty\position{\pi}=Q^\infty\position{\tau}=
                T\position{\tau}$ and thus
                \[
                    \hd(T\position{\tau}, P\position{\pi})=
                    1=1+0-0
                    =\hd(P\position{\pi}, Q^\infty\position{\pi})
                    + \hd(T\position{\tau}, Q^\infty\position{\tau})
                - \mu_j(\tau, \pi).
                \]
            \item If $\pi \notin \MIS(P,Q^*)$ and $\tau \in  \MIS(T,Q^*)$, then we have
                $P\position{\pi}=Q^\infty\position{\pi}=Q^\infty\position{\tau}
                \ne T\position{\tau}$ and thus
                \[
                    \hd(T\position{\tau}, P\position{\pi})=
                    1=0 + 1 -0
                    =\hd(P\position{\pi}, Q^\infty\position{\pi})
                    + \hd(T\position{\tau}, Q^\infty\position{\tau})
                - \mu_j(\tau, \pi).
                \]
            \item If $\pi \in \MIS(P,Q^*)$ and $\tau \in  \MIS(T,Q^*)$, then we have
            $P\position{\pi}\ne Q^\infty\position{\pi}=Q^\infty\position{\tau}
            \ne T\position{\tau}$ and thus
                \[
                    \hd(T\position{\tau}, P\position{\pi})
                    =1+1-(2-\hd(T\position{\tau}, P\position{\pi}))
                    =\hd(P\position{\pi}, Q^\infty\position{\pi})
                    + \hd(T\position{\tau}, Q^\infty\position{\tau})
                - \mu_j(\tau, \pi).\]
        \end{itemize}
        Combining the equations obtained for every pair of~positions $\pi$ and $\tau$,
        we derive~\eqref{eq:hj}.
    \end{claimproof}

    In particular, \cref{clm:hj} yields
    \[h_{j+1} - h_j =
        |\MIS(T,Q^*)\cap \fragmentco{jq+m}{(j+1)q+m}|
        -|\MIS(T,Q^*)\cap \fragmentco{jq}{(j+1)q}|
    -\mu_{j+1}+\mu_j.
\]
    Hence, in order for $h_{j+1}$ not to equal $h_j$, at least one of~the four terms on
    the right-hand side of~the equation above must be non-zero.
    Let us analyze when this is possible.
    To that end, we first observe that the set $\MIS(T,Q^*)\cap
    \fragmentco{jq+m}{(j+1)q+m}$ contains only elements $\tau \in \MIS(T,Q^*)$ with
    $\tau \ge m$, and that the set
    $\MIS(T,Q^*)\cap \fragmentco{jq}{(j+1)q}$ only contains elements $\tau \in \MIS(T,Q^*)$
    with $\tau < n - m$.
    Using $n\le \threehalfs m$, we observe that $h_{j+1}$ can be different from
    $h_j$ due to the first or second term at most $d'$ times.
    Further, each non-zero value in one of~the terms $\mu_{j+1}$ and $\mu_j$ can be
    attributed to a marked position ($jq$ or $(j+1)q$, respectively).
    The total number of~marked positions is at most
    $dd'$, so $h_{j+1}$ can be different from $h_j$ due one of~the terms
    $\mu_{j+1}$ or $\mu_j$ at most $2dd'$ times.
    In total, we conclude that the number of~entries~$h_j$ with $h_{j}\ne h_{j+1}$ is at
    most $d'(2d+1)$.

    Next, observe that $\mu_j \le 2|\MIS(T,Q^*)\cap
    \fragmentco{jq}{jq+m}|=2\hd(T\fragmentco{jq}{jq+m},Q^*)$,
    and therefore \[
        h_j = \hd(P,Q^*)+\hd(T\fragmentco{jq}{jq+m},Q^*)-\mu_j \ge d-\mu_j/2.
    \] Consequently, $h_j \le  d/2$ yields $\mu_j\ge d$,
    that is, that there are at least $d$ marks at position $jq$.
    Given that the total number of~marks is at most $2dd'$, the number of
    entries $h_j$ with $h_j \le d/2$ is at most $2d'$, assuming that $d>0$.
\end{proof}

Now, we drop the assumption that $\hd(T,Q^*)$ is small and prove \cref{lem:aux}.

\lemaux*
\begin{proof}
    Consider any position $\ell \in \Occ_k(P,T)$.
    By the definition of~a $k$-mismatch occurrence,
    we have $\hd(T\fragmentco{\ell}{\ell + m}, P) \le k \le d/2$.
    Combining this inequality with $\hd(P,Q^*)\le d$ via the triangle inequality
    yields $\hd(T\fragmentco{\ell}{\ell + m}, Q^*) \le \threehalfs d$.
    Similarly, for the position $0\in \Occ_k(P,T)$,
    we obtain $\hd(T\fragmentco{0}{m},\allowbreak Q^*) \le \threehalfs d$,
    which lets us compare the overlapping parts of~$Q^{\infty}$.
    Replacing strings by superstrings and applying the triangle inequality yields
    \begin{align*}
        \hd(Q^{\infty}\fragmentco{\ell}{m}, Q^{\infty}\fragmentco{0}{m-\ell})
            &\le \hd(T\fragmentco{\ell}{m}, Q^{\infty}\fragmentco{\ell}{m})+
            \hd(T\fragmentco{\ell}{m}, Q^{\infty}\fragmentco{0}{m-\ell})\\
            &\le \hd(T\fragmentco{0}{m}, Q^{\infty}\fragmentco{0}{m})+
            \hd(T\fragmentco{\ell}{\ell + m}, Q^{\infty}\fragmentco{0}{m})\\
            & = \hd(T\fragmentco{0}{m}, Q^*) + \hd(T\fragmentco{\ell}{\ell+m}, Q^*)\\
            &\le 3d.
    \end{align*}
    Towards a proof~by contradiction, suppose that $\ell$ is not a multiple of~$|Q|$.
    As $Q$ is primitive, we have\[
        3d \ge \hd(Q^{\infty}\fragmentco{\ell}{m}, Q^{\infty}\fragmentco{0}{m-\ell})
        \ge \left\lfloor{\frac{m - \ell}{|Q|}}\right\rfloor
        \ge \left\lfloor{\frac{m/2}{m/8d}}\right\rfloor
        = 4d,
    \]
    where the second bound follows from $\ell \leq m/2$ and $|Q|\le m/8d$.
    This contradiction yields Claim~\eqref{it:mult}.

    In order to prove Claim~\eqref{it:text},
    we observe that $n-m\in \Occ_k(P,T)$ is a multiple of~$|Q|$. Consequently,
    \[
        \hd(T, Q^*) = \hd(T\fragmentco{0}{n-m}, Q^*)+\hd(T\fragmentco{n-m}{n}, Q^*)
        \le \hd(T\fragmentco{0}{m}, Q^*)+ \threehalfs d \le 3d,
    \]
    which concludes the proof~of~Claim~\eqref{it:text}.

    For a proof~of~Claims~\eqref{it:seq} and~\eqref{it:few}, we apply \cref{lem:rle}.
    Due to Claim~\eqref{it:mult}, each position in $\Occ_k(P,T)$ corresponds to an entry
    $h_j$ with $h_j \le k$.
    In particular, each block of~consecutive entries $h_j,\ldots,h_{j'}$ not exceeding $k$
    yields an arithmetic progression (with difference $|Q|$) in $\Occ_k(P,T)$.
    The number of~entries~$h_j$ with $h_j\le k < h_{j+1}$ or $h_j > k \ge h_{j+1}$
    is in total at most $3d(2d+1)$, so the number of~arithmetic progressions is at most
    $1+1/2\cdot 3d(2d+1) \le 3d(d+1)$, which proves Claim~\eqref{it:seq}.

    For Claim~\eqref{it:few}, we observe that if $d=\hd(P,Q^*)$,
    then each position in $\Occ_k(P,T)$ corresponds to an entry $h_j$ with $h_j \le k \le
    d/2$; thus $|\Occ_k(P,T)|\le 2\cdot 3d = 6d$.
\end{proof}
\begin{corollary}\label{cor:aux}
    Let $P$ denote a pattern of~length $m$, let~$T$ denote a text of~length $n$,
    and let $k\in \fragment{0}{m}$ denote a threshold.
    If there is a positive integer $d\ge 2k$ and a primitive string $Q$
    with $|Q|\le m/8d$ and $\hd(P,Q^*)\le d$,
    then the set $\Occ_k(P,T)$ can be decomposed into $6\cdot n/m \cdot d(d + 1)$
    arithmetic progressions with difference $|Q|$.
    Moreover, if $\hd(P,Q^*)= d$, then $|\Occ_k(P,T)|\le 12\cdot n/m \cdot d$.
\end{corollary}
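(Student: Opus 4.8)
The plan is to reduce the general case to \cref{lem:aux} by covering $\Occ_k(P,T)$ with $\Oh(n/m)$ short windows, each of which inherits all the hypotheses of that lemma. If $\Occ_k(P,T)=\emptyset$ (in particular, whenever $n<m$), there is nothing to prove, so assume otherwise; then $n\ge m$. List the elements of $\Occ_k(P,T)$ in increasing order and partition them greedily into groups $G_1,\dots,G_g$: let $\ell_t$ be the smallest position in $\Occ_k(P,T)\setminus(G_1\cup\dots\cup G_{t-1})$ and set $G_t:=\Occ_k(P,T)\cap\fragment{\ell_t}{\ell_t+\floor{m/2}}$. By construction the $G_t$ partition $\Occ_k(P,T)$, each $G_t$ spans an interval of at most $\floor{m/2}$ positions, and the anchors $\ell_1<\dots<\ell_g$ differ by more than $m/2$ between consecutive terms while all lying in $\fragment{0}{n-m}$; hence $g\le 2n/m$.

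Fix a group $G_t$ and set $\ell^-_t:=\min G_t$, $\ell^+_t:=\max G_t$, and $T_t:=T\fragmentco{\ell^-_t}{\ell^+_t+m}$, so that $|T_t|=(\ell^+_t-\ell^-_t)+m\le\floor{m/2}+m\le\threehalfs m$. Since $\ell^-_t$ and $\ell^+_t$ belong to $\Occ_k(P,T)$, the prefix $T_t\fragmentco{0}{m}$ and the suffix $T_t\fragmentco{|T_t|-m}{|T_t|}$ are $k$-mismatch occurrences of $P$, so $(P,T_t,k)$ together with the given $d$ and $Q$ satisfies all the hypotheses of \cref{lem:aux}. By part~\eqref{it:seq} of \cref{lem:aux}, $\Occ_k(P,T_t)$ decomposes into at most $3d(d+1)$ arithmetic progressions with difference $|Q|$; translating everything by $\ell^-_t$ and using $\ell^+_t\le\ell^-_t+\floor{m/2}$, one checks that $\ell^-_t+\Occ_k(P,T_t)=\Occ_k(P,T)\cap\fragment{\ell^-_t}{\ell^+_t}=G_t$, so $G_t$ itself is a union of at most $3d(d+1)$ arithmetic progressions with difference $|Q|$. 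Moreover, if $\hd(P,Q^*)=d$, then part~\eqref{it:few} of \cref{lem:aux} bounds $|G_t|=|\Occ_k(P,T_t)|$ by $6d$.

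Summing over the at most $2n/m$ groups yields the claim: since $\Occ_k(P,T)=\bigcup_t G_t$, the total number of arithmetic progressions with difference $|Q|$ needed is at most $g\cdot 3d(d+1)\le 6\cdot n/m\cdot d(d+1)$, and, when $\hd(P,Q^*)=d$, we get $|\Occ_k(P,T)|=\sum_t|G_t|\le g\cdot 6d\le 12\cdot n/m\cdot d$. The one point requiring a little care is the design of the grouping: the windows $T_t$ must simultaneously be short enough that \cref{lem:aux} applies (length at most $\threehalfs m$, hence each $G_t$ must span at most $m/2$ positions) and few enough that $g=\Oh(n/m)$ (hence consecutive anchors must be far apart), and the greedy rule above is tuned to give both at once; once this is set up, cutting each window at the extreme occurrences of its group automatically supplies the prefix- and suffix-occurrence hypotheses of \cref{lem:aux}, and the remaining constant bookkeeping is routine.
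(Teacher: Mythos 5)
Your proposal is correct and takes essentially the same route as the paper: cover $\Occ_k(P,T)$ by $\Oh(n/m)$ windows of length at most $\threehalfs m$ whose first and last positions are $k$-mismatch occurrences of $P$, and apply \cref{lem:aux}\eqref{it:seq} and~\eqref{it:few} to each window. The only (cosmetic) difference is that the paper uses fixed overlapping blocks starting at multiples of $\floor{m/2}$ and trims each to the shortest fragment containing its occurrences, whereas you form the windows greedily anchored at the leftmost unprocessed occurrence; both yield at most $2n/m$ windows and the same constants.
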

\begin{proof}
    Partition the string $T$ into $\floor{2n/m}$ blocks $T_0, \dots, T_{\floor{2n/m}-1}$ of~length
    less than $\threehalfs m$ each, where the $i$th block starts at position
    $\floor{i\cdot m/2}$; formally, we set $T_i := T\fragmentco{{\floor{i\cdot {m}/2}}}
        {\min\{n, \floor{(i+3)\cdot {m}/2}-1\}}$.
    If $\Occ_k(P,T_i)\ne \emptyset$, we define $T'_i$ to be the shortest fragment
    of~$T_i$ containing all $k$-mismatch occurrences of~$P$ in $T_i$.
    As a result, $T'_i$ satisfies the assumptions of~\cref{lem:aux}.
    Hence, $\Occ_k(P,T_i')$ can be decomposed into $3d(d+1)$ arithmetic
    progressions with difference $|Q|$, and $|\OccEx(P,T_i')|\le 6d$ if $\hd(P,Q^*)= d$.

    We conclude that $\Occ_k(P,T_i')$ decomposes into $6\cdot n/m \cdot
    d(d+1)$ arithmetic progressions with difference~$|Q|$; further, $|\OccEx(P,T_i)|\le
    12\cdot n/m \cdot d$  if $\hd(P,Q^*)= d$.
\end{proof}

\subsection{The Non-Periodic Case}

Having dealt with the (approximately) periodic case, we now turn to the general case.
In particular, we show that whenever the string $P$ is sufficiently far from being periodic,
the number of $k$-mismatch occurrences of~$P$ in any string~$T$ of~length $n\le \threehalfs m$ is $\Oh(k)$.

Intuitively, we proceed (and thereby prove \cref{thm:hdmain}) as follows:
We first analyze the string $P$ for useful structure that can help in bounding the number of
occurrences of~$P$ in any string $T$.
If we fail to find any special structure in $P$, then we conclude that the string $P$ is
close to a periodic string with a small period (compared to $|P|$)---a case that
we already understand thanks to the previous subsection.

\begin{lemma}\label{prp:I}
    Given a string $P$ of~length $m$ and and a threshold $k\in \fragment{1}{m}$,
    at least one of~the following holds:
    \begin{enumerate}[(a)]
        \item The string $P$ contains $2k$ disjoint \emph{breaks} $B_1,\ldots, B_{2k}$
            each having period $\per(B_i)> m/\alphav k$ and length $|B_i| = \lfloor
            m/\betav k\rfloor$.
        \item The string $P$ contains $r$ disjoint \emph{repetitive regions} $R_1,\ldots, R_{r}$
        of~total length $\sum_{i=1}^r |R_i| \ge \deltavN/\deltavD \cdot m$ such
        that each region $R_i$ satisfies
        $|R_i| \ge m/\betav k$ and has a primitive \emph{approximate period} $Q_i$
        with $|Q_i| \le m/\alphav k$ and $\hd(R_i,Q_i^*) = \ceil{\betav k/m\cdot |R_i|}$.
        \item The string $P$ has a primitive \emph{approximate period} $Q$ with $|Q|\le
            m/\alphav k$ and $\hd(P,Q^*) < \betav k$.
    \end{enumerate}
\end{lemma}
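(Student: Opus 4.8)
The plan is to process $P$ from left to right, greedily carving out \emph{breaks} and \emph{repetitive regions}, and to argue that if this process never commits to outcome~(a) or~(b), then it has exposed an approximate period of all of~$P$, i.e.\ outcome~(c). I would set $\ell:=\floor{m/\betav k}$ and, for a length $i\ge 0$, write $\rho(i):=\ceil{\betav k/m\cdot i}$ for the number of mismatches allowed in a region of length~$i$ (I assume $\ell\ge 1$, i.e.\ $m\ge\betav k$; the complementary case is degenerate and checked directly). The process maintains the index $j$ of the start of the unprocessed suffix $P\fragmentco{j}{m}$ together with the lists of breaks and repetitive regions collected so far, starting from $j=0$; at all times $j$ equals the total length of the fragments already carved out. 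The key invariant, to be kept at the start of every step, is: fewer than $2k$ breaks have been collected (hence of total length $<2k\ell\le m/4$) and the repetitive regions collected so far have total length $<\deltavN/\deltavD\cdot m=3m/8$, so $j<5m/8$ and, in particular, $m-j>3m/8$ and there is room to read the next $\ell$ characters.

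\textbf{A single step.} Let $P':=P\fragmentco{j}{j+\ell}$ and let $Q:=P'\fragmentco{0}{\per(P')}$ be its string period, which is primitive and satisfies $\hd(P',Q^*)=0$. If $\per(P')>m/\alphav k$, record $P'$ as the next break: it has the prescribed length $\ell=\floor{m/\betav k}$ and period $>m/\alphav k$; if this is the $2k$-th break, stop with outcome~(a), otherwise advance $j$ by~$\ell$ and repeat. If $\per(P')\le m/\alphav k$, then $Q$ is a candidate period with $|Q|\le m/\alphav k$, and I search for the \emph{shortest} repetitive region with period~$Q$: increase $i$ from~$\ell$ and track $\hd(P\fragmentco{j}{j+i},Q^*)$; this quantity equals $0<1\le\rho(\ell)$ at $i=\ell$ and changes by at most $1$ per unit step, and so does $\rho$ (here $\betav k\le m$ is used), so the first time the two become equal they are equal exactly. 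If this happens at some $i^\star\le m-j$, record $R:=P\fragmentco{j}{j+i^\star}$ as the next repetitive region with approximate period~$Q$: since $\hd=0<\rho(\ell)$ we have $i^\star>\ell$, hence $|R|=i^\star>m/\betav k$, and $\hd(R,Q^*)=\rho(i^\star)=\ceil{\betav k/m\cdot|R|}$ by construction; advance $j$ by $i^\star$, and if the repetitive regions so far have total length $\ge\deltavN/\deltavD\cdot m$, stop with outcome~(b). If instead $\hd(P\fragmentco{j}{j+i},Q^*)<\rho(i)$ for all $i\in\fragment{\ell}{m-j}$, pass to the leftward search.

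\textbf{Leftward search and conclusion.} When the rightward search fails, $\hd(P\fragmentco{j}{m},Q^*)<\rho(m-j)\le\betav k$. Anchor $Q$ to $P\fragmentco{j}{m}$ at position~$j$ and extend the alignment to the left: for $s\le j$, the suffix $P\fragmentco{s}{m}$ is aligned with a length-$(m-s)$ window of the two-sided periodic extension of $Q$, namely $\bar Q_s^{\infty}\fragmentco{0}{m-s}$ with $\bar Q_s:=\rot^{(j-s)\bmod|Q|}(Q)$ (primitive, of length $|Q|\le m/\alphav k$). The count $h(s):=\hd(P\fragmentco{s}{m},\bar Q_s^*)$ is non-decreasing as $s$ decreases, changes by at most~$1$ per step, and $h(j)<\rho(m-j)$. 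If $h(s)$ meets $\rho(m-s)$ first at some $s^\star\ge 0$, discard all previously recorded repetitive regions and keep only $R':=P\fragmentco{s^\star}{m}$: then $|R'|=m-s^\star\ge m-j>3m/8=\deltavN/\deltavD\cdot m\ge m/\betav k$ and $\hd(R',\bar Q_{s^\star}^*)=\rho(|R'|)$, so outcome~(b) holds with a single region. Otherwise $\hd(P,\bar Q_0^*)=h(0)<\rho(m)=\betav k$, and outcome~(c) holds with the primitive period $\bar Q_0$ of length $\le m/\alphav k$. Since every non-terminating step strictly increases $j\le m$, the process halts, and it does so via~(a), (b), or~(c); the invariant is maintained inductively (a break increases the break count, which we cap at $2k$, and the repetitive-region total is rechecked after each new region). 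All carved fragments are pairwise disjoint by construction, so this yields exactly the three cases of the lemma.

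The one genuinely delicate step is the leftward search: one must choose the rotations $\bar Q_s$ so that $h(s)$ is really monotone in $s$ and seamlessly continues $\hd(P\fragmentco{j}{m},Q^*)$. The cleanest way is to first extend $Q$ to a two-sided infinite periodic word $\widehat Q$ with $\widehat Q\position{i}=Q\position{i\bmod|Q|}$, anchor $P\position{j+i}$ against $\widehat Q\position{i}$, and read $\bar Q_s$ and $h(s)$ off the window $\widehat Q\fragmentco{s-j}{m-j}$; monotonicity is then immediate. Everything else is bookkeeping with the constants: $\betav=8$ makes $2k$ breaks occupy at most $m/4$; $\deltavN/\deltavD\cdot m=3m/8$ is exactly what is left for the suffix region once at most $m/4+3m/8=5m/8$ of~$P$ has been consumed; $\alphav=16\betav$ makes the break threshold $\per(P')>m/\alphav k$ consistent with $|P'|=\floor{m/\betav k}$; and $\rho(m)=\betav k$ matches the threshold in~(c). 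The characterization of the periodic case is not needed here, as this is purely a statement about the structure of~$P$.
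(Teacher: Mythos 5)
Your proposal is correct and follows essentially the same route as the paper's proof: the same greedy left-to-right carving of breaks and repetitive regions with the invariant $j<5m/8$, the same fallback of extending the last candidate leftward under a rotated period, and the same "distance grows by at most one per step, so the first crossing is an exact equality" argument that the paper phrases via the potential $\Delta_\rho$. The only cosmetic differences are your explicit two-sided periodic word $\widehat{Q}$ for bookkeeping the rotations and your deferral of the degenerate regime $m<8k$ (where the lemma holds trivially), neither of which changes the substance.
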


\clearpage
\SetInd{0.6em}{0.6em}
\begin{algorithm}[t]
    $\mathcal{B} \gets \{\}; \mathcal{R} \gets \{\}$\;
    \While{\bf true}{
        Consider the fragment $P' = P\fragmentco{j}{j+\floor{m/\betav k}}$
        of~the next $\floor{m/\betav k}$ unprocessed characters of~$P$\;
        \If{$\per(P') > m/\alphav k$}{
            $\mathcal{B} \gets \mathcal{B} \cup \{P'\}$\;
            \lIf{$|\mathcal{B}| = 2k$}{\Return{breaks $\mathcal{B}$}}
            }\Else{
            $Q \gets P\fragmentco{j}{j+\per(P')}$\;
            Search for a prefix $R$ of~$P\fragmentco{j}{m}$
            with $|R|>|P'|$ and~$\hd(R, Q^*) =\lceil\betav k/m\cdot |R|\rceil$\;
            \If{such $R$ exists}{
                $\mathcal{R} \gets \mathcal{R} \cup \{(R, Q)\}$\;
                \If{$\sum_{(R, Q) \in \mathcal{R}} |R|\ge \deltavN/\deltavD \cdot
                    m$}{
                    \Return{repetitive regions (and their corresponding
                    periods) $\mathcal{R}$}\;
                }
                }\Else{
                Search for a suffix $R'$ of~$P$
                with $|R'|\ge m-j$ and~$\hd(R', \rot^{|R'|-m+j}(Q)^*) = \lceil\betav
                k/m\cdot |R'|\rceil$\;
                \lIf{such $R'$ exists}{%
                    \Return{repetitive region $(R', \rot^{|R'|-m+j}(Q))$}
                }\lElse{%
                    \Return{approximate period $\rot^{j}(Q)$}
                }
            }
        }
    }
\caption{A constructive proof~of~\cref{prp:I}.}\label{alg:P1}
\end{algorithm}
\begin{proof}
    We prove the claim constructively, that is, we construct either a set $\mathcal{B}$
    of~$2k$ breaks, or a set $\mathcal{R}$ of~repetitive regions, or, if we fail to
    construct either, we derive an approximate string period $Q$ of the string $P$ with the
    desired properties.

    We process the string $P$ from left to right as follows:
    If the fragment $P'$ of~the next $\floor{m/\betav k}$ (unprocessed) characters of~$P$
    has a long period, we have found a new break and continue (or return the found set of~$2k$
    breaks). Otherwise, if $P'$ has a short string period $Q$,
    we try to extend the fragment $P'$ (to the right) into a repetitive region.
    If we succeed, we have found a new repetitive region and continue (or return the found set
    of~repetitive regions if the total length of~all repetitive regions found so far is
    at least $\deltavN/\deltavD \cdot m$).
    If we fail to construct a new repetitive region, then we conclude that the suffix of
    $P$ starting with $P'$ has an approximate period~$Q$.
    We try to construct a repetitive region by extending this suffix to the left,
    dropping all the repetitive regions computed beforehand.
    If we fail again, we declare that an appropriate rotation of $Q$ is an approximate period of~the string $P$.
    Consider~\cref{alg:P1} for a detailed description.

    By construction, all breaks in the set $\mathcal{B}$ and repetitive regions
    in the set $\mathcal{R}$ returned by the
    algorithm are disjoint and satisfy the claimed properties.
    To prove that the algorithm is also correct when it fails to find a
    new repetitive region,
    we start by bounding from above the length of~the processed prefix of~$P$.
    \begin{claim}\label{clm:j}
        Whenever we consider a new fragment $P'=P\fragmentco{j}{j + \floor{m/\betav k}}$ of
        the next $\floor{m/\betav k}$ unprocessed characters of~$P$, such a fragment
        starts at a position $j<\ubjv m$.
    \end{claim}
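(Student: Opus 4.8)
The plan is to bound the total length of the prefix of $P$ that has been processed by the time we pick a new fragment $P'$, using the two stopping conditions of the algorithm as a budget. Whenever we are about to consider a new fragment $P' = P\fragmentco{j}{j+\floor{m/\betav k}}$, the processed prefix $P\fragmentco{0}{j}$ has been carved up into breaks (each of length exactly $\floor{m/\betav k}$) and repetitive regions (each of length at least $\floor{m/\betav k}$, but possibly much longer). The algorithm has not yet returned, so neither stopping condition has fired: we have collected at most $2k-1$ breaks, hence the total length contributed by breaks is at most $(2k-1)\cdot\floor{m/\betav k} < 2k\cdot m/\betav k = 2m/\betav$; and the repetitive regions found so far have total length $\sum_{(R,Q)\in\mathcal R}|R| < \deltavN/\deltavD\cdot m$. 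Adding these two bounds gives $j < 2m/\betav + \deltavN/\deltavD\cdot m = (2/\betav + \deltavN/\deltavD)\,m$.

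First I would spell out exactly what "processed" means here: every time through the loop we advance $j$ by the length of the object we just set aside ($|P'|=\floor{m/\betav k}$ for a break, or $|R|$ for a repetitive region), so the current value of $j$ is precisely the sum of the lengths of all breaks and all repetitive regions collected so far. (The suffix-extension branch only happens right before a \Return, so it never contributes to a "next fragment" iteration and can be ignored for this claim.) Then I would invoke the two "not-yet-returned" invariants as above to get $j < 2m/\betav + \deltavN/\deltavD\cdot m$. Finally, plugging in the concrete constants $\betav=\betav$ and $\deltavN/\deltavD = \deltavN/\deltavD$ and simplifying the rational number $2/\betav + \deltavN/\deltavD = (\ubjNu)/(\ubjDu)$, which reduces to $\ubjNr/\ubjDr$, yields exactly the bound $j < \ubjv m$ as stated (recall $\ubjv$ is defined in the preamble to be this reduced fraction).

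The only mildly delicate point — and the one I would be careful to state cleanly rather than the "main obstacle," since there is no real obstacle here — is the off-by-one in the break count: we must use that strictly fewer than $2k$ breaks have been collected (otherwise the algorithm would already have returned on line returning breaks $\mathcal B$), so the break contribution is $\le (2k-1)\floor{m/\betav k}$, which I then relax to the cleaner bound $< 2m/\betav$. Everything else is a direct substitution of the algorithm's loop invariants, so the claim follows immediately.
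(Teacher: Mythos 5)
Your proposal is correct and follows essentially the same argument as the paper: the processed prefix $P\fragmentco{0}{j}$ is partitioned into breaks of total length less than $2k\floor{m/\betav k}\le 2/\betav\cdot m$ (since fewer than $2k$ breaks have been collected) and repetitive regions of total length less than $\deltavN/\deltavD\cdot m$, giving $j<\ubjv m$. The only difference is that you spell out the off-by-one in the break count and the handling of the suffix-extension branch, which the paper leaves implicit.
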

    \begin{claimproof}
        Observe that whenever we consider a new fragment
        $P\fragmentco{j}{j+\floor{m/\betav k}}$, the string $P\fragmentco{0}{j}$ has been
        partitioned into breaks and repetitive regions.
        The total length of~breaks is less than $2k\floor{m/\betav k}\le 2/\betav \cdot m$,
        and the total length of~repetitive regions is less than $\deltavN/\deltavD \cdot m$.
        Hence, $j<\ubjv m$, yielding the claim.
    \end{claimproof}
    Note that \cref{clm:j} also shows that whenever we consider a new fragment $P'$
    of~$\floor{m/\betav k}$  characters, there is indeed such a fragment, that is, $P'$
    is well-defined.

    Now, consider the case when, for a fragment $P' = P\fragmentco{j}{j + \floor{m/\betav k}}$
    (that is not a break) and its string period $Q = P\fragmentco{j}{j + \per(P')}$,
    we fail to obtain a new repetitive region~$R$.
    In this case, we search for a repetitive region
    $R'$ of~length $|R'|\ge m-j$ that is a suffix of~$P$ and has an approximate period $Q'
    := \rot^{|R'|-m+j}(Q)$.
    If we indeed find such a region $R'$, then $|R'|\ge m-j \ge m-\ubjv m =
    \deltavN/\deltavD \cdot m$ by \cref{clm:j}, so $R'$ is long enough to be reported on
    its own.
    However, if we fail to find such $R'$, we need to show that $\rot^{j}(Q)$ can be
    reported as an approximate period of~$P$, that is, $\hd(P,\rot^{j}(Q)^*)< \betav k$.

    We first derive  $\hd(P\fragmentco{j}{m},Q^*)< \ceil{\betav k/m \cdot (m-j)}$.
    For this, we inductively prove that the values $\Delta_{\rho}:=\ceil{\betav k/m \cdot
    \rho}-\hd(P\fragmentco{j}{j+\rho},Q^*)$ for $\rho \in \fragment{|P'|}{m-j}$ are all at least 1.
    In the base case of~$\rho=|P'|$, we have $\Delta_{\rho}=1-0$ because $Q$ is the string
    period of~$P'$.
    To carry out an inductive step, suppose that $\Delta_{\rho-1}\ge 1$ for some $\rho \in \fragment{|P'|}{m-j}$.
    Notice that $\Delta_{\rho}\ge \Delta_{\rho-1}-1\ge 0$: The first term in the
    definition of~$\Delta_{\rho}$ has not decreased  compared to $\Delta_{\rho-1}$, and the second term
    $\hd(P\fragmentco{j}{j+\rho},Q^*)$ may have
    increased by at most one.
    Moreover, $\Delta_{\rho} \ne 0$ because $R=P\fragmentco{j}{j+\rho}$ could not be
    reported as a repetitive region. Since $\Delta_{\rho}\in \mathbb{Z}$, we conclude that
    $\Delta_{\rho}\ge 1$.
    This inductive reasoning ultimately shows that $\Delta_{m-j}>0$, that is,
    $\hd(P\fragmentco{j}{m},Q^*)< \ceil{\betav k/m \cdot (m-j)}$.

    A symmetric argument holds for the values $\Delta'_{\rho}:= {\ceil{\betav k/m \cdot
    \rho}}-\hd(P\fragmentco{m-\rho}{m},\rot^{\rho-m+j}(Q)^*)$ for $\rho \in \fragment{m-j}{m}$
    because no repetitive region $R'$ was found as an extension of~$P\fragmentco{j}{m}$ to
    the left. Thus, $\hd(P,\rot^{j}(Q)^*)< \betav k$,
    that is,  $\rot^{j}(Q)$ is an approximate period of~$P$.
\end{proof}

In the next steps, we discuss how to exploit the structure obtained by \cref{prp:I}.
First, we discuss the case that a string $P$ contains $2k$ disjoint breaks.

\begin{lemma}\label{lm:hdC}
    Let $P$ denote a pattern of~length $m$, let~$T$
    denote a text of~length $n$, and let $k\in \fragment{1}{m}$ denote a threshold.
    Suppose that $P$ contains $2k$
    disjoint breaks $B_1,\dots,B_{2k}$ 
    each satisfying $\per(B_i) \ge m / \alphav k$.
    Then, $|\Occ_k(P,T)|\le \gammav \cdot n/m \cdot k$.
\end{lemma}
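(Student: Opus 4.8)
The plan is a marking argument: exploit that, in any $k$-mismatch occurrence of~$P$, at least half of the $2k$ breaks must be matched exactly, and that each break, having a large period, can occur only $\Oh(nk/m)$ times in~$T$.

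First I would record the combinatorial core. Fix $\ell\in\Occ_k(P,T)$ and let $o_i$ be the offset of~$B_i$ inside~$P$, so that $B_i=P\fragmentco{o_i}{o_i+|B_i|}$. Since $|\MIS(P,T\fragmentco{\ell}{\ell+m})|\le k$ and the breaks are pairwise disjoint fragments of~$P$, at most~$k$ of the breaks can contain a position on which $P$ and $T\fragmentco{\ell}{\ell+m}$ disagree. Consequently, at least~$k$ of the breaks satisfy $B_i=T\fragmentco{\ell+o_i}{\ell+o_i+|B_i|}$; that is, $B_i$ has an exact occurrence at position $\ell+o_i$ in~$T$. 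Next I would set up the counting: for each $i\in\fragment{1}{2k}$ and each $p\in\OccEx(B_i,T)$, place one mark at position $p-o_i$, discarding it unless $p-o_i\in\Occ_k(P,T)$. For a fixed~$i$, distinct occurrences $p$ yield distinct marked positions, so the total number of marks placed is at most $\sum_{i=1}^{2k}|\OccEx(B_i,T)|$; on the other hand, by the previous observation every $\ell\in\Occ_k(P,T)$ collects at least~$k$ marks. Hence
\[
    k\cdot|\Occ_k(P,T)| \;\le\; \sum_{i=1}^{2k}|\OccEx(B_i,T)|.
\]

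It then remains to bound $|\OccEx(B_i,T)|$. I would use the elementary fact that consecutive exact occurrences of a string $X$ in a text are at distance at least $\per(X)$ (two occurrences closer than $|X|$ witness a period of~$X$, and $\per(X)\le|X|$). Since a mark survives only when $p-o_i\in\fragmentco{0}{n-m+1}$, the relevant occurrences of~$B_i$ start within an interval spanning at most $n-m$ positions, so their number is at most $(n-m)/\per(B_i)+1\le \alphav k(n-m)/m+1\le \alphav kn/m-\alphav k+1$ using $\per(B_i)\ge m/\alphav k$; alternatively one could invoke the bound implicit in \cref{lm:emath}. Substituting this into the displayed inequality, dividing by~$k$, and using $k\ge1$ together with $\gammav=2\alphav$ yields $|\Occ_k(P,T)|\le \gammav\cdot n/m\cdot k$.

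The main obstacle is essentially just the bookkeeping of constants in this final step: one must be a little careful with the floor in $|B_i|=\floor{m/\betav k}$, with deciding which occurrences of~$B_i$ are ``useful'' (only those lying over a genuine candidate position of~$P$), and with dispatching the trivial regime $m<\gammav k$, where the claimed bound is vacuous because $|\Occ_k(P,T)|\le n-m+1\le n<\gammav\cdot n/m\cdot k$.
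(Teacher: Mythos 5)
Your proposal is correct and follows essentially the same route as the paper's proof: a marking argument in which each exact occurrence of a break contributes a mark to its induced starting position, every $k$-mismatch occurrence receives at least $k$ marks because at least $k$ of the $2k$ disjoint breaks must match exactly, and each break has at most $\Oh(nk/m)$ exact occurrences since consecutive occurrences are separated by at least $\per(B_i) \ge m/\alphav k$. The only differences are cosmetic bookkeeping of constants (the paper simply bounds the marks per break by $\alphav \cdot n/m\cdot k$ and divides the total $\alphavd \cdot n/m\cdot k^2$ by $k$).
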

\begin{proof}
    For every break $B_i=P\fragmentco{b_i}{b_i + |B_i|}$
    we mark a position $j$ in $T$ if $j+b_i \in \OccEx(B_i,T)$.
    \begin{claim}\label{cl:c1}
        We place at most $\alphavd \cdot n/m \cdot k^2$ marks in total.
    \end{claim}
    \begin{claimproof}
        Fix a break $B_i$ and notice that the positions in $\OccEx(B_i,T)$ are at distance at least
        $\per(B_i)$ from each other. Hence,
        for the break $B_i$, we place at most $\alphav \cdot n/m \cdot k$ marks in $T$.
        In total, we therefore place at most
        $2k\cdot \alphav n/m \cdot k = \alphavd \cdot n/m \cdot k^2$ marks in $T$.
    \end{claimproof}
    Next, we show that every $k$-mismatch occurrence of~$P$ in~$T$
    starts at a position with at least $k$ marks.
    \begin{claim}\label{cl:c2}
        Each position $\ell\in \Occ_k(P,T)$ has at least $k$ marks in $T$.
    \end{claim}
    \begin{claimproof}
        Fix $\ell \in \Occ_k(P,T)$.
        Out of~the $2k$ breaks, at least $k$ breaks are matched exactly, as not matching
        a break exactly incurs at least one mismatch.
        If a break $B_i$ is matched exactly, then we have $\ell+b_i\in \OccEx(B_i,T)$.
        Hence, we have placed a mark at position $\ell$.
        Thus, there is a mark at position $\ell$ for every break $B_i$ matched
        exactly in the corresponding occurrence of~$P$ in~$T$. In total, there are at
        least~$k$ marks at position $\ell$ in $T$.
    \end{claimproof}
    By \cref{cl:c1,cl:c2}, we have  $|\Occ_k(P,T)|\le (\alphavd \cdot n/m \cdot k^2)/k
    = \alphavd \cdot n/m \cdot k$.
\end{proof}

Secondly, we discuss how to use repetitive regions in the string $P$ to bound
$|\Occ_k(P,T)|$.
\begin{lemma}\label{lm:hdB}
    Let $P$ denote a pattern of~length $m$, let~$T$ denote a text of~length $n$,
    and let $k\in \fragment{1}{m}$ denote a threshold.
    If $P$ contains disjoint repetitive regions $R_1,\ldots, R_{r}$
    of~total length at least $\sum_{i=1}^r |R_i| \ge \deltavN/\deltavD\cdot m$
    such that each region $R_i$ satisfies $|R_i| \ge m/\betav k$ and has a
    primitive approximate period~$Q_i$
    with $|Q_i| \le m/\alphav k$ and $\hd(R_i,Q_i^*) = \ceil{\betav k/m\cdot |R_i|}$,
    then $|\Occ_k(P,T)|\le \gammabv \cdot n/m \cdot k$.
\end{lemma}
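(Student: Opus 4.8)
The plan is to run a marking argument in the spirit of the proof of \cref{lm:hdC}, but with exact matches of breaks replaced by \emph{approximate} matches of the repetitive regions, whose count we control through \cref{cor:aux}. Write each region as $R_i = P\fragmentco{r_i}{r_i+|R_i|}$ and assign it a local threshold $k_i := \ceil{\frac{\betavh k}{m}|R_i|}-1$, where $\betavh := \betav/2$; intuitively this splits the budget of $k$ mismatches among the regions proportionally to their lengths, at half the ``mismatch density'' $\hd(R_i,Q_i^*)/|R_i|$ of the approximate period $Q_i$. Since $\betavh<\betav$ and $\hd(R_i,Q_i^*)=\ceil{\frac{\betav k}{m}|R_i|}$ is a Hamming distance between two strings of length $|R_i|$, we have $0\le k_i<\hd(R_i,Q_i^*)\le|R_i|$, so $k_i$ is a legitimate threshold for a pattern of length $|R_i|$. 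The marking scheme: for every region $R_i$ and every position $\ell$ of $T$ with $\ell+r_i\in\Occ_{k_i}(R_i,T)$, place $|R_i|$ marks at position $\ell$.

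First I would bound the total number of marks. Fix $i$ and put $d_i:=\hd(R_i,Q_i^*)$. I claim \cref{cor:aux} applies with $R_i$ as the pattern, $T$ as the text, threshold $k_i$, primitive string $Q_i$, and $d:=d_i$: indeed $d_i$ is a positive integer (since $|R_i|\ge m/\betav k$ forces $\frac{\betav k}{m}|R_i|\ge1$), we have $d_i\ge\frac{\betav k}{m}|R_i|=2\cdot\frac{\betavh k}{m}|R_i|\ge 2k_i$, and the same bound on $|R_i|$ gives $d_i\le\frac{\betav k}{m}|R_i|+1\le\frac{\tbetav k}{m}|R_i|$, whence $\frac{|R_i|}{8d_i}\ge\frac{m}{\alphav k}\ge|Q_i|$ (this is how the constant $\alphav=8\cdot\tbetav$ is calibrated). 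As $\hd(R_i,Q_i^*)=d_i$, \cref{cor:aux} gives $|\Occ_{k_i}(R_i,T)|\le 12\cdot\frac{n}{|R_i|}\cdot d_i\le 12\cdot\tbetav\cdot\frac{n}{m}\,k=\gbetavh\cdot\frac{n}{m}\,k$. Since every element of every $\Occ_{k_i}(R_i,T)$ deposits exactly $|R_i|$ marks, the total number of marks is at most $\gbetavh\cdot\frac{n}{m}\,k\cdot\sum_{i=1}^{r}|R_i|$.

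Second, I would show that every $k$-mismatch occurrence collects many marks. Fix $\ell\in\Occ_k(P,T)$ and set $e_i:=\hd\bigl(R_i,\,T\fragmentco{\ell+r_i}{\ell+r_i+|R_i|}\bigr)$. As the $R_i$ are disjoint fragments of $P$, $\sum_{i=1}^{r}e_i\le\hd(P,T\fragmentco{\ell}{\ell+m})\le k$. If $e_i\le k_i$, then $\ell+r_i\in\Occ_{k_i}(R_i,T)$, so region $R_i$ contributes $|R_i|$ marks to $\ell$; otherwise $e_i\ge k_i+1=\ceil{\frac{\betavh k}{m}|R_i|}\ge\frac{\betavh k}{m}|R_i|$, i.e.\ $|R_i|\le\frac{m}{\betavh k}\,e_i$. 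Hence the combined length of the regions \emph{not} contributing to $\ell$ is at most $\frac{m}{\betavh k}\sum_i e_i\le\frac{m}{\betavh}$, so $\ell$ collects at least $\sum_i|R_i|-\frac{m}{\betavh}$ marks --- a positive quantity since $\sum_i|R_i|\ge\frac{\deltavN}{\deltavD}m$. Dividing the total-mark bound by this quantity, and using that $x\mapsto\frac{x}{x-m/\betavh}$ is decreasing (so $\sum_i|R_i|/(\sum_i|R_i|-m/\betavh)\le\frac{\deltavN\cdot\betavh}{\deltavN\cdot\betavh-\deltavD}$ when $\sum_i|R_i|\ge\frac{\deltavN}{\deltavD}m$), I obtain $|\Occ_k(P,T)|\le\gbetavh\cdot\frac{\deltavN\cdot\betavh}{\deltavN\cdot\betavh-\deltavD}\cdot\frac{n}{m}\,k=\gammabv\cdot\frac{n}{m}\,k$.

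The main obstacle is the first step: obtaining a bound on $|\Occ_{k_i}(R_i,T)|$ that is $\Oh(k)$ \emph{uniformly in $i$} rests on the sharpened analysis of the periodic case (\cref{cor:aux}, and ultimately \cref{lem:rle} and \cref{lem:aux}), and it pins down the choice of $k_i$: the threshold must be small enough that $d_i\ge 2k_i$ for \cref{cor:aux} to be applicable, yet large enough that a region can fail to deposit its $|R_i|$ marks at an occurrence only by absorbing $\Omega(\frac{k}{m}|R_i|)$ of the mismatch budget --- which is precisely what makes the per-occurrence lower bound on marks strong enough to close the argument. Everything else is routine bookkeeping with the absolute constants.
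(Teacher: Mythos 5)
Your proposal is correct and follows essentially the same route as the paper's proof: the same marking scheme with per-region budgets $k_i \approx \betavh k|R_i|/m$ (the paper uses $\floor{\betavh k/m\cdot|R_i|}$, which differs from your $\ceil{\cdot}-1$ only when the quantity is an integer and changes nothing), the same invocation of \cref{cor:aux} with $d_i=\hd(R_i,Q_i^*)$ to bound the marks by $\gbetavh\cdot n/m\cdot k\cdot\sum_i|R_i|$, and the same lower bound of $\sum_i|R_i|-m/\betavh$ marks per $k$-mismatch occurrence, followed by the identical monotonicity argument in $m_R$.
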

\begin{proof}
    Set  $m_R := \sum_{i=1}^r |R_i|$.
    For each  repetitive
     region
    $R_i=P\fragmentco{r_i}{r_i+|R_i|}$,
    set $k_i := \floor{\betavh k/m\cdot |R_i|}$,
    and place $|R_i|$ marks at each position $j$ with $j+r_i \in \Occ_{k_i}(R_i,T)$.
    \begin{claim}\label{cl:b1}
        The total number of~marks placed is at most $\gbetavh \cdot n/m \cdot k\cdot m_R$.
    \end{claim}
    \begin{claimproof}
        We use \cref{cor:aux} to bound $|\Occ_{k_i}(R_i,T)|$.
        For this, we set $d_i := \hd(R_i,Q_i^*)$ and notice that $d_i = \ceil{\betav
        k/m\cdot |R_i|} \le \tbetav \cdot k/m\cdot |R_i|$
        since $|R_i| \ge m/\betav k$.
        Moreover, $d_i \ge 2k_i$ and $|Q_i| \le m/\alphav k \le |R_i|/8d_i$ due to $d_i
        \le  \tbetav\cdot k/m\cdot |R_i|$. Hence, the assumptions of~\cref{cor:aux} are
        satisfied.
        Consequently, $|\Occ_{k_i}(R_i,T)| \le 12\cdot n/|R_i| \cdot d_i\le \gbetavh \cdot
        n/m \cdot k$;
        the last inequality holds as $d_i \le  \tbetav \cdot k/m\cdot |R_i|$.

        The total number of~marks placed due to $R_i$ is therefore
        bounded by $\gbetavh \cdot n/m \cdot k\cdot |R_i|$.
        Across all repetitive regions, this sums up to $\gbetavh \cdot n/m \cdot k\cdot
        m_R$, yielding the claim.
    \end{claimproof}
    Next, we show that every $k$-mismatch occurrence of~$P$ in~$T$,
    starts at a position with many marks.
    \begin{claim}\label{cl:b2}
        Each $\ell\in\Occ_k(P,T)$ has at least $m_R-m/\betavh$ marks.
    \end{claim}
    \begin{claimproof}
        Let us fix $\ell\in \Occ_k(P,T)$ and denote
        $k'_i := \hd(R_i, T\fragmentco{\ell+r_i}{\ell+r_i+|R_i|})$ to be the number of~mismatches incurred by repetitive region $R_i$.
        Further, let
        $I := \{i\in\fragment{1}{r} \mid k'_i \le k_i\}=\{i\in\fragment{1}{r} \mid k'_i \le \betavh k/m\cdot |R_i|\}$
        denote the set of~indices of all repetitive regions that have $k_i$-mismatch occurrences at the corresponding
        positions in~$T$. By construction, for each $i\in I$,
        we have placed $|R_i|$ marks at position $\ell$.
        Hence, the total number of~marks at position $\ell$ is at least
        $\sum_{i\in I}|R_i|= m_R -\sum_{i\notin I}|R_i|$.
        It remains to bound the term $\sum_{i\notin I}|R_i|$.
        Using the definition of~$I$,
        we obtain
        \[\sum_{i\notin I} |R_i| = \sum_{i\notin I} \frac{\betavh m k}{\betavh m k}
            \cdot|R_i| = \frac{m}{\betavh k} \cdot  \sum_{i\notin I} \left(\betavh \cdot
            |R_i|/m \cdot k\right)
            < \frac{m}{\betavh k} \cdot  \sum_{i\notin I} k'_i \le
                \frac{m}{\betavh k} \cdot  \sum_{i=1}^r k'_i \le \frac{m}{\betavh},\]
        where the last bound holds as, in total, all repetitive regions incur at most
        $\sum_{i=1}^r k'_i \le k$ mismatches (since all repetitive regions are pairwise disjoint).
        Hence, the number of~marks placed is at least $m_R-m/\betavh$,
        completing the proof~of~the claim.
    \end{claimproof}
    In total, by \cref{cl:b1,cl:b2}, the number of~$k$-mismatch occurrences of~$P$ in $T$
    is at most \[
        |\Occ_k(P, T)| \le  \tfrac{\gbetavh \cdot n/m \cdot k \cdot m_R}{m_R - m/\betavh}
        = \tfrac{\gbetavh \cdot n/m \cdot k}{1 - m/(\betavh m_R)}.
    \]
    As this bound is a decreasing function in $m_R$, the assumption $m_R \ge
    \deltavN/\deltavD\cdot m$ yields the upper bound
    \[|\Occ_k(P, T)| \le \tfrac{\gbetavh \cdot n/m \cdot k \cdot \deltavN/\deltavD\cdot m}
        {\deltavN/\deltavD\cdot m - m/\betavh} = \gammabv \cdot n/m \cdot k,
    \]completing the proof.
\end{proof}

Finally, we consider the case that $P$ is approximately periodic, but not too close to the periodic string in scope.%
\begin{lemma}\label{lm:hdA}
    Let $P$ denote a string of~length $m$, let $T$ denote a string of~length $n$,
    and let $k\in \fragment{1}{m}$ denote threshold.
    If there is a primitive string $Q$
    of~length at most $|Q| \le m/\alphav k$ that satisfies $2k\le\hd(P, Q^*)\le\betav k$,
    then $|\Occ_k(P,T)|\le \gammap \cdot n/m \cdot k$.
\end{lemma}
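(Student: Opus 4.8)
The plan is to reduce the approximately-periodic-but-not-too-close case to the machinery already developed in \cref{cor:aux}. The key obstruction is that $\hd(P,Q^*)$ may lie anywhere in the range $[2k,\betav k]$, so $Q$ itself need not be a ``good enough'' period to plug directly into \cref{cor:aux} with the exponent $d=\hd(P,Q^*)$ — we need a $d$ with $d\ge 2k$ and $|Q|\le m/8d$, and the latter fails once $d$ grows past $m/(8|Q|)\ge \alphav k/8$, which is fine here since $\betav k<\alphav k/8$; the real point is simply to invoke \cref{cor:aux} with $d:=\hd(P,Q^*)$. First I would set $d:=\hd(P,Q^*)$, so that $2k\le d\le \betav k$ by hypothesis. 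Since $|Q|\le m/\alphav k$ and $\alphav = 16\betav$ (check the constants: $\alphav/\betav = 16$, so $m/\alphav k = m/16\betav k \le m/8 d$ because $d\le \betav k$), the pair $(d,Q)$ satisfies the hypotheses of \cref{cor:aux}: $d\ge 2k$, $Q$ primitive, $|Q|\le m/8d$, and $\hd(P,Q^*)=d$.

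Then \cref{cor:aux} (the ``moreover'' clause, which applies precisely because $\hd(P,Q^*)=d$ with equality) gives $|\Occ_k(P,T)|\le 12\cdot n/m\cdot d$. Substituting the upper bound $d\le \betav k$ yields
\[
    |\Occ_k(P,T)|\le 12\cdot n/m\cdot \betav k = 12\betav\cdot n/m\cdot k.
\]
It remains to observe that $\gammap = 12\betav$ (this is how the constant $\gammap$ is defined in the macro preamble), so the bound $|\Occ_k(P,T)|\le \gammap\cdot n/m\cdot k$ follows.

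The only thing to be careful about is the arithmetic linking $|Q|\le m/\alphav k$ to the requirement $|Q|\le m/8d$ needed by \cref{cor:aux}: we have $8d\le 8\betav k \le \alphav k$ (since $\alphav\ge 8\betav$), hence $m/\alphav k\le m/8d$, so $|Q|\le m/8d$ indeed holds. With that in hand the proof is a one-line application of \cref{cor:aux}. I do not expect any genuine obstacle here — unlike \cref{lm:hdB}, there is no marking argument and no need to combine regions; the work was already done in proving \cref{lem:aux} and \cref{cor:aux}, and this lemma just packages the ``not too close to periodic'' regime so that \cref{thm:hdmain} can branch on the three outcomes of \cref{prp:I} uniformly.
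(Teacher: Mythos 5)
Your proposal is correct and follows essentially the same route as the paper: apply \cref{cor:aux} with $d=\hd(P,Q^*)$, noting $2k\le d\le \betav k$ gives $|Q|\le m/\alphav k\le m/8d$, and use the ``moreover'' clause (since $\hd(P,Q^*)=d$ exactly) to get $|\Occ_k(P,T)|\le 12\cdot n/m\cdot d\le \gammap\cdot n/m\cdot k$. The constant checks ($\alphav\ge 8\betav$, $\gammap=12\betav$) are also exactly what the paper relies on.
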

\begin{proof}
    We apply \cref{cor:aux} with $d =\hd(P,Q^*)$.
    As $2k \le d\le \betav k$ yields $|Q|\le m/\alphav k \le m/8d$,
    the assumptions of~\cref{cor:aux} are met.
    Consequently, $|\Occ_k(P,T)|\le 12 \cdot n/m \cdot d
    \le \gammap \cdot n/m \cdot k$.
\end{proof}

Gathering \cref{prp:I,lm:hdC,lm:hdB,lm:hdA},
we are now ready to prove \cref{thm:hdmain}, which we repeat here for convenience.
\hdmain*
\begin{proof}
    We apply \cref{prp:I} on the string $P$ and proceed
    depending on the structure found in~$P$.

    If the string $P$ contains $2k$ disjoint breaks $B_1,\dots,B_{2k}$
    (in the sense of~\cref{prp:I}), we apply \cref{lm:hdC}
    and obtain that $|\Occ_k(P,T)|\le \gammav \cdot n/m \cdot k$.

    If the string $P$ contains $r$ disjoint repetitive regions $R_1,\dots,R_r$
    (again, in the sense of~\cref{prp:I}), we apply \cref{lm:hdB} and obtain that
    $|\Occ_k(P,T)|\le \gammabv \cdot n/m \cdot k$.

    Otherwise, \cref{prp:I} guarantees that there is a primitive string $Q$ of~length
    at most $|Q| \le m/\alphav k$ that satisfies $\hd(P, Q^*) < \betav k$.
    If $\hd(P, Q^*)\ge 2k$, then \cref{lm:hdA} yields $|\Occ_k(P,T)|\le \gammap \cdot n/m \cdot k$.
    If, however, $\hd(P, Q^*)< 2k$, then we are in the second alternative of~the theorem statement.
\end{proof}

\section{Algorithm: Pattern Matching with Mismatches in the \modelname Model}\label{sec:pmm}

In this section, we implement the improved structural result (\cref{thm:hdmain})
from the previous section to obtain the following result.

\hdalg*
\def\hdalgt{1}
In general, the algorithm follows the outline given by the proof~of~\cref{thm:hdmain}:
We first show how to implement \cref{prp:I} to preprocess the given pattern $P$.
Then, depending on the structure of~$P$, we (construct and) use algorithms implementing the
insights from the corresponding lemmas from the previous section.

\subsection{Auxiliary \modelname Model Operations for Pattern Matching with Mismatches}
\label{sc:auxphd}

We start by introducing some commonly used operations for Pattern Matching with Mismatches
and show how to implement them efficiently in the \modelname model.

\begin{algorithm}[t]
    \SetKwBlock{Begin}{}{end}
    \SetKwFunction{nxt}{Next}

    ${\misOpL{S}{Q}}$\Begin{
        \Return{$\mathbf{G}\gets \{ S\gets S;\, Q\gets Q;\, i \gets 0 \}$}\;
    }
    \BlankLine
    \nxt{$\mathbf{G}=\{S;\,Q;\,i\}$}\Begin{
        \lIf{$i \ge |S|$}{\Return{$\bot$}}
        $\pi \gets \lceOp{S\fragmentco{i}{|S|}}
        {Q^\infty\fragmentco{i}{}}$\;
        $i \gets  i + \pi + 1$\;
        \lIf{$i > |S|$}{\Return{$\bot$}}
        \lElse{\Return{$i - 1$}}
    }
    \caption{A generator for the set $\MIS(S, Q^*)$.}\label{alg:gen}
\end{algorithm}
\begin{lemma}[$\misOpL{S}{Q}$, $\mibOpL{S}{Q}$]\label{lm:misoph}
    For every pair of~strings $S$ and $Q$, the sets $\MIS(S, Q^*)$ and $\MIS(S^R, (Q^R)^*)$
    admit $(\Oh(1), \Oh(1))$-time generators.
\end{lemma}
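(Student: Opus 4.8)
The first generator is precisely the routine $\misOpL{S}{Q}$/\texttt{Next} displayed in \cref{alg:gen}, so the plan is to verify that it is an $(\Oh(1),\Oh(1))$-time generator of $\MIS(S,Q^*)$ and then to obtain the generator of $\MIS(S^R,(Q^R)^*)$ by mirroring it. First I would fix the state invariant for $\mathbf{G}=\{S;\,Q;\,i\}$: \emph{after the first $t$ calls to} \texttt{Next}, \emph{precisely the elements of $\MIS(S,Q^*)\cap\fragmentco{0}{i}$ have been returned, in increasing order, and $i$ is either $0$ or one more than the last returned element.} This holds initially, and I would check that one \texttt{Next} call preserves it. The key point is that $\pi:=\lceOp{S\fragmentco{i}{|S|}}{Q^\infty\fragmentco{i}{}}$ is, by definition, the length of the maximal block of consecutive positions $i,i+1,\dots$ on which $S$ agrees with $Q^\infty$; hence $i,\dots,i+\pi-1$ are matching positions, and $i+\pi$ is the smallest element of $\MIS(S,Q^*)$ not below $i$ whenever $i+\pi<|S|$. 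So updating $i\gets i+\pi+1$ and returning $i-1$ reports exactly the next mismatch and restores the invariant; and if $i+\pi\ge|S|$ --- equivalently, the updated $i$ exceeds $|S|$ --- then $\MIS(S,Q^*)$ has no element at least $i$, so $\bot$ is correct, and it remains correct in every later call because $i$ never decreases (this also justifies the top guard $i\ge|S|$, which fires when the previously reported element was $|S|-1$). Thus the $t$-th call returns the $t$-th smallest element of $\MIS(S,Q^*)$, or $\bot$ once the set is exhausted, as a generator requires.

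For the running time, initialization merely builds the state, costing $\Oh(1)$ time and no \modelname operations. Each \texttt{Next} call issues a single generalized longest-common-prefix query which, after an \extractOpName to obtain $S\fragmentco{i}{|S|}$, is answered with $\Oh(1)$ \modelname operations via \cref{lm:inflcp} (capping the $Q^\infty$-range at $|S|$; one may also invoke \cref{lm:inflcpold}); the surrounding arithmetic and comparisons cost $\Oh(1)$ time. Hence \cref{alg:gen} is an $(\Oh(1),\Oh(1))$-time generator of $\MIS(S,Q^*)$.

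It remains to generate $\MIS(S^R,(Q^R)^*)$, which is the only part demanding real care, since the \modelname model offers no explicit string reversal. Rather than reverse $S$ and $Q$, I would run the mirror image of \cref{alg:gen} directly on $S$ and $Q$, scanning from the right and replacing the \lceOpName queries with \lcbOpName queries. Writing $j:=|S|-1-i$ and $\ell:=(-|S|)\bmod|Q|$, an elementary calculation shows that $i\in\MIS(S^R,(Q^R)^*)$ if and only if $S\position{j}\ne Q^\infty\position{\ell+j}$; consequently $\MIS(S^R,(Q^R)^*)=\{\,|S|-1-j : 0\le j<|S|,\ S\position{j}\ne Q^\infty\position{\ell+j}\,\}$, and since $j\mapsto|S|-1-j$ is order-reversing, it suffices to enumerate the mismatches between $S$ and $Q^\infty\fragmentco{\ell}{\ell+|S|}$ in \emph{decreasing} order of position. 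The mirrored generator therefore keeps a cursor $i$ (initially $|S|$); each \texttt{Next} call computes $\lambda:=\lcbOp{S\fragmentco{0}{i}}{Q^\infty\fragmentco{\ell}{\ell+i}}$ via \cref{lm:inflcpR}, finds the mismatch at position $i-\lambda-1$ of $S$, returns $|S|-1-(i-\lambda-1)$, and updates $i\gets i-\lambda-1$, outputting $\bot$ as soon as this drops below $0$ (or if $i\le0$ on entry). The invariant and the $\Oh(1)$-per-call bounds transfer verbatim, which yields an $(\Oh(1),\Oh(1))$-time generator of $\MIS(S^R,(Q^R)^*)$ and completes the proof.

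The main obstacle is thus not the core algorithm but the surrounding bookkeeping: pinning down the offset $\ell$ so that \cref{lm:inflcpR} applies with no rotation of $Q$ materialized, confirming that $j\mapsto|S|-1-j$ converts the right-to-left scan into an increasing enumeration of $\MIS(S^R,(Q^R)^*)$, and getting the two $\bot$ boundary cases --- and their persistence over subsequent calls --- exactly right.
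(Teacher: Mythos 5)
Your proof is correct and takes essentially the same route as the paper: the forward generator is exactly \cref{alg:gen}, analyzed via the same invariant and a single \lceOpName query per call through \cref{lm:inflcp}, and your mirrored generator using \lcbOpName queries (\cref{lm:inflcpR}) with the fixed offset $\ell=(-|S|)\bmod|Q|$ is precisely the symmetric construction the paper leaves implicit with ``\mibOpLName can be obtained similarly.'' The index arithmetic translating decreasing mismatch positions of $S$ under that alignment into increasing elements of $\MIS(S^R,(Q^R)^*)$, and the handling of the $\bot$ boundary cases, are all correct.
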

\begin{proof}
    We only develop the \misOpLName generator; \mibOpLName can be obtained similarly.

    Given strings $S$ and $Q$, the generator itself just stores $S$, $Q$, and an index
    $i$ of~the position \emph{after} the last returned value by {\tt Next};
    initially, we set $i$ to $0$.

    We implement the {\tt Next} operation by using \cref{lm:inflcp} to compute
    $\pi=\lceOp{S\fragmentco{i}{|S|}}{Q^{\infty}\fragmentco{i}{}}$.
    If we observe that $i + \pi = |S|$, that is, if we reached the end of~the string
    $S$, then we return $\bot$. Otherwise, we report $i + \pi$ and update the
    index $i$ to $i + \pi + 1$. See \cref{alg:gen} for a pseudo-code.

    For the correctness, we observe that due to storing the index $i$, we are able to retrieve
    the suffixes of~$S$ and $Q$ to be compared, so the correctness follows.

    For the running time, we observe that the creation of~a generator
    is only bookkeeping, which takes constant time.
    Further, the {\tt Next} operation uses one call to the primitive \lceOpName operation
    and a single call to the \lceOpName operation from \cref{lm:inflcp}, which uses
    $\Oh(1)$ \modelname operations. Thus in total, the {\tt Next} operation also uses $\Oh(1)$
    \modelname operations, completing the proof.
\end{proof}
\begin{corollary}[\misOpName{\tt ($S$, $Q^*$)}]\label{lm:misoph2}
    Given strings $S$ and $Q$, we can compute the set $\MIS(S, Q^*)$,
    using $\Oh(\hd(S, Q^*)+1)$ primitive operations in the \modelname model.
\end{corollary}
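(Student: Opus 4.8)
The plan is to reduce the corollary directly to the generator of \cref{lm:misoph}. Concretely, I would first call $\misOpL{S}{Q}$ once to initialize a generator $\mathbf{G}$, and then repeatedly invoke its {\tt Next} operation, appending each returned position to a list, until a call returns $\bot$. The output is precisely this list. Since the generator returns elements in increasing order, the list is automatically sorted; if an unordered set suffices, the ordering is harmless.

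For correctness, I would invoke the specification of a generator: the $i$th call of {\tt Next} returns the $i$th smallest element of $\MIS(S,Q^*)$, and it returns $\bot$ as soon as $i > |\MIS(S,Q^*)| = \hd(S,Q^*)$. Hence the loop performs exactly $\hd(S,Q^*)+1$ calls to {\tt Next}, enumerates every element of $\MIS(S,Q^*)$ exactly once, and terminates upon seeing $\bot$; the collected list therefore equals $\MIS(S,Q^*)$.

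For the running time, \cref{lm:misoph} provides an $(\Oh(1),\Oh(1))$-time generator, so initialization costs $\Oh(1)$ \modelname operations and \emph{each} of the $\hd(S,Q^*)+1$ calls to {\tt Next} costs $\Oh(1)$ \modelname operations (crucially, the per-call cost does not grow with the call index). Summing gives $\Oh(\hd(S,Q^*)+1)$ \modelname operations in total. There is no real obstacle here; the only subtlety is the off-by-one accounted for by the additive ``$+1$'', which pays for the final call that returns $\bot$ and certifies that the enumeration is complete. (An entirely symmetric argument, using the \mibOpLName generator of \cref{lm:misoph}, would compute $\MIS(S^R,(Q^R)^*)$ within the same bound, should it be needed.)
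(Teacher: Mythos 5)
Your proposal is correct and matches the paper's own proof, which likewise initializes the $\misOpL{S}{Q}$ generator of \cref{lm:misoph} and calls {\tt Next} until it returns $\bot$, for a total of $\hd(S,Q^*)+1$ calls at $\Oh(1)$ \modelname operations each. The extra details you supply (sortedness of the output and the off-by-one for the terminating call) are consistent with, and merely elaborate on, the paper's one-line argument.
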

\begin{proof}
    We use a \misOpLName from \cref{lm:misoph} and call its {\tt Next} operation until
    the {\tt Next} operation returns~$\bot$. The claim follows.
\end{proof}

\begin{lemma}[{\tt Verify($S$, $T$, $k$)}]\label{lm:verifyh}
    Let $S$ and $T$ denote strings of~length $m$ each, and let $k \le m$ denote a positive integer.
    Using $\Oh(k)$ \modelname operations, we can check whether $\hd(S, T) \le k$.
\end{lemma}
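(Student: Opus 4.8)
The statement is a routine ``kangaroo jumping'' argument, so the plan is to enumerate the mismatches between $S$ and $T$ one at a time, stopping as soon as we have seen $k+1$ of them (in which case $\hd(S,T)>k$) or run out of positions (in which case $\hd(S,T)\le k$). Concretely, maintain an index $i$, initially $0$, and a counter of mismatches found so far, initially $0$. In each step, extract the suffixes $S\fragmentco{i}{m}$ and $T\fragmentco{i}{m}$ with \extractOpName and compute $\ell:=\lceOp{S\fragmentco{i}{m}}{T\fragmentco{i}{m}}$ via the primitive \lceOpName operation; if $i+\ell=m$ we have enumerated all mismatches, and otherwise position $i+\ell$ is the next mismatch, so we increment the counter and set $i\gets i+\ell+1$. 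We abort and report $\hd(S,T)>k$ as soon as the counter reaches $k+1$, and report $\hd(S,T)\le k$ if the loop terminates first.

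\textbf{Cleaner route via an existing primitive.} Since $|S|=|T|=m$, we have $T^\infty\fragmentco{0}{m}=T$ and hence $\MIS(S,T)=\MIS(S,T^*)$. Thus it suffices to invoke the $(\Oh(1),\Oh(1))$-time generator of $\MIS(S,T^*)$ from \cref{lm:misoph} with $Q:=T$, call its {\tt Next} operation at most $k+1$ times, and check whether the $(k+1)$-th call returns an element of $\MIS(S,T)$ (meaning $\hd(S,T)\ge k+1$) or whether some call among the first $k+1$ returns $\bot$ (meaning $\hd(S,T)\le k$). Either way, only $\Oh(k)$ \modelname operations are used: the initialization is $\Oh(1)$ and each {\tt Next} call is $\Oh(1)$.

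\textbf{Correctness and cost; main obstacle.} Correctness follows from the standard observation that the positions $i+\ell$ produced by successive \lceOpName queries (equivalently, the values returned by {\tt Next}) are exactly the elements of $\MIS(S,T)$ in increasing order, so after at most $k+1$ steps we have either certified $|\MIS(S,T)|\ge k+1$ or exhausted $\MIS(S,T)$. For the running time, each step costs $\Oh(1)$ \modelname operations (a constant number of \extractOpName calls, one primitive \lceOpName, or one generator {\tt Next}), and there are at most $k+1$ steps, giving $\Oh(k)$ \modelname operations in total. There is no real obstacle here beyond the bookkeeping point that we must stop after $k+1$ mismatches rather than computing $\hd(S,T)$ exactly, which could require $\Theta(m)$ operations when the two strings are very dissimilar; the early-termination cap is precisely what keeps the bound at $\Oh(k)$.
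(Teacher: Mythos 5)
Your ``cleaner route'' is exactly the paper's proof: it invokes the mismatch generator of \cref{lm:misoph} on $(S,T)$ (using that $T^\infty\fragmentco{0}{m}=T$ when $|S|=|T|=m$) and calls {\tt Next} at most $k+1$ times, stopping early, which gives the $\Oh(k)$ bound. Your first, unfolded kangaroo-jumping variant is just the same argument with the generator inlined, so the proposal is correct and matches the paper's approach.
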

\begin{proof}
    We use a \misOpLName from \cref{lm:misoph} and call its {\tt Next} operation until
    either the {\tt Next} operation returns~$\bot$ (in which case we return {\tt true})
    or until we obtain the $(k+1)$st 
    mismatch between $S$ and $T$ (in which case we return {\tt false}). The claim follows.
\end{proof}

\subsection{Computing Structure in the Pattern}

In this section, we show how to implement \cref{prp:I}.
While the proof~of~\cref{prp:I} is already constructive, we still need to fill in some
implementation details.

\begin{lemma}[{\tt Analyze($P$, $k$)}: Implementation of~\cref{prp:I}]\label{prp:Ialg}
    Let $P$ denote a string of~length $m$ and let $k \le m$ denote a positive integer.
    Then, there is an algorithm that computes one of~the following:
    \begin{enumerate}[(a)]
        \item $2k$ disjoint breaks $B_1,\ldots, B_{2k} \substr P$
            each having period $\per(B_i)> m/\alphav k$ and length $|B_i| = \lfloor
            m/\betav k\rfloor$;
        \item disjoint repetitive regions $R_1,\ldots, R_{r} \substr P$
        of~total length $\sum_{i=1}^r |R_i| \ge \deltavN/\deltavD \cdot m$ such
        that each region $R_i$ satisfies
        $|R_i| \ge m/\betav k$ and is constructed along with a primitive approximate period $Q_i$
        such that $|Q_i| \le m/\alphav k$ and $\hd(R_i,Q_i^*) = \ceil{\betav k/m\cdot
        |R_i|}$; or
        \item a primitive approximate period $Q$ of~$P$
            with $|Q|\le m/\alphav k$ and $\hd(P,Q^*) < \betav k$.
    \end{enumerate}
    The algorithm uses $\Oh(k)$ time plus $\Oh(k)$ \modelname operations.
\end{lemma}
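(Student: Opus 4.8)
Here is a plan for proving \cref{prp:Ialg}.

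The plan is to transcribe \cref{alg:P1} line by line into the \modelname model, using the toolbox of~\cref{sec:pillar} and the mismatch generators of~\cref{lm:misoph}, and then to check that the total cost amortizes to $\Oh(k)$. I would maintain the index $j$ of the first unprocessed position; if $\floor{m/\betav k}=0$ (i.e.\ $m<\betav k$) the structural claim is vacuous and I return at once. Otherwise, in each \emph{round} of the main loop I use one \extractOpName to get $P'=P\fragmentco{j}{j+\floor{m/\betav k}}$ (well-defined by~\cref{clm:j}) and one $\perOp{P'}$ call; since $m/\alphav k<|P'|/2$ whenever $|P'|\ge1$, the outcome of $\perOp{P'}$ suffices to test whether $\per(P')>m/\alphav k$, in which case $P'$ is a break, appended to $\mathcal B$, and I return $\mathcal B$ once $|\mathcal B|=2k$. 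Each round thus costs $\Oh(1)$ \modelname operations and $\Oh(1)$ extra time for this part, and there are $\Oh(k)$ rounds, because every non-final round appends either a break (at most $2k$) or a repetitive region of length $\ge m/\betav k$ (at most $\Oh(k)$, as their lengths sum to less than $\deltavN/\deltavD\cdot m$).

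When $P'$ is not a break I set $q:=\per(P')$ and $Q:=P\fragmentco{j}{j+q}$ (primitive as a string period, with $|Q|\le m/\alphav k$), and run the generator $\misOpL{P\fragmentco{j}{m}}{Q}$. The idea is that $\rho\mapsto\hd(P\fragmentco{j}{j+\rho},Q^*)$ is a non-decreasing step function learned one step per {\tt Next} call: after the $i$-th call (returning a mismatch at relative position $p_i$, or $\bot$) it equals $i$ on $\rho\in\fragmentoc{p_i}{p_{i+1}}$, where $p_0:=|P'|-1$ (as $\hd(P',Q^*)=0$) and $p_{i+1}$ is capped at $m-j$ once the generator is exhausted. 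On each such interval a single $\Oh(1)$ arithmetic comparison decides whether the target $\ceil{\betav k/m\cdot\rho}$ equals $i$ for some $\rho>|P'|$ and, if so, the least such $\rho^\star$; then I report the region $R:=P\fragmentco{j}{j+\rho^\star}$ with approximate period $Q$ (by construction $|R|>|P'|\ge m/\betav k$ and $\hd(R,Q^*)=\ceil{\betav k/m\cdot|R|}$) and resume from $j+\rho^\star$, returning $\mathcal R$ once $\sum_i|R_i|\ge\deltavN/\deltavD\cdot m$. A successful search for $R_i$ costs $\Oh(1+\hd(R_i,Q_i^*))=\Oh(1+\betav k\,|R_i|/m)$ \modelname operations; summed over the $\Oh(k)$ reported regions (total length at most $m$) this is $\Oh(k)$.

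This prefix search fails only once: exactly when the generator is exhausted after enumerating all of $\MIS(P\fragmentco{j}{m},Q^*)$, which by the inductive $\Delta_\rho$ argument in the proof of~\cref{prp:I} has fewer than $\ceil{\betav k/m\cdot(m-j)}=\Oh(k)$ elements. I would then extend the \emph{same} periodic alignment leftward, searching for the suffix $R'$. The key observation is that for every length $|R'|\ge m-j$, the mismatches of $R'=P\fragmentco{m-|R'|}{m}$ against $(\rot^{|R'|-m+j}(Q))^*$ lying at positions $\ge j$ are exactly the already-computed elements of $\MIS(P\fragmentco{j}{m},Q^*)$, while those at positions ${<}j$ are, read right to left, exactly the output of $\mibOpL{P\fragmentco{0}{j}}{Q}$ — so the relevant alignment is fixed and the length-dependent rotation never has to be materialized. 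Running this second generator with the same interval-versus-target bookkeeping either yields a long enough $R'$ (output with approximate period $\rot^{|R'|-m+j}(Q)$) or exhausts itself, certifying $\hd(P,\rot^{j}(Q)^*)<\betav k$, in which case I output $\rot^{j}(Q)$; each rotation of $Q$ needed here is a single \extractOpName, because $2q\le|P'|$ gives $QQ=P\fragmentco{j}{j+2q}$ and hence $\rot^{s}(Q)=P\fragmentco{j+q-(s\bmod q)}{j+2q-(s\bmod q)}$. Correctness of all three outputs is inherited from the proof of~\cref{prp:I}. Charging $\Oh(1)$ per round, $\Oh(\hd(R_i,Q_i^*))$ per reported region, and $\Oh(k)$ for the unique failed (prefix-then-suffix) search, with $\Oh(k)$ rounds and reported regions of total length at most $m$, yields the claimed $\Oh(k)$ time plus $\Oh(k)$ \modelname operations. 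I expect the two points needing care to be this amortization — in particular bounding the number of enumerated mismatches by $\Oh(k)$ rather than $\Oh(k^2)$ — and the fixed-alignment observation that tames the length-dependent rotation in the suffix search; the remainder is a faithful transcription of~\cref{alg:P1}.
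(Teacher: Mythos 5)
Your proposal is correct and follows essentially the same route as the paper's proof (the implementation in \cref{alg:iP1}): the \perOpName{} test for breaks, the use of the generators from \cref{lm:misoph} to find the shortest qualifying prefix (never skipping, by the $\Delta_\rho$ argument of \cref{prp:I}), the reverse generator with the fixed leftward alignment so that the rotation need not be materialized (and is anyway available as a fragment of $P'$ since $2\per(P')\le|P'|$), and the same amortization charging $\Oh(1+\hd(R_i,Q_i^*))$ per region plus $\Oh(k)$ for the single failed prefix-then-suffix search. The only deviations are cosmetic bookkeeping (your interval-versus-target search instead of the incremental $\delta$ check) and the explicit handling of the degenerate case $\floor{m/\betav k}=0$, which the paper glosses over as well.
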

\begin{algorithm}[t]
    \SetKwBlock{Begin}{}{end}
    \SetKwFunction{anly}{Analyze}
    \anly{$P$, $k$}\Begin{
        $j\gets 0$; $r \gets 1$; $b \gets 1$\;
        \While{\bf true}{
            $j' \gets j+\floor{m/\betav k}$\; 
            \If{$\perOp{P\fragmentco{j}{j'}} > m/\alphav k$}{
                $B_b \gets P\fragmentco{j}{j'}$\;
                \lIf{$b = 2k$}{\Return{breaks $B_1,\ldots, B_{2k}$}}
                $b \gets b+1$; $j \gets j'$\;
                }\Else{
                $q \gets \perOp{P\fragmentco{j}{j'}}$\;
                $Q_r \gets P\fragmentco{j}{j+q}$; $\delta \gets 0$\;
                generator $\mathbf{G} \gets \misOpL{P\fragmentco{j}{m}}{Q_r}$\;
                \While{$\delta < \betav k/m\cdot (j'-j)$ {\bf and}
                    $(\pi \gets \nxt{$\mathbf{G}$}) \ne \bot$}{
                    $j' \gets j + \pi+1$; $\delta \gets \delta + 1$\;
                }
                \If{$\delta \ge \betav k/{m}\cdot (j'-j)$}{
                    $R_r \gets P\fragmentco{j}{j'}$\;
                    \If{$\sum_{i=1}^r |R_i|\ge \deltavN/\deltavD \cdot  m$}{
                        \Return{repetitive regions $R_1,\ldots,R_r$ with periods
                        $Q_1,\ldots, Q_r$}\;
                    }
                    $r \gets r+1$;  $j \gets j'$\;
                    }\Else{
                    $Q\gets Q_r$; $j'' \gets j$\;
                    generator $\mathbf{G}' \gets \mibOpL{P\fragmentco{0}{j}}{Q}$\;
                    \While{$\delta < \betav k/m\cdot (m-j'')$ {\bf and}
                        $(\pi \gets \nxt{$\mathbf{G}'$}) \ne \bot$}{
                        $j'' \gets \pi$; $\delta \gets \delta + 1$\;
                    }
                    $Q \gets P\fragmentco{j+(j''-j)\!\bmod{q} }
                    {j+(j''-j)\!\bmod{q} + q}$\tcp*{$Q \gets \rot^{j-j''}(Q)$}
                    \label{ln:22}
                    \If{$\delta \ge \betav k/m\cdot (m-j'')$}{
                        \Return{repetitive region $P\fragmentco{j''}{m}$
                        with period $Q$}
                    }
                    \lElse{\Return{approximate period $Q$}}
                }
            }
        }
    }
    \caption{A \modelname model implementation of~\cref{alg:P1}.}\label{alg:iP1}
\end{algorithm}
\begin{proof}
    Our implementation follows \cref{alg:P1} from the proof~of~\cref{prp:I}:
    Recall that $P$ is processed from left to right and split into breaks and repetitive regions.
    In each iteration, the algorithm first considers a fragment of~length $\floor{m/\betav k}$.
    This fragment either becomes the next break (if its shortest period is long enough)
    or is extended to the right to a repetitive region (otherwise).
    Having constructed sufficiently many breaks or repetitive regions of~sufficiently large
    total length, the algorithm stops. Processing the string $P$ in this manner
    guarantees disjointness of~breaks and repetitive regions.
    As in the proof~of~\cref{prp:I}, a slightly different approach is needed if the
    algorithm encounters the end of~$P$ while growing a repetitive region. If this
    happens, the region is also extended to the left.
    This way, the algorithm either obtains a single repetitive region (which is not
    necessarily disjoint with the previously created ones, so it is returned on its own)
    or learns that the whole string $P$ is close to being periodic.

    Next, we fill in missing details of~the implementation of~the previous steps in the
    \modelname model.
    To that end, first note that the \modelname model includes a \perOpName operation of
    checking if
    the period of~a string~$S$ satisfies $\per(S)\le |S|/2$ and computing $\per(S)$
    in case of~a positive answer. Since our threshold $m/\alphav k$ satisfies $\floor{m/\alphav k}
    \le \floor{m/\betav k}/2$, no specific work is required to obtain the period of~an
    unprocessed fragment of~$\floor{m/\betav k}$ characters of~$P$.

    To compute a repetitive region starting from a fragment $P' = P\fragmentco{j}{j +
    \floor{m/\betav k}}$ with string period $Q = P'\fragmentco{0}{\per(P')}$, we use a
    $\misOpL{P\fragmentco{j}{m}}{Q}$ generator from \cref{lm:misoph}:
    We extend $P'$ up to the next mismatch between $P'$ and $Q^{\infty}$ until
    we either reach the end of~$P$ or the number $\delta = \hd(P', Q^*)$ reaches the bound
    $\betav k/m\cdot |P'|$.
    If we reach the end of~$P$, we similarly extend $P' = P\fragmentco{j}{m}$
    to the left using a $\mibOpL{P\fragmentco{0}{j}}{Q}$ generator from \cref{lm:misoph}:
    Again, we always extend $P'$ up to the next mismatch
    until we reach the start of~$P$ or the number $\delta = \hd(P', \overline{Q}^*)$
    reaches the bound $\betav k/m\cdot |P'|$ (where $\overline{Q}=\rot^{|P'|-m+j}(Q)$ is
    the corresponding cyclic rotation of~$Q$).
    If we reach the start of~the string, we return a suitable cyclic rotation of~$Q$;
    otherwise we found a long repetitive region, which we then return.
    Consider \cref{alg:iP1} for a detailed pseudo-code of~the implementation.

    For the correctness, since our algorithm follows the proof~of~\cref{prp:I}, we only
    need to show that our implementation of~finding repetitive regions correctly
    implements the corresponding step in
    \cref{alg:P1}. However, this is easy, as with each extension of~$P'$, the number $\delta$
    may increase by at most $1$. As we start with $\delta =
    \hd(P\fragmentco{j}{j+\floor{m/\betav k}}, Q^*) = 0$, we thus never
    skip over a repetitive region.
    Further, the fragment $P'=P\fragmentco{j}{j + \floor{m/\betav k}}$ by construction
    contains at least two repetitions of~the period $Q$, so we can obtain
    each cyclic rotation of~$Q$ as a fragment of~$P$. In particular we indeed
    compute a cyclic rotation of~$Q$ in \cref{ln:22} of~\cref{alg:iP1}.
    Consequently, \cref{alg:iP1} indeed correctly implements \cref{alg:P1}.

    For the running time analysis, observe that each iteration of~the outer while loop
    processes at least $\floor{m/\betav k}$ characters of~$P$,
    so there are at most $\Oh(k)$ iterations of~the outer while loop.
    In each iteration, we perform one \perOpName operation, a constant number of
    \accOpName operations, and at most $\betav k/m \cdot (j'-j)$ calls to the generator
    $\misOpLName$. Each of~these calls uses $\Oh(1)$ \modelname operations,
    which is $\Oh(\betav k/m \cdot m) = \Oh(k)$ in total across all iterations.
    Similarly, we bound the running time of~the calls to the generator \mibOpLName:
    As we find at most $\betav k/m \cdot m = \betav k$ mismatches,
    \mibOpLName uses at most $\Oh(k)$ operations.
    Overall, \cref{alg:iP1} thus uses $\Oh(k)$ \modelname operations.

    The remaining running time is bounded by $\Oh(k)$ in the same way, completing the proof.
\end{proof}

\subsection{Computing Occurrences in the Periodic Case}

\begin{lemma}[\texttt{FindRotation($k$, $Q$, $S$)}]\label{lem:findAPeriod}
    Let $k$ denote a positive integer, let $Q$ denote a primitive string, and
    let~$S$ denote a string with $|S|\ge (2k+1)|Q|$.
    Then, we can compute a unique integer $j\in \fragmentco{0}{|Q|}$
    such that $\hd(S,\rot^j(Q)^*)\le k$,
    or report $\bot$ if no such integer exists,
    using $\Oh(k)$ time plus $\Oh(k)$ \modelname operations.
\end{lemma}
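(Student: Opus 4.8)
The plan is to reduce the task to a majority-vote computation on length-$|Q|$ blocks of $S$, followed by a single verification. Write $q:=|Q|$ (obtained with a \lenOpName operation) and set $S_i := S\fragmentco{iq}{(i+1)q}$ for $i\in\fragmentco{0}{2k+1}$; these $2k+1$ fragments are pairwise disjoint and well-defined because $|S|\ge(2k+1)q$, and handles to them are produced by \extractOpName. The key observation is this: if $j\in\fragmentco{0}{q}$ satisfies $\hd(S,\rot^j(Q)^*)\le k$, then, since $\rot^j(Q)^\infty$ has period $q$, every block $S_i$ in which $S$ and $\rot^j(Q)^\infty$ disagree contributes at least one of the $\le k$ mismatches; as the blocks are disjoint, at most $k$ of the $S_i$ differ from $\rot^j(Q)$, so at least $k+1$ of them are \emph{literally equal} to $\rot^j(Q)$. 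Hence $\rot^j(Q)$ is a strict majority among $S_0,\dots,S_{2k}$.

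Accordingly, I would run the Boyer--Moore majority-vote algorithm on $S_0,\dots,S_{2k}$: it performs only $\Oh(k)$ equality tests between blocks, each costing $\Oh(1)$ in the \modelname model via \cref{lm:streq}, and it returns a candidate block $B$ that coincides with the majority whenever one exists. A second $\Oh(k)$-comparison pass counts how many $S_i$ equal $B$; if this count is at most $k$, then no majority exists, so by the observation no valid $j$ exists and we return $\bot$. Otherwise $B$ is the unique majority block, and the only candidate rotation is one equal to $B$: I would call $\cycEqOp{Q}{B}$ to decide whether $B=\rot^j(Q)$ for some integer $j$ and, if so, to recover the unique such $j\in\fragmentco{0}{q}$ (uniqueness here uses that $Q$ is primitive); if $B$ is not a rotation of $Q$, return $\bot$. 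Finally, verify the candidate by computing $\hd(S,B^*)$: run the $\misOpL{S}{B}$ generator of \cref{lm:misoph} and call {\tt Next} at most $k+1$ times, returning $j$ if at most $k$ mismatches are reported and $\bot$ otherwise. Correctness in the ``return $j$'' branch is immediate since $\rot^j(Q)=B$; in every ``$\bot$'' branch, the observation guarantees that a valid $j^\star$, were it to exist, would force $\rot^{j^\star}(Q)$ to be the majority block $B$, to be recognized by \cycEqOpName, and to pass verification — a contradiction. Uniqueness of the answer follows from the same primitivity argument applied via the triangle inequality: if two distinct rotations both gave Hamming distance $\le k$ from $S$, then $\rot^{j_1}(Q)^\infty$ and $\rot^{j_2}(Q)^\infty$ would agree on all but $\le 2k$ of the first $(2k+1)q$ positions, hence on a full window of length $q$, which is impossible for distinct rotations of a primitive string.

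For the complexity: the $\Oh(k)$ \extractOpName calls, the two Boyer--Moore passes ($\Oh(k)$ equality tests each), the single \cycEqOpName call, and the $\Oh(k)$ {\tt Next} calls on the mismatch generator together use $\Oh(k)$ \modelname operations and $\Oh(k)$ additional time, as required.

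The main obstacle — really, the single idea that drives everything — is the reduction to majority voting: recognizing that the hypothesis $|S|\ge(2k+1)|Q|$ forces at least $k+1$ of the $2k+1$ length-$|Q|$ blocks of $S$ to be \emph{exactly} the sought rotation $\rot^j(Q)$, so that a deterministic majority vote in the \modelname model isolates the only candidate worth checking. The remaining ingredients (extracting blocks, testing equality, recovering rotations, counting mismatches) are supplied directly by the primitives already established, and primitivity of $Q$ is what simultaneously guarantees uniqueness of $j$ and lets \cycEqOpName recover it unambiguously.
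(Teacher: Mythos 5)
Your proposal is correct and follows essentially the same route as the paper: split $S$ into $2k+1$ disjoint length-$|Q|$ blocks, observe that any valid rotation must equal at least $k+1$ of them and hence be the majority, find that majority with Boyer--Moore using $\Oh(1)$-time equality tests, recover the shift via a \cycEqOpName query (unique by primitivity), and confirm with a mismatch-generator verification capped at $k+1$ mismatches. The only cosmetic differences are your extra counting pass and the order of the rotation/verification steps, neither of which changes the argument or the $\Oh(k)$ bounds.
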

\begin{proof}
    For every $0 \le i \le 2k$, define $S_i := S\fragmentco{i\,|Q|}{(i+1)\,|Q|}$.
    We compute the majority of~$S_{0},\ldots,S_{2k}$
    (using \cref{lm:streq} for checking equality of~fragments).
    If no majority exists, then we return $\bot$.
    Otherwise, we set $\bar{Q}$ to be the majority string of~$S_0,\dots,S_{2k}$
    and check if $\hd(S,\bar{Q}^*)\le k$ using a \misOpLName from \cref{lm:misoph}.
    If this test succeeds, we use a \cycEqOpName operation
    to retrieve all $j\in \fragmentco{0}{|Q|}$ with $\bar{Q}=\rot^j(Q)$
    and return any such $j$.
    If the test fails or if no such $j$ is found, then we return $\bot$.

    For the correctness, observe that if $\hd(S,\bar{Q}^*)\le k$,
    then at least $k+1$ fragments $S_i$ match $\bar{Q}$ exactly, so~$\bar{Q}$ must be the
    majority of~$S_{0},\ldots,S_{2k}$.
    Moreover, since $Q$ is primitive, there is at most one $j\in \fragmentco{0}{|Q|}$
    with $\bar{Q}=\rot^j(Q)$.

    For the running time, note that we can compute the majority of~$\Oh(k)$ elements
    with a classic linear-time algorithm by Boyer and Moore \cite{Moore91} using $\Oh(k)$
    equality tests; as (by \cref{lm:streq}) each equality test takes $\Oh(1)$ time in the
    \modelname model, we obtain the claimed running time and hence the claim.
\end{proof}

\begin{algorithm}
    \SetKwBlock{Begin}{}{end}
    \SetKwFunction{RFR}{FindRelevantFragment}
    \SetKwFunction{Rots}{FindRotation}
    \RFR{$P$, $T$, $d$, $Q$}\Begin{
        $j \gets \Rots{$\floor{\threehalfs d}$, $Q$, $T\fragmentco{n-m}{m}$}$\;
        \lIf{$j=\bot$}{\Return{$\eps$}}
        $\delta \gets 0$; $r \gets n-m+j$\;
        generator $\mathbf{G} \gets \misOpL{T\fragmentco{n-m+j}{n}}{Q}$\;
        \While{$\delta \le \threehalfs d$ \KwSty{and} $(\pi \gets \nxt{$\mathbf{G}$}) \ne \bot$}{
            $r \gets n-m+j+\pi$\;
            $\delta \gets \delta + 1$\;
        }
        \lIf{$\delta \le \threehalfs d$}{$r \gets m$}
        $\delta' \gets 0$; $\ell \gets n-m+j$; $\ell' \gets (n-m+j)\bmod {|Q|}$\;
        generator $\mathbf{G'} \gets \mibOpL{T\fragmentco{\ell'}{n-m+j}}{Q}$\;
        \While{$\delta' \le \threehalfs d$ \KwSty{and} $(\pi \gets \nxt{$\mathbf{G'}$})\ne \bot$}{
            $\ell \gets \ell'+|Q|\cdot \ceil{(\pi+1)/|Q|}$\;
            $\delta' \gets \delta'+1$\;
        }
        \lIf{$\delta' \le \threehalfs d$}{$\ell \gets \ell'$}
        \Return{$T\fragmentco{\ell}{r}$}\;
    }
    \caption{A \modelname algorithm computing a {\em relevant} fragment $T$:
    a fragment $T'$ such that all $k$-mismatch  occurrences (for any $k \le d/2$) of~$P$ in $T$
        start at a position in $T'$ which is a multiple of~$|Q|$.}\label{alg:relevant}
\end{algorithm}

\begin{lemma}[\texttt{FindRelevantFragment($P$, $T$, $d$, $Q$)}]\label{lem:relevant}
    Let $P$ denote a pattern of~length $m$ and let $T$ denote a text of~length $n\le
    \threehalfs m$. Further, let $d$ denote a positive integer and let
    $Q$ denote a primitive string that satisfies $|Q|\le m/8d$ and $\hd(P,Q^*)\le d$.

    Then, using $\Oh(d)$ time plus $\Oh(d)$ \modelname operations,
    we can report a fragment $T'=T\fragmentco{\ell}{r}$ such that $\hd(T',Q^*)\le 3d$
    and, for every $k\le d/2$, the set $\Occ_k(P,T')=\{p-\ell \mid p \in \Occ_k(P,T)\}$
    contains only multiples of~$|Q|$.
\end{lemma}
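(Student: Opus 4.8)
The plan is to make the proof of~\cref{lem:aux} constructive, the one new wrinkle being that $T$ need not be trimmed so that $P$ occurs at both ends; instead we must carve out a sub-fragment $T'$ that still captures every $k$-mismatch occurrence (for each $k\le d/2$) while being approximately $Q$-periodic. The key observation is that the window $T\fragmentco{n-m}{m}$, of length $2m-n\ge m/2$, is the intersection of all windows $T\fragmentco{p}{p+m}$ with $p\in\fragmentco{0}{n-m+1}$, so it is contained in \emph{every} occurrence window. Hence, if $p\in\Occ_k(P,T)$ with $k\le d/2$, then $\hd(T\fragmentco{p}{p+m},Q^*)\le k+d\le\floor{3d/2}$ by the triangle inequality, and restricting to this core window shows $T\fragmentco{n-m}{m}$ lies within $\floor{3d/2}$ of the rotation power $\rot^{j'}(Q)^*$ with $j'\equiv p-(n-m)\pmod{|Q|}$. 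Since $|Q|\le m/8d$, the precondition of~\cref{lem:findAPeriod} holds with threshold $\floor{3d/2}$, so a call $\texttt{FindRotation}(\floor{3d/2},Q,T\fragmentco{n-m}{m})$ returns this offset $j$ --- unique because $Q$ is primitive --- and if it returns $\bot$, then $\Occ_k(P,T)=\emptyset$ for all $k\le d/2$ and we output $\varepsilon$ (as we also do trivially when $n<m$).

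Given $j$, the positions congruent to $n-m+j$ modulo $|Q|$ are the $Q$-copy boundaries of the aligned copy of~$Q^\infty$. Following~\cref{alg:relevant}, I would compute $r$ by scanning forward from $n-m+j$ through the mismatches between $T$ and this aligned $Q^\infty$, enumerated by a $\misOpL{T\fragmentco{n-m+j}{n}}{Q}$ generator from~\cref{lm:misoph}; the scan stops either at the $(\floor{3d/2}+1)$-st mismatch, whose position becomes $r$, or at the end of~$T$, in which case $r=n$. Symmetrically, $\ell$ is obtained by a backward \mibOpLName scan from $n-m+j$: it stops at the $(\floor{3d/2}+1)$-st mismatch, setting $\ell$ to the first $Q$-copy boundary above it, or reaches $\ell'=(n-m+j)\bmod|Q|$ and sets $\ell=\ell'$. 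Each scan makes $\Oh(d)$ generator calls, hence $\Oh(d)$ \modelname operations, matching the cost of~\cref{lem:findAPeriod}. By construction $\ell\equiv\ell'\equiv n-m+j\pmod{|Q|}$, so $Q^*$ read from position $\ell$ is precisely the aligned copy of~$Q^\infty$, and each of~$T\fragmentco{\ell}{n-m+j}$ and $T\fragmentco{n-m+j}{r}$ differs from it in at most $\floor{3d/2}$ positions; therefore $\hd(T',Q^*)\le 2\floor{3d/2}\le 3d$.

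It remains to show $\Occ_k(P,T')=\{p-\ell\mid p\in\Occ_k(P,T)\}$ with all entries divisible by~$|Q|$, for every $k\le d/2$. Fix $p\in\Occ_k(P,T)$. Since $\hd(T\fragmentco{p}{p+m},Q^*)\le\floor{3d/2}$ and, restricted to the core window, the same bound holds against the rotation power returned by \texttt{FindRotation}, the triangle inequality and primitivity of~$Q$ force the two alignments to coincide --- exactly as in the proof of~\cref{lem:aux}(a), via $\floor{(2m-n)/|Q|}\ge 4d>3d$ --- so $p\equiv n-m+j\equiv\ell\pmod{|Q|}$. Next, if $p+m>r$ then $p\le n-m+j\le r<p+m$ (using $p\le n-m$ and $j\ge 0$), so all $\floor{3d/2}+1$ mismatches that defined $r$ lie in $\fragmentco{p}{p+m}$ and, sharing the phase of~$p$, contradict $\hd(T\fragmentco{p}{p+m},Q^*)\le\floor{3d/2}$; if $p<\ell$ then $p\le\ell-|Q|$, which combined with $n-2m+j<0\le p$ likewise forces all $\floor{3d/2}+1$ backward mismatches into $\fragmentco{p}{p+m}$, the same contradiction. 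Hence $\ell\le p$ and $p+m\le r$, so $T'\fragmentco{p-\ell}{p-\ell+m}=T\fragmentco{p}{p+m}$, giving $p-\ell\in\Occ_k(P,T')$ and $|Q|\mid p-\ell$; the reverse inclusion is immediate because $T'$ is a fragment of~$T$.

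I expect the phase-agreement step and the $\ell,r$ case analysis to be the crux: one must identify exactly which fragment of~$T$ is guaranteed to be near $Q^\infty$ --- the intersection window $T\fragmentco{n-m}{m}$, not the length-$m$ suffix --- and then rule out that the $\floor{3d/2}+1$ mismatches bounding $\ell$ or $r$ can all fit inside an occurrence window, both of which use the slack from $|Q|\le m/8d$ and $n\le\threehalfs m$. The other ingredients --- \texttt{FindRotation} (\cref{lem:findAPeriod}), the mismatch generators (\cref{lm:misoph}), and the infinite-power \lceOpName of~\cref{lm:inflcp} --- are available off the shelf, and the running time $\Oh(d)$ plus $\Oh(d)$ \modelname operations then follows immediately.
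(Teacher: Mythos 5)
Your proposal is correct and follows essentially the same route as the paper: call \texttt{FindRotation} with threshold $\lfloor\threehalfs d\rfloor$ on the overlap window $T\fragmentco{n-m}{m}$ (valid since $2m-n\ge m/2\ge(2\lfloor\threehalfs d\rfloor+1)|Q|$), then extend left and right with the mismatch generators until $\lfloor\threehalfs d\rfloor$ mismatches accumulate on each side, and derive the phase agreement $p\equiv n-m+j\equiv\ell\pmod{|Q|}$ from the triangle inequality and primitivity of $Q$. The only cosmetic difference is that you establish $\ell\le p$ and $p+m\le r$ by a counting contradiction on the bounding mismatches, whereas the paper argues directly from $\hd(T\fragmentco{p}{n-m+j},Q^*),\hd(T\fragmentco{n-m+j}{p+m},Q^*)\le\threehalfs d$ and the extremality of $\ell$ and $r$; the two arguments are equivalent.
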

\begin{proof}
    We start by using a call to \texttt{FindRotation} from \cref{lem:findAPeriod}
    to find the unique integer $j$ such that $\hd(T\fragmentco{n-m}{m},\rot^j(Q)^*)\le
    \threehalfs d$. If no such $j$ exists, then we return the empty string $\varepsilon$.
    Otherwise, we proceed by computing the rightmost position $r$
    such that $\hd(T\fragmentco{n-m+j}{r},Q^*)\le \threehalfs d$
    and the leftmost position $\ell$ (with $\ell \equiv (n-m+j) \pmod{|Q|}$)
    such that $\hd(T\fragmentco{\ell}{n-m+j},Q^*)\le \threehalfs d$;
    afterwards, we return the fragment $T\fragmentco{\ell}{r}$.
    Consider \cref{alg:relevant} for implementation details.

    For the correctness, first observe that $\hd(T\fragmentco{n-m}{m},
    \rot^{p-n+m}(Q))\le \threehalfs d$ for each  $p\in \Occ_k(P,T)$:
    By triangle inequality (\cref{tria}), we have\[
        \hd(T\fragmentco{p}{p+m},Q^*)\le k+\hd(P,Q^*)\le \threehalfs d.
    \] Since $p\le n-m$ and $p+m\ge m$,
    this yields $\hd(T\fragmentco{n-m}{m}, \rot^{p-n+m}(Q))\le \threehalfs d$.
    Moreover, $|T\fragmentco{n-m}{m}|=2m-n \ge m/2 \ge 4d|Q| \ge (2\cdot
    \floor{\threehalfs d} + 1)|Q|$,
    so the call to \texttt{FindRotation} is valid.

    Hence, if the call to {\tt FindRotation} returns $\bot$, then $\Occ_k(P,T)=\emptyset$
    (for each $k \le d/2$).
    Otherwise, each position $p\in \Occ_k(P,T)$ satisfies $p \equiv  n-m+j \equiv \ell
    \pmod{|Q|}$. Moreover, we have \begin{align*}
        \hd(T\fragmentco{n-m+j}{p+m},Q^*) &\le \hd(T\fragmentco{p}{p+m},Q^*)
    \le \threehalfs d, \quad\text{and}\\
        \hd(T\fragmentco{p}{n-m+j},Q^*)&\le \hd(T\fragmentco{p}{p+m},Q^*) \le
        \threehalfs d.
        \end{align*}
    Hence, the fragment $T'=T\fragmentco{\ell}{r}$ contains all $k$-mismatch occurrences of~$P$
    in $T$ (for any $k \le d/2$), and all these occurrences start at multiples of~$|Q|$ in $T'$.
    Moreover, $\hd(T',Q^*)=
    \hd(T\fragmentco{\ell}{n-m+j},Q^*)+\hd(T\fragmentco{n-m+j}{r},Q^*)\le 3d$.

    For the running time (and the number of~\modelname operations used), the call to
    {\tt FindRotation} uses $\Oh(d)$ time plus $\Oh(d)$ \modelname operations;
    the same is true for the usage of~\misOpLName and \mibOpLName.
    Thus, the algorithm uses $\Oh(d)$ time plus $\Oh(d)$ \modelname operations in total,
    completing the proof.
\end{proof}

\begin{lemma}[\texttt{DistancesRLE($P$, $T$, $Q$)}: Implementation of
    \cref{lem:rle}]\label{lem:rle_alg}
    Let $P$ denote a pattern of~length $m$ and let $T$ denote a text of~length $n \le
    \threehalfs m$. Further, let $d$ denote a positive integer and
    let $Q$ denote a string that satisfies $\hd(P,Q^*)= \Oh(d)$ and $\hd(T,Q^*)=\Oh(d)$.

    Then, using $\Oh(d^2\log \log d)$ time plus $\Oh(d)$ \modelname operations,
    we can compute a run-length encoded sequence of
    $h_j := \hd(T\fragmentco{j|Q|}{j|Q|+m},P)$ for $0\le j \le (n-m)/|Q|$.
\end{lemma}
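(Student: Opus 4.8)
The plan is to turn the counting argument behind \cref{lem:rle} into an algorithm. Write $q := |Q|$ and recall from \cref{clm:hj} that
\[ h_j = \hd(P,Q^*) + \hd(T\fragmentco{jq}{jq+m},Q^*) - \mu_j, \]
where $\mu_j$ is the total weight of the marks at position $jq$, i.e., the sum of $2-\hd(P\position{\pi},T\position{\tau})$ over all pairs $\pi\in\MIS(P,Q^*)$, $\tau\in\MIS(T,Q^*)$ with $\tau-\pi=jq$. Since $q\mid jq$, the middle term equals $f(j) := |\MIS(T,Q^*)\cap\fragmentco{jq}{jq+m}|$. Hence it suffices to obtain $\hd(P,Q^*)$, a delta-encoding of the step function $f$, and a list of the point masses of $\mu$; then a single left-to-right sweep over $j\in\fragment{0}{\floor{(n-m)/q}}$ outputs the run-length encoding of $h_j=\hd(P,Q^*)+f(j)-\mu_j$, coalescing equal consecutive values into runs.

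First I would compute the (sorted) sets $\MIS(P,Q^*)$ and $\MIS(T,Q^*)$ using \cref{lm:misoph2}; by assumption they have sizes $\hd(P,Q^*)=\Oh(d)$ and $\hd(T,Q^*)=\Oh(d)$, so this costs $\Oh(d)$ \modelname operations and, in particular, yields $\hd(P,Q^*)$. One \accOpName operation per element retrieves the characters $P\position{\pi}$ and $T\position{\tau}$ — again $\Oh(d)$ \modelname operations. For $f$, note that each $\tau\in\MIS(T,Q^*)$ lies in the window $\fragmentco{jq}{jq+m}$ precisely for $j$ in the integer interval $\fragment{\max(0,\ceil{(\tau-m+1)/q})}{\floor{\tau/q}}$, so it contributes one $+1$ delta and one $-1$ delta; this gives $\Oh(d)$ deltas in total. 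For $\mu$, I would bucket $\MIS(P,Q^*)$ and $\MIS(T,Q^*)$ by residue modulo $q$ (by sorting each on the key $x\bmod q$) and, for every residue class and every pair $(\pi,\tau)$ in it with $\pi\le\tau$, emit a point mass of weight $2-\hd(P\position{\pi},T\position{\tau})\in\{1,2\}$ at position $(\tau-\pi)/q$ (discarding positions exceeding $\floor{(n-m)/q}$). The number of such pairs is at most $|\MIS(P,Q^*)|\cdot|\MIS(T,Q^*)|=\Oh(d^2)$, so this step takes $\Oh(d^2)$ time plus a lower-order bucketing sort.

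Finally, I would merge the $\Oh(d)$ deltas of $f$ and the $\Oh(d^2)$ point masses of $\mu$ into one list sorted by position. Since all positions are non-negative integers at most $n$, this is a word-RAM integer sort of $\Oh(d^2)$ keys, which runs in $\Oh(d^2\log\log d)$ time with a deterministic integer-sorting algorithm (and faster with randomization). A left-to-right sweep then maintains the running value of $f$ and emits the RLE of $h$: between two consecutive event positions $f$ is constant and $\mu$ vanishes, so $h$ is constant there; at an event position the value is $\hd(P,Q^*)+f(j)-\mu_j$; equal consecutive values are merged. Correctness is immediate from \cref{clm:hj}, and by \cref{lem:rle} the output has only $\Oh(d^2)$ runs, so the sweep runs in $\Oh(d^2)$ time; overall we spend $\Oh(d)$ \modelname operations and $\Oh(d^2\log\log d)$ time. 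The one delicate point is keeping the \modelname-operation count at $\Oh(d)$: the mark multiset has size $\Theta(d^2)$ in the worst case, but it is fully determined by the two $\Oh(d)$-size mismatch sets and the characters at their positions, so no further string probes are needed; the residual $\log\log d$ factor comes solely from integer-sorting the $\Theta(d^2)$ event positions, which is essentially forced since the RLE itself can have $\Theta(d^2)$ runs.
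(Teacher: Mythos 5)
Your proposal is correct and follows essentially the same route as the paper: both rest on \cref{clm:hj}, compute $\MIS(P,Q^*)$ and $\MIS(T,Q^*)$ with $\Oh(d)$ \modelname operations, reduce the problem to $\Oh(d^2)$ weighted integer events (window-boundary deltas plus the pairwise mark masses), integer-sort them in $\Oh(d^2\log\log d)$ time, and produce the run-length encoding by a single sweep. The only difference is cosmetic: you index events directly by $j$ after bucketing mismatch pairs by residue modulo $|Q|$, whereas the paper places events at text positions and samples partial sums at multiples of $|Q|$ via canceling deltas at $\tau-\pi-1$ and $\tau-\pi$.
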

\begin{algorithm}[t]
    \SetKwBlock{Begin}{}{end}
    \SetKwFunction{RLE}{DistancesRLE}
    \RLE{$P$, $T$, $Q$}\Begin{
    \tcp{Marking phase}
    $M \gets \{\}$\;
    \ForEach{$\tau \in \misOp{T}{Q^*}$}{
        $M \gets M \cup \{(\tau-m,1),(\tau,-1)\}$\;
        \ForEach{$\pi \in \misOp{P}{Q^*}$}{
            $M \gets M \cup \{(\tau-\pi-1,\hd(P[\pi],T[\tau])-2),(\tau-\pi,2-\hd(P[\pi],T[\tau]))\}$\;
        }
    }
    \BlankLine
    \tcp{Sliding-window phase}
    sort $M$\;
    $h \gets |\misOp{P}{Q^*}|$\;
    \lForEach{$(i',w)\in M$ with $i' < 0$}{$h \gets h+w$}
    $i \gets 0$\;
    \ForEach{$(i',w)\in M$ with $0\le i' < n-m$ {\rm sorted by $i'$}}{
        Output a block of~$\ceil{(i'+1)/q}-\ceil{i/q}$ values $h$\;
        $i \gets i'+1$\;
        $h \gets h+w$\;
    }
    Output a block of~$\ceil{(n-m+1)/q}-\ceil{i/q}$ values $h$\;
    }
    \caption{A \modelname algorithm for \cref{lem:rle}}\label{alg:rle}
\end{algorithm}
\begin{proof}
    Observe that \cref{clm:hj} already gives rise to an algorithm:
    Starting with $h_0$, we can obtain the value $h_{j+1}$ from $h_j$ by adding the value\[
        h_{j+1} - h_j = |\MIS(T,Q^*)\cap \fragmentco{jq+m}{(j+1)q+m}|
        -|\MIS(T,Q^*)\cap \fragmentco{jq}{(j+1)q}|
        -\mu_{j+1}+\mu_j,
    \] where $\mu_{j+1}$ and $\mu_{j}$ are defined as in \cref{lem:rle}.

    We implement this idea in two steps: In the first step, we compute
    the values $\mu_j$ (using marking) and the positions of~mismatches in $\MIS(T, Q^*)$
    (using \misOpName from \cref{lm:misoph2}).
    In the second step, we use a sliding-window approach (with the positions computed
    in the first step interpreted as events) to output the sequence of~values of~$h_j$.
    Consider the pseudo-code (\cref{alg:rle}) for implementation details.

    For the correctness, in the marking phase,
    the algorithm constructs a multiset $M$ of~pairs $(i,w)$ (where~$i$ can be interpreted
    as a position in $T$ and $w$ as the weight) so that $h_j - \hd(P,Q^*)=w(M,j|Q|)$,
    where $w(M,i)=\sum_{\{(i',w)\in M \mid i' < i \}} w$ denotes the sum of~weights of~pairs
    $(i',w)$ with $i'<i$.

    Specifically, for each $\tau\in \MIS(T,Q^*)$, the algorithm first inserts to $M$ pairs
    $(\tau-m,1)$ and $(\tau,-1)$.
    As a result, for each position $i$ with $0\le i \le n-m$, we have
    $w(M,i)=|\MIS(T,Q^*)\cap \fragmentco{i}{i+m}|$. In particular, if $i=j|Q|$, then
    $w(M,i)=\hd(T\fragmentco{j|Q|}{j|Q|+m},Q^*)$.
    Next, for each $\tau \in \MIS(T,Q^*)$ and each $\pi \in \MIS(P,Q^*)$,
    the algorithm inserts to $M$ pairs $(\tau-\pi-1,\hd(P[\pi],T[\tau])-2)$ and
    $(\tau-\pi,2-\hd(P[\pi],T[\tau]))$.
    As a result, the values
    $w(M,i)$ with $i\ne \tau-\pi$ are not altered, whereas $w(M,i)$ for $i=\tau-\pi$ is
    decreased by the number of~marks placed in the proof~of~\cref{lem:rle} at position
    $i=\tau-\pi$ of~$T$ due to positions $\tau$ in $T$ and $\pi$ in $P$.
    Consequently, we have
    $w(M,j|Q|)=\hd(T\fragmentco{j|Q|}{j|Q|+m},Q^*)-\mu_j$, which yields $h_j =
    \hd(P,Q^*)+w(M,j|Q|)$ due to \cref{clm:hj}.

    Hence, in order to construct the sequence $h_j$, the algorithm sorts the pairs in $M$
    and determines the partial sums $w(M,i)$. In each block of~$\fragment{i}{i'}$ of~equal
    partial sums, the algorithm reports a block with all $\ceil{(i'+1)/q}-\ceil{i/q}$
    entries $h_j$ for $j|Q|\in \fragmentco{i}{i'}$, which is indeed correct.

    The running time is $\Oh(d^2 \log \log d)$ (dominated by sorting $M$, which consists
    of~$\Oh(d^2)$ integer pairs) plus $\Oh(d)$ \modelname operations (for the calls to
    $\misOp{P}{Q^*}$ and $\misOp{T}{Q^*}$
    and for accessing the mismatching positions of~$P$ and $T$), thus completing the
    proof.
\end{proof}

\begin{lemma}[\texttt{PeriodicMatches($P$, $T$, $k$, $d$, $Q$)}: Implementation of~\cref{cor:aux}]\label{lm:permat}
    Let $P$ denote a pattern of~length $m$ and let $T$ denote a text of~length $n$.
    Further, let $k \le m$ denote a non-negative integer, let $d \ge 2k$
    denote a positive integer, and let $Q$ denote a primitive string $Q$ that satisfies
    $|Q| \le n/8d$ and $\hd(P, Q^*) \le d$.

    There is an algorithm that computes the set $\Occ_k(P,T)$,
    represented as $\Oh(n/m\cdot d^2)$ arithmetic progressions
    with difference $|Q|$ (or as $\Oh(n/m \cdot d)$ individual positions if $\hd(P,Q^*)= d$).
    The algorithm uses $\Oh(n/m\cdot d^2\log \log d)$ time plus $\Oh(n/m\cdot d)$
    \modelname operations.
\end{lemma}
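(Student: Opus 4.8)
The plan is to turn the proof of \cref{cor:aux} into an algorithm by plugging in the algorithmic primitives \cref{lem:relevant} (relevant-fragment extraction) and \cref{lem:rle_alg} (the run-length-encoded distance sequence) in place of the structural ingredients underlying \cref{lem:aux} and \cref{lem:rle}. First I would reduce to the case $n\le\threehalfs m$: if $n<m$ then $\Occ_k(P,T)=\emptyset$ and we output nothing; otherwise, exactly as in the proof of \cref{cor:aux} (cf.\ also the proof of \cref{lm:emath}), I partition $T$ into $\Oh(n/m)$ overlapping blocks, each of length less than $\threehalfs m$, with consecutive blocks overlapping in at least $m$ positions, so that every length-$m$ fragment $T\fragmentco{p}{p+m}$ lies entirely inside at least one block; the blocks are produced with $\Oh(n/m)$ \extractOpName operations. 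It then suffices to compute $\Occ_k(P,T_i)$ for every block $T_i$ with $|T_i|\ge m$, shift the result to global positions in $T$, and take the union; to obtain a genuine disjoint decomposition one assigns each occurrence to the unique block whose half-open ``core window'' contains its starting position, which does not increase the number of progressions produced per block.

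Now fix a block $T_i$ with $m\le|T_i|<\threehalfs m$. Since $d\ge 2k$ (so $k\le d/2$), $\hd(P,Q^*)\le d$, and $|Q|\le m/8d$, the hypotheses of \cref{lem:relevant} are met, so $\texttt{FindRelevantFragment}(P,T_i,d,Q)$ returns, in $\Oh(d)$ time plus $\Oh(d)$ \modelname operations, a fragment $T_i'=T_i\fragmentco{\ell_i}{r_i}$ with $\hd(T_i',Q^*)\le 3d$ such that every $k$-mismatch occurrence of $P$ in $T_i$ starts at a multiple of $|Q|$ inside $T_i'$ (if the call returns $\eps$, then $\Occ_k(P,T_i)=\emptyset$ and we skip the block). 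As $\hd(P,Q^*)=\Oh(d)$ and $\hd(T_i',Q^*)=\Oh(d)$, the hypotheses of \cref{lem:rle_alg} hold too, so $\texttt{DistancesRLE}(P,T_i',Q)$ produces, in $\Oh(d^2\log\log d)$ time plus $\Oh(d)$ \modelname operations, the run-length encoding of the sequence $h_j:=\hd(T_i'\fragmentco{j|Q|}{j|Q|+m},P)$ for $0\le j\le(|T_i'|-m)/|Q|$. By \cref{lem:rle} this sequence changes value $\Oh(d^2)$ times (with $\hd(T_i',Q^*)\le 3d$ playing the role of $d'$), so its RLE has $\Oh(d^2)$ blocks; scanning it, each maximal run of consecutive indices $j$ with $h_j\le k$ yields one arithmetic progression of $k$-mismatch occurrences of $P$ in $T_i'$ with difference $|Q|$, and, exactly as in the proof of part~\eqref{it:seq} of \cref{lem:aux}, there are $\Oh(d^2)$ such runs. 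Shifting the indices by $\ell_i$ and by the offset of $T_i$ in $T$, and intersecting each progression with $T_i$'s core window (which preserves the difference $|Q|$), gives block $T_i$'s contribution. In the special case $\hd(P,Q^*)=d$, \cref{lem:rle} shows that only $\Oh(d)$ indices $j$ satisfy $h_j\le k$ (since $k\le d/2$), so we report those $\Oh(d)$ positions individually, mirroring part~\eqref{it:few} of \cref{lem:aux}.

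Summing over the $\Oh(n/m)$ blocks gives $\Oh(n/m\cdot d^2)$ arithmetic progressions of difference $|Q|$ (or $\Oh(n/m\cdot d)$ individual positions when $\hd(P,Q^*)=d$) whose union is $\Occ_k(P,T)$, at total cost $\Oh(n/m\cdot d^2\log\log d)$ time plus $\Oh(n/m\cdot d)$ \modelname operations, as claimed. I expect the only delicate point to be the bookkeeping: keeping consistent the three coordinate systems in play (global positions in $T$, positions within $T_i$, and indices $j$ into the distance sequence of $T_i'$) and using the core windows to turn the per-block outputs into a genuine partition of $\Occ_k(P,T)$. The quantitative content is already supplied by \cref{cor:aux} and \cref{lem:aux}, and the running time follows directly from the budgets of \cref{lem:relevant,lem:rle_alg}, so this lemma is essentially an assembly step.
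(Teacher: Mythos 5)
Your proposal is correct and follows essentially the same route as the paper's proof: split $T$ into $\Oh(n/m)$ overlapping blocks of length below $\threehalfs m$ so that each occurrence is charged to exactly one block, run \texttt{FindRelevantFragment} (\cref{lem:relevant}) and then \texttt{DistancesRLE} (\cref{lem:rle_alg}) on each block, and read off the runs with $h_j\le k$ as arithmetic progressions with difference $|Q|$ (resp.\ $\Oh(d)$ individual positions when $\hd(P,Q^*)=d$, via \cref{lem:rle}). The only differences are cosmetic bookkeeping (explicit core-window assignment, the trivial $n<m$ case), and the cost accounting matches the paper's.
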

\begin{proof}
    First, we split the string $T$ into $\floor{2n/m}$ blocks
    $T_i := T\fragmentco{\floor{i\cdot {m}/2}}{\min\{n, \floor{(i+3)\cdot {m}/2}-1\}}$
    for $0\le i < \floor{2n/m}$.
    For each block $T_i$, we call \texttt{FindRelevantFragment($P$, $T_i$, $d$, $Q$)}
    from \cref{lem:relevant} to obtain a fragment $T'_i = T\fragmentco{\ell_i}{r_i}$ containing
    all $k$-mismatch occurrences of~$P$ in $T_i$.
    Next, we call \texttt{DistancesRLE($P$, $T'_i$, $Q$)} from
    \cref{lem:rle_alg}, yielding a run-length encoded sequence of~values
    $h_t := \hd(T'_i\fragmentco{t|Q|}{t|Q|+m},P)$ for $0\le t \le (|T'_i|-m)/|Q|$.
    For each run $h_{t}=\cdots = h_{t'} \le k$, we add the arithmetic progression
    $\{ \ell_i + j \cdot |Q|: j\in \fragment{t}{t'}\}$ to $\Occ_k(P, T)$.
    In the end, we return the set $\Occ_k(P, T)$.

    For the correctness, note that we essentially follow the proof~of~\cref{cor:aux}.
    In particular, each $k$-mismatch occurrence of~$P$ in $T$ is contained in exactly one
    of~the fragments $T_i$.
    By \cref{lem:relevant}, we see that $T'_i$ contains all the $k$-mismatch occurrences
    of~$P$ in $T_i$. Moreover, as $\Occ_k(P,T'_i)$ only contains  multiples of~$|Q|$,
    each $p\in \Occ_k(P,T'_i)$ corresponds to an entry $h_j \le k$.
    Consequently, all the $k$-mismatch occurrences of~$P$ in $T$ are found.
    Furthermore, since $h_j=\hd(T\fragmentco{\ell_i+j|Q|}{\ell_i+j|Q|+m},P)\le k$ holds
    whenever $\ell_i+j|Q|$ is reported, there are no false positives.

    If $\hd(P,Q^*)= d$, then for each $i$, the number of~entries $h_j$ with $h_j\le k$ is
    $\Oh(d)$ by
    \cref{lem:rle}, so the corresponding positions $\ell_i+j|Q|$ can be added to
    $\Occ_k(P,T)$ individually.

    The bounds on the overall running time follow from \cref{lem:relevant,lem:rle_alg}
    due to $\hd(P,Q^*)\le d$ and since $\hd(T_i,Q^*)\le 3d$ holds for each $i$ by
    \cref{lem:rle_alg}.
\end{proof}

\subsection{Computing Occurrences in the Non-Periodic Case}

\begin{algorithm}[t]
    \SetKwBlock{Begin}{}{end}
    \SetKwFunction{verify}{Verify}
    \SetKwFunction{brmtch}{BreakMatches}
    \SetKwFunction{exmtch}{ExactMatches}
    \brmtch{$P$, $T$, $\{ B_1 = P\fragmentco{b_1}{b_1 + |B_1|}, \dots, B_{2k} =
        P\fragmentco{b_{2k}}{b_{2k} + |B_{2k}|} \}$, $k$}\Begin{
        multi-set $M \gets \{\}$; $\Occ_k(P, T) \gets \{\}$\;
        \For{$i \gets 1$ \KwSty{to} $2k$}{
            \ForEach{$\tau \in \exmtch{$B_i$, $T$}$}{
                $M \gets M\cup\{ \tau - b_i\}$\tcp*{Mark position $\tau - b_i$ in $T$}
            }
        }
        sort $M$\;
        \ForEach{$\pi\in \fragment{0}{n-m}$ that appears at least $k$ times in $M$}{
            \lIf{\verify{$P$, $T\fragmentco{\pi}{\pi+m}$, $k$}}{$\Occ_k(P, T) \gets
            \Occ_k(P, T) \cup \{ \pi \}$}
        }
        \Return{$\Occ_k(P, T)$}\;
    }
    \caption{A \modelname model algorithm for \cref{lm:hdC}}\label{alg:hdC}
\end{algorithm}
\begin{lemma}[{\tt BreakMatches($P$, $T$, $\{B_1,\dots,B_{2k}\}$, $k$)}:
    Implementation of~\cref{lm:hdC}]\label{lm:imphdA}
    Let $P$ denote a string of~length $m$ having $2k$ disjoint breaks $B_1,\dots,B_{2k}
    \substr P$ each satisfying $\per(B_i) \ge m / \alphav k$.
    Further, let $T$ denote a string of~length $n \le \threehalfs m$.

    Then, we can compute the set $\Occ_k(P, T)$ using $\Oh(k^2 \log \log k)$
    time plus $\Oh(k^2)$ \modelname operations.
\end{lemma}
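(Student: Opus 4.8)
The plan is to implement \cref{alg:hdC} verbatim and to certify correctness through the marking argument already developed in the proof of~\cref{lm:hdC}. First I would, for every break $B_i=P\fragmentco{b_i}{b_i+|B_i|}$, invoke \cref{lm:emath} to compute $\OccEx(B_i,T)$, and for each $\tau\in\OccEx(B_i,T)$ place one mark on position $\tau-b_i$ of~$T$, discarding marks that fall outside $\fragment{0}{n-m}$. After gathering all marks into a multiset $M$ and sorting it, I would, for every position $\pi$ carrying at least $k$ marks, run \texttt{Verify}$(P,T\fragmentco{\pi}{\pi+m},k)$ from \cref{lm:verifyh} and output $\pi$ precisely when the test succeeds.

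For correctness I would appeal to \cref{cl:c2}: in any $k$-mismatch occurrence at position $\ell$, at least $k$ of~the $2k$ breaks are matched exactly, so $\ell$ collects at least $k$ marks and hence survives as a candidate; consequently no element of~$\Occ_k(P,T)$ is lost. In the other direction, a position is reported only once \texttt{Verify} certifies $\hd(P,T\fragmentco{\pi}{\pi+m})\le k$, so there are no false positives. Together these two facts show that the algorithm outputs exactly $\Occ_k(P,T)$.

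For the complexity I would first note that $n\le\threehalfs m$ gives $n/m=\Oh(1)$. Each break has length $|B_i|=\floor{m/\betav k}=\Theta(m/k)$ and period $\per(B_i)\ge m/\alphav k$, so \cref{lm:emath} computes $\OccEx(B_i,T)$ in $\Oh(n/\per(B_i))=\Oh(k)$ time and $\Oh(n/|B_i|)=\Oh(k)$ \modelname operations, returning a set of~size $\Oh(k)$. Summed over the $2k$ breaks this is $\Oh(k^2)$ time and $\Oh(k^2)$ \modelname operations, and it yields $|M|=\Oh(k^2)$. Sorting $\Oh(k^2)$ integers from a polynomially bounded universe with a deterministic integer-sorting routine costs $\Oh(k^2\log\log k)$ time---this is the sole source of~the $\log\log k$ factor. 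Since each reported candidate consumes at least $k$ of~the $\Oh(k^2)$ marks and distinct candidates consume disjoint marks, there are only $\Oh(k)$ candidates, each checked with $\Oh(k)$ \modelname operations by \cref{lm:verifyh}, i.e.\ $\Oh(k^2)$ \modelname operations overall. Adding everything up gives the claimed $\Oh(k^2\log\log k)$ time plus $\Oh(k^2)$ \modelname operations.

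No step is genuinely difficult; the only points needing care are (i) restricting marks to valid starting positions $\fragment{0}{n-m}$ so that every \texttt{Verify} call receives a well-defined length-$m$ fragment, and (ii) expanding the output of~\cref{lm:emath} (returned compactly as $\Oh(n/m)=\Oh(1)$ arithmetic progressions with difference $\per(B_i)$) into the $\Oh(k)$ explicit marks per break within the stated budget. The $\log\log k$ overhead from integer sorting is unavoidable with the present toolkit, though it could be traded for $\Oh(k^2)$ expected time by randomized sorting if randomization were acceptable.
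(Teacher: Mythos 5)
Your proposal is correct and follows essentially the same route as the paper: exact matching of each break via \texttt{ExactMatches}, marking $\tau-b_i$, sorting the $\Oh(k^2)$ marks in $\Oh(k^2\log\log k)$ time, and verifying the $\Oh(k)$ positions with at least $k$ marks using \texttt{Verify}, with correctness resting on \cref{cl:c1,cl:c2}. The complexity accounting (including the source of the $\log\log k$ factor) matches the paper's argument.
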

\begin{proof}
    The implementation of~(the marking in the proof~of) \cref{lm:hdC} is straightforward:
    For each break $B_i = P\fragmentco{b_i}{b_i + |B_i|}$,
    we use a call to {\tt ExactMatches}$(B_i, T)$ from \cref{lm:emath} to find all
    exact occurrences $\OccEx(B_i, T)$.
    For each occurrence $\pi \in \OccEx(B_i, T)$, we mark position $\pi - b_i$ in $T$.
    Having placed all marks, we run {\tt Verify} from \cref{lm:verifyh}
    for every position $\pi\in \fragment{0}{n-m}$ in $T$ that has at least $k$ marks.
    In the end, we return all positions where {\tt Verify} confirmed a $k$-mismatch
    occurrence. See \cref{alg:hdC} for a pseudo-code.

    For the correctness, note that we placed the marks as in the proof~of~\cref{lm:hdC};
    in particular, by \cref{cl:c2}, any $\pi\in \Occ_k(P,T)$  has at least $k$ marks.
    As we verify every possible candidate using {\tt Verify},
    we report no false positives, and thus the algorithm is correct.

    We continue with analyzing the number of~\modelname
    operations used. As every break~$B_i$ has period $\per(B_i)>m/\alphav k$,
    every call to {\tt ExactMatches} uses $\Oh(k)$ basic \modelname operations;
    thus, all calls to {\tt ExactMatches} use $\Oh(k^2)$ basic operations in total.
    As there are at most $\Oh(k^2 / k) = O(k)$ positions that we verify, and every
    call to {\tt Verify} uses $\Oh(k)$ \modelname operations,
    the verifications use $\Oh(k^2)$ \modelname operations in total.

    Finally, for the running time, by \cref{cl:c1}, we place at most $\Oh(k^2)$ marks in $T$, so
    the marking step uses $\Oh(k^2)$ operations in total.
    Further, finding all positions in $T$ with at least $k$ marks can be done via a linear scan
    over the multiset $M$ of~all marks after sorting $M$, which can be done in time
    $\Oh(k^2 \log \log k)$. Overall, \cref{alg:hdC} runs in time $\Oh(k^2 \log\log
    k)$ plus $\Oh(k^2)$ \modelname operations.
\end{proof}

\begin{algorithm}[t]
    \SetKwBlock{Begin}{}{end}
    \SetKwFunction{appm}{PeriodicMatches}
    \SetKwFunction{rpmtch}{RepetitiveMatches}
    \rpmtch{$P$, $T$, $\{ (R_1 = P\fragmentco{r_1}{r_1 + |R_1|}, Q_1) \dots, (R_{r} =
        P\fragmentco{r_{r}}{r_{r} + |R_{r}|}, Q_r) \}$, $k$}\Begin{
        multi-set $M \gets \{\}$; $\Occ_k(P, T) \gets \{\}$\;
        \For{$i \gets 1$ \KwSty{to} $r$}{
            \ForEach{$\tau \in \appm{$R_i$, $T$, $\floor{\betavh \cdot k/m \cdot |R_i|}$,
                $\ceil{\betav\cdot k/m \cdot |R_i|}$, $Q_i$}$}{
                $M \gets M\cup\{ (\tau - r_i, |R_i|) \}$\tcp*{Place $|R_i|$ marks at
                position $\tau - r_i$ in $T$}
            }
        }
        sort $M$ by positions\;
        \ForEach{$\pi\in \fragment{0}{n-m}$ appearing at least
            $\sum_{(\pi, v) \in M} v \ge \sum_{i=1}^r |R_i| - m/\betavh$ times in $M$}{
            \lIf{\verify{$P$, $T\fragmentco{\pi}{\pi+m}$, $k$}}{$\Occ_k(P, T) \gets
            \Occ_k(P, T) \cup \{ \pi \}$}
        }
        \Return{$\Occ_k(P, T)$}\;
    }
    \caption{A \modelname model algorithm for \cref{lm:hdB}}\label{alg:hdB}
\end{algorithm}
\begin{lemma}[{\tt RepetitiveMatches($P$,$T$,$\{ (R_1, Q_1) \dots, (R_{r},Q_r)\}$,$k$)}:
    Implementation of~\cref{lm:hdB}]\label{lm:imphdB}
    Let $P$ denote a string of~length~$m$,
    let $T$ denote a string of~length~$n \le \threehalfs m$,
    and let $k\le m$ denote a positive integer.
    Suppose that $P$ contains disjoint repetitive regions $R_1,\ldots, R_{r}$
    of~total length at least $\sum_{i=1}^r |R_i| \ge \deltavN/\deltavD\cdot m$
    such that each region $R_i$ satisfies $|R_i| \ge m/\betav k$ and has a
    primitive approximate period~$Q_i$
    with $|Q_i| \le m/\alphav k$ and $\hd(R_i,Q_i^*) = \ceil{\betav k/m\cdot |R_i|}$.

    Then, we can compute the set $\Occ_k(P,T)$ using $\Oh(k^2 \log \log k)$ time plus
    $\Oh(k^2)$ \modelname operations.
\end{lemma}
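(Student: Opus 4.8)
The plan is to show that \cref{alg:hdB} faithfully realises the marking argument behind \cref{lm:hdB}, with \texttt{PeriodicMatches} (\cref{lm:permat}) supplying the approximate occurrences of each repetitive region and \texttt{Verify} (\cref{lm:verifyh}) discarding spurious candidates. Set $m_R := \sum_{i=1}^r|R_i|$, and, as in \cref{alg:hdB}, let $k_i := \floor{\betavh k/m\cdot|R_i|}$ and $d_i := \ceil{\betav k/m\cdot|R_i|}$ be the threshold and error bound passed to \texttt{PeriodicMatches} for region $R_i$. We may assume $m\le n$, since otherwise $\Occ_k(P,T)=\emptyset$ trivially. The first task is to check that every tuple $(R_i,T,k_i,d_i,Q_i)$ satisfies the hypotheses of \cref{lm:permat}. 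From $|R_i|\ge m/\betav k$ we get $\betav k|R_i|/m\ge 1$, so $d_i$ is a positive integer with $\betav k|R_i|/m\le d_i\le \tbetav k|R_i|/m$; hence $d_i\ge\betav k|R_i|/m = 2\betavh k|R_i|/m\ge 2k_i$, and $|Q_i|\le m/\alphav k\le|R_i|/(8d_i)$ follows from $d_i\le\tbetav k|R_i|/m$ and $\alphav = 8\tbetav$. Since moreover $d_i=\hd(R_i,Q_i^*)\le|R_i|$, we also have $k_i\le d_i/2\le|R_i|$, so the threshold is admissible. Crucially, the equality $\hd(R_i,Q_i^*)=\ceil{\betav k/m\cdot|R_i|}=d_i$ holds, so \cref{lm:permat} returns $\Occ_{k_i}(R_i,T)$ as a list of $\Oh(n/|R_i|\cdot d_i)$ \emph{individual} positions rather than as arithmetic progressions---this is what keeps $M$ small.

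For correctness, note that \cref{alg:hdB} places $|R_i|$ marks at position $\tau-r_i$ of $T$ for every $i$ and every $\tau\in\Occ_{k_i}(R_i,T)$, exactly as in the proof of \cref{lm:hdB}. Thus \cref{cl:b2} applies verbatim: every $\ell\in\Occ_k(P,T)$ accumulates at least $m_R-m/\betavh$ marks and therefore appears among the candidate positions $\pi$ with $\sum_{(\pi,v)\in M}v\ge m_R-m/\betavh$. Calling \texttt{Verify}$(P,T\fragmentco{\pi}{\pi+m},k)$ on each candidate confirms all genuine occurrences and rejects every spurious one, so the returned set is exactly $\Occ_k(P,T)$.

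Finally we bound the cost. Because the regions are disjoint fragments of $P$, we have $m_R\le m$, and because each $|R_i|\ge m/\betav k$, there are $r\le\betav k$ of them. By \cref{lm:permat} and $d_i\le\tbetav k|R_i|/m$, the \texttt{PeriodicMatches} calls take total time $\Oh\bigl(\sum_i n/|R_i|\cdot d_i^2\log\log d_i\bigr)=\Oh(n k^2 m_R/m^2\cdot\log\log k)=\Oh(k^2\log\log k)$ (using $n\le\threehalfs m$ and $d_i=\Oh(k)$) and $\Oh\bigl(\sum_i n/|R_i|\cdot d_i\bigr)=\Oh(nkr/m)=\Oh(k^2)$ \modelname operations. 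Since $|\Occ_{k_i}(R_i,T)|=\Oh(n/|R_i|\cdot d_i)$, the multiset $M$ has $\Oh(k^2)$ integer-pair entries, so it can be sorted by position in $\Oh(k^2\log\log k)$ time by a deterministic integer sort, after which a linear scan aggregates the weights. By \cref{cl:b1} the total weight in $M$ is $\Oh(nk/m\cdot m_R)$, while $m_R\ge\deltavN/\deltavD\cdot m$ forces $m_R-m/\betavh\ge m_R/3$; hence there are only $\Oh(nk/m)=\Oh(k)$ candidates, and the $\Oh(k)$ calls to \texttt{Verify}, each using $\Oh(k)$ \modelname operations, add $\Oh(k^2)$ operations. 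Summing everything gives $\Oh(k^2\log\log k)$ time plus $\Oh(k^2)$ \modelname operations. The one genuinely delicate point is the hypothesis check for \cref{lm:permat}: one must verify that $\hd(R_i,Q_i^*)$ equals $d_i$ \emph{exactly}, for otherwise \texttt{PeriodicMatches} would report arithmetic progressions, whose expansion into individual marks could make $|M|$---and hence the sorting step---much larger than $\Oh(k^2)$.
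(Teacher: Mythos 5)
Your proposal is correct and follows essentially the same route as the paper's proof: invoke \texttt{PeriodicMatches} with parameters $k_i,d_i$ for each region (checking $d_i\ge 2k_i$, $|Q_i|\le|R_i|/8d_i$, and the exact equality $\hd(R_i,Q_i^*)=d_i$ so that individual positions are returned), place weighted marks, sort the $\Oh(k^2)$ marks, and verify the $\Oh(k)$ heavy candidates via \cref{cl:b1,cl:b2}. The cost accounting matches the paper's as well.
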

\begin{proof}
    As in the proof~of~\cref{lm:hdB},
    set $m_R := \sum_{i=1}^r |R_i| \ge \deltavN/\deltavD\cdot m$ and
    define for every $1 \le i \le r$ the values
    $k_i := \floor{\betavh \cdot k/m \cdot |R_i|}$ and  $d_i := \ceil{\betav \cdot k/m
    \cdot |R_i|}=|\MIS(R_i, Q_i^*)|$.
    Further, write $R_i = P\fragmentco{r_i}{r_i + |R_i|}$.

    We implement the marking of~the proof~of~\cref{lm:hdB}:
    for every repetitive region $R_i$, we call $\appm{$R_i$, $T$, $k_i$, $d_i$, $Q_i$}$
    from \cref{lm:permat} to obtain the set $\Occ_{k_i}(R_i, T)$.
    Next, for each position $\tau \in \Occ_{k_i}(R_i, T)$,
    we place $|R_i|$ marks at position $\tau - r_i$.
    Note that for performance reasons,
    instead of~placing $|R_i|$ unweighted marks, we place a single mark of
    weight $|R_i|$ at position $\tau - r_i$.

    Having placed all marks, we run \verify from \cref{lm:verifyh}
    for every position $\pi\in \fragment{0}{n-m}$ in $T$ that has marks of~total weight at
    least $m_r - m/\betavh$.
    In the end, we return all positions where \verify confirmed a $k$-mismatch
    occurrence. See \cref{alg:hdB} for a pseudo-code.

    For the correctness, first note that in every call to \appm from \cref{lm:permat},
    we have $\tbetav k/m \cdot |R_i| \ge d_i =  \ceil{\betav k/m\cdot |R_i|}  = \hd(R_i,
    Q_i^*) \ge 2k_i$,
    so $|Q_i| \le m/\alphav k \le |R_i|/8d_i$; hence, we can indeed call \appm in this case.
    Further, note that we placed the marks as in the proof~of~\cref{lm:hdB};
    in particular, by \cref{cl:b2}, any $\pi \in \Occ_k(P,T)$
    has at least $m_R - m/\betavh$ marks.
    As we verify every possible candidate using \verify,
    we report no false positives, and thus the algorithm is correct.

    For the number of~\modelname operations,
    observe that during the marking step, for every repetitive region~$R_i$,
    we call \appm once, and the call uses $\Oh(n/|R_i| \cdot d_i)
    = \Oh(m/|R_i| \cdot k/m\cdot |R_i|) = \Oh(k)$ \modelname
    operations. Hence, the marking step uses
    $\Oh(r\cdot k) = \Oh(k^2)$ \modelname operations in total.
    Next, during the verification step, by \cref{cl:b1,cl:b2}, we call \verify
    at most $\Oh(k)$ times. As each call to \verify uses $\Oh(k)$ \modelname
    operations, the verification step in total uses $\Oh(k^2)$ \modelname
    operations.
    Overall, \cref{alg:hdB} uses $\Oh(k^2)$ \modelname operations.

    Finally, for the running time, with similar calculations as for the number of
    \modelname operations, we see that the marking step, including calls to \appm, takes time
    $\sum_i \Oh(n/|R_i|\cdot d_i^2 \log \log d_i)=\sum_i \Oh(|R_i|/m \cdot k^2 \log\log k)
    = \Oh(k^2 \log\log k)$.
    Further, for every~$R_i$, we place at most
    $|\Occ_{k_i}(R_i, T)|$ (weighted) marks, which can be bounded using \cref{cor:aux} by
    $|\Occ_{k_i}(R_i,T)| = \Oh(n/|R_i| \cdot d_i) = \Oh(k).$
    Thus, we place $|M| = \Oh(k^2)$ (weighted) marks in total.
    Therefore, we can sort $M$ (by positions) in time $\Oh(k^2 \log\log k)$;
    afterwards, we can find the elements with total weight at least $m_R - m/\betavh$
    via a linear scan over $M$ in time $\Oh(k^2)$.
    Hence, \cref{alg:hdB} runs in $\Oh(k^2 \log \log k)$ overall time, completing the
    proof.
\end{proof}

\subsection{A \modelname Model Algorithm for Pattern Matches with Mismatches}

\begin{algorithm}[t]
    \SetKwBlock{Begin}{}{end}
    \SetKwFunction{mism}{MismatchOccurrences}
    \mism{$P$, $T$, $k$}\Begin{
        {\bf (} $B_1,\dots,B_{2k}$ {\bf or} $(R_1, Q_1),\dots,(R_r, Q_r)$ {\bf or}
        $Q$ {\bf )} $\gets \anly{$P$, $k$}$\;
        $\Occ_k(P, T) \gets \{\}$\;
        \For{$i \gets 0$ \KwSty{to} $\floor{2n/m}$}{
            $T_i \gets T\fragmentco{\floor{i\cdot {m}/2}}   {\min\{n, \floor{(i+3)\cdot {m}/2}-1\}}$\;
        \If{breaks $B_1,\dots,B_{2k}$ exist}{
             $\Occ_k(P, T_i) \gets \brmtch{$P$, $T_i$, $\{ B_1, \dots, B_{2k}\}$, $k$}$\;
        }\ElseIf{repetitive regions $(R_1, Q_1),\dots,(R_r, Q_r)$ exist}{
             $\Occ_k(P, T_i) \gets \rpmtch{$P$, $T_i$, $\{(R_1, Q_1),\dots,(R_r, Q_r)\}$, $k$}$\;
        }\lElse{%
            $\Occ_k(P, T_i) \gets \appm{$P$, $T_i$, $k$, $\betav k$, $Q$}$%
        }
        $\Occ_k(P, T)\gets \Occ_k(P, T) \cup\{\ell + im/2 : \ell \in \Occ_k(P,T_i)\}$\;
    }
    \Return{$\Occ_k(P, T)$}\;
    }
    \caption{A \modelname model algorithm for \cref{thm:hdmain}}\label{alg:hdmain}
\end{algorithm}
\hdalg*
\begin{proof}
    First, we split $T$ into overlapping parts $T_1,\dots,T_{\floor{2n/m}}$ of
    length less than $\threehalfs m$ each. In order to compute $\Occ_k(P,T_i)$ for each $i$,
    we follow the structure of~the proof~of~\cref{thm:hdmain}:
    We first call $\anly{$P$,$k$}$ from \cref{prp:Ialg}.
    If the call to \anly{$P$,$k$} yields $2k$ disjoint
    breaks $B_1,\dots,B_{2k}$ in~$P$,
    then we call \brmtch{$P$, $T_i$,$\{B_1,\dots,B_{2k}\}$, $k$}
    from \cref{lm:imphdA}.
    If the call to \anly{$P$,$k$} yields disjoint repetitive regions $R_1,\dots,R_r$
    (and corresponding approximate periods $Q_1, \dots, Q_r$), then we call
    \rpmtch{$P$, $T_i$, $\{(R_1, Q_1),\dots,(R_r, Q_r)\}$, $k$} from \cref{lm:imphdB}.
    Finally, if the call to \anly{$P$,$k$} yields an approximate period $Q$,
    then we call \appm{$P$, $T_i$, $k$, $\betav k$, $Q$} from \cref{lm:permat}.
    The resulting set $\Occ_k(P,T)$ is obtained by combining the sets $\Occ_k(P,T_i)$.
    Consider \cref{alg:hdmain} for a visualization as pseudo-code.

    \noindent For the correctness, first observe that we do not lose any occurrences
    by splitting the string $T$, since every length-$m$ fragment of
    $T$ is contained in one of~the fragments $T_i$.
    Second, by \cref{prp:Ialg} and due to $|T_i|\le \threehalfs m$,
    the parameters in the calls to \brmtch and \rpmtch each satisfy the
    requirements.
    Lastly, if we use \appm, notice that again by \cref{prp:Ialg} the string $Q$ satisfies
    $\hd(P, Q^*) \le \betav k$ and $|Q|\le m/\alphav k \le m/(8\cdot \betav k)$;
    hence, we can indeed call \appm in this case.

    For the number of~\modelname operations used, the call to \anly
    uses $\Oh(k)$ \modelname operations, each call to \brmtch and \rpmtch
    uses $\Oh(k^2)$ \modelname operations, and each call to \appm
    uses $\Oh(k)$ \modelname operations.
    As there are at most $\Oh(n/m)$ calls to \brmtch, \rpmtch, and \appm,
    we can bound the total number of~\modelname operations used by $\Oh(n/m
    \cdot k^2)$.

    Similarly, for the running time, the call to \anly takes $\Oh(k)$ time,
    whereas each call to \brmtch, \rpmtch, and \appm takes $\Oh(k^2 \log\log k)$ time.
    Again, since there are at most $\Oh(n/m)$ calls to \brmtch, \rpmtch, and \appm each,
    and combining  the sets $\Occ_k(P,T_i)$ to  $\Occ_k(P,T)$ can be implemented
    in total time $\Oh(n/m \cdot k^2)$,
    we can bound the total running time by $\Oh(n/m \cdot k^2 \log\log k)$, thus
    completing the proof.
\end{proof}

\section{Structural Insights into Pattern Matching with Edits}

In this section, we develop insight into the structure of~$k$-error occurrences of~a pattern $P$
in a text $T$.
We prove the following result, which is analogous to~\cref{thm:hdmain} and
is~\cref{ed:mthm_intro} with explicit constants.

\begin{restatable}[Compare~\cref{thm:hdmain}]{theorem}{edmain}\label{thm:edmain}
    Given a pattern $P$ of~length $m$, a text $T$ of~length $n$, and a positive integer $k\le m$,
    then at least one of the following holds.
    \begin{itemize}
        \item The $k$-error occurrences of~$P$ in $T$ satisfy
            $|\floor{\OccE_k(P,T)/k}|\le \Ethmboundt \cdot n/m \cdot k$.
        \item There is a primitive string $Q$ of~length $|Q| \le m/\thmboundt k$ that
            satisfies $\edl{P}{Q} < 2k$.\ifx\edmaint\undefined\lipicsEnd\fi
    \end{itemize}
\end{restatable}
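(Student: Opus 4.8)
The plan is to mirror the proof of \cref{thm:hdmain} essentially line by line, replacing the Hamming distance by the edit distance throughout and absorbing the $\Oh(k)$-size shift ambiguity inherent to edit-distance occurrences into the $\floor{\cdot/k}$ of the statement. First I would reduce to texts of length below $\threehalfs m+k$: cover $T$ by $\Oh(n/m)$ overlapping fragments $T_i$, each of length $<\threehalfs m+k$, such that every length-$\Oh(m)$ window of $T$ lies in some $T_i$; since $\floor{\OccE_k(P,T)/k}$ is covered, up to shifts and $\Oh(1)$ boundary buckets per $T_i$, by the sets $\floor{\OccE_k(P,T_i)/k}$, it suffices to bound $|\floor{\OccE_k(P,T_i)/k}|$ by $\Oh(k)$, and then to track the implicit constant to obtain $\Ethmboundt$.

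The first ingredient is the periodic-case characterization, i.e.\ \cref{lem:Eaux_intro} with explicit constants, the edit analogue of \cref{lem:aux} (and of \cref{cor:aux}). Fixing $d\ge 2k$ and a primitive $Q$ with $|Q|\le m/8d$ and $\ed(P,Q^\infty\fragment{i}{j})\le d$, I would first argue, just as in \cref{lem:aux}(a),(b) but using the triangle inequality for the edit distance (\cref{Etria}), that every $p\in\OccE_k(P,T)$ satisfies $p\bmod|Q|\le 3d$ or $p\bmod|Q|\ge|Q|-3d$, and that $\ed(T,Q^\infty\fragment{i'}{j'})\le 3d$ for some $i',j'$. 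The hard part — and the main obstacle of the whole theorem — is the analogue of \cref{lem:rle}/\cref{clm:hj}: the marking identity there crucially uses that each mismatch between $P$ (resp.\ $T$) and $Q^\infty$ contributes to a single, well-defined position, which fails for the edit distance because optimal alignments are not unique and insertions/deletions cannot be localised this way. I would resolve this by enclosing each individual error inside a \emph{locked fragment} that admits a canonical optimal alignment — the device of Cole and Hariharan~\cite{ColeH98} — and then running a marking scheme on these locked fragments. Accounting for the $\Oh(d)$ admissible shifts costs one extra factor of $d$ relative to \cref{lem:rle}, which is why one obtains a decomposition of $\OccE_k(P,T)$ into $\Oh(d^3)$ arithmetic progressions with difference $|Q|$, together with an $\Oh(d)$ bound on the number of length-$\Oh(k)$ clusters of $k$-error occurrences in the refined case $\ed(P,Q^\infty\fragment{i}{j})=d$ — the edit counterpart of \cref{lem:aux}(d).

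The second ingredient is the pattern analysis, the edit analogue of \cref{prp:I}. I would run \cref{alg:P1} essentially verbatim, only replacing the enumeration of mismatches by a Landau--Vishkin-style enumeration of edits built from \lceOpName-type operations (as in~\cite{LandauV89}), and growing a candidate repetitive region $R$ until $\ed(R,Q^*)=\ceil{\betav k/m\cdot|R|}$. The monotonicity argument — each one-character extension raises the error count by at most one, so we never overshoot — carries over unchanged, so the outcome is: either $2k$ disjoint breaks $B_i$ with $\per(B_i)>m/\alphav k$ and $|B_i|=\floor{m/\betav k}$; or disjoint repetitive regions $R_1,\dots,R_r$ with $\sum_j|R_j|\ge\deltavN/\deltavD\cdot m$, each carrying a primitive $Q_j$ with $|Q_j|\le m/\alphav k$ and $\ed(R_j,Q_j^*)=\ceil{\betav k/m\cdot|R_j|}$; or a primitive $Q$ with $|Q|\le m/\alphav k$ and $\ed(P,Q^*)<\betav k$.

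Finally I would dispatch the three cases, each adapting its Hamming counterpart but counting clusters $\floor{\cdot/k}$ rather than individual positions. \emph{Breaks} (cf.\ \cref{lm:hdC}): in a $k$-error occurrence, at least $k$ of the $2k$ breaks carry no internal edit, and such a break $B_i$ then occurs exactly in $T_i$ at an offset differing from its nominal one by at most $k$; since $\per(B_i),|B_i|=\Theta(m/k)$, each $B_i$ has $\Oh(k)$ exact occurrences in $T_i$, each touching $\Oh(1)$ buckets after the $\pm k$ shift; marking these $\Oh(k)$ buckets per break ($\Oh(k^2)$ in total) while every $k$-error cluster collects $\ge k$ marks yields $\Oh(k)$ clusters. \emph{Repetitive regions} (cf.\ \cref{lm:hdB}): setting $k_j:=\floor{\betavh k/m\cdot|R_j|}$ and $d_j:=\ed(R_j,Q_j^*)$ — the hypotheses $d_j\ge 2k_j$ and $|Q_j|\le|R_j|/8d_j$ holding exactly as in \cref{lm:hdB} — I would invoke the periodic case to bound the number of length-$\Oh(k_j)$ clusters of $k_j$-error occurrences of $R_j$ in $T_i$ by $\Oh(|T_i|/|R_j|\cdot d_j)=\Oh(k)$, giving $\Oh(k)$ ``candidate buckets'' per region; placing weight $|R_j|$ on each candidate bucket (total weight $\Oh(k\sum_j|R_j|)=\Oh(km)$) while every $k$-error cluster of $P$ collects weight at least $\sum_j|R_j|-m/4\ge m/8$ (the regions with more than $k_j$ internal edits have total length below $m/4$) yields $\Oh(k)$ clusters. \emph{Approximately periodic} with $2k\le\ed(P,Q^*)<\betav k$: apply the periodic case with $d=\ed(P,Q^*)=\Theta(k)$, which with the residue restriction and the $\ed=d$ cluster bound gives $\Oh(k)$ clusters. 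If instead $\ed(P,Q^*)<2k$, then $\edl{P}{Q}\le\ed(P,Q^*)<2k$ and $|Q|\le m/\alphav k=m/\thmboundt k$, so the second alternative holds. Summing the per-fragment bounds over the $\Oh(n/m)$ fragments $T_i$ completes the proof. I expect the periodic case — making the locked-fragment canonicalisation interact cleanly with both the residue restriction and the sliding-window marking — to be by far the most delicate step, while the pattern analysis and the break/region marking arguments are routine adaptations of the Hamming proof.
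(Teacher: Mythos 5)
Your proposal follows essentially the same route as the paper's proof of \cref{thm:edmain}: the edit-distance pattern analysis into breaks, repetitive regions, or an approximate period (\cref{prp:EI}), the periodic-case cluster bound proved via Cole--Hariharan-style locked fragments and a marking of synchronized length-$d$ blocks (\cref{lem:Eaux}, \cref{cor:Eaux}), and the three concluding cases handled by marking length-$k$ blocks with the same parameters $k_i=\floor{\betavh k/m\cdot|R_i|}$, $d_i=\ceil{\betav k/m\cdot|R_i|}$ exactly as in \cref{lm:EdC,lm:EdB,lm:EdA}. Two small remarks: claims (a)--(b) of the periodic characterization do not follow from the triangle inequality alone as you suggest --- the paper needs the alignment-splitting and synchronization lemmas (\cref{fct:split,lem:synchr}), i.e.\ the same ``exactly matched copy of $Q$'' device you invoke later, to pin down the residue of an occurrence modulo $|Q|$ --- and the $\Oh(d^3)$-progression decomposition you plan to establish is not needed for this theorem (the paper uses only the cluster bound of \cref{lem:Eaux}(c)).
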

\def\edmaint{1}

Similarly to~\cref{sec:km}, we start with an analysis of~the (approximately) periodic case.

\subsection{Characterization of~the Periodic Case}

\begin{restatable}[{Compare~\cref{lem:aux}}]{theorem}{edaux}\label{lem:Eaux}
    Let $P$ denote a pattern of~length $m$, let $T$ denote a text of~length~$n$,
    and let $k\le m$ denote a non-negative integer such that $n < \threehalfs m +k$.
    Suppose that the $k$-error occurrences of~$P$ in $T$ include a prefix of~$T$ and a
    suffix of~$T$.
    If there are a positive integer $d\ge 2k$ and a primitive string $Q$
    with $|Q|\le m/8d$ and $\ed(P,Q^*)=\edl{P}{Q}\le d$, then each of~following holds:
     \begin{enumerate}[(a)]
        \item For every $p\in \OccE_k(P,T)$, we have $p\bmod |Q|\le 3d$ or $p\bmod |Q|\ge
            |Q|-3d$.\label{it:Emult}
        \item The string $T$ satisfies $\edl{T}{Q}\le 3d$.\label{it:Etext}
        \item If $\edl{P}{Q}=d$, then $|\floor{\OccE_k(P,T)/d}|\le \pvarphiv d$.\label{it:Efew}
        \item The set $\OccE_k(P,T)$ can be decomposed into $617d^3$ arithmetic progressions
        with difference $|Q|$. \label{it:Eprog}
        \ifx\edauxt\undefined\lipicsEnd\fi
     \end{enumerate}
\end{restatable}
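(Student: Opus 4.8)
The plan is to follow the blueprint of the Hamming-distance analysis (\cref{lem:aux} together with its engine \cref{lem:rle}), adapting it to the two features that make the edit distance harder: optimal alignments need not be unique, and the matching window of a $k$-error occurrence may slide by $\Theta(k)$ positions. Fix the prefix $k$-error occurrence $T\fragmentco{0}{j_0}$ and the suffix $k$-error occurrence $T\fragmentco{i_n}{n}$ granted by the hypothesis; then $j_0\ge m-k$, $i_n\le n-m+k<\tfrac m2+2k$, and $|Q|\ge 1$ together with $|Q|\le m/8d\le m/16k$ forces $m\ge 16k$, so $i_n<j_0$ and the two witnesses overlap in a fragment of $T$ of length $j_0-i_n>\tfrac m2-3k=\Omega(m)$. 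Combining $\ed(P,Q^*)\le d$ with an arbitrary $k$-error occurrence through \cref{Etria} and $k\le d/2$ shows that the witness of every $p\in\OccE_k(P,T)$ — in particular of the two distinguished occurrences — lies within edit distance $\tfrac32 d$ of a prefix of $Q^\infty$.

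For parts~\eqref{it:Emult} and~\eqref{it:Etext} I would convert proximity to a prefix of $Q^\infty$ into a statement about positions by attaching to each witness a canonical optimal alignment built from \emph{locked fragments}: away from the $\Oh(d)$ error regions every witness position is assigned a well-defined $Q^\infty$-position, and the offset between the two is constant along each error-free run and varies by only $\Oh(d)$ overall. Restricting the canonical alignments of two occurrences to their common overlap of length $\Omega(m)$ yields two fragments of $Q^\infty$ of length $\Omega(m)$ at edit distance $\le 3d$; since $Q$ is primitive and $|Q|\le m/8d$, two such long and close fragments of $Q^\infty$ must agree up to an $\Oh(d)$ shift, and chasing this shift back through the $\Oh(d)$-bounded drifts of the alignments gives~\eqref{it:Emult} for every $p$, with the stated constant $3d$ (the case $|Q|\le 6d$ being immediate). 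Applied to the prefix and suffix occurrences, the same reasoning shows that their canonical alignments are consistent modulo $|Q|$ on the overlap, so concatenating the prefix alignment on $\fragmentco{0}{j_0}$ with the rephased suffix alignment on $\fragmentco{j_0}{n}$ realizes $T$ within edit distance $\tfrac32 d+\tfrac32 d=3d$ of a prefix — hence a substring — of $Q^\infty$, which is~\eqref{it:Etext}.

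Parts~\eqref{it:Efew} and~\eqref{it:Eprog}, the edit-distance counterpart of \cref{lem:rle}, are the crux. By~\eqref{it:Emult} every $p\in\OccE_k(P,T)$ is within $3d$ of a multiple of $|Q|$, so $\OccE_k(P,T)$ splits into $\Oh(d)$ residue classes modulo $|Q|$, and it suffices to show that inside each class the set $\{p:g_p\le k\}$, where $g_p:=\min_j\ed(P,T\fragmentco{p}{j})$, forms $\Oh(d^2)$ intervals. Using~\eqref{it:Etext} I would install a canonical alignment of $T$ with $Q^\infty$ via locked fragments and run a marking argument in the spirit of \cref{clm:hj}: the $\Oh(d)$ error regions of $P$ relative to $Q^\infty$ and the $\Oh(d)$ error regions of $T$ relative to $Q^\infty$ together generate $\Oh(d^2)$ interaction events (each a single error of $P$ meeting a single error of $T$), and each such event can change $g_p$ only over a bounded window of $p$; the extra freedom in the endpoint $j$ (a $k$-error occurrence at $p$ has $\Theta(k)$ admissible endpoints) does not multiply the count but does inflate the constants, whence the stated $617d^3$. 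For~\eqref{it:Efew}, when $\ed(P,Q^*)=d$ exactly the triangle inequality forces $\ed(T\fragmentco{p}{j},Q^*)\ge d-k\ge d/2$ for every witness, so each occurrence sits within $\Oh(d)$ of one of the $\Oh(d)$ error regions of $T$ relative to $Q^\infty$; collapsing positions by $\lfloor\cdot/d\rfloor$ leaves $\Oh(d)$ classes, giving $\pvarphiv d$.

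The main obstacle is the edit-distance replacement for \cref{lem:rle}. Unlike the Hamming distance, the edit distance is not additive over fragments, and the number of errors between $P$ (or $T$) and $Q^\infty$ depends on the chosen alignment, so a \cref{clm:hj}-style accounting has no meaning until an alignment is fixed canonically; the locked-fragment construction is exactly what supplies such a canonical alignment and makes the marking argument go through, while the simultaneous tracking of the start position (a multiple of $|Q|$ plus an offset in $\fragment{-3d}{3d}$) and of the sliding window length is what produces the extra factor $d$ and the awkward explicit constants $\pvarphiv d$ and $617d^3$.
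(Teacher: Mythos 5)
Your plan for parts (a), (b), and (d) is essentially the paper's: (a) and (b) come from splicing two low-cost alignments of overlapping fragments of $T$ with substrings of $Q^\infty$ at a point where both alignments match a full copy of $Q$ exactly (this is \cref{fct:split} and \cref{lem:synchr}; the "consistency modulo $|Q|$" you invoke is precisely what such a common error-free copy of $Q$, together with primitivity, buys, so your sketch is repairable along those lines), and (d) is the paper's marked/unmarked decomposition driven by locked fragments of $P$ and $T$. For (d), note that the real work is not that each "interaction event" perturbs $g_p$ on a bounded window, but that \emph{outside} the $\Oh(d^2)$ interaction intervals membership in $\OccE_k(P,T)$ depends only on $p\bmod|Q|$; the paper needs the $k$-locked prefix notion (\cref{lem:klocked}) and a greedy realignment argument (\cref{cl:progr_un}) for this, which your sketch leaves implicit but does not contradict.

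Part (c), however, contains a genuine gap. From $\edl{P}{Q}=d$ and the triangle inequality you correctly deduce that every witness window satisfies $\edl{T\fragmentco{p}{j}}{Q}\ge d-k$, hence contains at least $d-k$ error positions of $T$ (with respect to a fixed optimal alignment of $T$ with a substring of $Q^\infty$). But the window has length about $m\gg d$, so this does not localize the \emph{starting position} $p$ near those errors, and your conclusion that "each occurrence sits within $\Oh(d)$ of one of the error regions of $T$" is false in general: take $Q=\mathtt{a}$, $P=\mathtt{a}^{m/2}\mathtt{b}^{d}\mathtt{a}^{m/2-d}$, and $T=P$ (then $n=m<\threehalfs m+k$, the prefix and suffix of $T$ are $0$-error occurrences, and $\ed(P,Q^*)=\edl{P}{Q}=d$); the only error region of $T$ lies around position $m/2$, while $0\in\OccE_k(P,T)$, at distance $m/2$, not $\Oh(d)$. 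The offset between occurrence starts and $T$'s error regions is dictated by \emph{where the error regions of $P$ sit inside $P$}, which your argument for (c) never uses. The paper's mechanism is a pairing argument: for each locked fragment $L_i$ of $P$ and each error $e$ of $T$, it marks the (synchronized) length-$d$ blocks $\fragmentco{jd}{(j+1)d}$ from which an occurrence could place $e$ inside the image of $L_i$; it then shows that any block containing an occurrence collects at least $d-k\ge d/2$ marks, and bounds the total number of marks by $\Oh(d^2)$ using $\sum_i|L_i|\le(5|Q|+1)d$ and, for large $|Q|$, the restriction to synchronized blocks. That global mark-counting over pairs is what yields $|\floor{\OccE_k(P,T)/d}|=\Oh(d)$; proximity to $T$'s errors alone cannot, so as written your proof of (c) does not go through.
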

\def\edauxt{1}

\begin{lemma}\label{fct:split}
    Let $k$ denote a positive integer, let $Q$ denote a primitive string, and let $S$ denote a
    string of~length $|S|\ge (2k+1)|Q|$.
    If there are integers $\ell \le r$ and $\ell' \le r'$ such that
    $\ed(S,Q^\infty\fragmentco{\ell}{r})\le k$ and $\ed(S,Q^\infty\fragmentco{\ell'}{r'})\le k$,
    then there are integers $j, j'$ and a decomposition $S=S_L\cdot S_R$
    that satisfy
    \begin{align*}
        \ed(S,Q^\infty\fragmentco{\ell\phantom{'}}{\phantom{'}r})
        & =\ed(S_L,Q^\infty\fragmentco{\ell\phantom{'}}{\phantom{'}j|Q|})+\ed(S_R,
        Q^\infty\fragmentco{j|Q|\phantom{'}}{\phantom{'}r})\quad\text{and}\\
        \ed(S,Q^\infty\fragmentco{\ell'}{r'})
        &=\ed(S_L,Q^\infty\fragmentco{\ell'}{j'|Q|})+\ed(S_R,Q^\infty\fragmentco{j'|Q|}{r'}).
    \end{align*}
    Furthermore, if $|Q|=1$, then the assumption $|S|\ge (2k+1)|Q|$ is not required.
\end{lemma}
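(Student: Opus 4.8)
The plan is to rephrase the statement in terms of alignments and reduce it to a single combinatorial fact about \emph{synchronization}; everything else is routine. First the degenerate cases. If $|Q|$ divides both $r$ and $r'$ --- in particular whenever $|Q|=1$ --- take $S_L:=S$, $S_R:=\varepsilon$, $j:=r/|Q|$, $j':=r'/|Q|$; both displayed identities then read $\ed(S,Q^\infty\fragmentco{\ell}{r})=\ed(S,Q^\infty\fragmentco{\ell}{r})+\ed(\varepsilon,\varepsilon)$ (and its primed analogue). Symmetrically, $S_L:=\varepsilon$, $S_R:=S$ settles the case $|Q|\mid\ell$ and $|Q|\mid\ell'$. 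These degenerate choices need no bound on $|S|$, which is exactly why the ``furthermore'' clause holds for $|Q|=1$; from now on assume $|Q|\ge 2$, so that $|S|\ge(2k+1)|Q|$ is in force.

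Fix optimal alignments $\mathcal A$ of $S$ onto $X:=Q^\infty\fragmentco{\ell}{r}$ and $\mathcal A'$ of $S$ onto $X':=Q^\infty\fragmentco{\ell'}{r'}$; each has cost at most $k$, and I view each as a monotone lattice path of diagonal, vertical, and horizontal steps. Call $i\in\fragment{0}{|S|}$ \emph{$\mathcal A$-synchronized} if the path of $\mathcal A$ visits a point $(i,c)$ with $|Q|\mid\ell+c$ --- equivalently, $\mathcal A$ aligns $S\fragmentco{0}{i}$ to a prefix $Q^\infty\fragmentco{\ell}{j|Q|}$ of $X$ made of whole copies of $Q$ --- and define \emph{$\mathcal A'$-synchronized} with $\ell'$ in place of $\ell$. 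If some $i$ is synchronized for both, set $S_L:=S\fragmentco{0}{i}$, $S_R:=S\fragmentco{i}{|S|}$, let $j,j'$ be the associated multiples, and split the path of $\mathcal A$ at $(i,j|Q|-\ell)$ into $\mathcal A_L,\mathcal A_R$ with $\operatorname{cost}(\mathcal A_L)+\operatorname{cost}(\mathcal A_R)=\operatorname{cost}(\mathcal A)=\ed(S,X)$, likewise splitting $\mathcal A'$. Then $\ed(S,X)\ge\ed(S_L,Q^\infty\fragmentco{\ell}{j|Q|})+\ed(S_R,Q^\infty\fragmentco{j|Q|}{r})$, while the reverse inequality is subadditivity of edit distance under concatenation; equality follows, and the primed identity follows identically. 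So it remains to prove: \emph{$\mathcal A$ and $\mathcal A'$ share a synchronized index.}

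For this, list the first rows $\rho_0<\dots<\rho_M$ visited by $\mathcal A$ at the successive copy-boundary columns of $X$, and $\rho'_0<\dots<\rho'_{M'}$ for $\mathcal A'$. Path bookkeeping gives $\rho_0\le|Q|-1+k$, $\rho_M\ge|S|-|Q|+1-k$, and $\rho_{t+1}-\rho_t\in[\,|Q|-e_t,\,|Q|+e_t\,]$, where $e_t\ge 0$ is the cost of the portion of $\mathcal A$ strictly between the $t$-th and $(t{+}1)$-st copy of $Q$, so $\sum_t e_t\le k$; hence the offsets $\theta_t:=\rho_t-t|Q|$ satisfy $\theta_t\in[0,|Q|-1+k]$ with $\sum_t|\theta_{t+1}-\theta_t|\le k$ (same for $\mathcal A'$, with $e'_s,\theta'_s$), and $|S|\ge(2k+1)|Q|$ yields $M,M'=\Omega(k)$. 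Thus both $(\rho_t)$ and $(\rho'_s)$ are quasi-arithmetic progressions of common difference $|Q|$, with total offset fluctuation $\le k$, spanning essentially all of $\fragment{0}{|S|}$. Assuming $\{\rho_t\}\cap\{\rho'_s\}=\emptyset$, the plan is to track the gap $\phi(t):=\rho'_{\sigma(t)}-\rho_t\in(0,|Q|+k]$, where $\rho'_{\sigma(t)-1}<\rho_t<\rho'_{\sigma(t)}$: each step on which $\sigma$ does not advance drops $\phi$ by $\rho_{t+1}-\rho_t\ge|Q|-k$, each step on which $\sigma$ advances by one changes $\phi$ by at most $e_t+e'_{\sigma(t)}$ (with $\sum_t(e_t+e'_{\sigma(t)})\le 2k$), and $\sigma$ advances by two or more only on indices of total excess $\Oh(k/|Q|)$; confining $\phi$ to a window of width $|Q|+k$ then rules out such a trajectory from $\rho_0\approx 0$ to $\rho_M\approx|S|$, contradiction. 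The genuinely delicate point --- and, I expect, the technical heart of the authors' argument --- is this last step when $|Q|$ is small relative to $k$: there the phase window is wide and the crude counting above fails, so one instead chops $S$ along \emph{locked fragments}, maximal stretches whose alignment to $Q^\infty$ is forced to be canonical, and introduces and accounts for the $\le k$ edits of $\mathcal A$ and the $\le k$ edits of $\mathcal A'$ one at a time; the bound $|S|\ge(2k+1)|Q|$ is exactly what guarantees enough locked material to absorb all edits and still leave a shared synchronization point.
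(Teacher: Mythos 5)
Your reduction of the lemma to the claim that the two optimal alignments share a synchronized index is sound (the split-and-subadditivity step and the $|Q|=1$ case are fine), but that shared-index claim is the entire content of the lemma, and you do not prove it. Your phase-tracking argument over the quasi-arithmetic sequences $(\rho_t)$ and $(\rho'_s)$ is left as a sketch whose decisive step you yourself concede fails when $|Q|$ is small relative to $k$, and the fallback appeal to ``locked fragments'' is a gesture, not an argument (in the paper, locked fragments belong to a different lemma and are not used here). As written, the proposal therefore has a genuine gap exactly at the technical heart of the statement.

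The paper closes this gap with a much shorter argument that your setup misses: partition $S$ into the $2k+1$ disjoint blocks $S_i := S\fragmentco{i|Q|}{(i+1)|Q|}$. Each of the two alignments has cost at most $k$, so it can touch at most $k$ of these blocks with errors; by pigeonhole some block $S_i$ is aligned error-free in \emph{both} alignments. Being matched exactly to a length-$|Q|$ fragment of $Q^\infty$, this block satisfies $S_i=\rot^p(Q)$ for a unique $p\in\fragmentco{0}{|Q|}$ (uniqueness by primitivity), and again by primitivity the fragment of $Q^\infty$ aligned to $S_i$ in either alignment must be of the form $Q^\infty\fragmentco{j|Q|-p}{(j+1)|Q|-p}$. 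Hence cutting $S$ at position $i|Q|+p$ lands on a $|Q|$-boundary of $Q^\infty$ simultaneously in both alignments, which is precisely your ``common synchronized index,'' with no case distinction on the size of $|Q|$ and no trajectory analysis. If you want to salvage your write-up, replace the entire phase-tracking paragraph by this pigeonhole-plus-primitivity argument; the rest of your proof can stand.
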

\begin{proof}
    If $|Q|=1$, we can set $S_L := S$, $S_R :=\eps$, $j :=r$, and $j' :=r'$.

    Now assume that we have $|S|\ge (2k+1)|Q|$
    and define $S_i := S\fragmentco{i|Q|}{(i+1)|Q|}$ for $0\le i \le 2k$.
    Further, fix optimal alignments between $S$ and $Q^\infty\fragmentco{\ell}{r}$
    and between $S$ and $Q^\infty\fragmentco{\ell'}{r'}$.

    Observe that at least one of~the fragments $S_i$ is aligned without errors in
    \emph{both} alignments.
    Let us fix such a fragment $S_i$ and observe that $S_i$ is a length-$|Q|$ substring of
    $Q^\infty$, so
    $S_i = \rot^p(Q)$ for some $p\in \fragmentco{0}{|Q|}$. An illustration is provided in
    \cref{fig:fct_split}.
    Based on this value, we set $S_L := S\fragmentco{0}{i|Q|+p}$ and $S_R :=
    S\fragmentco{i|Q|+p}{|S|}$.

    Next, consider the fragment $Q'$ of~$Q^\infty\fragmentco{\ell}{r}$ that is aligned to $S_i$
    in the alignment fixed earlier.
    The fragment $Q'$ matches $\rot^p(Q)$. As $Q$ is primitive, $Q'$ is thus of~the form
    $Q' = Q^{\infty}\fragmentco{j|Q|-p}{(j+1)|Q|-p}$ for some integer~$j$.
    Consequently, \[
        \ed(S,Q^\infty\fragmentco{\ell}{r})
        = \ed(S_L,Q^\infty\fragmentco{\ell}{j|Q|})+\ed(S_R,Q^\infty\fragmentco{j|Q|}{r}).
        \] A similar argument shows that for some integer $j'$, we also have\[
        \ed(S,Q^\infty\fragmentco{\ell'}{r'})
        =\ed(S_L,Q^\infty\fragmentco{\ell'}{j'|Q|})+\ed(S_R,Q^\infty\fragmentco{j'|Q|}{r'}).
    \] This completes the proof.
\end{proof}

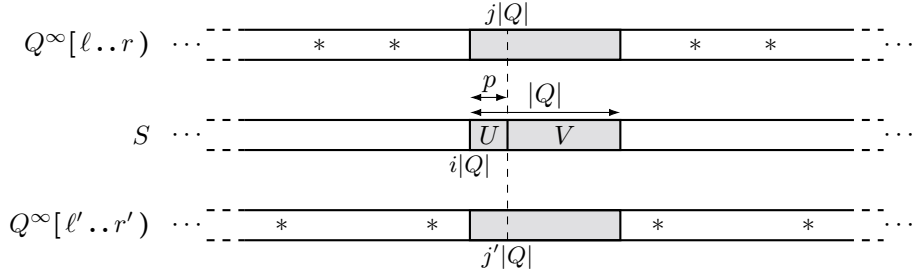
\begin{figure}[t]
    \centering
                   \begin{tikzpicture}
                	\draw[fill=lipicsYellow!80,thick] (4,0) rectangle (6,0.4);

                    \draw[thick] (1,0) -- (9,0);
                    \draw[thick] (1,0.4) -- (9,0.4);
                    \draw[dashed,thick] (0.5,0) -- (1,0);
                    \draw[dashed,thick] (0.5,0.4) -- (1,0.4);
                    \draw[dashed,thick] (9,0) -- (9.5,0);
                    \draw[dashed,thick] (9,0.4) -- (9.5,0.4);
                    \node[label = {left: $\cdots$}]  at (0.75,0.2) {};
                    \node[label = {right: $\cdots$}]  at (9.25,0.2) {};

                    \node[label = {left: $S$}]  at (0,0.2) {};
                    \node[label = {below: \small{$i|Q|$}}]  at (4,0.2) {};

                    \draw[thick] (4.5,0) -- (4.5,0.4);
                    \node[label = {left: $U$}]  at (4.65,0.2) {};
                    \node[label = {left: $V$}]  at (5.65,0.2) {};
                    \draw[{Latex[length=1.5mm, width=1mm]}-{Latex[length=1.5mm,
                    width=1mm]}] (4,0.7) -- (4.5,0.7);
                    \node[label = {above: $p$}]  at (4.26,0.52) {};
                    \draw[{Latex[length=1.5mm, width=1mm]}-{Latex[length=1.5mm,
                    width=1mm]}] (4,0.5) -- (6,0.5);
                    \node[label = {above: $|Q|$}]  at (5,0.3) {};

                \begin{scope}[yshift=1.2cm]
                 \draw[fill=lipicsYellow!80,thick] (4,0) rectangle (6,0.4);

                    \draw[thick] (1,0) -- (9,0);
                    \draw[thick] (1,0.4) -- (9,0.4);
                    \draw[dashed,thick] (0.5,0) -- (1,0);
                    \draw[dashed,thick] (0.5,0.4) -- (1,0.4);
                    \draw[dashed,thick] (9,0) -- (9.5,0);
                    \draw[dashed,thick] (9,0.4) -- (9.5,0.4);
                    \node[label = {left: $\cdots$}]  at (0.75,0.2) {};
                    \node[label = {right: $\cdots$}]  at (9.25,0.2) {};

                 \node[label = {left: $Q^\infty \fragmentco{\ell}{r}$}]  at (0,0.2) {};
                 \node[crossmark] at (2,0.2){$\mathbf{*}$};
                 \node[crossmark] at (3,0.2){$\mathbf{*}$};
                 \node[crossmark] at (7,0.2){$\mathbf{*}$};
                 \node[crossmark] at (8,0.2){$\mathbf{*}$};
                 \node[label = {above: \small{$j|Q|$}}]  at (4.5,0.2) {};
                \end{scope}

                \begin{scope}[yshift=-1.2cm]
                 \draw[fill=lipicsYellow!80,thick] (4,0) rectangle (6,0.4);

                    \draw[thick] (1,0) -- (9,0);
                    \draw[thick] (1,0.4) -- (9,0.4);
                    \draw[dashed,thick] (0.5,0) -- (1,0);
                    \draw[dashed,thick] (0.5,0.4) -- (1,0.4);
                    \draw[dashed,thick] (9,0) -- (9.5,0);
                    \draw[dashed,thick] (9,0.4) -- (9.5,0.4);
                    \node[label = {left: $\cdots$}]  at (0.75,0.2) {};
                    \node[label = {right: $\cdots$}]  at (9.25,0.2) {};

                 \node[label = {left: $Q^\infty \fragmentco{\ell'}{r'}$}]  at (0,0.2) {};
                 \node[crossmark] at (1.5,0.2){$\mathbf{*}$};
                 \node[crossmark] at (3.5,0.2){$\mathbf{*}$};
                 \node[crossmark] at (6.5,0.2){$\mathbf{*}$};
                 \node[crossmark] at (8.5,0.2){$\mathbf{*}$};
                 \node[label = {below: \small{$j'|Q|$}}]  at (4.5,0.2) {};
                \end{scope}

                \draw[dashed] (4.5,-1.2) -- (4.5,1.6);
                \end{tikzpicture}
    \caption{The setting in~\cref{fct:split}. Asterisks denote edit operations on the
        respective strings in their optimal alignment with $S$. $S_i$ is denoted by a shaded
        rectangle, $U=Q\fragmentco{|Q|-p}{|Q|}$ and
    $V=Q\fragmentco{0}{|Q|-p}$.}\label{fig:fct_split}
\end{figure}

\begin{lemma}\label{lem:synchr}
    Let $T$ denote a text of~length $n$, let $k$ denote a positive integer,
    and let $Q$ denote a primitive string.
    Suppose that $\ed(T\fragmentco{0}{q},Q^\infty\fragmentco{x}{y})\le k$ and
    $\ed(T\fragmentco{p}{n},Q^\infty\fragmentco{x'}{y'})\le k$ holds for some integers $p\le q$, $x\le y$,
    and $x'\le y'$.
    If $|Q|=1$ or $q-p \ge (2k+1)|Q|$,
    then $\ed(T,Q^\infty\fragmentco{x''}{y})=\ed(T,Q^\infty\fragmentco{x}{y''})\le 2k$ for some $x''\equiv x'\pmod{|Q|}$ and $y''\equiv y' \pmod {|Q|}$, and $(p+x-x'+2k) \bmod |Q| \le 4k$.
\end{lemma}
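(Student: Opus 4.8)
The plan is to splice the two given alignments together into a single alignment of all of $T$ with one fragment of $Q^\infty$, using \cref{fct:split} to force the two alignments to agree at a common cut point.

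First I would restrict the two alignments to the overlap $T\fragmentco{p}{q}$ of $T$. Taking an optimal alignment of $T\fragmentco{0}{q}$ with $Q^\infty\fragmentco{x}{y}$ and cutting it at position $p$ of $T$ exhibits an index $a\in\fragment{x}{y}$ with $\ed(T\fragmentco{0}{p},Q^\infty\fragmentco{x}{a})+\ed(T\fragmentco{p}{q},Q^\infty\fragmentco{a}{y})\le k$; symmetrically, cutting an optimal alignment of $T\fragmentco{p}{n}$ with $Q^\infty\fragmentco{x'}{y'}$ at position $q$ exhibits an index $b\in\fragment{x'}{y'}$ with $\ed(T\fragmentco{p}{q},Q^\infty\fragmentco{x'}{b})+\ed(T\fragmentco{q}{n},Q^\infty\fragmentco{b}{y'})\le k$. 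In particular $\ed(T\fragmentco{p}{q},Q^\infty\fragmentco{a}{y})\le k$ and $\ed(T\fragmentco{p}{q},Q^\infty\fragmentco{x'}{b})\le k$, so, since $|T\fragmentco{p}{q}|=q-p\ge(2k+1)|Q|$ (or $|Q|=1$), \cref{fct:split} applies to $S:=T\fragmentco{p}{q}$ and yields a factorization $S=S_L\cdot S_R$ and integers $j,j'$ with
\begin{align*}
\ed(S_L,Q^\infty\fragmentco{a}{j|Q|})+\ed(S_R,Q^\infty\fragmentco{j|Q|}{y})&=\ed(S,Q^\infty\fragmentco{a}{y}),\\
\ed(S_L,Q^\infty\fragmentco{x'}{j'|Q|})+\ed(S_R,Q^\infty\fragmentco{j'|Q|}{b})&=\ed(S,Q^\infty\fragmentco{x'}{b}).
\end{align*}

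Next, with $\pi:=p+|S_L|$, I would glue. Concatenating the alignment of $T\fragmentco{0}{p}$ with $Q^\infty\fragmentco{x}{a}$ and that of $S_L$ with $Q^\infty\fragmentco{a}{j|Q|}$ gives an alignment of $T\fragmentco{0}{\pi}$ with $Q^\infty\fragmentco{x}{j|Q|}$; by the first displayed identity and the cut‑alignment inequality its cost is at most $k-\ed(S_R,Q^\infty\fragmentco{j|Q|}{y})\le k$. Likewise, concatenating the alignments of $S_R$ with $Q^\infty\fragmentco{j'|Q|}{b}$ and of $T\fragmentco{q}{n}$ with $Q^\infty\fragmentco{b}{y'}$ gives an alignment of $T\fragmentco{\pi}{n}$ with $Q^\infty\fragmentco{j'|Q|}{y'}$ of cost at most $k$. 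Because $Q^\infty\fragmentco{x}{j|Q|}$ ends and $Q^\infty\fragmentco{j'|Q|}{y'}$ begins at a multiple of $|Q|$, the two $Q^\infty$‑fragments join (after a shift by $(j-j')|Q|$ at the seam) into a single fragment of $Q^\infty$, and concatenating the two alignments produces an alignment of $T$ with that fragment of total cost at most $2k$. This fragment starts at an index $\equiv x\pmod{|Q|}$ and ends at one $\equiv y'\pmod{|Q|}$, so fixing one endpoint to its exact value from the relevant alignment and the other to a congruent value gives the two descriptions $Q^\infty\fragmentco{x''}{y}$ and $Q^\infty\fragmentco{x}{y''}$ of the statement; since these are one and the same string, the two edit distances coincide and are $\le 2k$. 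For the congruence I would compare lengths: $T\fragmentco{0}{\pi}$ being within edit distance $k$ of $Q^\infty\fragmentco{x}{j|Q|}$ gives $|\pi-(j|Q|-x)|\le k$, and $S_L=T\fragmentco{p}{\pi}$ being within edit distance $k$ of $Q^\infty\fragmentco{x'}{j'|Q|}$ gives $|(\pi-p)-(j'|Q|-x')|\le k$; subtracting and reducing modulo $|Q|$ (both $j|Q|$ and $j'|Q|$ disappear) yields $p+x-x'\equiv\eps\pmod{|Q|}$ for some $\eps\in\fragment{-2k}{2k}$, which is precisely $(p+x-x'+2k)\bmod|Q|\le 4k$. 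The degenerate case $|Q|=1$ is immediate from the special case of \cref{fct:split}, since then every index is a multiple of $|Q|$.

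The main obstacle is the gluing step: a priori the two alignments of the overlap $T\fragmentco{p}{q}$ cut $Q^\infty$ at unrelated positions, so no single alignment extends across the seam — \cref{fct:split} is exactly what produces a seam landing on a multiple of $|Q|$ in both alignments at once. The second delicate point is the accounting: one must charge the restricted error counts so that each half costs at most $k$ (rather than letting overlapping errors be double‑counted), which is where the $\ed(S_R,Q^\infty\fragmentco{j|Q|}{y})\ge 0$ and $\ed(S_L,Q^\infty\fragmentco{x'}{j'|Q|})\ge 0$ slacks in the two displayed identities are used.
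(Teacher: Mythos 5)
Your proposal is correct and follows essentially the same route as the paper's proof: cut both alignments at the overlap $T\fragmentco{p}{q}$, invoke \cref{fct:split} there to obtain a common split point landing on a multiple of $|Q|$ in both alignments (your $\pi$ is the paper's $r$, your $a,b$ its $z,z'$), glue the two halves after shifting by $(j-j')|Q|$ to get cost at most $2k$, and derive the modular bound by comparing lengths exactly as in the paper. The only cosmetic difference is the explicit cost accounting via the slack terms $\ed(S_R,\cdot)$ and $\ed(S_L,\cdot)$, which the paper handles implicitly.
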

\begin{proof}
    Observe that, for some integer $z\in \fragment{x}{y}$, we have
    \[\ed(T\fragmentco{0}{q},Q^\infty\fragmentco{x}{y}) = \ed(T\fragmentco{0}{p},Q^\infty\fragmentco{x}{z}) +
        \ed(T\fragmentco{p}{q},Q^\infty\fragmentco{z}{y}).\]
    Similarly,  for some integer $z'\in \fragment{x'}{y'}$, we have
    \[\ed(T\fragmentco{p}{n},Q^\infty\fragmentco{x'}{y'}) = \ed(T\fragmentco{p}{q},Q^\infty\fragmentco{x'}{z'}) +
        \ed(\fragmentco{q}{n},Q^\infty\fragmentco{z'}{y'}).\]
    Now, \cref{fct:split} applied to $S:=T\fragmentco{p}{q}$ yields an integer $r\in \fragment{p}{q}$
    and integers $j,j'$ such that (see also \cref{fig:lem_synchr})
    \begin{align*}
        \ed(T\fragmentco{p}{q},Q^\infty\fragmentco{z}{y})
            &=\ed(T\fragmentco{p}{r},Q^\infty\fragmentco{z}{j|Q|})
            + \ed(T\fragmentco{r}{q},Q^\infty\fragmentco{j|Q|}{y}), \;\text{and}\\
        \ed(T\fragmentco{p}{q},Q^\infty\fragmentco{x'}{z'})
            &=\ed(T\fragmentco{p}{r},Q^\infty\fragmentco{x'}{j'|Q|})
            + \ed(T\fragmentco{r}{q},Q^\infty\fragmentco{j'|Q|}{z'}).
    \end{align*}
    This implies that
    \begin{align*}
        \ed(T\fragmentco{0}{r},Q^\infty\fragmentco{x}{j|Q|})
            &=\ed(T\fragmentco{0}{p},Q^\infty\fragmentco{x}{z})
            + \ed(T\fragmentco{p}{r},Q^\infty\fragmentco{z}{j|Q|})\le k,
            \;\text{and}\\
        \ed(T\fragmentco{r}{n},Q^\infty\fragmentco{j'|Q|}{y'})
            &= \ed(T\fragmentco{r}{q},Q^\infty\fragmentco{j'|Q|}{z'})+\ed(\fragmentco{q}{n},Q^\infty\fragmentco{z'}{y'})\le k.
    \end{align*}
    Combining the equations yields
    \begin{multline*}\ed(T,Q^\infty\fragmentco{x+(j'-j)|Q|}{y'})=\ed(T,Q^\infty\fragmentco{x}{y'+(j-j')|Q|})\\
    \le \ed(T\fragmentco{0}{r},Q^\infty\fragmentco{x}{j|Q|}) +\ed(T\fragmentco{r}{n},Q^\infty\fragmentco{j'|Q|}{y'})\le 2k.\end{multline*}
    Moreover, we deduce $|j|Q|-x-r|\le k$ and $|j'|Q|-x' - r + p|\le k$,
    which yields $|p+x-x'-(j-j')|Q||\le 2k$, and therefore $(p+x-x'+2k)\bmod |Q|\le 4k$.
\end{proof}

\begin{figure}[t]
    \centering
                    \begin{tikzpicture}

                \filldraw[lipicsYellow!80] (0,1.2) rectangle (5.5,1.6);
                \filldraw[lipicsYellow!80] (5.5,-1.2) rectangle (10,-0.8);

                \draw (0,0) rectangle (10,0.4);
                \node[label = {left: $T$}]  at (0,0.2) {};
                \node[label = {below: \small{$p$}}]  at (3,0.15) {};
                \node[label = {below: \small{$r$}}]  at (8,0.15) {};
                \node[label = {below: \small{$q$}}]  at (5.5,0.15) {};

			\begin{scope}[yshift=1.2cm]
				\draw (0,0) rectangle (8,0.4);
				\draw (3,0) -- (3,0.4);
				\node[label = {right: $Q^\infty \fragmentco{x}{y}$}]  at (8,0.2) {};
				\node[label = {below: \small{$z$}}]  at (3,0.15) {};
				\node[label = {below: \small{$y$}}]  at (8,0.15) {};
				\node[label = {below: \small{$x$}}]  at (0,0.15) {};
				\node[label = {below: \small{$j|Q|$}}]  at (5.5,0.2) {};
				 \node[crossmark] at (1,0.2){$\mathbf{*}$};
                 \node[crossmark] at (4,0.2){$\mathbf{*}$};
                 \node[crossmark] at (6,0.2){$\mathbf{*}$};
                 \node[crossmark] at (7,0.2){$\mathbf{*}$};
            \end{scope}

			\begin{scope}[yshift=-1.2cm]
				\draw (3,0) rectangle (10,0.4);
				\draw (8,0) -- (8,0.4);
				\node[label = {left: $Q^\infty \fragmentco{x'}{y'}$}]  at (3,0.2) {};
				\node[label = {below: \small{$x'$}}]  at (3,0.2) {};
				\node[label = {below: \small{$z'$}}]  at (8,0.2) {};
				\node[label = {below: \small{$y'$}}]  at (10,0.2) {};
				\node[label = {below: \small{$j'|Q|$}}]  at (5.5,0.2) {};
                 \node[crossmark] at (3.5,0.2){$\mathbf{*}$};
                 \node[crossmark] at (6.5,0.2){$\mathbf{*}$};
                 \node[crossmark] at (9,0.2){$\mathbf{*}$};
            \end{scope}

			\draw[dashed] (3,-0.4) -- (3,-1.2);
			\draw[dashed] (3,0) -- (3,0.8);

			\draw[dashed] (5.5,-0.4) -- (5.5,-1.2);
			\draw[dashed] (5.5,0) -- (5.5,0.8);
			\draw[dashed] (5.5,1.2) -- (5.5,1.6);

			\draw[dashed] (8,-0.4) -- (8,-1.2);
			\draw[dashed] (8,0) -- (8,0.8);
                \end{tikzpicture}
    \caption{The setting in~\cref{lem:synchr}. $T$ is at edit distance at most $2k$ from
    $Q^\infty\fragmentco{x}{j|Q|}Q^\infty\fragmentco{j'|Q|}{y'} = Q^\infty\fragmentco{x+(j'-j)|Q|}{y'}
    =Q^\infty\fragmentco{x}{y'+(j-j')|Q|}$.}\label{fig:lem_synchr}
\end{figure}

\begin{definition}
    Let $S$ denote a string and let $Q$ denote a primitive string.
    We say that a fragment $L$ of~$S$ is \emph{locked} (with respect to $Q$)
    if at least one of~the following holds:
    \begin{itemize}
        \item For some integer $\alpha$, we have $\edl{L}{Q}=\ed(L,Q^\alpha)$.
        \item The fragment $L$ is a suffix of~$S$ and $\edl{L}{Q}=\ed(L,Q^*)$.
        \item The fragment $L$ is a prefix of~$S$ and $\edl{L}{Q}=\eds{L}{Q}$.
        \item We have $L = S$.\lipicsEnd
    \end{itemize}
\end{definition}

The notion of~locked fragments was also used in~\cite{ColeH98}.
In order to develop some intuition, let us consider the following example: A string
$U=Q^{k+1} S Q^{k+1}$ such that $\edl{U}{Q} \leq k$ and $Q$ is primitive.
Then, in any optimal alignment of~$U$ with a substring of~$Q^\infty$ at least one of~the
leading (or trailing) $k+1$ occurrences of~$Q$ in $U$ is matched exactly and hence also
all occurrences preceding it (or succeeding it).
Thus $U$ is locked with respect to $Q$.

Next, we show that we can identify short locked fragments
covering all errors with respect to $\!{}^*Q^*$. Intuitively, our strategy is to start
with at most $k$ $|Q|$-length fragments of~$S$ that contain all the errors and extend
or/and merge them
(in a sense similar to that of~the intuitive example provided above),
so that the resulting fragments contain sufficiently many copies of~$Q$
\begin{lemma}\label{lem:locked}
    Let $S$ denote a string and let $Q$ denote a primitive string.
    There are disjoint locked fragments $L_1,\ldots,L_{\ell} \preceq S$
    with $\edl{L_i}{Q} > 0$ such that \[
        \edl{S}{Q}=\sum_{i=1}^{\ell} \edl{L_i}{Q}\quad
        \text{and}\quad\sum_{i=1}^{\ell}|L_i| \le (5|Q|+1)\edl{S}{Q}.\]
\end{lemma}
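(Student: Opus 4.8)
The plan is to fix an optimal alignment $\mathcal{A}$ of $S$ onto a fragment $Q^\infty\fragmentco{X}{Y}$ realizing $d:=\edl{S}{Q}$, and to carve out of $S$ a family of disjoint fragments around the $d$ edits of $\mathcal{A}$, leaving between them only complete exact matches against fragments of $Q^\infty$. If $d=0$ the empty family works, so assume $d\ge 1$. Decompose $S = C_0\sigma_1C_1\sigma_2\cdots\sigma_dC_d$, where $\sigma_i$ is the character of $S$ touched by the $i$-th edit of $\mathcal{A}$ (with $\sigma_i=\eps$ when that edit deletes a character of $Q^\infty$), and each $C_j\substr S$ is matched identically by $\mathcal{A}$ to a fragment of $Q^\infty$. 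Call a position of $S$ a \emph{synchronizing position} if $\mathcal{A}$ maps it to a multiple of $|Q|$ in $Q^\infty$; inside a run $C_j$ of length at least $|Q|$ such positions occur every $|Q|$ characters. All carved fragments will start and end at synchronizing positions (except possibly at the two ends of $S$) and will contain at least one edit of $\mathcal{A}$, so $\edl{L_i}{Q}>0$ is automatic.

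\textbf{Carving, counting, and additivity.} I would put around the $i$-th edit a window of radius $\Theta(|Q|)$ inside $S$, take the connected components of the union of these windows, and then move each component's endpoints to synchronizing positions (when a component abuts an end of $S$, that end is left free). A component absorbing $t$ edits then has length $\Oh(|Q|\cdot t)$: consecutive edits inside it lie within $\Oh(|Q|)$ of one another and only $\Oh(|Q|)$ characters are added at each end; since the edits are partitioned among the components, $\sum_i|L_i|=\Oh(|Q|\cdot d)$, and a careful choice of the window radius together with $\sum_i(\#\text{edits of }\mathcal{A}\text{ in }L_i)=d$ brings the constant to $5|Q|+1$. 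Because the $L_i$ are disjoint and cover all edits, $\edl{L_i}{Q}$ is at most the number of edits $\mathcal{A}$ makes inside $L_i$, so $\sum_i\edl{L_i}{Q}\le d$. For the reverse inequality I would glue fixed optimal alignments of the $L_i$ onto fragments of $Q^\infty$ — possible precisely because lockedness lets us take those alignments to respect $Q$-boundaries where they meet the neighbouring $C_j$'s (the suffix/prefix variants of the definition handle the $L_i$'s abutting an end of $S$, and the case $L=S$ is trivial) — together with the exact matches of the $C_j$'s against powers of $Q$ in between; this yields an alignment of $S$ onto a fragment of $Q^\infty$ with $\sum_i\edl{L_i}{Q}$ edits, so $\edl{S}{Q}\le\sum_i\edl{L_i}{Q}$, and hence equality.

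\textbf{Lockedness, and the main obstacle.} It remains to show each carved fragment $L=L_i$ is locked: in the ``interior'' of $S$ this means $\edl{L}{Q}=\ed(L,Q^\alpha)$ for some $\alpha$, and at an end of $S$ it is the corresponding prefix/suffix clause. Fix such an $L$, let $e$ be the number of edits $\mathcal{A}$ makes in $L$, and let $\mathcal{B}$ be any alignment of $L$ onto a fragment of $Q^\infty$ achieving $\edl{L}{Q}$ edits (so at most $e$). The restriction $\mathcal{A}|_L$ is an alignment of $L$ onto $Q^\infty\fragmentco{x}{y}$ with $x\equiv y\equiv 0\pmod{|Q|}$ (its endpoints are synchronizing positions) and $e$ edits. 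Provided the carving guarantees $|L|\ge(2e+1)|Q|$, \cref{fct:split} applied to $\mathcal{A}|_L$ and $\mathcal{B}$ produces a position of $L$ that \emph{both} alignments send to a multiple of $|Q|$; splitting $L$ there and iterating (equivalently, applying \cref{lem:synchr} to the two clean flanks of $L$) replaces $\mathcal{B}$ by an alignment of $L$ onto a fragment of $Q^\infty$ with no more edits whose endpoints are both multiples of $|Q|$, i.e.\ onto a power of $Q$; this gives $\ed(L,Q^\alpha)\le\edl{L}{Q}$, hence equality, so $L$ is locked. For an $L$ touching an end of $S$ the same argument applies with the outer flank replaced by the free end, matching the prefix/suffix clauses. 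The delicate point — and, I expect, the real work of the proof — is designing the carving so that every $L_i$ is \emph{simultaneously} short enough for the $(5|Q|+1)d$ bound and long enough (roughly $2(\#\text{edits in }L_i)+1$ copies of $Q$) for \cref{fct:split}/\cref{lem:synchr} to apply, while also handling the fragments abutting the two ends of $S$; this is what pins down the constant $5$ and forces a careful merging/extension rule rather than a fixed window radius.
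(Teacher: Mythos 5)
Your overall plan — fix an optimal alignment, carve fragments around its edits, prove each carved fragment is locked, glue optimal local alignments for additivity, and account for the total length — is the same strategy as the paper's, but the central step, lockedness of the carved pieces, is not established by the tools you invoke. Applying \cref{fct:split} to the pair $(\mathcal{A}|_L,\mathcal{B})$ gives one interior position at which \emph{both} alignments split at multiples of $|Q|$, and the optimality-of-$\mathcal{B}$ crossover argument then only yields the piecewise inequalities $b_1\le a_1$ and $b_2\le a_2$; it does not produce an alignment of $L$ onto a power of $Q$ with $\edl{L}{Q}$ edits, because the two outer endpoints of $L$ are still only synchronized with respect to $\mathcal{A}|_L$, not $\mathcal{B}$. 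The natural way to finish — reduce to the one-sided claim ``an alignment anchored at a $Q$-boundary on one side can be re-anchored on both sides at no extra cost'' — is false: for $Q=\mathtt{abcde}$ and $W=\mathtt{cde}(\mathtt{abcde})^t$ one has $\ed(W,Q^\infty\fragmentco{2}{5t+5})=0$ with the right end on a boundary, yet $\ed(W,Q^\alpha)\ge 2$ for every $\alpha$. Iterating \cref{fct:split} on the halves does not repair this, since the sub-pieces need not satisfy the length hypothesis $(2k+1)|Q|$, and \cref{lem:synchr} cannot substitute either: it only certifies a bound of $2k$, whereas lockedness requires the \emph{exact} equality $\edl{L}{Q}=\ed(L,Q^\alpha)$ (or its prefix/suffix variants). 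So the sentence ``splitting $L$ there and iterating \ldots replaces $\mathcal{B}$ by an alignment onto a power with no more edits'' is precisely the unproved step. A smaller issue of the same nature: $\edl{L_i}{Q}>0$ is not automatic from $L_i$ containing an edit of $\mathcal{A}$ — a globally optimal alignment can be locally wasteful — it only falls out after additivity is proved, which in turn needs lockedness.

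This is exactly where the paper does its real work, and it does so by a different mechanism than a fixed-radius carving: it starts from the blockwise partition induced by $\mathcal{A}$ and repeatedly applies explicit merge rules (types~\ref{it:type1}–\ref{it:type4} in the proof of \cref{lem:locked}), in which every unit of a per-fragment error budget $\Delta^{(t)}_i$ pays for absorbing whole neighbouring copies of $Q$; lockedness is then obtained by the inductive invariant of \cref{clm:locked}, namely $\ed(S^{(t)}_i,Q^\alpha)\le \edl{S^{(t)}_i}{Q}+\Delta^{(t)}_i$ (and its prefix/suffix analogues), which becomes the locked property once all budgets reach $0$, while the same potential $\lambda^{(t)}+2|Q|\sum_i\Delta^{(t)}_i$ simultaneously yields the $(5|Q|+1)\,\edl{S}{Q}$ length bound in \cref{clm:locked_short}. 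You explicitly defer ``the careful merging/extension rule'' as the real work, but without it — and without a correct replacement for the lockedness argument — the proposal has a genuine gap at its key step.
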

\begin{proof}
    Let us choose integers $x\le y$ so that $\edl{S}{Q}=\ed(S,Q^\infty\fragmentco{x}{y})$.
    Without loss of~generality, we may assume that $x\in \fragmentco{0}{|Q|}$.
    If $y \le |Q|$, then $|S|\le |Q|+\edl{S}{Q}$; thus setting the whole string~$S$
    as the only locked fragment satisfies the claimed conditions.
    Hence, we may assume that $y > |Q|$.

    An arbitrary optimum alignment of~$S$ and $Q^\infty\fragmentco{x}{y}$ yields
    a partition  $S=S_0^{(0)}\cdots S_{\!s^{(0)}}^{(0)}$
    with $s^{(0)}=\floor{(y-1)/|Q|}$
    such that  $\ed(S,Q^\infty\fragmentco{x}{y}) = \sum_{i=0}^{s^{(0)}} \Delta^{(0)}_i$,
    where
    \[\Delta^{(0)}_i = \begin{cases}
        \ed(S^{(0)}_0,Q\fragmentco{x}{|Q|}) &\text{if }i=0,\\
        \ed(S^{(0)}_i,Q) &\text{if }0 < i < s^{(0)},\\
        \ed(S^{(0)}_{s^{(0)}}, Q\fragmentco{0}{y-s^{(0)}|Q|}) & \text{if }i = s^{(0)}.
    \end{cases}\]
    See \cref{fig:lem_locked} for an illustration.

    We start with this partition and then coarsen it by exhaustively applying
    the merging rules specified below, where each rule is applied only if the previous rules cannot
    be applied.
    In each case, we re-index the unchanged
    fragments $S^{(t)}_i$ to obtain a new partition $S = S^{(t+1)}_0\cdots
    S^{(t+1)}_{\!s^{(t+1)}}$ and re-index the corresponding values $\Delta^{(t)}_i$ accordingly.
    We say that a fragment $S^{(t)}_i$ is \emph{interesting}
    if $i=0$, $i=s^{(t)}$, $S^{(t)}_i\ne Q$, or $\Delta_i^{(t)}>0$.
    \begin{enumerate}
        \item\label{it:type1} If subsequent fragments $S^{(t)}_i$ and $S^{(t)}_{i+1}$ are
            both interesting, then merge $S^{(t)}_i$ and $S^{(t)}_{i+1}$, obtaining
            $S^{(t+1)}_i := S^{(t)}_i S^{(t)}_{i+1}$ and $\Delta^{(t+1)}_i :=
            \Delta^{(t)}_i + \Delta^{(t)}_{i+1}$.
        \item\label{it:type2} If $0 < i < s^{(t)}$ and $\Delta^{(t)}_i>0$,
            then merge the subsequent fragments $S^{(t)}_{i-1}=Q$, $S^{(t)}_i$, and
            $S^{(t)}_{i+1}=Q$, obtaining $S^{(t+1)}_{i-1} := S^{(t)}_{i-1} S^{(t)}_i
            S^{(t)}_{i+1}$, and set $\Delta^{(t+1)}_{i-1}:=\Delta^{(t)}_{i}-1$.
        \item\label{it:type3} If $0 < i = s^{(t)}$ and $\Delta^{(t)}_{i}>0$,
            then merge the subsequent fragments $S^{(t)}_{i-1}=Q$ and
            $S^{(t)}_{i}$, obtaining $S^{(t+1)}_{i-1} := S^{(t)}_{i-1}S^{(t)}_{i}$, and
            set $\Delta^{(t+1)}_{i-1} := \Delta^{(t)}_{i}-1$.
        \item\label{it:type4} If $0 = i < s^{(t)}$ and $\Delta^{(t)}_i>0$,
            then merge the subsequent fragments $S^{(t)}_i$ and $S^{(t)}_{i+1}=Q$,
            obtaining $S^{(t+1)}_i := S^{(t)}_i S^{(t)}_{i+1}$,
            and set $\Delta^{(t+1)}_{i}:=\Delta^{(t)}_{i}-1$.
    \end{enumerate}

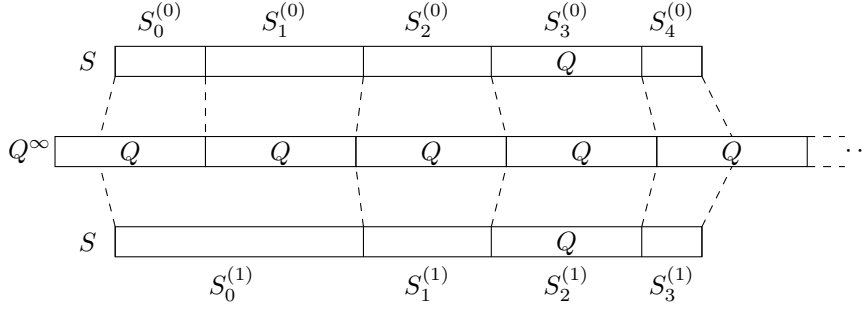
\begin{figure}[t]
    \centering
                   \begin{tikzpicture}

                \begin{scope}[yshift=-1.2cm]
                	\node[label = {left: $Q^\infty$}]  at (0.2,0.2) {};
  					\foreach \x in {0,2,4,6,8}{
                 		\draw[xshift=\x cm] (0,0) rectangle (2,0.4);
                 		\node[xshift=\x cm,label = {above: $Q$}]  at (1,-0.225) {};
                 	}
                    \draw[dashed] (10,0) -- (10.5,0);
                    \draw[dashed] (10,0.4) -- (10.5,0.4);
                    \node[label = {right: $\cdots$}]  at (10.25,0.2) {};
                \end{scope}

					\node[label = {left: $S$}]  at (0.8,0.2) {};
					\draw (0.8,0) rectangle (8.6,0.4);
                 	\node[label = {above: $Q$}]  at (6.8,-0.225) {};
					\foreach \i/\x/\z/\y in {0.6/0.8/1.4/0, 2/2/3.05/1, 4/4.1/4.95/2, 6/5.8/6.8/3, 8/7.8/8.2/4, 9/8.6/8.2/5}{
                 		\draw (\x,0) -- (\x,0.4);
						\draw[dashed] (\x,0) -- (\i,-0.8);
						\ifthenelse{\y=0 \OR \y=1 \OR \y=2 \OR \y=3 \OR \y=4}{\draw (\z,0.4) node[above] {$S^{(0)}_\y$};}{}
					}
				\begin{scope}[yshift=-2.4cm]
					\node[label = {left: $S$}]  at (0.8,0.2) {};
					\draw (0.8,0) rectangle (8.6,0.4);
                 	\node[label = {above: $Q$}]  at (6.8,-0.225) {};
					\foreach \i/\x/\z/\y in {0.6/0.8/2.35/0, 4/4.1/4.95/1, 6/5.8/6.8/2, 8/7.8/8.2/3, 9/8.6/8.2/4}{
                 		\draw (\x,0) -- (\x,0.4);
						\draw[dashed] (\x,0.4) -- (\i,1.2);
						\ifthenelse{\y=0 \OR \y=1 \OR \y=2 \OR \y=3}{\draw (\z,0) node[below] {$S^{(1)}_\y$};}{}
					}
                \end{scope}

                \end{tikzpicture}
    \caption{A partition $S_0^{(0)}\cdots S_4^{(0)}$ of~a string $S$ is shown
        ($s^{(0)}=4$), in which all
        fragments apart from $S_0^{(3)}$ are interesting.
        A merge of~the fragments $S_0^{(0)}$ and $S_1^{(0)}$ yields the shown partition
    $S_0^{(1)}\cdots S_3^{(1)}$ of~$S$.\vspace{-2ex}}\label{fig:lem_locked}
\end{figure}

    Let $S=S_0\cdots S_{s}$ denote the obtained final partition.
    We select as locked fragments all the fragments~$S_i$ with $\edl{S_i}{Q}> 0$.
    Below,  we show that this selection satisfies the desired properties.
    We start by proving that we indeed picked locked fragments.
    \begin{claim}\label{clm:locked}
        Each fragment $S_i^{(t)}$ of~each partition $S=S_0^{(t)}\cdots S_{s^{(t)}}^{(t)}$
        satisfies at least one of~the following:
        \begin{itemize}
            \item $\ed(S^{(t)}_i,Q^\alpha)\le \edl{S^{(t)}_i}{Q}+\Delta^{(t)}_i$ for some
                integer $\alpha$;
            \item $i=s^{(t)}$ and $\ed(S^{(t)}_i,Q^*)\le \edl{S^{(t)}_i}{Q}+\Delta^{(t)}_i$;
            \item $i=0$ and $\eds{S^{(t)}_i}{Q}\le \edl{S^{(t)}_i}{Q}+\Delta^{(t)}_i$;
            \item $i=0=s^{(t)}$.
        \end{itemize}
    \end{claim}
    \begin{claimproof}
        We proceed by induction on $t$. The base case follows from the definition of~the
        values $\Delta^{(t)}_i$.

        As for the inductive step, we assume that the claim holds for all fragments $S^{(t)}_i$
        and we prove that it holds for all fragments $S^{(t+1)}_i$. We consider several cases based on the merge rule applied.
        \begin{enumerate}
            \item For a type-\ref{it:type1} merge of~interesting fragments $S^{(t)}_i$ and $S^{(t)}_{i+1}$ into $S^{(t+1)}_i$, it suffices to prove that $S^{(t+1)}_i$ satisfies the claim.
            \begin{itemize}
                \item If $0 < i < s^{(t+1)}$, then $\ed(S^{(t)}_i,Q^\alpha)\le \edl{S^{(t)}_i}{Q}+\Delta^{(t)}_i$ and $\ed(S^{(t)}_{i+1},Q^{\alpha'})\le \edl{S^{(t)}_{i+1}}{Q}+\Delta^{(t)}_{i+1}$ hold by the inductive assumption for some integers $\alpha,\alpha'$. Consequently,
                \begin{multline*}\ed(S^{(t+1)}_i,Q^{\alpha+\alpha'})=\ed(S^{(t)}_i S^{(t)}_{i+1} ,Q^{\alpha} Q^{\alpha'})\le \ed(S^{(t)}_i,Q^\alpha)+\ed(S^{(t)}_{i+1},Q^{\alpha'})\\
                \le  \edl{S^{(t)}_i}{Q}+\Delta^{(t)}_i+\edl{S^{(t)}_{i+1}}{Q}+\Delta^{(t)}_{i+1}\le  \edl{S^{(t+1)}_i}{Q}+\Delta^{(t+1)}_i.
                \end{multline*}
                \item If $0 < i = s^{(t+1)}$, then $\ed(S^{(t)}_i,Q^\alpha)\le \edl{S^{(t)}_i}{Q}+\Delta^{(t)}_i$ and $\ed(S^{(t)}_{i+1},Q^*)\le \edl{S^{(t)}_{i+1}}{Q}+\Delta^{(t)}_{i+1}$ hold by the inductive assumption for some integer $\alpha$. Consequently,
                \begin{multline*}\ed(S^{(t+1)}_i,Q^*)=\ed(S^{(t)}_i S^{(t)}_{i+1},Q^*)\le \ed(S^{(t)}_i,Q^\alpha)+\ed(S^{(t)}_{i+1},Q^*)\\
                \le  \edl{S^{(t)}_i}{Q}+\Delta^{(t)}_i+\edl{S^{(t)}_{i+1}}{Q}+\Delta^{(t)}_{i+1}\le  \edl{S^{(t+1)}_i}{Q}+\Delta^{(t+1)}_i.
                \end{multline*}
                \item The analysis of~the case that $0 = i < s^{(t+1)}$ is symmetric to that of~the above case -- this can be seen by reversing all strings in scope.
                \item If $0 = i = s^{(t+1)}$, then the claim holds trivially.
            \end{itemize}
            \item For a type-\ref{it:type2} merge of~$S^{(t)}_{i-1}$, $S^{(t)}_i$, and $S^{(t)}_{i+1}$ into $S^{(t+1)}_{i-1}$, it suffices to prove that $S^{(t+1)}_{i-1}$ satisfies the claim.
            \begin{itemize} \item If $\edl{S^{(t+1)}_{i-1}}{Q}>\edl{S^{(t)}_i}{Q}$, we observe that  $\ed(S^{(t)}_i,Q^\alpha)\le \edl{S^{(t)}_i}{Q}+\Delta^{(t)}_i$ holds by the inductive assumption for some integer $\alpha$. Consequently,
            \begin{multline*}
                \ed(S^{(t+1)}_{i-1},Q^{\alpha+2})=\ed(QS^{(t)}_iQ,QQ^{\alpha}Q)\le \ed(S^{(t)}_i,Q^\alpha)\\
                \le \edl{S^{(t)}_i}{Q}+\Delta^{(t)}_i \le \edl{S^{(t+1)}_{i-1}}{Q} - 1 + \Delta^{(t)}_i=\edl{S^{(t+1)}_{i-1}}{Q} + \Delta^{(t+1)}_{i-1}.
            \end{multline*}
                \item If $\edl{S^{(t+1)}_{i-1}}{Q}=\edl{S^{(t)}_i}{Q}$, then let $x' \le
                    y'$ denote integers that satisfy
                $\edl{S^{(t+1)}_{i-1}}{Q}=\ed(S^{(t+1)}_{i-1},Q^\infty\fragmentco{x'}{y'})$.
                This also yields integers $x'',y''$ with
            $x'\le x'' \le y'' \le y'$ such that\[
                \ed(S^{(t+1)}_{i-1},Q^\infty\fragmentco{x'}{y'}) =
                    \ed(Q,Q^\infty\fragmentco{x'}{x''})
                    +\ed(S^{(t)}_i,Q^\infty\fragmentco{x''}{y''})
                    +\ed(Q,Q^\infty\fragmentco{y''}{y'}).
            \]
            Due to\[
                \edl{S^{(t)}_i}{Q}
                \le \ed(S^{(t)}_i,Q^\infty\fragmentco{x''}{y''})
                \le \ed(S^{(t+1)}_{i-1},Q^\infty\fragmentco{x'}{y'})
                = \edl{S^{(t+1)}_{i-1}}{Q}=\edl{S^{(t)}_i}{Q},
            \]
            we have
            $\ed(Q,Q^\infty\fragmentco{x'}{x''})=0=\ed(Q,Q^\infty\fragmentco{y''}{y'})$.
            As the string $Q$ is primitive, this means that $x',x'',y'',y'$ are all multiples
            of~$|Q|$.
            Consequently,
            \[\ed(S^{(t+1)}_{i-1},Q^{(y'-x')/|Q|}) =\ed(S^{(t+1)}_{i-1},Q^\infty\fragmentco{x'}{y'}) = \edl{S^{(t+1)}_{i-1}}{Q} \le \edl{S^{(t+1)}_{i-1}}{Q} + \Delta^{(t-1)}_{i-1}.\]
            \end{itemize}
            \item For a type-\ref{it:type3} merge of~$S^{(t)}_{i-1}$ and $S^{(t)}_i$ into $S^{(t+1)}_{i-1}$, it suffices to prove that $S^{(t+1)}_{i-1}$ satisfies the claim.
            \begin{itemize} \item If $\edl{S^{(t+1)}_{i-1}}{Q}>\edl{S^{(t)}_i}{Q}$, we observe that  $\ed(S^{(t)}_{i},Q^*)\le \edl{S^{(t)}_{i}}{Q}+\Delta^{(t)}_{i}$ holds by the inductive assumption. Consequently,
                \begin{multline*}
                    \ed(S^{(t+1)}_{i-1},Q^*)=\ed(QS^{(t)}_i,Q^*)\le \ed(S^{(t)}_i,Q^*)\\
                    \le \edl{S^{(t)}_{i}}{Q}+\Delta^{(t)}_{i} \le \edl{S^{(t+1)}_{i-1}}{Q} - 1 + \Delta^{(t)}_i=\edl{S^{(t+1)}_{i-1}}{Q} + \Delta^{(t+1)}_{i-1}.
                \end{multline*}
                \item If $\edl{S^{(t+1)}_{i-1}}{Q}=\edl{S^{(t)}_i}{Q}$, then let $x' \le
                    y'$ denote integers that satisfy
                $\edl{S^{(t+1)}_{i-1}}{Q}=\ed(S^{(t+1)}_{i-1},Q^\infty\fragmentco{x'}{y'})$.
            This also yields an integer $x''$ with
            $x'\le x'' \le y'$ such that\[
                \ed(S^{(t+1)}_{i-1},Q^\infty\fragmentco{x'}{y'}) =
                    \ed(Q,Q^\infty\fragmentco{x'}{x''})
                    +\ed(S^{(t)}_i,Q^\infty\fragmentco{x''}{y'}).
            \]
            Due to\[
                \edl{S^{(t)}_i}{Q}
                \le \ed(S^{(t)}_i,Q^\infty\fragmentco{x''}{y'})
                \le \ed(S^{(t+1)}_{i-1},Q^\infty\fragmentco{x'}{y'})
                = \edl{S^{(t+1)}_{i-1}}{Q}=\edl{S^{(t)}_i}{Q},
            \]
            we have
            $\ed(Q,Q^\infty\fragmentco{x'}{x''})=0$.
            As the string $Q$ is primitive, this means that $x',x''$ are both multiples
            of~$|Q|$.
            Consequently,
            \[\ed(S^{(t+1)}_{i-1},Q^*) =\ed(S^{(t+1)}_{i-1},Q^\infty\fragmentco{x'}{y'})= \edl{S^{(t+1)}_{i-1}}{Q} \le \edl{S^{(t+1)}_{i-1}}{Q} + \Delta^{(t-1)}_{i-1}.\]
            \end{itemize}
        \item The analysis of~type~\ref{it:type4} merges is symmetrical to that of~~\ref{it:type3} merges -- this can be seen by reversing all strings in scope.
        \end{enumerate}

        This completes the proof~of~the inductive step.
    \end{claimproof}

    Observe that if no merge rule can be applied to a partition $S=S_{0}^{(t)}\cdots S_{s^{(t)}}^{(t)}$,
    then $s^{(t)}=0$ or $\Delta_{0}^{(t)}= \cdots = \Delta_{s^{(t)}}^{(t)}=0$.
    Consequently, \cref{clm:locked} implies that all fragments $S_i$ in the final partition $S=S_0\cdots S_s$
    are locked.

    \begin{claim}\label{clm:locked_short}
        For each partition $S=S_0^{(t)}\cdots S_{\!s^{(t)}}^{(t)}$,
        the total length $\lambda^{(t)}$ of~interesting fragments satisfies\[
            \lambda^{(t)} + 2|Q|\sum_{i=0}^{s^{(t)}} \Delta_i^{(t)} \le
            (5|Q|+1)\edl{S}{Q}.
        \]
    \end{claim}
    \begin{claimproof}
        We proceed by induction on $t$.
        In the base case of~$t=0$, each interesting fragment other than $S_{0}^{(0)}$ and $S_{s^{(0)}}^{(0)}$
        satisfies $\Delta_i^{(0)}>0$. Hence, the number of~interesting fragments is at most $2+\sum_{i=0}^{s^{(0)}} \Delta_i^{(0)}= 2+\edl{S}{Q}$.
        Moreover, the length of~each fragment $S^{(0)}_i$ does not exceed $|Q|+\Delta_i^{(0)}$.
        Consequently,
        \[
            \lambda^{(0)} + 2|Q|\sum_{i=0}^{s^{(0)}} \Delta_i^{(0)} \le
            (2+\edl{S}{Q})|Q|+(2|Q|+1)\sum_{i=0}^{s^{(0)}}\Delta_i^{(0)}\le(5|Q|+1)\,\edl{S}{Q}.
        \]This completes the proof~in the base case.

        As for the inductive step, it suffices to prove that $\lambda^{(t+1)}+2|Q|\sum_{i=0}^{s^{(t+1)}} \Delta_i^{(t+1)} \le \lambda^{(t)}+2|Q|\sum_{i=0}^{s^{(t)}} \Delta_i^{(t)}$:
        \begin{itemize}
            \item For a type-\ref{it:type1} merge
                (where we merge two interesting fragments),
                we have
                \[\lambda^{(t+1)} +2|Q|\sum_{i=0}^{s^{(t+1)}} \Delta_i^{(t+1)}
                =  \lambda^{(t)} + 2|Q|\sum_{i=0}^{s^{(t)}} \Delta_i^{(t)}.\]
            \item For a type-\ref{it:type2}, type-\ref{it:type3}, or type-\ref{it:type4} merge
                (where we merge a fragment with its one or two non-interesting neighbors), we have
                \[\lambda^{(t+1)} +2|Q|\sum_{i=0}^{s^{(t+1)}} \Delta_i^{(t+1)}
                \le \lambda^{(t)} + 2|Q|+2|Q|\sum_{i=0}^{s^{(t+1)}} \Delta_i^{(t+1)}
                =\lambda^{(t)} + 2|Q|\sum_{i=0}^{s^{(t)}} \Delta_i^{(t)}.\]
        \end{itemize}
        Overall, we obtain the claimed bound.
    \end{claimproof}
    We conclude that the total length of~interesting fragments $S_i$ does not exceed $(5|Q|+1)\edl{S}{Q}$.

    \begin{claim}\label{clm:locked_whole}
        We have $\edl{S}{Q}=\sum_{i=0}^{s} \edl{S_i}{Q}$.
    \end{claim}
    \begin{claimproof}
        The claim is immediate if $s = 0$; hence, assume that $s \ge 1$.
        Observe that the inequality $\sum_{i=0}^{s} \edl{S_i}{Q}\le \edl{S}{Q}$
        easily follows from disjointness of~fragments $S_i$;
        thus, we focus on proving $\edl{S}{Q}\le \sum_{i=0}^{s} \edl{S_i}{Q}$.

        For $0\le i \le s$, let $Q_i$ denote a substring of~$Q^\infty$ that satisfies
        $\edl{S_i}{Q}=\ed(S_i,Q_i)$.
        Since each $S_i$ is locked (by \cref{clm:locked}),
        we may assume that for $0 < i < s$ the substring $Q_i$ is a power of~$Q$,
        the substring $Q_s$ is a prefix of~a power of~$Q$,
        and the substring $Q_0$ is a suffix of~a power of~$Q$.
        Consequently, $Q_0\cdots Q_s$ is a substring of~$Q^\infty$,
        and we have\vspace{-1.5ex}\[
            \edl{S}{Q} \le \ed(S_0\cdots S_s,Q_0\cdots Q_s)
                       \le \sum_{i=0}^s \ed(S_i,Q_i) = \sum_{i=0}^s \edl{S_i}{Q},
                   \]\vspace{-1.5ex} thus completing the proof.
    \end{claimproof}

    The locked fragments created satisfy $\edl{S}{Q}=\sum_{i=1}^\ell \edl{L_i}{Q}$
    due to \cref{clm:locked_whole}.
    Moreover, since $\edl{S_i}{Q}>0$ holds only for interesting fragments, \cref{clm:locked_short}
    yields $\sum_{i=1}^{\ell} |L_i| \le (5|Q|+1)\,\edl{S}{Q}$, completing the proof.
\end{proof}

The definition and lemma that follow, as well as~\cref{lem:Eaux}~\eqref{it:Eprog},
are not needed for our proof~of~the main result of~this section, \cref{thm:edmain} --
 a reader interested only in that result can safely skip them.
They, however, provide additional structural insights that we exploit in~\cref{sec:pme}.

\begin{definition}\label{def:klocked}
    Let $S$ denote a string, let $Q$ denote a primitive string,
    and let $k\ge 0$ denote an integer.
    We say that a prefix $L$ of~$S$ is \emph{$k$-locked} (with respect to $Q$)
    if at least one of~the following holds:
    \begin{itemize}
        \item For every $p\in \fragmentco{0}{|Q|}$, if $\ed(L,\rot^p(Q)^*)\le k$, then
            $\ed(L,\rot^p(Q)^*)=\ed(L,Q^\infty\fragmentco{|Q|-p}{j|Q|})$ holds for some
            integer $j$.
        \item We have $L=S$.\lipicsEnd
    \end{itemize}
\end{definition}

\begin{lemma}\label{lem:klocked}
    Let $S$ denote a string, let $Q$ denote a primitive string, and let $k\ge 0$ be an integer.
    There are disjoint locked fragments $L_1,\ldots,L_{\ell} \preceq S$,
    such that $L_1$ is a $k$-locked prefix of~$S$, $L_{\ell}$ is a suffix of~$S$,
    $\edl{L_i}{Q} > 0$ for $1 < i < \ell$,\vspace{-1.5ex} \[
        \edl{S}{Q}=\sum_{i=1}^{\ell} \edl{L_i}{Q},\quad
        \text{and}\quad\sum_{i=1}^{\ell}|L_i| \le (5|Q|+1)\edl{S}{Q}+2(k+1)|Q|.\]
\end{lemma}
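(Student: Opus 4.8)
The plan is to bootstrap from \cref{lem:locked}. First I would apply \cref{lem:locked} to $S$ to get disjoint locked fragments $M_1 \preceq \cdots \preceq M_m$ (in left-to-right order) with $\edl{M_i}{Q}>0$, $\edl{S}{Q}=\sum_i\edl{M_i}{Q}$, and $\sum_i|M_i|\le(5|Q|+1)\edl{S}{Q}$. Inspecting the proof of \cref{lem:locked}, every fragment of the final partition with zero $\edl{\cdot}{Q}$ is a power of $Q$, except that the first one may be a suffix of a power of $Q$ and the last one a prefix of a power of $Q$; hence $S$ decomposes as $S=P_0\,M_1\,P_1\cdots M_m\,P_m$ where $P_0$ is a suffix of a power of $Q$, $P_m$ a prefix of a power of $Q$, and each internal $P_i$ a power of $Q$. (If $m=0$, i.e.\ $\edl{S}{Q}=0$, so $S=Q^\infty\fragmentco{x}{x+|S|}$, the claim is handled directly: take $L_1$ a short $k$-locked clean prefix ending at a $Q$-boundary and containing $k+1$ copies of $Q$, and $L_\ell$ a short clean suffix.)

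The core of the argument is to grow the left part of $S$ into a $k$-locked prefix $L_1$ without inflating its length. I would use two sufficient conditions for a prefix $L$ of $S$ to be $k$-locked: (i) $\edl{L}{Q}>k$, in which case the defining implication is vacuous since $\ed(L,\rot^p(Q)^*)\ge\edl{L}{Q}>k$ for every rotation $p$; and (ii) $L$ ends with $Q^{k+1}$, in which case any alignment of $L$ with a prefix of $\rot^p(Q)^\infty$ of cost at most $k$ must, by pigeonhole over the trailing $k+1$ copies of $Q$ and primitivity of $Q$, match one such copy exactly to $Q$, and re-aligning the clean copies of $Q$ after that match at no extra cost gives an optimal alignment ending at a multiple of $|Q|$ — this is the argument sketched right after the definition of locked fragments. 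Accordingly, $L_1$ is built by extending from the left: if $P_0$ already contains $k+1$ complete copies of $Q$, take $L_1$ to be the shortest prefix of $S$ ending at a $Q$-boundary with $k+1$ trailing copies of $Q$ (so $L_1\preceq P_0$, $\edl{L_1}{Q}=0$, $|L_1|<(k+2)|Q|$); otherwise $P_0$ is short and we extend past $M_1$, absorbing subsequent error fragments and clean runs, and stop either when $\edl{L_1}{Q}$ exceeds $k$ (case (i)), or once a clean run of $k+1$ copies has been completed, trimmed to exactly $Q^{k+1}$ (case (ii)), or upon reaching the end of $S$ (then $\ell=1$, $L_1=S$). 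In every non-trivial case $L_1$ is an initial segment $S_0\cdots S_t$ of the \cref{lem:locked} partition with $t<s$, hence it is also \emph{locked}: concatenating the locked (power-of-$Q$) optimal alignments of $S_0,\dots,S_t$ shows $\edl{L_1}{Q}=\sum_{i\le t}\edl{S_i}{Q}=\eds{L_1}{Q}$, exactly as in \cref{clm:locked} and \cref{clm:locked_whole}.

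The remaining ingredients follow the template of \cref{lem:locked}. The suffix $L_\ell$ is obtained symmetrically from $M_m$ and $P_m$, but here no $k$-locked condition is needed, so it suffices to take a suffix of $S$ of length at most $(k+1)|Q|$ that is a suffix of a power of $Q$ (hence locked via $\edl{L_\ell}{Q}=\ed(L_\ell,Q^*)$). The fragments $L_2,\dots,L_{\ell-1}$ are precisely the $M_j$ not absorbed into $L_1$, so they are locked with $\edl{\cdot}{Q}>0$. For the identity $\edl{S}{Q}=\sum_i\edl{L_i}{Q}$, the inequality ``$\le$'' is immediate from disjointness, while ``$\ge$'' follows by concatenating, for each $i$, an optimal alignment of $L_i$ with a substring of $Q^\infty$ that is a suffix of a power of $Q$ for $i=1$, a power of $Q$ for internal $i$, and a prefix of a power of $Q$ for $i=\ell$; their concatenation is a substring of $Q^\infty$, exactly as in \cref{clm:locked_whole}. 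Finally, for the length bound I would charge $\sum_{i=2}^{\ell-1}|M_i|$ together with the ``error part'' of $L_1$ and $L_\ell$ against $(5|Q|+1)\edl{S}{Q}$ (via \cref{lem:locked} and the additivity just proved), leaving at most $(k+1)|Q|$ of clean overhead in each of $L_1$ and $L_\ell$, for the claimed $+2(k+1)|Q|$.

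The step I expect to be the main obstacle is precisely the construction of $L_1$ together with its length bound: a naive ``extend until a locking configuration'' rule can force $L_1$ to sweep through $\Omega(\edl{S}{Q})$ short clean runs between error fragments, contributing $\Omega(k|Q|\cdot\edl{S}{Q})$ and breaking the bound. Ruling this out requires a more careful extension strategy tied to the merging rules in the proof of \cref{lem:locked} — ensuring the clean overhead accumulated in forming $L_1$ amounts to a single block of at most $(k+1)|Q|$ on top of fragments whose lengths are already paid for by the $(5|Q|+1)\edl{\cdot}{Q}$ budget — together with \cref{fct:split} and \cref{lem:synchr} to synchronize competing low-cost alignments at $Q$-boundaries, and the superadditivity of $\edl{\cdot}{Q}$ across disjoint error fragments. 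The degenerate cases ($m=0$, very short $P_0$ or $P_m$, $|Q|=1$, $L_1=S$, empty $L_\ell$) are routine but must be dispatched separately.
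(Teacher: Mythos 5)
Your two certificates for $k$-lockedness are correct as far as they go: if $\edl{L}{Q}>k$ the defining implication is vacuous, and if $L$ ends with $Q^{k+1}$ a pigeonhole plus primitivity argument lets you re-align the trailing clean copies so that the optimum ends at a multiple of $|Q|$. The additivity of $\edl{\cdot}{Q}$ and the treatment of the middle fragments and of $L_\ell$ also follow the template of \cref{lem:locked} correctly. The problem is the step you yourself flag and then defer: the construction of $L_1$ cannot be completed within your framework, because there are inputs on which \emph{no} proper prefix of $S$ satisfies either certificate. Take $S$ with $\edl{S}{Q}=d$ for a small constant $d$ (say $d=3$), the $d$ errors being single substitutions by fresh characters placed at the start of their copies, separated (and flanked) by clean runs of exactly $k$ copies of $Q$, with $k\gg d$. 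Every prefix $L$ of $S$ satisfies $\edl{L}{Q}\le\edl{S}{Q}\le k$ (restrict an optimal alignment of $S$), so certificate (i) never fires; and $Q^{k+1}$ does not occur in $S$ at all, so certificate (ii) never fires. You are then forced to take $L_1=S$, whose length is roughly $(d+1)k|Q|$, which violates the bound $(5|Q|+1)d+2(k+1)|Q|$ for large $k$. So the issue is not merely that a naive extension rule sweeps too many short clean runs; the $k$-locked prefix that must be output on such instances (roughly $Q^{k}\cdot(\text{error block})\cdot Q^{c}$ for a small $c$) is $k$-locked for reasons that neither of your certificates captures, and no amount of "more careful extension" built on them, nor an appeal to \cref{fct:split} or \cref{lem:synchr} (whose hypotheses require long clean overlaps that are exactly what is unavailable here), can certify it.

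The paper does not post-process the output of \cref{lem:locked}; it reruns the merging process of \cref{lem:locked} with the initial prefix budget $\Delta_0^{(0)}$ artificially inflated by $k+1$, keeps the first and last fragments of the final partition as $L_1$ and $L_\ell$ even when their $\edl{\cdot}{Q}$ is zero, and proves $k$-lockedness of $L_1$ by an induction over the merge steps (\cref{clm:klocked}): the invariant is that any alignment of the current prefix $S_0^{(t)}$ with a prefix of $\rot^p(Q)^\infty$ of cost at most $k-\Delta_0^{(t)}$ can be converted into one ending at a multiple of $|Q|$ at extra cost at most $\Delta_0^{(t)}$. The inflated budget makes the base case vacuous, and at termination $\Delta_0=0$, so the prefix is exactly $k$-locked even though it may end with only a few clean copies of $Q$ and have $\edl{L_1}{Q}\le k$. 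The length bound then follows from the same potential $\lambda^{(t)}+2|Q|\sum_i\Delta_i^{(t)}$ as in \cref{clm:locked_short}, whose initial value grows by exactly $2(k+1)|Q|$ --- this is where the additive term comes from, not from a literal $Q^{k+1}$ tail of $L_1$ plus a clean suffix $L_\ell$ as in your accounting. Without an invariant of this kind (or an equivalent replacement), the heart of the lemma is missing from your proposal.
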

\begin{proof}
    We proceed as in the proof~of~\cref{lem:locked} except that $\Delta_{0}^{(0)}$
    is artificially increased by $k+1$, the prefix $S_0$ in the final partition
    is included as $L_1$ among the locked fragments even if $\edl{S_0}{Q}=0$,
    and the suffix $S_s$ is included as $L_{\ell}$ among the locked fragments even if $\edl{S_s}{Q}=0$.

    It is easy to see that \cref{clm:locked,clm:locked_whole} remain satisfied,
    whereas the upper bound in \cref{clm:locked_short} is increased by $2(k+1)|Q|$.
    We only need to prove that $S_0$ is a $k$-locked prefix of~$S$.
    For this, we prove the following claim using induction.

    \begin{claim}\label{clm:klocked}
        For each partition $S=S_0^{(t)}\cdots S_{s^{(t)}}^{(t)}$, at least one of~the following holds:
        \begin{itemize}
                \item For every $p\in \fragmentco{0}{|Q|}$, if $\ed(S_0^{(t)},\rot^p(Q)^*)\le k-\Delta_0^{(t)}$, then $\ed(S_0^{(t)},Q^\infty\fragmentco{|Q|-p}{j|Q|}) \le \ed(S_0^{(t)},\rot^p(Q)^*)+\Delta_0^{(t)}$ holds for some integer $j$.
                \item We have $S_0^{(t)}=S$.
        \end{itemize}
    \end{claim}
    \begin{claimproof}
        We proceed by induction on $t$. In the base case of~$t=0$, the claim holds trivially since
        $\ed(S_0^{(0)},\rot^p(Q)^*) \ge 0 > k-\Delta_0^{(0)}$ holds for every $p$ due to $\Delta_0^{(0)}\ge k+1$.

        As for the induction step, we assume that the claim holds for $S_0^{(t)}$ and we prove that it holds
        for $S_0^{(t+1)}$. The claim holds trivially if the merge rule applied did not affect $S_0^{(t)}$.
        Given that  $S_0^{(t)}$ is interesting by definition, the merges that might affect $S_0^{(t)}$ are of~type~\ref{it:type1} (if $S_1^{(t)}$ is interesting) or~\ref{it:type4} (otherwise).

        \begin{enumerate}
            \item Consider a type-\ref{it:type1} merge of~$S_0^{(t)}$ and $S_1^{(t)}$. If $s^{(t)}=1$, then $S_0^{(t+1)}=S$ satisfies the claim trivially.
            Hence, we may assume that $1 < s^{(t)}$ so that \cref{clm:locked} yields
            $\ed(S_1^{(t)},Q^{\alpha})\le \edl{S_1^{(t)}}{Q} + \Delta_{1}^{(t)}$ for some integer $\alpha$.
            Let us fix $p\in \fragmentco{0}{|Q|}$ with $\ed(S_0^{(t+1)},\rot^p(Q)^*)\le k-\Delta_0^{(t+1)}$.
            Due to $\Delta_0^{(t+1)}\ge \Delta_0^{(t)}$, this yields
            $\ed(S_0^{(t)},\rot^p(Q)^*)\le k-\Delta_0^{(t)}$, so the inductive assumption
            implies $\ed(S_0^{(t)},Q^\infty\fragmentco{|Q|-p}{j|Q|}) \le
            \ed(S_0^{(t)},\rot^p(Q)^*)+\Delta_0^{(t)}$ for some integer $j$.
            Consequently,\vspace{-1ex}
            \begin{multline*}
                \ed(S_0^{(t+1)},Q^\infty\fragmentco{|Q|-p}{(j+\alpha)|Q|})
                = \ed(S_0^{(t)}S_1^{(t)},Q^\infty\fragmentco{|Q|-p}{j|Q|}Q^{\alpha}) \\
                \le \ed(S_0^{(t)},Q^\infty\fragmentco{|Q|-p}{j|Q|}) + \ed(S_1^{(t)},Q^{\alpha})
                \le \ed(S_0^{(t)},\rot^p(Q)^*)+\Delta_0^{(t)}+\edl{S_1^{(t)}}{Q} + \Delta_{1}^{(t)}\\
                \le \ed(S_0^{(t+1)},\rot^p(Q)^*) + \Delta_0^{(t+1)}.
            \end{multline*}
            \item Consider a type-\ref{it:type4} merge of~$S_0^{(t)}$ and $S_1^{(t)}$.
            Let us fix $p\in \fragmentco{0}{|Q|}$ with $\ed(S_0^{(t+1)},\rot^p(Q)^*)\le k-\Delta_0^{(t+1)}$.
            \begin{itemize}
                \item If $\ed(S_0^{(t+1)},\rot^p(Q)^*)>\ed(S_0^{(t)},\rot^p(Q)^*)$,
                then \[\ed(S_0^{(t)},\rot^p(Q)^*) \le\ed(S_0^{(t+1)},\rot^p(Q)^*) -1 \le k-\Delta_0^{(t+1)}-1=
                k-\Delta_0^{(t)}.\]
                Hence, the inductive assumption implies $\ed(S_0^{(t)},Q^\infty\fragmentco{|Q|-p}{j|Q|}) \le \ed(S_0^{(t)},\rot^p(Q)^*)+\Delta_0^{(t)}$ for some integer $j$.
                Consequently,\vspace{-1ex}
                \begin{multline*}
                    \ed(S_0^{(t+1)},Q^\infty\fragmentco{|Q|-p}{(j+1)|Q|})
                    = \ed(S_0^{(t)}Q,Q^\infty\fragmentco{|Q|-p}{j|Q|}Q) \\
                    \le \ed(S_0^{(t)},Q^\infty\fragmentco{|Q|-p}{j|Q|})
                    \le \ed(S_0^{(t)},\rot^p(Q)^*)+\Delta_0^{(t)}
                    \le \ed(S_0^{(t+1)},\rot^p(Q)^*)-1 + \Delta_0^{(t)} \\
                    = \ed(S_0^{(t+1)},\rot^p(Q)^*) + \Delta_0^{(t+1)}.
                \end{multline*}
                \item If $\ed(S_0^{(t+1)},\rot^p(Q)^*)=\ed(S_0^{(t)},\rot^p(Q)^*)$,
                then let $y'$ denote an arbitrary integer that satisfies
                $\ed(S_0^{(t+1)},\rot^p(Q)^*)=\ed(S_0^{(t+1)},Q^\infty\fragmentco{|Q|-p}{y'})$.
                This also yields an integer $y''$ with
                $|Q|-p\le y'' \le y'$ such that\vspace{-1ex}\[
                    \ed(S^{(t+1)}_{0},Q^\infty\fragmentco{|Q|-p}{y'}) =
                        \ed(S^{(t)}_0,Q^\infty\fragmentco{|Q|-p}{y''}) +
                        \ed(Q,Q^\infty\fragmentco{y''}{y'}).
                \]
                Due to\vspace{-2ex}\begin{multline*}
                    \ed(S_0^{(t)},\rot^p(Q)^*)
                    \le  \ed(S^{(t)}_0,Q^\infty\fragmentco{|Q|-p}{y''})
                    \le \ed(S^{(t+1)}_{0},Q^\infty\fragmentco{|Q|-p}{y'})\\
                    = \ed(S_0^{(t+1)},\rot^p(Q)^*)=\ed(S_0^{(t)},\rot^p(Q)^*),
                \end{multline*}
                we have
                $\ed(Q,Q^\infty\fragmentco{y''}{y''})$.
                As the string $Q$ is primitive, this means that $y'',y'$ are both multiples
                of~$|Q|$.
                Consequently,
                \[\ed(S^{(t+1)}_{0},Q^\infty\fragmentco{|Q|-p}{y'}) = \ed(S_0^{(t+1)},\rot^p(Q)^*) \le \ed(S_0^{(t+1)},\rot^p(Q)^*) + \Delta^{(t+1)}_{0}.\]
            \end{itemize}
        \end{enumerate}
        This completes the proof~of~the inductive step.
    \end{claimproof}
    Given that the final partition $S=S_{0}^{(t)}\cdots S_{s^{(t)}}^{(t)}$
    satisfies $s^{(t)}=0$ or $\Delta_{0}^{(t)}= 0$, we conclude that $S_0$ is indeed $k$-locked.
\end{proof}

We are now ready to prove \cref{lem:Eaux}, which we restate here for convenience.
\edaux*
\begin{proof}

    Consider any $k$-error occurrence $T\fragmentco{\ell}{r}$ of~$P$.
    By definition, $\ed(T\fragmentco{\ell}{r}, P) \le k \le d/2$.
    Combining this inequality with $\ed(P,Q^*) \le d$ via the triangle inequality
    (\cref{Etria}), we obtain the bound $\ed(T\fragmentco{\ell}{r}, Q^*) \le \threehalfs d$.
    In particular, this inequality is true for the $k$-error occurrence of~$P$
    as a prefix~of~$T$. Hence, for some integer $m' \in
    \fragment{m-k}{m+k}$, we have $\ed(T\fragmentco{0}{m'}, Q^*)\le \threehalfs d$,
    and thus also $\ed(T\fragmentco{0}{\min(r,m')}, Q^*)\le \threehalfs d$.

    Next, we apply \cref{lem:synchr} on the fragment $T\fragmentco{0}{r}$,
    whose prefix $T\fragmentco{0}{\min(r,m')}$ satisfies $\ed(T\fragmentco{0}{\min(r,m')}, Q^*)\le \threehalfs d$
    and whose suffix $T\fragmentco{\ell}{r}$ satisfies  $\ed(T\fragmentco{\ell}{r}, Q^*) \le \threehalfs d$.
    Further, if $|Q| > 1$, we also have $|T\fragmentco{\ell}{\min(r,m')}| \ge (3d+1)|Q|$:
    Due to $r-\ell \ge m-k$ and $m'\ge m-k$,
    we have $\min(r,m')-\ell\ge 2(m-k)-n > m/2 - 3k \ge 4d|Q|-3k$.
    Hence, it suffices to prove that $(d-1)|Q|\ge 3k-1$.
    This equality holds trivially if $k=0$.
    For $k\ge 1$, we have $(d-1)|Q|\ge (2k-1)\cdot 2 = 4k-2 \ge 3k-1$ due to $d\ge 2k$ and $|Q|\ge 2$.
    Thus, we can indeed use \cref{lem:synchr}.

    In particular, \cref{lem:synchr} implies $(\ell+3d)\bmod |Q|\le 6d$.
    Since  $T\fragmentco{\ell}{r}$ was an arbitrary $k$-error occurrence of~$P$ in $T$,
    we conclude that Claim~\eqref{it:Emult} holds.

    Moreover, \cref{lem:synchr} implies $\ed(T\fragmentco{0}{r},Q^*)\le 3d$.
    If we choose $T\fragmentco{\ell}{r}$ to be a $k$-error occurrence of~$P$ that is a suffix of~$T$,
    we have $r=n$ and therefore $\ed(T,Q^*) \le 3d$, which proves Claim~\eqref{it:Etext}.

    We proceed to the proof~of~Claim~\eqref{it:Efew}.
    Let $L_1,\ldots,L_{\ell}$ denote locked fragments of~$P$ obtained from \cref{lem:locked}.
    Note that we thus have $\sum_{i=1}^{\ell} \edl{L_i}{Q}=\edl{P}{Q}=d$, $\ell\le d$,
    and $\sum_{i=1}^{\ell}|L_i|\le (5|Q|+1)d$.

    Moreover, let us fix an optimal alignment between $T$ and a substring $Q'$ of~$Q^*$,
    and define $d':= \ed(T,Q^*)$. This yields partitions $T=T_0\cdots T_{2d'}$
    and $Q'=Q'_0\cdots Q'_{2d'}$ such that:
    \begin{itemize}
        \item $T_i = Q'_i$ for even $i$,
        \item $T_i \ne Q'_i$ and $|T_i|,|Q'_i|\le 1$ for odd $i$.
    \end{itemize}
    We create a multi-set $E = \{\sum_{i'<i}|T_{i'}| : i\text{ is odd}\}$ of~size $d'$.
    Its elements can be interpreted as positions in $T$ which incur errors in
    an optimal alignment with $Q'$.
    In particular, we show the following:
    \begin{claim}\label{clm:e}
        For every fragment $T\fragmentco{x}{y}$, we have $\edl{T\fragmentco{x}{y}}{Q}\le
        |\{e \in E \mid x \le e < y\}|$.
    \end{claim}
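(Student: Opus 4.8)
The plan is to \emph{restrict} the fixed optimal alignment of $T$ with $Q'$ to the window $\fragmentco{x}{y}$ and read the bound off directly. Recall the partitions $T=T_0\cdots T_{2d'}$ and $Q'=Q'_0\cdots Q'_{2d'}$ with $T_i=Q'_i$ for even $i$ and $|T_i|,|Q'_i|\le 1$ for odd $i$, and write $s_i:=\sum_{j<i}|T_j|$ and $s'_i:=\sum_{j<i}|Q'_j|$, so that $E=\{s_i\mid i\text{ odd}\}$ (as a multiset) and the alignment pairs $T\fragmentco{s_i}{s_{i+1}}$ with $Q'\fragmentco{s'_i}{s'_{i+1}}$, a perfect match whenever $i$ is even. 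First I would cut this alignment at the two positions $x$ and $y$ of $T$. The observation that makes this painless is that every \emph{odd} (error-incurring) block has length at most $1$, so any position of $T$ not of the form $s_i$ lies strictly inside an \emph{even} block $T_i=Q'_i$; cutting such a block at that position merely replaces one exact-match block by two exact-match blocks and introduces no error. Thus the two cuts decompose the alignment into an alignment of $T\fragmentco{0}{x}$ with a prefix $Q'\fragmentco{0}{x'}$, an alignment of $T\fragmentco{x}{y}$ with a fragment $Q'\fragmentco{x'}{y'}$, and an alignment of $T\fragmentco{y}{n}$ with a suffix $Q'\fragmentco{y'}{|Q'|}$, whose costs sum to $\ed(T,Q')$.

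Since $Q'$ is itself a fragment of $Q^\infty$, so is $Q'\fragmentco{x'}{y'}$, and therefore $\edl{T\fragmentco{x}{y}}{Q}\le \ed\bigl(T\fragmentco{x}{y},Q'\fragmentco{x'}{y'}\bigr)$. It then remains to bound the cost of the middle alignment by $|\{e\in E\mid x\le e<y\}|$. Every error of the middle alignment lives in one of the original odd blocks $T_i$ (the freshly created halves of even blocks carry no error); each such block contributes $1$ to the middle cost and corresponds to the element $s_i$ of $E$, and I would check that $x\le s_i<y$ for every such block. Consequently the multiset of errors of the middle alignment injects into $\{e\in E\mid x\le e<y\}$, which is exactly the asserted inequality. (The degenerate case $x\ge y$ gives $T\fragmentco{x}{y}=\eps$ and is trivial.)

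The only genuine work will be this last boundary bookkeeping. The one case needing care is a deletion block (with $|T_i|=0$ and $|Q'_i|=1$) sitting exactly at position $y$, i.e.\ with $s_i=y$: it must be assigned to the right-hand part of the decomposition — which is always legitimate, since such a block spans no character of $T$ — so that it does not count towards the middle cost, consistently with $s_i=y\notin\fragmentco{x}{y}$. Symmetric boundary blocks at $x$ cause no trouble, since including or excluding them changes both sides of the inequality by the same amount. Beyond pinning down this convention there is no deeper idea; the statement is essentially the routine fact that a subalignment of an optimal alignment is an alignment of the corresponding substrings with no more errors.
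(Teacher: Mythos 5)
Your proposal is correct and matches the paper's argument, which likewise observes that the fixed alignment of $T$ with $Q'$ restricts to an alignment of $T\fragmentco{x}{y}$ with a fragment $Q'\fragmentco{x'}{y'}$ whose cost is at most $|\{e \in E \mid x \le e < y\}|$; your version merely spells out the boundary bookkeeping that the paper leaves implicit.
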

    \begin{claimproof}
        It suffices to observe that the alignment between $T$ and $Q'$
        yields an alignment between  $T\fragmentco{x}{y}$ and a fragment $Q'\fragmentco{x'}{y'}$
        with $\ed(T\fragmentco{x}{y},Q'\fragmentco{x'}{y'})\le |\{e \in E \mid x \le e < y\}|$ edits.
    \end{claimproof}

    We split $\mathbb{Z}$ into disjoint blocks of~the form $\fragmentco{jd}{(j+1)d}$ for $j\in\mathbb{Z}$.
    We say that a block $\fragmentco{jd}{(j+1)d}$ is \emph{synchronized}
    if it contains a position $p$ such that $(p+3d)\bmod |Q| \le 6d$.
    For every locked fragment
    $L_{i}=P\fragmentco{\ell_i}{r_i}$ and every $e\in E$, we
    mark a synchronized block $\fragmentco{jd}{(j+1)d}$ if
    \[e\in \fragmentco{jd+\ell_i-k}{(j+1)d-1+r_i+k}.\]

    \begin{claim}
        If $\fragmentco{jd}{(j+1)d} \cap \OccE_k(P,T) \ne \emptyset$,
        then $\fragmentco{jd}{(j+1)d}$ has at least $d-k$ marks.
    \end{claim}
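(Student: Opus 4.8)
The plan is to charge to the block $\fragmentco{jd}{(j+1)d}$ one mark for (roughly) each unit of $\edl{P}{Q}$ that a $k$-error occurrence starting inside the block fails to ``absorb'' into its own errors; the decomposition of $P$ into the locked fragments $L_i=P\fragmentco{\ell_i}{r_i}$ together with \cref{clm:e} will supply these marks.

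First I would fix a position $p\in\fragmentco{jd}{(j+1)d}\cap\OccE_k(P,T)$ and an optimal alignment $\mathcal A$ witnessing $\ed(P,T\fragmentco{p}{q})\le k$ for some $q\le n$. Since $p$ is a $k$-error occurrence, Claim~\eqref{it:Emult} gives $p\bmod|Q|\le 3d$ or $p\bmod|Q|\ge|Q|-3d$, and in both cases $(p+3d)\bmod|Q|\le 6d$; hence the block $\fragmentco{jd}{(j+1)d}$ is synchronized, which is exactly what is needed for any mark to land in it.

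Next I would restrict $\mathcal A$ to each locked fragment $L_i$: it induces an alignment of $L_i$ onto a (possibly empty) fragment $T\fragmentco{x_i}{y_i}$, and, writing $k_i$ for the number of edits $\mathcal A$ spends inside $L_i$, we get $\ed(L_i,T\fragmentco{x_i}{y_i})\le k_i$ and, since $L_1,\dots,L_{\ell}$ are pairwise disjoint, $\sum_i k_i\le k$ and the fragments $T\fragmentco{x_i}{y_i}$ are pairwise disjoint. Counting the indels of $\mathcal A$ in the prefixes $P\fragmentco{0}{\ell_i}$ and $P\fragmentco{0}{r_i}$ shows $|x_i-(p+\ell_i)|\le k$ and $|y_i-(p+r_i)|\le k$, so together with $jd\le p\le (j+1)d-1$ we obtain $jd+\ell_i-k\le x_i\le y_i\le (j+1)d-1+r_i+k$. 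Consequently, for every $e\in E$ with $x_i\le e<y_i$ the pair $(L_i,e)$ places a mark on the (synchronized) block $\fragmentco{jd}{(j+1)d}$, and distinct pairs $(L_i,e)$ (distinct $i$, or distinct copies of $e$ in the multiset $E$) yield distinct marks.

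Finally, for the count I would combine the triangle inequality for the edit distance with \cref{clm:e}: $\edl{L_i}{Q}\le\ed(L_i,T\fragmentco{x_i}{y_i})+\edl{T\fragmentco{x_i}{y_i}}{Q}\le k_i+|\{e\in E:\ x_i\le e<y_i\}|$, so $L_i$ contributes at least $\edl{L_i}{Q}-k_i$ marks to the block. Summing over $i$ and using $\sum_i\edl{L_i}{Q}=\edl{P}{Q}=d$ (the case under consideration) and $\sum_i k_i\le k$ gives at least $d-k$ marks. The only delicate point is the ``restriction'' step — that an optimal alignment of $P$ decomposes, along a fixed factorization of $P$, into alignments of the factors with additive costs, and that insertions and deletions shift endpoints by at most the number of edits used so far — which is standard but must be set up carefully; the rest is bookkeeping built on Claim~\eqref{it:Emult}, \cref{clm:e}, and the properties of the $L_i$ from \cref{lem:locked}.
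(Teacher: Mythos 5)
Your proposal is correct and follows essentially the same route as the paper's proof: fix an optimal alignment of the occurrence starting at $p$, restrict it to the disjoint locked fragments $L_i$ (whose images in $T$ lie in $\fragmentco{jd+\ell_i-k}{(j+1)d-1+r_i+k}$), and combine \cref{clm:e} with the triangle inequality and $\sum_i \edl{L_i}{Q}=d$, $\sum_i k_i\le k$ to get at least $d-k$ marks. Your explicit observation that Claim~\eqref{it:Emult} makes the block synchronized (so it is eligible to receive marks at all) is a detail the paper leaves implicit, but otherwise the arguments coincide.
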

    \begin{claimproof}
        Consider a $k$-error occurrence of~$P$ in $T$ starting at position $p\in \fragmentco{jd}{(j+1)d}$
        and fix its arbitrary optimal alignment with $P$.
        For each locked fragment $L_i$, let $L'_i$ be the fragment of~$T$ aligned with $L_i$ in this alignment.
        Moreover, $L'_i = T\fragmentco{\ell'_i}{r'_i}$ for
        \[\fragmentco{\ell'_i}{r'_i}\subseteq \fragmentco{p+\ell_i-k}{p+r_i+k}\subseteq \fragmentco{jd+\ell_i-k}{(j+1)d-1+r_i+k}.\]
        Also, by \cref{clm:e}, we have \[
            \edl{L'_i}{Q} \le |\{e \in E \mid \ell'_i \le e < r'_i\}|
        \le  |\{e \in E \mid jd+\ell_i-k\le e < (j+1)d-1+r_i+k\}|.
        \] Hence, the number $\mu$ of~marks at $\fragmentco{jd}{(j+1)d}$ is at least
        $\mu \ge \sum_{i=1}^{\ell}\edl{L'_i}{Q}$.
        On the other hand, by disjointness of~regions $L_i$,
        we have $\sum_{i=1}^{\ell}\ed(L'_i,L_i)\le k$.
        By the triangle inequality (\cref{Etria}), this yields
        $\sum_{i=1}^{\ell}\edl{L_i}{Q}\le k+\mu$.
        Since $\sum_{i=1}^{\ell}\edl{L_i}{Q}=\edl{P}{Q}=d$,
        we conclude that $\mu \ge d-k$.
    \end{claimproof}

    \begin{claim}
        The total number of~marks placed is at most $\qvarphiv d^2$.
    \end{claim}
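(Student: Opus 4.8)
The plan is to view each mark as a triple $(L_i,e,j)$ consisting of a locked fragment $L_i=P\fragmentco{\ell_i}{r_i}$, an error position $e\in E$, and a \emph{synchronized} block index $j$ with the incidence condition $e\in\fragmentco{jd+\ell_i-k}{(j+1)d-1+r_i+k}$, and then to count such triples. I would use three ingredients: $|E|=\edl{T}{Q}\le 3d$ by the already-established Claim~\eqref{it:Etext}; $\ell\le\edl{P}{Q}\le d$ since every locked fragment carries at least one error; and $\sum_{i=1}^{\ell}|L_i|\le(5|Q|+1)\edl{P}{Q}\le(5|Q|+1)d$ by \cref{lem:locked}. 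It thus suffices to show that, for each fixed $e\in E$, the number of pairs $(i,j)$ such that $(L_i,e,j)$ is a mark is $O(d)$ with a small enough constant; multiplying by $|E|\le 3d$ then yields the bound $\qvarphiv d^2$.

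\textbf{Step 1 (few blocks per $(L_i,e)$).} First I would solve the incidence inequality for $j$: $e\in\fragmentco{jd+\ell_i-k}{(j+1)d-1+r_i+k}$ is equivalent to $jd$ lying in an integer interval of length $|L_i|+2k+d-2$, so the set of block indices $j$ with $(L_i,e,j)$ a mark is a set of at most $|L_i|/d+2k/d+2\le|L_i|/d+3$ consecutive integers, using $k\le d/2$ (which follows from $d\ge 2k$).

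\textbf{Step 2 (synchronized blocks are sparse) and Step 3 (summation).} Unwinding the definition, the block $\fragmentco{jd}{(j+1)d}$ is synchronized if and only if the interval $\fragment{jd-3d}{jd+4d-1}$ contains a multiple of $|Q|$, i.e.\ $jd\bmod|Q|$ lies in the arc $W=\fragment{0}{3d}\cup\fragment{|Q|-4d+1}{|Q|-1}$, which has $\min(|Q|,7d)$ residues. Viewing $j\mapsto jd\bmod|Q|$ as a walk on $\mathbb{Z}/|Q|\mathbb{Z}$ with step $d$, this shows that among any $s$ consecutive block indices the number of synchronized ones is at most $s$ and also at most $7(sd/|Q|+1)$ (the walk crosses $W$ at most $sd/|Q|+1$ times, and contributes at most $7$ indices per crossing since $|W|\le 7d$ and the step equals $d$). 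Combining with Step~1, for each $(i,e)$ the number of synchronized $j$ with $(L_i,e,j)$ a mark is at most $\min\bigl(|L_i|/d+3,\ 7((|L_i|+3d)/|Q|+1)\bigr)$. I would then split on the size of $|Q|$: if $|Q|\le 8d$, sum the first bound, $\sum_i(|L_i|/d+3)\le(5|Q|+1)+3\ell=O(d)$; if $|Q|>8d$, sum the second, $\sum_i 7((|L_i|+3d)/|Q|+1)=7\sum_i|L_i|/|Q|+21d\ell/|Q|+7\ell\le 35d+7d/|Q|+21d^2/|Q|+7d=O(d)$, the middle terms being $O(1)$ and $O(d)$ because $|Q|>8d$. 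In both cases the per-$e$ count is $O(d)$, so summing over $e\in E$ gives $O(d^2)$.

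\textbf{Main obstacle.} The structural content is entirely in Steps~1 and~2; the genuinely delicate part is the bookkeeping of the \emph{absolute} constants, since the naive estimates above land just above $\qvarphiv d^2$. To reach exactly $\qvarphiv d^2$ one must retain the floor functions in Step~1, optimize the threshold separating the two cases in Step~3, and (for very small $d$, where necessarily $k=0$) use that $W$ has only $\min(|Q|,7d)$ residues rather than $7d$. I expect this constant-chasing, not any conceptual difficulty, to be the only real work remaining.
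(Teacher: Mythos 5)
Your proposal is correct and follows essentially the same route as the paper's proof: it counts marks per pair (error $e\in E$, locked fragment $L_i$), splits on whether $|Q|$ is below or above a threshold of order $8d$--$9d$, bounds the blocks hit per pair by $|L_i|/d+3$ in the small-$|Q|$ case and by $7\bigl((|L_i|+3d)/|Q|+1\bigr)$ in the large-$|Q|$ case via the observation that at most $7$ synchronized blocks cluster around each multiple of $|Q|$, and then sums using $|E|\le 3d$, $\ell\le d$, and $\sum_i|L_i|\le(5|Q|+1)d$. Your worry about the constants is unfounded: your own estimates already give roughly $132d^2$ and $139d^2$ in the two cases, comfortably below $\qvarphiv d^2$ (the paper's computation yields $144d^2$ and $<152d^2$ with threshold $9d$), so no further constant-chasing is needed.
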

    \begin{claimproof}
        Let us fix $e\in E$ and a locked fragment $L_i=P\fragmentco{\ell_i}{r_i}$.
        Recall that a mark is placed at a synchronized block $\fragmentco{jd}{(j+1)d}$
        if $e\in  \fragmentco{jd+\ell_i-k}{(j+1)d-1+r_i+k}$,
        or, in other words,
        \[\fragmentco{jd}{(j+1)d} \cap \fragmentco{e-r_i-k}{e-\ell_i+k} \ne \emptyset.\]
        The length of~the interval  $I_{i,e}:=\fragmentco{e-r_i-k}{e-\ell_i+k}$ satisfies $|I_{i,e}|= 2k+|L_i| \le d+|L_i|$.

        We now consider two cases depending on whether or not the inequality
        $|Q|< 9d$ is satisfied.
        If $|Q|< 9d$, it suffices to observe that any interval $I$
        overlaps with at most $2+|I|/d$ blocks.
        Hence, $I_{i,e}$ overlaps with at most $3+|L_i|/d$ blocks,
        and thus the number of~marks we have placed due to $e$ and $L_i$ is bounded by $3+|L_i|/d$.
        The overall number of~marks is therefore at most
        \[|E|\cdot \sum_{i=1}^{\ell}(3+|L_i|/d) \le 9d^2+3\sum_{i=1}^{\ell}|L_i|
        \le 9d^2 + 3(5|Q|+1)d \le 9d^2+135d^2=144d^2.\]

        On the other hand, if $|Q| \ge 9d$, we utilize the fact that only synchronized blocks are marked.
        For this, observe that any interval $I$ overlaps at most $2+|I|/|Q|$ intervals
        of~the form $\fragmentco{(j'-1)|Q|-4d}{j'|Q|+4d}$ each of~which overlaps with at most $7$ synchronized blocks
        (covering $\fragment{j'|Q|-3d}{j'|Q|+3d}$, which is of~length $6d+1$).
        Hence, the total number of~blocks marked due to $e$ and $L_i$
        is bounded by $7(2+(d+|L_i|)/|Q|)$.
        The overall number of~marks is therefore at most
        \begin{multline*}|E|\cdot \sum_{i=1}^{\ell}7(2+(d+|L_i|)/|Q|) \le
            42d^2+21d^2/|Q|+21d/|Q|\cdot \sum_{i=1}^{\ell}|L_i| \\
        \le42d^2+21d^2/|Q|+21d^2\cdot (5|Q|+1)/|Q|
        \le42d^2+21d^2/|Q|+105d^2 + 21d^2/|Q|< 152d^2.
    \end{multline*}
    This completes the proof~of~the claim.
    \end{claimproof}
    Hence, the number of~blocks with at least $d/2$ marks is at most $\pvarphiv d$,
    completing the proof~of~Claim~\eqref{it:Efew}.

    We proceed to the proof~of~Claim~\eqref{it:Eprog}.
    Let $L^{P}_1,\ldots,L^P_{\ell^P}$ denote locked fragments of~$P$ obtained from \cref{lem:klocked}
    (so that $L_1^P$ is a $k$-locked prefix of~$P$),
    and let $L^T_1,\ldots,L^T_{\ell^T}$ denote locked fragments of~$T$ obtained from \cref{lem:locked}.
    Denote $L^P_i = P\fragmentco{\ell^P_i}{r^P_i}$ for $i\in \fragment{1}{\ell^P}$ and $L^T_j=T\fragmentco{\ell^T_j}{r^T_j}$ for $j\in \fragment{1}{\ell^T}$.
    We say that a position $p\in \fragmentco{0}{n}$ is \emph{marked} if $p \in \fragment{n-m-k}{n-m+k}$ or
    $\fragmentco{p+\ell^P_i-k}{p+r^P_i+k}\cap \fragmentco{\ell^T_j}{r^T_j} \neq \emptyset$
    holds for some $i\in \fragment{1}{\ell^P}$ and $j\in \fragment{1}{\ell^T}$.
    (The positions in $\fragment{n-m-k}{n-m+k}$ are marked for a technical reason that will become clear in the proof~of~\cref{cl:progr_un}.)
    Furthermore, we say that $p$ is \emph{synchronized} if $p \bmod |Q| \le 3d$ or $p\bmod |Q| \ge |Q|-3d$.

Let us provide some intuition.
Informally, if there is a $k$-occurrence in an unmarked position $p$, then no locked region of~$P$ can overlap a locked region of~$T$ in any corresponding optimal alignment and we can exploit this structure.
On the other hand, we now show that there are only a few synchronized marked positions.

    \begin{claim}\label{cl:sync_marked}
        Marked positions can be decomposed into at most $10d^2$ integer intervals.
        Moreover, the number of~synchronized marked positions is at most $547d^3$.
    \end{claim}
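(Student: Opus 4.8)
The plan is to bound two quantities separately: the number of integer intervals whose union is the set of marked positions, and, for each such interval, the number of synchronized positions it contains. Multiplying these and being careful about constants will give both assertions.

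For the first part I would start by counting the locked fragments. The fragments $L^P_1,\dots,L^P_{\ell^P}$ of~$P$ coming from \cref{lem:klocked} satisfy $\sum_{i}\edl{L^P_i}{Q}=\edl{P}{Q}\le d$ with $\edl{L^P_i}{Q}>0$ whenever $1<i<\ell^P$, so $\ell^P\le d+2$; the fragments $L^T_1,\dots,L^T_{\ell^T}$ of~$T$ coming from \cref{lem:locked} all have positive $\edl{\cdot}{Q}$ and sum to $\edl{T}{Q}$, which is at most $3d$ by Claim~\eqref{it:Etext}, so $\ell^T\le 3d$. Next, unfolding the definition, for each fixed pair $(i,j)$ the condition $\fragmentco{p+\ell^P_i-k}{p+r^P_i+k}\cap\fragmentco{\ell^T_j}{r^T_j}\ne\emptyset$ is equivalent to $p$ lying in the integer interval $\fragment{\ell^T_j-r^P_i-k+1}{r^T_j-\ell^P_i+k-1}$, which has $|L^P_i|+|L^T_j|+2k-1$ elements. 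Hence the marked positions form the union of the interval $\fragment{n-m-k}{n-m+k}$ and these $\ell^P\ell^T$ intervals, a total of at most $1+(d+2)\cdot 3d\le 10d^2$ integer intervals (using $d\ge 1$), which is the first assertion. I would also record the total length $\Lambda$ of all these intervals: substituting $\sum_i|L^P_i|\le(5|Q|+1)d+2(k+1)|Q|$ and $\sum_j|L^T_j|\le 3d(5|Q|+1)$ from \cref{lem:klocked,lem:locked}, together with $k\le d/2$, $\ell^P\le d+2$, $\ell^T\le 3d$, gives $\Lambda\le\Oh(d^2|Q|)$, with leading term $(33d^2+\Oh(d))\,|Q|$.

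For the second part, a position is synchronized precisely when it lies in $S:=\bigcup_{t}\fragment{t|Q|-3d}{t|Q|+3d}$, a periodic family of blocks of $6d+1$ integers. I would split into two cases. If $|Q|\le 6d$, these blocks already cover $\mathbb{Z}$, so every position is synchronized and the number of synchronized marked positions equals the total number of marked positions, which is at most $\Lambda$; substituting $|Q|\le 6d$ into the estimate for $\Lambda$ reduces it to $\Oh(d^3)$. If $|Q|>6d$, the blocks of~$S$ are pairwise disjoint and spaced $|Q|$ apart, so an integer interval of length $L$ meets at most $L/|Q|+2$ of them and therefore contains at most $(6d+1)(L/|Q|+2)$ synchronized positions; summing over the at most $10d^2$ marked intervals bounds the synchronized marked positions by $(6d+1)\bigl(\Lambda/|Q|+2\cdot 10d^2\bigr)$, and here both summands are $\Oh(d^3)$ because $\Lambda/|Q|=\Oh(d^2)$ once $|Q|>6d$. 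In either case a careful tally of the constants yields the stated bound $547d^3$.

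The main obstacle is arithmetical rather than conceptual: coaxing the constants $10$ and $547$ out of the estimates. The delicate points are (i) the degenerate regime $|Q|\le 6d$, where \emph{all} positions are synchronized and the only thing that saves the argument is the explicit length bound on the locked fragments; (ii) the factor $6d+1$ multiplying $\Lambda/|Q|$, which forces one to divide out a genuine copy of~$|Q|$ before estimating, and hence to keep the cases $|Q|\le 6d$ and $|Q|>6d$ separate; and (iii) the extra slack $2(k+1)|Q|$ contributed by the $k$-locked prefix in \cref{lem:klocked}, which must be absorbed into $\Lambda$ without spoiling the $d^3$ bound. None of these is difficult on its own, but handled together they demand disciplined bookkeeping, which I expect to be essentially the whole work of the proof.
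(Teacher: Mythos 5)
Your route coincides with the paper's: decompose the marked set into the special interval $\fragment{n-m-k}{n-m+k}$ plus one interval per pair of locked fragments (giving $1+\ell^P\ell^T\le 10d^2$ intervals), then convert the total length of these intervals, controlled by the length bounds of \cref{lem:klocked,lem:locked}, into a count of synchronized positions using that at most $6d+1$ out of any $|Q|$ consecutive positions are synchronized. The first assertion and the $\Oh(d^3)$ order of magnitude are fine. The gap is exactly where you locate the whole work, namely the constant: with the estimates you actually commit to, the tally does not reach $547d^3$ in the corner case $d=1$ (hence $k=0$). There $\ell^P,\ell^T\le 3$, $\sum_i|L_i^P|\le 7|Q|+1$, $\sum_j|L_j^T|\le 15|Q|+3$, so $\Lambda\le 66|Q|+13$, and in your case $|Q|>6d$ your bound $(6d+1)\bigl(\Lambda/|Q|+2\cdot 10d^2\bigr)$ evaluates to at least $7(66+20)=602>547=547\cdot d^3$, so the claimed inequality is not established.

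The culprit is the per-interval estimate ``meets at most $L/|Q|+2$ blocks''. Replace it by the density form: any $|Q|$ consecutive positions contain at most $6d+1$ synchronized ones, so an interval of length $L$ contains at most $(6d+1)\lceil L/|Q|\rceil\le(6d+1)(L/|Q|+1)$ synchronized positions. This is in effect what the paper's single per-interval formula $7d(|I|-7d+|Q|)/|Q|$ achieves; with the ``$+1$'' version your $d=1$ tally drops to roughly $538$ and the general tally stays below $547d^3$, while your case $|Q|\le 6d$ can stay as it is. A second, more minor slip: $\Lambda\le\Oh(d^2|Q|)$ is not correct as stated, because the $\ell^P\ell^T(2k-1)$ contribution can be of order $d^3$, which is not $\Oh(d^2|Q|)$ when $|Q|$ is small; this does not break either of your cases (it is absorbed when $|Q|\le 6d$ and contributes only $\Oh(d^2)$ to $\Lambda/|Q|$ when $|Q|>6d$), but it must be carried explicitly in the bookkeeping you are relying on.
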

    \begin{claimproof}
        First, we have the interval $\fragment{n-m-k}{n-m+k}$.
        Observe that each pair $i\in \fragment{1}{\ell^P}$ and $j\in \fragment{1}{\ell^T}$
        yields to marking positions $p\in \fragmentoo{\ell_j^T-r_i^P-k}{r_j^T-\ell_i^P+k}$.
        Consequently, marked positions can be decomposed into $1+\ell^P\cdot \ell^T$ integer intervals.
        Due to $\ell^P \le \edl{P}{Q}+2\le d+2 \le 3d$ and $\ell^T \le \edl{T}{Q}\le 3d$,
        the number of~intervals is at most $10d^2$.

        The interval of~positions marked due to $i$ and $j$ is of~length $2k+|L_i^P|+|L_j^T|-1\le d+|L_i^P|+|L_j^T|-1$.
        Out of~any $|Q|$ consecutive positions at most $6d+1\le 7d$ are synchronized.
        Hence, the number of~synchronized positions in any such interval $I$ is at most $7d(|I|-7d+|Q|)/|Q|$.
        Consequently, the total number of~synchronized marked positions does not exceed $2k+1\leq d^3$ (for $\fragment{n-m-k}{n-m+k}$) plus
        \begin{align*}
         \sum_{i=1}^{\ell^P}\sum_{j=1}^{\ell^T}7d\frac{|L_i^P|+|L_j^T|-7d+|Q|}{|Q|}
         &
        \le \frac{7d\ell^T }{|Q|}\sum_{i=1}^{\ell^P}|L_i^P|+\frac{7d\ell^P}{|Q|}\sum_{i=1}^{\ell^T}|L_i^T|+
        \frac{7d\ell^P\ell^T(|Q|-7d)}{|Q|} \\
        &\le \frac{21d^2}{|Q|}\left(\sum_{i=1}^{\ell^P}|L_i^P|+\sum_{i=1}^{\ell^T}|L_i^T|+3d(|Q|-7d)\right)  \\
        &\le \frac{21d^2}{|Q|}\left((5|Q|+1)d+2(k+1)|Q|+(5|Q|+1)3d + 3d(|Q|-7d)\right)  \\
        &\le \frac{21d^2}{|Q|}\left((5|Q|+1)d+3d|Q|+(5|Q|+1)3d + 3d(|Q|-7d)\right)  \\
        &\le \frac{21d^2}{|Q|}\left(26d|Q|+4d-21d^2\right)  \\
        & \le 21\cdot 26 d^3 = 546d^3.
        \end{align*}
        This completes the proof.
    \end{claimproof}

Next, we characterize unmarked positions $p\in \OccE_k(P,T)$.
They can be decomposed into at most $10d^2$ integer intervals by~\cref{cl:sync_marked} and by the fact that $p<n-m-k$.
Consider any such interval $I$.

\begin{claim}\label{cl:progr_un}
For any $p,p' \in I$ such that $p \equiv p' \pmod{|Q|}$ we have $p \in \OccE_k(P,T)$ if and only if $p' \in \OccE_k(P,T)$.
In particular, $I \cap \OccE_k(P,T)$ can be decomposed into at most $6d+1$ arithmetic progressions with difference $|Q|$.
\end{claim}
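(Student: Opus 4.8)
The plan is to deduce the ``in particular'' statement from the stated equivalence and then to prove the equivalence by a shifting argument. For the first reduction: the equivalence says that membership in $\OccE_k(P,T)$ is constant on each set (residue class modulo $|Q|$)~$\cap\,I$; each such set is an arithmetic progression with difference $|Q|$, and by \cref{lem:Eaux}~\eqref{it:Emult} only the at most $6d+1$ residues $\rho$ with $\rho\le 3d$ or $\rho\ge|Q|-3d$ can contain an occurrence. For the equivalence itself, since $I$ is an interval it suffices to prove the one‑step statement ``$q\in\OccE_k(P,T)$ and $\{q,q+|Q|\}\subseteq I$ imply $q+|Q|\in\OccE_k(P,T)$'' together with its left‑shift analogue; both follow the same scheme and I describe the right shift.

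The first ingredient is that unmarkedness forces periodic structure near any occurrence. Fix the decomposition $P=L^P_1G^P_1L^P_2\cdots G^P_{\ell^P-1}L^P_{\ell^P}$ induced by the locked fragments of \cref{lem:klocked} (so $\ell^P_1=0$, $r^P_{\ell^P}=m$, and $L^P_1$ is $k$‑locked); tracing the construction in \cref{lem:locked,lem:klocked} shows that each gap $G^P_i$ is a \emph{power} of $Q$ (interior gap fragments are built only by merging copies of $Q$). Fix an optimal alignment of $T$ with a substring of $Q^\infty$ realizing $\edl{T}{Q}\le 3d$, so that $T$ splits into its locked fragments $L^T_j$ (which carry all errors) and intervening gaps that are fragments of $Q^\infty$. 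Since every position of $[q,q+|Q|]$ is unmarked, for each $i$ the interval $J_i:=\fragmentco{q+\ell^P_i-k}{q+|Q|+r^P_i+k}=\bigcup_{q''\in[q,q+|Q|]}\fragmentco{q''+\ell^P_i-k}{q''+r^P_i+k}$ is disjoint from every $L^T_j$, hence lies inside a single gap of $T$; in particular $T$ has period $|Q|$ throughout $J_i$, so $T\fragmentco{q}{q+|Q|}$ is a rotation of $Q$, and any fragment $T\fragmentco{s}{t}$ with $\fragmentco{s}{t+|Q|}\subseteq J_i$ satisfies $T\fragmentco{s+|Q|}{t+|Q|}=T\fragmentco{s}{t}$.

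Now take an optimal alignment $\mathcal{A}$ of $P$ with $T\fragmentco{q}{r}$ (at most $k$ edits), write $L^P_i\mapsto T\fragmentco{\ell'_i}{r'_i}$ and $G^P_i\mapsto T\fragmentco{r'_i}{\ell'_{i+1}}$, and note $\fragmentco{\ell'_i}{r'_i}\subseteq\fragmentco{q+\ell^P_i-k}{q+r^P_i+k}\subseteq J_i$, so the ``extra'' errors of $\mathcal{A}$ (the $L^T_j$) all lie under the gaps of $P$. I will build an alignment of $P$ with $T\fragmentco{q+|Q|}{r+|Q|}$ by shifting every image by $|Q|$: each $L^P_i$ is re‑aligned with $T\fragmentco{\ell'_i+|Q|}{r'_i+|Q|}$, which by periodicity of $J_i$ is \emph{the same string} $T\fragmentco{\ell'_i}{r'_i}$, so its edit count is unchanged; the concatenated images form the contiguous fragment $T\fragmentco{q+|Q|}{r+|Q|}$ of the correct length; and it remains to bound, for each gap, $\ed(G^P_i,T\fragmentco{r'_i+|Q|}{\ell'_{i+1}+|Q|})$ by the old value $\ed(G^P_i,T\fragmentco{r'_i}{\ell'_{i+1}})$. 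Here one uses that $G^P_i$ is a power of $Q$ and that its new and old targets are related by rotating a single copy of $Q$ through them — a relation $\bar Q_i\cdot(\text{new})=(\text{old})\cdot\hat Q_i$ with $\bar Q_i,\hat Q_i$ rotations of $Q$ read off from the periodic boundaries in $J_i$ and $J_{i+1}$ — and, crucially, the $k$‑lockedness of $L^P_1$ forces the first such rotation to be aligned to a $Q$‑boundary, which is what allows the rotated copy of $Q$ to ``pass through'' the entire occurrence without accumulating cost. Finally, marking the positions of $\fragment{n-m-k}{n-m+k}$ forces any interval $I$ that meets $\OccE_k(P,T)$ to lie entirely in $\fragmentco{0}{n-m-k}$, which guarantees $r+|Q|\le n$ so that the shifted window fits inside $T$; the left‑shift direction is symmetric. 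The hard part is exactly this last reconciliation — transferring $\mathcal{A}$ across the $|Q|$‑shift through the alternation of periodic gaps and non‑periodic locked fragments, using the $k$‑lockedness of $L^P_1$ and the power‑of‑$Q$ shape of the gaps — and handling the borderline cases cleanly (a gap equal to the trivial power $Q^0=\varepsilon$, or a short or empty suffix $L^P_{\ell^P}$) is where the argument needs the most care.
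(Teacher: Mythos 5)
Your reduction to a one-step shift, the observation that the interior gaps of $P$ are exact powers of $Q$, and the fact that unmarkedness of $\fragment{q}{q+|Q|}$ makes each $J_i$ fall inside a single gap of $T$ (so the images of the locked fragments of $P$ are unchanged as strings after the shift) are all fine. The load-bearing step, however, is false as stated: it is \emph{not} true that for each gap $\ed\bigl(G^P_i,\,T\fragmentco{r'_i+|Q|}{\ell'_{i+1}+|Q|}\bigr)\le \ed\bigl(G^P_i,\,T\fragmentco{r'_i}{\ell'_{i+1}}\bigr)$. The shifted target drops a rotation of $Q$ at the front and appends a rotation of $Q$ at the back, and these two rotations may have different phases because the locked fragments of $T$ sitting inside the gap's target shift the phase of the periodicity. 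Concretely, take $Q=\mathtt{ab}$ and a gap $G^P_i=Q=\mathtt{ab}$ whose old target is $\mathtt{abc}$, where $\mathtt{ab}$ lies in the left periodic zone at a $Q$-boundary, $\mathtt{c}$ is a locked fragment of $T$, and the periodic zone following $\mathtt{c}$ reads $\mathtt{baba}\cdots$; then the old cost is $\ed(\mathtt{ab},\mathtt{abc})=1$, while the shifted target is $\mathtt{cba}$ and $\ed(\mathtt{ab},\mathtt{cba})=2$. So the per-gap cost can strictly increase, and since the increase in one gap need not be compensated elsewhere, your construction can produce an alignment of cost exceeding $k$ even though the old one had cost at most $k$. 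Your appeal to the $k$-lockedness of $L^P_1$ cannot repair this: in your scheme you keep the \emph{original} image of $L^P_1$ (shifted by $|Q|$), whose right endpoint need not sit at a $Q$-boundary, whereas $k$-lockedness is only a statement that \emph{some} re-alignment of $L^P_1$ against the periodic text ends at a $Q$-boundary — using it means changing the image of $L^P_1$, after which all downstream gap endpoints move and you are no longer merely shifting the old alignment. Moreover, $k$-lockedness concerns only the first junction and is of no help for the interior gaps where the counterexample lives.

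The paper's proof avoids exactly this trap by discarding the original alignment rather than shifting it. It first lower-bounds the old cost by the sum of intrinsic locked costs, $\ed(P,T\fragmentco{p}{t})\ge \ed(L_1^P,\rot^r(Q)^*)+\sum_{i\ge 2}\edl{L^P_i}{Q}+\sum_{j}\edl{L^T_j}{Q}$ (inequality~\eqref{ineq:progr}), and then builds a fresh greedy alignment at $p'$ that \emph{achieves} this lower bound: $k$-lockedness of $L_1^P$ is used to choose its new image $T\fragmentco{p'}{p'+b}$ ending at a $Q$-boundary, and from there copies of $Q$ in $P$ are matched to copies of $Q$ in $T$, each subsequent locked fragment $L$ (of $P$ or of $T$) is aligned against an exact power of $Q$ via $\edl{L}{Q}=\ed(L,Q^{\alpha})$, with unmarkedness guaranteeing enough periodic room (the condition $\bigl|\alpha|Q|-|L|\bigr|\le k-k'$). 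Because the new alignment's cost equals the lower bound, it never exceeds the old cost, with no per-gap comparison needed. To fix your write-up you would have to replace the per-gap monotonicity claim by this global re-alignment (or by some other accounting that lets phase debts incurred at one junction be paid off at another), which is essentially the paper's argument.
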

\begin{claimproof}
By our marking scheme, for any $i,j$, for any pair $x \in \fragmentco{\ell_i^P}{r_i^P}$
and $y \in \fragmentco{\ell_j^P}{r_j^P}$, we see that $|(p+x)-y|>k$.
Consider an unmarked position $p \in \OccE_k(P,T) \cap I$ and fix any alignment of~some prefix $T\fragmentco{p}{t}$ of~$T\fragmentco{p}{n}$ with $P$ with at most $k$ errors.
Then, for all $i$, the fragment $T_i$ of~$T$ aligned with $L^P_i$ is disjoint from all locked fragments of~$T$ and hence is a substring of~$Q^\infty$.
Now, recall that $L_1^P$ is a prefix of~$P$ and $L_{\ell^P}^P$ is a suffix of~$P$.
Hence, the locked fragments of~$T$ that are considered are exactly those $L^T_j$s such that $p+k < \ell_j < r_j < p+m-k$; say that this holds for $j \in \fragment{j_1}{j_2}$.
For all $j \in \fragment{j_1}{j_2}$, the fragment $P_j$ of~$P$ aligned with $L^T_j$ is disjoint from all locked fragments of~$P$ and is a substring of~$Q^\infty$.
In addition, since $I=\fragment{i_1}{i_2}$ is an interval of~unmarked positions, $T\fragmentco{i_1}{i_2+|L_1^P|+k}$ is disjoint from all locked fragments of~$T$ and hence is equal to a substring of~$Q^\infty$.
Thus, for some $r$ we see that:
    \begin{equation}\label{ineq:progr}
    \begin{split}
\ed(P,T\fragmentco{p}{t}) & \geq \ed(L_1^P,\rot^r(Q)^*)+\sum_{i=2}^{\ell^P}\ed(L^P_i,Q_i)+\sum_{j=j_1}^{j_2}\ed(\ell^T_j,P_j)\\
& \geq \ed(L_1^P,\rot^r(Q)^*)+\sum_{i=2}^{\ell^P}\edl{L^P_i}{Q}+\sum_{j=j_1}^{j_2}\edl{\ell^T_j}{Q}.
    \end{split}
    \end{equation}
Note that since $L_1^P$ is $k$-locked and $\ed(L_1^P,\rot^r(Q)^*)\leq k$ there exists a $j$ such that $\ed(L_1^P,\rot^r(Q)^*)=\ed(L_1^P,Q^\infty\fragmentco{|Q|-r}{j|Q|})$.
Let $b=|Q^\infty\fragmentco{|Q|-r}{j|Q|}|$.

Consider any position $p' \in I$ with $p' \equiv p \pmod{|Q|}$.
$T\fragmentco{p'}{p'+b}=T\fragmentco{p}{p+b}$ since both fragments lie in $T\fragment{i_1}{i_2+|L_1^P|+k}$ and start a multiple of~$|Q|$ positions apart.
In addition, we have $P\fragmentco{|L_1^P|}{m}=Q^{\alpha_1}L_2^P Q^{\alpha_2} \cdots Q^{\alpha_{\ell^P-1}} L^P_{\ell^P}$ for some non-negative integers $\alpha_i$ and that
$T\fragmentco{p'+b}{p'+m+k}=Q^{\beta_{j_1-1}}L_{j_1}^T Q^{j_1} \cdots L^T_{j_2} Q^{\beta{j_2}}Q'$ for some non-negative integers $\beta_j$ and a prefix $Q'$ of~$Q$.
(Note that $p'+m+k<n$ since $p'<n-m-k$.)
We claim that there exists a $t'$ such that:
\begin{align*}
 \ed(P,T\fragmentco{p'}{t'}) & \leq
\ed(L_1^P,T\fragmentco{p'}{p'+b})+\ed(P\fragmentco{|L_1^P|}{m},T\fragmentco{p'+b}{t'}) \\
& = \ed(L_1^P,\rot^r(Q)^*)+ \sum_{i=2}^{\ell^P}\edl{L^P_i}{Q}+\sum_{j=j_1}^{j_2}\edl{\ell^T_j}{Q} \leq \ed(P,T\fragmentco{p}{t}).
\end{align*}

In order to prove this, let us consider the following greedy alignment of~$P\fragmentco{|L_1^P|}{m}$ and $T\fragmentco{p'+b}{t'}$.
We start at the leftmost position in both strings.
We will maintain the invariant that the remainder of~each string starts with either $Q$ or a locked fragment,
except possibly for the case that the remainder of~$P$ is~$L_{\ell^P}^P$, in which case the remainder of~$T$ can be $Q'$.
While we have not reached the end of~$P\fragmentco{|L_1^P|}{m}$ we repeat the following procedure.
If both strings have a prefix equal to $Q$, we align those prefixes exactly.
Else, the prefix of~one of~the strings is a locked fragment $L$.
Let us first assume that $L \neq L_{\ell^P}^P$.
Then, since~$p'$ is unmarked, $Q^\infty\fragmentco{0}{|L|+k-k'}$ is a prefix of~the other
string, where $k'$ is the number of~edits already performed by our greedy alignment.
Since $L$ is locked, $\edl{L}{Q}=\ed(L,Q{^\alpha})$ for some integer $\alpha$.
We have $|\alpha|Q|- |L|| \leq k-k'$ due to the fact that otherwise~\eqref{ineq:progr} would imply that $\ed(P,T\fragmentco{p}{t})>k$, a contradiction.
Hence, $Q^{\alpha}$ is a prefix of~$Q^\infty\fragmentco{0}{|L|+k-k'}$; we optimally align those two fragments.
If $L = L_{\ell^P}^P$, an analogous argument shows that there exists a prefix $Q''$ of~the remainder of~$T\fragmentco{p'+b}{t'}$ such that $\edl{L}{Q}=\ed(L,Q'')$.
Upon termination of~this greedy alignment, the equality in the above equation holds.

We have thus proved that if $p \in I \cap \OccE_k(P,T)$ then all $p' \in I$ such that $p' \equiv p \pmod{|Q|}$ are also in $\OccE_k(P,T)$.
Thus, for any fixed $0 \leq j <|Q|$, for $U=\{i\cdot |Q|+j \in I \mid i \in \mathbb{Z}\}$ either $U \subseteq \OccE_k(P,T)$ or $U \cap \OccE_k(P,T) = \emptyset$.
By Claim~\eqref{it:Emult}, we can restrict our attention to synchronized positions.
We can thus decompose $I \cap \OccE_k(P,T)$ into at most $6d+1$ arithmetic progressions.
\end{claimproof}

Combining~\cref{cl:sync_marked,cl:progr_un} we conclude that $\OccE_k(P,T)$ can be decomposed into at most $547d^3+10d^2 (6d+1)\leq 617d^3$ arithmetic progressions with difference $|Q|$, thus completing the proof.
\end{proof}

\begin{corollary}[Compare~\cref{cor:aux}]\label{cor:Eaux}
    Let $P$ denote a pattern of~length $m$, let $T$ denote a string of~length $n$,
    and let $k\le m$ denote a non-negative integer.
    Suppose that there are a positive integer $d\ge 2k$ and a primitive string $Q$
    with $|Q|\le m/8d$ and $\edl{P}{Q}=d$.
    Then $|\floor{\OccE_k(P,T)/d}|\le \varphiv \cdot n/m \cdot d$.
\end{corollary}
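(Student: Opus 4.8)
The plan is to follow the proof of \cref{cor:aux}, reducing to the one-text characterisation \cref{lem:Eaux} applied blockwise. One preliminary normalisation is needed: \cref{lem:Eaux} insists on $\ed(P,Q^*)=\edl{P}{Q}$, while the corollary only provides $\edl{P}{Q}=d$. To fix this I would pick integers $x\le y$ with $0\le x<|Q|$ and $\ed(P,Q^\infty\fragmentco{x}{y})=\edl{P}{Q}$, and replace $Q$ by the rotation $\bar Q:=\rot^{(-x)\bmod|Q|}(Q)$, so that $Q^\infty\fragmentco{x}{y}=\bar Q^\infty\fragmentco{0}{y-x}$. Then $\bar Q$ is still primitive, $|\bar Q|=|Q|\le m/8d$, and since $Q^\infty$ and $\bar Q^\infty$ have the same set of factors we get $\edl{P}{\bar Q}=\edl{P}{Q}=d$ and $\ed(P,\bar Q^*)\le\ed(P,\bar Q^\infty\fragmentco{0}{y-x})=d$; combined with $\ed(P,\bar Q^*)\ge\edl{P}{\bar Q}$ this yields $\ed(P,\bar Q^*)=\edl{P}{\bar Q}=d$. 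After renaming $\bar Q$ to $Q$ I may assume $\ed(P,Q^*)=\edl{P}{Q}=d$. Note also that the hypotheses force $1\le|Q|\le m/8d\le m/16k$, so $k\le m/16$ and $m\ge 8$; in particular $k<m$.

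Next, partition $T$ into $\floor{2n/m}$ overlapping blocks $T_i:=T\fragmentco{\floor{im/2}}{\min\{n,\floor{(i+3)m/2}+k-1\}}$ for $0\le i<\floor{2n/m}$ — these are exactly the blocks of \cref{cor:aux} with the right endpoint shifted by $k$, and each has length less than $\threehalfs m+k$. Every $k$-error occurrence $p$ of $P$ in $T$ comes with a witnessing fragment $T\fragmentco{p}{j}$ where $p+m-k\le j\le\min\{n,p+m+k\}$; choosing $i$ to be the largest index with $\floor{im/2}\le p$ and using $\floor{(i+3)m/2}=\floor{(i+1)m/2}+m\ge p+1+m$, one checks $T\fragmentco{p}{j}\subseteq T_i$ exactly as in the proof of \cref{cor:aux}, while $p\le n-m+k$ and $k\le m/16$ give $i<\floor{2n/m}$. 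For each block with $\OccE_k(P,T_i)\ne\emptyset$, let $T'_i:=T_i\fragmentco{\ell_i}{r_i}$ be the shortest subfragment of $T_i$ containing all $k$-error occurrences of $P$ in $T_i$ and their witnessing fragments, i.e. $\ell_i:=\min\OccE_k(P,T_i)$ and $r_i:=\max\{\,j:\ed(P,T_i\fragmentco{p}{j})\le k\text{ for some }p\in\OccE_k(P,T_i)\,\}$. Then the $k$-error occurrences of $P$ in $T'_i$ include a prefix of $T'_i$ (the one at $\ell_i$) and a suffix of $T'_i$ (the one witnessed by $r_i$), we have $|T'_i|\le|T_i|<\threehalfs m+k$, and $d\ge 2k$, $|Q|\le m/8d$, $\ed(P,Q^*)=\edl{P}{Q}=d$ still hold; so \cref{lem:Eaux}~\eqref{it:Efew} gives $|\floor{\OccE_k(P,T'_i)/d}|\le\pvarphiv d$.

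To finish, set $s_i:=\floor{im/2}+\ell_i$; the containment argument above shows $\OccE_k(P,T)\subseteq\bigcup_i\bigl(\OccE_k(P,T'_i)+s_i\bigr)$. For any finite $X\subseteq\mathbb{Z}$ and any $s\in\mathbb{Z}$, reducing $s$ modulo $d$ gives $\floor{(X+s)/d}\subseteq\floor{X/d}\cup\bigl(\floor{X/d}+1\bigr)$, hence $|\floor{(X+s)/d}|\le 2\,|\floor{X/d}|$. Combining,
\[
 |\floor{\OccE_k(P,T)/d}|\;\le\;\sum_{i=0}^{\floor{2n/m}-1}2\pvarphiv d\;\le\;\floor{2n/m}\cdot 2\pvarphiv d\;\le\;4\pvarphiv\cdot\tfrac{n}{m}\cdot d\;=\;\varphiv\cdot\tfrac{n}{m}\cdot d,
\]
using $\varphiv=4\pvarphiv$. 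The only non-routine ingredients are the rotation trick (forced by the mismatch between the corollary's hypothesis and that of \cref{lem:Eaux}) and the unavoidable factor-$2$ loss in the ``floor of a shifted set'' estimate — it is precisely this factor that turns $2\pvarphiv$ into $4\pvarphiv=\varphiv$; the block-boundary off-by-ones are handled verbatim as in \cref{cor:aux}.
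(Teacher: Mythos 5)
Your proof is correct and takes essentially the same route as the paper's: the same overlapping blocks $T_i$, the same passage to the shortest relevant fragment $T'_i$ so that \cref{lem:Eaux}~\eqref{it:Efew} applies and yields $\pvarphiv d$ per block, and the same factor-$2$ shift argument giving $\floor{2n/m}\cdot 2\pvarphiv d\le \varphiv\cdot n/m\cdot d$. Your additional rotation step to guarantee $\ed(P,Q^*)=\edl{P}{Q}=d$, and your explicit choice of $r_i$ so that a $k$-error occurrence ends at the end of $T'_i$, are sound refinements of details the paper leaves implicit.
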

\begin{proof}
    Partition the string $T$ into $\floor{2n/m}$ blocks $T_0, \dots, T_{\floor{2n/m}-1}$ of~length
    at most $\threehalfs m+k-1$ each, where the $i$th block starts at position
    $i\cdot m/2$, that is, $T_i := T\fragmentco{\floor{i\cdot {m}/2}}
        {\min\{n, \floor{(i+3)\cdot {m}/2} + k - 1\}}$.
    Observe that each $k$-error occurrence of~$P$ in $T$ is contained in at least one of~the
    fragments $T_i$:
    Specifically, $T_i$ covers all  the occurrences starting in $\fragmentco{\floor{i\cdot
    m/2}}{\floor{(i+1)\cdot m/2}}$.
    If $\OccE_k(P,T_i)\ne \emptyset$, we define $T'_i:=T\fragmentco{t'_i}{t'_i+|T'_i|}$ to
    be the shortest fragment
    of~$T_i$ containing all $k$-error occurrences of~$P$ in~$T_i$.
    As a result, $T'_i$ satisfies the assumptions of~\cref{lem:Eaux}, so
    $|\floor{\OccE_k(P,T_i')}/k|\le \pvarphiv d$.
    Each block $\fragmentco{j'k}{(j'+1)k}$ of~positions in $T_i$ corresponds to a
    block $\fragmentco{t'_i+j'k}{t'_i+(j'+1)k}$ of~positions in $T$,
    which intersects at most two blocks of~the form $\fragmentco{jk}{(j+1)k}$.
    In total, we conclude that $|\floor{\OccEx(P,T_i)}/k|\le
     \floor{2n/m} \cdot 2 \cdot \pvarphiv d \le \varphiv \cdot n/m \cdot d$.
\end{proof}

\subsection{Bounding the Number of~Occurrences in the Non-Periodic Case}

\begin{lemma}[Compare~\cref{prp:I}]\label{prp:EI}
    Let $P$ denote a string of~length $m$ and let $k \le m$ denote a positive integer.
    Then, at least one of~the following holds:
    \begin{enumerate}[(a)]
        \item The string $P$ contains $2k$ disjoint \emph{breaks} $B_1,\ldots, B_{2k}$
            each having period $\per(B_i)> m/\alphav k$ and length $|B_i| = \lfloor
            m/\betav k\rfloor$.
        \item The string $P$ contains disjoint \emph{repetitive regions} $R_1,\ldots, R_{r}$
            of~total length $\sum_{i=1}^r |R_i| \ge \deltavN/\deltavD \cdot m$ such
            that each region $R_i$ satisfies
            $|R_i| \ge m/\betav k$ and has a primitive \emph{approximate period} $Q_i$
            with $|Q_i| \le m/\alphav k$ and $\edl{R_i}{Q_i} = \ceil{\betav k/m\cdot |R_i|}$.
        \item The string $P$ has a primitive \emph{approximate period} $Q$ with $|Q|\le
            m/\alphav k$ and
            $\edl{P}{Q} < \betav k$.
    \end{enumerate}
\end{lemma}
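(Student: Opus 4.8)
The plan is to replay the constructive argument behind \cref{prp:I} (i.e.\ \cref{alg:P1}), with the Hamming distance and the mismatch count everywhere replaced by the edit distance and the quantity $\edl{\cdot}{Q}$. The only facts about $\edl{\cdot}{Q}$ this substitution needs are: (i) $\edl{S'}{Q}=0$ whenever $S'$ is a prefix of $Q^\infty$; and (ii) $\edl{S}{Q}\le\edl{Sc}{Q}\le\edl{S}{Q}+1$ as well as $\edl{S}{Q}\le\edl{cS}{Q}\le\edl{S}{Q}+1$ for every character $c$, where the upper bounds follow by extending an optimal alignment of the shorter string with a substring of $Q^\infty$, and the lower bounds by restricting one. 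A pleasant side effect of working with the ``two-sided'' quantity $\edl{\cdot}{Q}$ is that \emph{no rotations of $Q$ are needed}: the set of substrings of $Q^\infty$ is already closed under rotating $Q$, so the leftward extension below and the finally reported period all use the very same primitive $Q$, and the ``phase'' bookkeeping of the Hamming proof evaporates.

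First I would run the same left-to-right sweep as in \cref{alg:P1}: maintaining an index $j$ with $P\fragmentco{0}{j}$ already partitioned into breaks and repetitive regions, repeatedly take $P'=P\fragmentco{j}{j+\floor{m/\betav k}}$ (assuming, as we may, that $\betav k\le m$, so that $P'$ is non-empty). If $\per(P')>m/\alphav k$, append $P'$ to the set of breaks and return it once it has size $2k$ — alternative~(a). Otherwise $Q:=P'\fragmentco{0}{\per(P')}$ is primitive, has $|Q|=\per(P')\le m/\alphav k$, and satisfies $\edl{P'}{Q}=0$ by~(i). Now extend $P'$ to a repetitive region by scanning $\rho=|P'|,|P'|+1,\dots,m-j$, tracking $\edl{P\fragmentco{j}{j+\rho}}{Q}$, and stopping at the first $\rho$ with $\edl{P\fragmentco{j}{j+\rho}}{Q}=\ceil{\betav k/m\cdot\rho}$; record $R:=P\fragmentco{j}{j+\rho}$ as a repetitive region with approximate period $Q$, advance $j$ by $|R|$, and return the regions once their total length reaches $\deltavN/\deltavD\cdot m$ — alternative~(b). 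That we hit the target exactly (never overshoot) is the standard defect induction: $\Delta_\rho:=\ceil{\betav k/m\cdot\rho}-\edl{P\fragmentco{j}{j+\rho}}{Q}$ equals $1$ at $\rho=|P'|$ (since $0<\betav k/m\cdot\floor{m/\betav k}\le1$), and by~(ii) together with monotonicity of $\rho\mapsto\ceil{\betav k/m\cdot\rho}$ it drops by at most $1$ per step, so the first non-positive value is $0$; also $|R|>\floor{m/\betav k}$ forces $|R|\ge m/\betav k$, as required.

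If this scan reaches $\rho=m-j$ without meeting the target, then $\edl{P\fragmentco{j}{m}}{Q}<\ceil{\betav k/m\cdot(m-j)}$, and I extend the suffix $P\fragmentco{j}{m}$ leftwards: scan $\rho=m-j,\dots,m$, tracking $\edl{P\fragmentco{m-\rho}{m}}{Q}$, and stop at the first $\rho$ with $\edl{P\fragmentco{m-\rho}{m}}{Q}=\ceil{\betav k/m\cdot\rho}$. If such $\rho$ exists, return the single repetitive region $R':=P\fragmentco{m-\rho}{m}$ with period $Q$ (dropping the earlier regions); the same defect induction — now started at $\rho=m-j$, where $\Delta'_{m-j}\ge1$ thanks to the rightward failure — again yields equality. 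To see that $R'$ is long enough to stand alone I would prove the analogue of \cref{clm:j}: while new fragments are still being considered, the breaks found have total length $<2k\floor{m/\betav k}\le\tfrac{2}{\betav}m$ and the repetitive regions have total length $<\deltavN/\deltavD\cdot m$, so $j<\ubjv m$ and hence $|R'|\ge m-j>\deltavN/\deltavD\cdot m\ge m/\betav k$. If instead the leftward scan reaches $\rho=m$ with no hit, the defect induction gives $\edl{P}{Q}<\ceil{\betav k}=\betav k$, so $Q$ is the approximate period of alternative~(c). Termination (and well-definedness of every $P'$) follows since $j<\ubjv m$ throughout while every non-terminal iteration advances $j$ by at least $\floor{m/\betav k}\ge1$; disjointness of breaks and repetitive regions is immediate from processing $P$ left to right.

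I do not expect a genuine obstacle here: the content beyond routine bookkeeping is exactly property~(ii) — the monotone, $1$-Lipschitz behaviour of $\edl{\cdot}{Q}$ under single-character extensions on either end — which drives the defect inductions in place of the ``each extension adds at most one mismatch'' step of the Hamming argument, together with the observation that the rotation machinery of \cref{alg:P1} is simply unnecessary because $\edl{\cdot}{Q}$ is invariant under rotating $Q$.
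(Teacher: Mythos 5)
Your proposal is correct and follows the paper's own proof essentially step for step: the same greedy left-to-right decomposition into breaks and repetitive regions, the same bound $j<\ubjv m$ on the processed prefix, the same defect induction on $\Delta_\rho=\ceil{\betav k/m\cdot\rho}-\edl{P\fragmentco{j}{j+\rho}}{Q}$ (your facts (i)--(ii) are exactly the one-character monotonicity/Lipschitz properties the paper uses implicitly), and the same observation that rotations of $Q$ become unnecessary because $\edl{\cdot}{Q}$ is rotation-invariant. The only deviation is your explicit assumption $\betav k\le m$ so that $P'$ is non-empty, a degenerate regime that the paper's argument also glosses over, so nothing essential differs.
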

\begin{algorithm}[t]
    $\mathcal{B} \gets \{\}; \mathcal{R} \gets \{\}$\;
    \While{\bf true}{
        Consider the fragment $P' = P\fragmentco{j}{j+\floor{m/\betav k}}$
        of~the next $\floor{m/\betav k}$ unprocessed characters of~$P$\;
        \If{$\per(P') > m/\alphav k$}{
            $\mathcal{B} \gets \mathcal{B} \cup \{P'\}$\;
            \lIf{$|\mathcal{B}| = 2k$}{\Return{breaks $\mathcal{B}$}}
            }\Else{
            $Q \gets P\fragmentco{j}{j+\per(P')}$\;
            Search for prefix $R$ of~$P\fragmentco{j}{m}$
            with $\colorbox{lipicsYellow!80}{$\edl{R}{Q}$} =
            \lceil\betav k/m\cdot |R|\rceil$ and $|R|>|P'|$\;\label{ln:efwd}
            \If{such $R$ exists}{
                $\mathcal{R} \gets \mathcal{R} \cup \{(R, Q)\}$\;
                \If{$\sum_{(R, Q) \in \mathcal{R}} |R|\ge \deltavN/\deltavD \cdot
                    m$}{
                    \Return{repetitive regions (and their corresponding
                    periods) $\mathcal{R}$}\;
                }
                }\Else{
                Search for suffix $R'$ of~$P$
                with $\colorbox{lipicsYellow!80}{$\edl{R'}{Q}$} = \lceil\betav
                k/m\cdot |R'|\rceil$ and $|R'|\ge m-j$\;\label{ln:ebcw}
                \lIf{such $R'$ exists}{%
                    \Return{repetitive region $(R',\colorbox{lipicsYellow!80}{$\!Q\!$})$}%
                    }\lElse{%
                    \Return{approximate period \colorbox{lipicsYellow!80}{$\!Q\!$}}%
                }
            }
        }
    }
    \caption{A constructive proof~of~\cref{prp:EI}. Changes to \cref{alg:P1}
    are highlighted.}\label{alg:E1}
\end{algorithm}

\begin{proof}
    We use essentially the same algorithm as in the proof~of~\cref{prp:I}: We replace all checks
    for a specific Hamming distance with the corresponding counterpart for the edit
    distance. Further, as we are only interested in (approximate) periods under an
    arbitrary rotation, we do not need to explicitly rotate the string $Q$ in the
    algorithm anymore. Consider \cref{alg:E1} for a visualization; the changes to
    \cref{alg:P1} are highlighted.

    In particular, we directly get an analogue of~\cref{clm:j}:
    \begin{claim}[See \cref{clm:j}]\label{clm:Ej}
        Whenever we consider a new fragment $P\fragmentco{j}{j + \floor{m/\betav k}}$ of
        $\floor{m/\betav k}$ unprocessed characters of~$P$, such a fragment
        starts at a position $j<\ubjv m$.\lipicsClaimEnd
    \end{claim}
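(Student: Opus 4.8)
The plan is to reproduce the argument for \cref{clm:j} almost verbatim: \cref{alg:E1} scans $P$ from left to right in exactly the same way as \cref{alg:P1}, the only changes being that the Hamming-distance tests are replaced by edit-distance tests and that $Q$ is no longer rotated explicitly. In particular, whenever a new fragment $P'=P\fragmentco{j}{j+\floor{m/\betav k}}$ is about to be considered, the prefix $P\fragmentco{0}{j}$ has already been partitioned into the breaks currently stored in $\mathcal{B}$ and the repetitive regions currently stored in $\mathcal{R}$, each break having length exactly $\floor{m/\betav k}$ and each repetitive region having arbitrary length.

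First I would bound the total length of the breaks. Since the algorithm returns as soon as $|\mathcal{B}|$ reaches $2k$, at most $2k-1$ breaks have been added so far, and so the breaks account for fewer than $2k\cdot\floor{m/\betav k}\le 2m/\betav$ characters of $P$. Next I would bound the total length of the repetitive regions. Since the algorithm returns as soon as $\sum_{(R,Q)\in\mathcal{R}}|R|$ reaches $\deltavN/\deltavD\cdot m$, the repetitive regions added so far account for fewer than $\deltavN/\deltavD\cdot m$ characters of $P$. Adding these two bounds gives $j<2m/\betav+\deltavN/\deltavD\cdot m=\ubjv m$, which is exactly the claimed inequality; with the constants fixed in this paper this reads $j<m/4+3m/8=5m/8$.

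Finally, as in the Hamming-distance case, I would note that this bound also certifies that the fragment $P'$ under consideration is well-defined: from $j<\ubjv m$ and $\floor{m/\betav k}\le m/8$ we get $j+\floor{m/\betav k}<\ubjv m+m/8<m$, so $P$ genuinely has $\floor{m/\betav k}$ unprocessed characters left. The argument is pure bookkeeping about how much of $P$ has been consumed; the only point that needs a word of care is that the algorithm never ``considers a new fragment'' after it has returned, so the sizes of $\mathcal{B}$ and $\mathcal{R}$ are strictly below their termination thresholds at every step where the claim is applied, and there is no genuine obstacle beyond checking this and the arithmetic with the fixed constants.
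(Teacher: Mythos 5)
Your proof is correct and matches the paper's approach exactly: the paper gives no separate argument for \cref{clm:Ej}, simply noting that the bookkeeping proof of \cref{clm:j} carries over verbatim (breaks contribute less than $2k\floor{m/\betav k}\le 2m/\betav$ characters, repetitive regions less than $\deltavN/\deltavD\cdot m$, hence $j<\ubjv m$), which is precisely what you do. Your closing remark that the bound also certifies $P'$ is well-defined is likewise the observation the paper makes immediately after the claim.
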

    Again, note that \cref{clm:Ej} also shows that whenever we consider a new fragment $P'$
    of~$\floor{m/\betav k}$  characters, there is indeed such a fragment, that is, $P'$
    is well-defined.

    Now consider the case when, for a fragment $P' = P\fragmentco{j}{j + \floor{m/\betav k}}$
    (that is not a break) and its corresponding period $Q = \fragmentco{j}{j + \per(P')}$,
    we fail to obtain a new repetitive region $R$.
    Recall that in this case, we search for a repetitive region
    $R'$ of~length $|R'|\ge m-j$ that is a suffix of~$P$ and has an approximate period $Q$.
    If we indeed find such a region $R'$, then $|R'|\ge m-j \ge m-\ubjv m =
    \deltavN/\deltavD \cdot m$ by \cref{clm:Ej}, so $R'$ is long enough to be reported on
    its own.
    However, if we fail to find such $R'$, we need to show that $Q$ can be
    reported as an approximate period of~$P$, that is, $\edl{P}{Q} < \betav k$.

    Similar to \cref{prp:I}, we first show that
    $\edl{P\fragmentco{j}{m}}{Q} < \ceil{\betav k/m \cdot (m-j)}$.
    For this, we inductively prove that the values $\Delta_{\rho}:=\ceil{\betav k/m \cdot
    \rho}-\edl{P\fragmentco{j}{j+\rho}}{Q}$ for $|P'|\le \rho \le m-j$ are all at least 1.
    In the base case of~$\rho=|P'|$, we have $\Delta_{\rho}=1-0$ because $Q$ is the string
    period of~$P'$.
    To carry out an inductive step, suppose that $\Delta_{\rho-1}\ge 1$ for some $|P'|<
    \rho \le m-j$.
    Notice that $\Delta_{\rho}\ge \Delta_{\rho-1}-1\ge 0$: the first term in the
    definition of~$\Delta_{\rho}$ has not decreased, and the second term
    $\edl{P\fragmentco{j}{j+\rho}}{Q}$ may have
    increased by at most $1$ compared to $\Delta_{\rho-1}$.
    Moreover, $\Delta_{\rho} \ne 0$ because $R=P\fragmentco{j}{j+\rho}$ could not be
    reported as a repetitive region. Since $\Delta_{\rho}$ is an integer, we conclude that
    $\Delta_{\rho}\ge 1$.
    This inductive reasoning ultimately shows that $\Delta_{m-j}>0$, that is,
    $\edl{P\fragmentco{j}{m}}{Q} < \ceil{\betav k/m \cdot (m-j)}$.

    A symmetric argument holds for values $\Delta'_{\rho}:= \ceil{\betav k/m \cdot
    \rho}-\edl{P\fragmentco{m-\rho}{m}}{Q}$ for $m-j\le \rho \le m$
    because no repetitive region $R'$ was found as an extension of~$P\fragmentco{j}{m}$ to
    the left. Note that in contrast to the proof~of~\cref{prp:I}, the rotation of~$Q$ is implicit.
    This completes the proof~that $\edl{P}{Q}< \betav k$,
    that is, $Q$ is an approximate period of~$P$.
\end{proof}

\begin{lemma}[Compare~\cref{lm:hdC}]\label{lm:EdC}
    Let $P$ denote a pattern of~length $m$, let
    $T$ denote a text of~length $n$, and let $k\le m$ denote a positive integer.
    Suppose that $P$ that contains $2k$
    disjoint breaks $B_1,\dots,B_{2k} \substr P$
    each satisfying $\per(B_i) \ge m / \alphav k$.
    Then, $|\floor{\OccE_k(P,T)/k}|\le \alphavdt \cdot n/m \cdot k$.
\end{lemma}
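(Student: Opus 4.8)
The plan is to carry the marking argument from the proof of~\cref{lm:hdC} over to the edit-distance setting. The one genuinely new phenomenon is that insertions and deletions can shift an exact occurrence of~a break by up to~$k$ positions relative to where it would sit under an exact match of~$P$; this is precisely the reason the statement bounds the number of~length-$k$ blocks that meet $\OccE_k(P,T)$, namely $|\floor{\OccE_k(P,T)/k}|$, rather than the number of~occurrences.

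Write $B_i = P\fragmentco{b_i}{b_i+|B_i|}$. First I would set up the marking: for every break $B_i$ and every $\tau\in\OccEx(B_i,T)$, mark every block index $j\ge 0$ such that $\fragmentco{jk}{(j+1)k}$ intersects the interval $\fragment{\tau-b_i-k}{\tau-b_i+k}$. Since this interval is a run of~$2k+1$ consecutive integers, it meets at most three blocks, so each pair $(B_i,\tau)$ contributes at most three marks. To bound the total number of~marks (the analogue of~\cref{cl:c1}), I would combine $\per(B_i)\le|B_i|$ (the length of~any string is a period of~it) with the hypothesis $\per(B_i)\ge m/\alphav k$: the elements of~$\OccEx(B_i,T)$ lie in~$\fragment{0}{n-|B_i|}$ and are pairwise at distance at least $\per(B_i)$, so $|\OccEx(B_i,T)|\le n/\per(B_i)\le \alphav k\cdot n/m$. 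Summing over the $2k$ breaks then gives at most $6\alphav\cdot n/m\cdot k^2$ marks in total.

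Next I would prove the analogue of~\cref{cl:c2}: every block that meets $\OccE_k(P,T)$ carries at least~$k$ marks. Fix $p\in\OccE_k(P,T)$, pick $t\ge p$ with $\ed(P,T\fragmentco{p}{t})\le k$, and fix an optimal alignment. As the breaks are pairwise disjoint in~$P$, at most~$k$ of~them fail to be matched exactly, so at least~$k$ breaks are matched exactly. If $B_i$ is matched exactly, it occurs in~$T$ at some position $\tau_i\in\OccEx(B_i,T)$, and the fragment of~$T$ aligned with the prefix $P\fragmentco{0}{b_i}$ has length within~$k$ of~$b_i$ (that length differs from~$b_i$ by the number of~insertions minus the number of~deletions the alignment performs on this prefix, and the total number of~edits is at most~$k$); hence $|\tau_i-(p+b_i)|\le k$ and $p\in\fragment{\tau_i-b_i-k}{\tau_i-b_i+k}$. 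Therefore the block of~index $\floor{p/k}$ was marked because of~the pair $(B_i,\tau_i)$, and since these $(\ge k)$ pairs involve distinct breaks, this block gets at least~$k$ marks. Combining the two bounds, $|\floor{\OccE_k(P,T)/k}|\le (6\alphav\cdot n/m\cdot k^2)/k=6\alphav\cdot n/m\cdot k\le\alphavdt\cdot n/m\cdot k$, as $6\alphav\le\alphavdt$.

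The step I expect to require the most care is the shift bound $|\tau_i-(p+b_i)|\le k$: it rests on the elementary observation that in any alignment realising at most~$k$ edits, the text fragment aligned with a prefix of~the pattern has length differing from that prefix's length by at most~$k$. Everything else — the count of~how many size-$k$ blocks an interval of~$2k+1$ integers can meet, and the summation over the $2k$ breaks — is routine bookkeeping that mirrors the proof of~\cref{lm:hdC} almost verbatim.
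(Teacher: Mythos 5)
Your proof is correct and follows essentially the same route as the paper's: the identical block-marking scheme (your condition that $\fragmentco{jk}{(j+1)k}$ meets $\fragment{\tau-b_i-k}{\tau-b_i+k}$ is exactly the paper's marking rule stated from the occurrence's viewpoint), the same two claims bounding total marks by $\Oh(n/m\cdot k^2)$ and showing every block meeting $\OccE_k(P,T)$ receives at least $k$ marks via the $|\tau_i-(p+b_i)|\le k$ shift argument for exactly matched breaks, and the same division at the end; only the constant per occurrence (3 vs.\ the paper's 4 blocks) differs, and your final bound $6\alphav\le\alphavdt$ still holds.
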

\begin{proof}
    The proof~proceeds similarly to the proof~of~\cref{lm:hdC}. The only major difference
    is that we obtain length-$k$ \emph{blocks} of~possible starting positions instead
    of~single starting positions. This is because the edit distance allows for deletions
    and insertions of~characters.

    Hence, we split $\mathbb{Z}$ into disjoint blocks
    of~the form $\fragmentco{jk}{(j+1)k}$ for $j\in \mathbb{Z}$.
    Now for every break $B_i = P\fragmentco{b_i}{b_i + |B_i|}$, we mark
    a block $\fragmentco{jk}{(j+1)k}$
    if \[\fragmentco{(j - 1)\cdot k + b_i}{(j + 2)\cdot k + b_i}
        \cap \OccEx(B_i, T) \ne \emptyset.\]

    Similarly to the proof~of~\cref{lm:hdC}, we proceed to show that we place at most $\Oh(n/m \cdot k^2)$
    marks and that every $k$-error occurrence starts in a block with at least $k$
    marks.
    \begin{claim}\label{cl:Ec1}
        The total number of~marks placed at blocks is at most $\alphavdt \cdot n/m \cdot k^2$.
    \end{claim}
    \begin{claimproof}
        Fix a break $B_i$. Notice that positions in $\OccEx(B_i,T)$ are at distance at least
        $\per(B_i)$ from each other. Furthermore, note that for every occurrence
        in $\OccEx(B_i, T)$ we mark at most $4$ blocks. Hence,
        for the break $B_i$, we place at most $\alphavt \cdot n/m \cdot k$ marks.
        In total, the number of~marks placed is thus at most
        $2k\cdot \alphavt n/m \cdot k = \alphavdt \cdot n/m \cdot k^2$.
    \end{claimproof}
    Next, we show that every $k$-error occurrence of~$P$ in $T$
    starts in a block  with at least $k$ marks.
    \begin{claim}\label{cl:Ec2}
        If $\fragmentco{jk}{(j+1)k}\cap \OccE_k(P,T) \ne \emptyset$, then $\fragmentco{jk}{(j+1)k}$ has at least $k$ marks.
    \end{claim}
    \begin{claimproof}
        Consider a $k$-error occurrence of~$P$ in $T$ starting at position $\ell\in \fragmentco{jk}{(j+1)k}$
        and fix an arbitrary optimal alignment of~it with $P$.
        Out of~the $2k$ breaks, at least $k$ breaks are matched exactly, as not matching
        a break exactly incurs at least one error.
        If a break $B_i$ is matched exactly, then for at least one $s \in
        \fragment{-k}{k}$, we have  $\ell + b_i + s \in \OccEx(B_i,T)$.
        Since $jk \le \ell < (j+1)k$, we conclude that
        $\fragmentco{(j - 1)\cdot k + b_i}{(j + 2)\cdot k + b_i}
        \cap \OccEx(B_i, T) \ne \emptyset$, that is, that the block
        $\fragmentco{jk}{(j+1)k}$ has been marked for $B_i$.
        In total, there are at least $k$ marks for the at least $k$ breaks matched exactly.
    \end{claimproof}
    By \cref{cl:Ec1,cl:Ec2}, the number of~blocks where $k$-error occurrences of~$P$ in $T$ may start
    is $|\floor{\OccE_k(P,T)/k}|
    \le (\alphavdt \cdot n/m \cdot k^2)/k = \alphavdt \cdot n/m \cdot k$.
\end{proof}

\begin{lemma}[Compare~\cref{lm:hdB}]\label{lm:EdB}
    Let $P$ denote a pattern of~length $m$, let $T$ denote a text of~length $n$,
    and let $k\le m$ denote a positive integer.
    Suppose that $P$ contains disjoint repetitive regions $R_1,\ldots, R_{r}$
    of~total length at least $\sum_{i=1}^r |R_i| \ge \deltavN/\deltavD\cdot m$
    such that each region $R_i$ satisfies $|R_i| \ge m/\betav k$ and has a
    primitive approximate period~$Q_i$
    with $|Q_i| \le m/\alphav k$ and $\edl{R_i}{Q_i} = \ceil{\betav k/m\cdot |R_i|}$.
    Then, $|\floor{\OccE_k(P,T)/k}|\le \varphig\cdot n/m \cdot k$.
\end{lemma}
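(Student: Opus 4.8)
The plan is to mirror the proof of~\cref{lm:hdB}, replacing Hamming distances by edit distances and, as in the step from~\cref{lm:hdC} to~\cref{lm:EdC}, marking \emph{blocks} of~$k$ consecutive positions rather than single positions, so that the shifts of~up to $k$ positions caused by insertions and deletions can be absorbed. Writing $R_i = P\fragmentco{r_i}{r_i+|R_i|}$, I would set $m_R := \sum_{i=1}^r |R_i| \ge \deltavN/\deltavD\cdot m$ and, for each $i$, put $k_i := \floor{\betavh k/m\cdot |R_i|}$ and $d_i := \ceil{\betav k/m\cdot |R_i|} = \edl{R_i}{Q_i}$. Just as in~\cref{lm:imphdB}, the hypotheses $|R_i| \ge m/\betav k$ and $|Q_i| \le m/\alphav k$ give $2k_i \le d_i \le \tbetav k/m\cdot |R_i|$ and $|Q_i| \le |R_i|/8d_i$, so~\cref{cor:Eaux} applies to~$R_i$ (as a pattern, with text $T$) at thresholds $k_i$ and $d_i$ and yields $|\floor{\OccE_{k_i}(R_i,T)/d_i}| \le \varphiv\cdot n/|R_i|\cdot d_i$, which is $\Oh(n/m\cdot k)$ because $d_i \le \tbetav k/m\cdot |R_i|$.

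Next I would define a weighted marking: split $\mathbb{Z}$ into blocks $\fragmentco{jk}{(j+1)k}$, and for every region $R_i$ put a single mark of~weight $|R_i|$ on a block $\fragmentco{jk}{(j+1)k}$ whenever $\OccE_{k_i}(R_i,T) \cap \fragmentco{(j-1)k+r_i}{(j+2)k+r_i} \ne \emptyset$. To bound the total weight, note that the elements of~$\OccE_{k_i}(R_i,T)$ lying inside a single length-$d_i$ block can cause only $\Oh(1+d_i/k)$ of~the length-$k$ blocks to be marked; together with the bound on $|\floor{\OccE_{k_i}(R_i,T)/d_i}|$ this shows that region $R_i$ contributes weight $\Oh(n/m\cdot k)\cdot |R_i|\cdot(1+d_i/k)$. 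Since $|R_i|\cdot d_i/k = \Oh(|R_i|^2/m)$ and $\sum_i |R_i|^2 \le m_R\cdot\max_i|R_i| \le m_R\cdot m$, the total weight placed is $\Oh(n/m\cdot k\cdot m_R)$; carrying the constants through---using the sharper estimate $d_i \le \betav k/m\cdot |R_i|+1$ for the factor $1+d_i/k$---yields a bound of~$\varphivbq\cdot n/m\cdot k\cdot m_R$.

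Then I would show that every block meeting $\OccE_k(P,T)$ carries weight at least $m_R - m/\betavh$. Fix $\ell \in \OccE_k(P,T)$, say $\ell \in \fragmentco{jk}{(j+1)k}$, and an optimal alignment of~$P$ with some $T\fragmentco{\ell}{\ell'}$ using at most $k$ edits. This alignment aligns each $R_i$ with a fragment $T\fragmentco{a_i}{b_i}$ using $k'_i$ edits, where $\sum_i k'_i \le k$ by disjointness of~the $R_i$, and where $|a_i - (\ell + r_i)| \le k$ since the prefix $P\fragmentco{0}{r_i}$ is aligned with at most $k$ edits. Hence, for every $i$ in $I := \{i : k'_i \le k_i\}$, the position $a_i$ lies in $\OccE_{k_i}(R_i,T) \cap \fragmentco{(j-1)k+r_i}{(j+2)k+r_i}$, so the block $\fragmentco{jk}{(j+1)k}$ received weight $|R_i|$ on account of~$R_i$, and its total weight is at least $\sum_{i\in I}|R_i| = m_R - \sum_{i\notin I}|R_i|$. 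For $i\notin I$ we have $k'_i \ge k_i+1 > \betavh k/m\cdot |R_i|$, so $|R_i| < m/(\betavh k)\cdot k'_i$ and thus $\sum_{i\notin I}|R_i| < m/(\betavh k)\cdot\sum_i k'_i \le m/\betavh$. Combining the two estimates gives $|\floor{\OccE_k(P,T)/k}| \le \varphivbq\cdot n/m\cdot k\cdot m_R/(m_R - m/\betavh)$; this expression is decreasing in $m_R$, so the assumption $m_R \ge \deltavN/\deltavD\cdot m$ gives the claimed bound $\varphig\cdot n/m\cdot k$.

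The step I expect to be the main obstacle is the middle one---turning the bound of~\cref{cor:Eaux}, which counts length-$d_i$ blocks of~$k_i$-error occurrences of~$R_i$, into a bound on the number of~length-$k$ candidate blocks for~$P$, while simultaneously absorbing the $\pm k$ alignment shift and keeping the accumulated constant below $\varphig$. Compared with the Hamming-distance argument, the extra ingredient that makes the marking scheme work is the observation used in the third paragraph that in any optimal alignment the fragment of~$T$ matched to $R_i$ starts within $k$ positions of~$\ell + r_i$; this is what allows a length-$k$ block of~$\OccE_k(P,T)$ to inherit the marks of~the appropriately shifted $k_i$-error occurrences of~each $R_i$.
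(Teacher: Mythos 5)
Your proposal is correct and follows essentially the same route as the paper: the identical weighted marking of length-$k$ blocks (weight $|R_i|$ whenever $\OccE_{k_i}(R_i,T)$ meets $\fragmentco{(j-1)k+r_i}{(j+2)k+r_i}$), the same application of \cref{cor:Eaux} with $d_i=\edl{R_i}{Q_i}$ and $k_i=\floor{\betavh k/m\cdot|R_i|}$, the same shift argument giving weight at least $m_R-m/\betavh$ on any block meeting $\OccE_k(P,T)$, and the same final ratio yielding $\varphig\cdot n/m\cdot k$. The only (cosmetic) divergence is in the constant bookkeeping for the total weight: the paper bounds the number of marked $k$-blocks per $d_i$-block directly by $\betavpf$ using $d_i+3k\le\betavpf k$, which your $\Oh(1+d_i/k)$ accounting also recovers once tracked with that estimate rather than via $\sum_i|R_i|^2\le m_R\cdot m$.
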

\begin{proof}
    Again, the proof~is similar to its Hamming distance counterpart; as before, a major
    difference is that we only obtain length-$k$ \emph{blocks} of~possible starting positions
    instead of~single starting positions.

    As in the proof~of~\cref{lm:EdC}, we split $\mathbb{Z}$ into disjoint
    blocks of~the form $\fragmentco{jk}{(j+1)k}$ for $j\in \mathbb{Z}$.
    Further, we set $m_R := \sum_r |R_i|$ and define $k_i := \floor{\betavh \cdot |R_i|/m \cdot k}$
    for every $1 \le i \le r$.

    For every repetitive region $R_i = P\fragmentco{r_i}{r_i + |R_i|}$,
    we place $|R_i|$ marks on block $\fragmentco{jk}{(j+1)k}$ if
    \[\fragmentco{(j-1)\cdot k + r_i}{(j + 2)\cdot k + r_i}
    \cap \OccE_{k_i}(R_i, T) \ne \emptyset.\]

    Similarly to the proof~\cref{lm:hdB}, we proceed to show that we placed at most $\Oh(n/m\cdot k
    \cdot m_R)$ marks and that every $k$-error occurrence of~$P$ in $T$ starts in a block
    with at last $m_R - m/\betavh$ marks.
    \begin{claim}\label{cl:Eb1}
        The total number of~marks placed is at most $\varphivbq \cdot n/m \cdot k\cdot m_R$,
    \end{claim}
    \begin{claimproof}
        We use \cref{cor:Eaux} to analyze $\OccE_{k_i}(R_i,T)$.
        For this, we set $d_i := \edl{R_i}{Q_i}=\ceil{\betav
        k/m\cdot |R_i|}$ and notice that $d_i  \le \tbetav k/m\cdot |R_i|$
        since $|R_i| \ge m/\betav k$.
        Moreover, $d_i \ge 2k_i$ and $|Q_i| \le m/\alphav k \le |R_i|/ \betav d_i$. Hence, the assumptions of~\cref{cor:Eaux} are
        satisfied, so $\floor{\OccE_{k_i}(R_i,T)/d_i} \le \varphiv \cdot n/|R_i| \cdot d_i
        \le \varphivb \cdot k \cdot n/m$.

        For a block $\fragmentco{j'd_i}{(j'+1)d_i}$ intersecting $\OccE_{k_i}(R_i,T)$,
        we mark a block $\fragmentco{j k}{(j+1)k}$ only if \[\fragmentco{(j-1)\cdot k + r_i}{(j + 2)\cdot k + r_i}
        \cap \fragmentco{j'd_i}{(j'+1)d_i}\ne \emptyset,\]
        which holds only if
        $jk \in \fragmentco{j'd_i - r_i - 2k}{(j'+1)d_i -r_i + k}$.
        The length of~the latter interval is $d_i+3k=\ceil{\betav
        k/m\cdot |R_i|}+3k \le  \betavpf k$,
        so the interval contains at most $\betavpf$ multiples of~$k$.
        Hence, the total number of~marks placed due to $R_i$ is bounded by
        $\betavpf \cdot \varphivb \cdot n/m \cdot k\cdot |R_i|$.
        Across all repetitive regions, this sums up to no more than $\varphivbq \cdot n/m \cdot k\cdot
        m_R$, yielding the claim.
    \end{claimproof}
    Next, we show that every $k$-error occurrence of~$P$ in $T$ starts in a block with many marks.
    \begin{claim}\label{cl:Eb2}
        If $\fragmentco{jk}{(j+1)k}\cap \OccE_k(P,T) \ne \emptyset$, then $\fragmentco{jk}{(j+1)k}$ has at least $m_R-m/\betavh$ marks.
    \end{claim}
    \begin{claimproof}
        Consider a $k$-error occurrence of~$P$ in $T$ starting at position $\ell\in \fragmentco{jk}{(j+1)k}$
        and fix an arbitrary optimal alignment of~it with $P$.
        For each repetitive region $R_i$, let $R'_i$ be the fragment of~$T$ aligned with $R_i$ in this alignment. Define $k'_i = \ed(R_i,R'_i)$ and observe that $R'_i=T\fragmentco{r'_i}{r'_i+|R'_i|}$ for some $r'_i \in \fragment{\ell+r_i-k}{\ell+r_i+k}\subseteq \fragmentco{(j-1)\cdot k + r_i}{(j + 2)\cdot k + r_i}$. Consequently,
        \[\fragmentco{(j-1)\cdot k + r_i}{(j + 2)\cdot k + r_i} \cap \OccE_{k'_i}(R_i,T)\ne \emptyset.\]
        Further, let $I := \{i \mid k'_i \le k_i\}=\{i \mid k'_i \le \betavh \cdot |R_i|/m \cdot k\}$ denote the set of~indices $i$ for which $R'_i$ is a $k_i$-error occurrence of~$R_i$.
        By construction, for each $i\in I$,
        we have placed $|R_i|$ marks at the block $\fragmentco{jk}{(j+1)k}$.

        Hence, the total number of~marks at the block $\fragmentco{jk}{(j+1)k}$ is at least
        $\sum_{i\in I}|R_i|= m_R -\sum_{i\notin I}|R_i|$.
        It remains to bound the term $\sum_{i\notin I}|R_i|$. Using the definition of~$I$,
        we obtain
        \[\sum_{i\notin I} |R_i| = \frac{m}{\betavh k} \cdot  \sum_{i\notin I} \left(\betavh \cdot |R_i|/m \cdot k\right)
            < \frac{m}{\betavh k} \cdot  \sum_{i\notin I} k'_i \le
                \frac{m}{\betavh k} \cdot  \sum_{i=1}^r k'_i \le \frac{m}{\betavh},\]
        where the last bound holds because, in total, all repetitive regions incur at most
        $\sum_{i=1}^r k'_i \le k$ errors (since the repetitive regions are disjoint).
        Hence, the number of~marks placed is at least $m_r-m/\betavh k$,
        completing the proof~of~the claim.
    \end{claimproof}
    In total, by \cref{cl:Eb1,cl:Eb2}, the number of~$k$-error occurrences of~$P$ in $T$
    is at most \[
        \OccE_k(P, T) \le  \frac{\varphivbq \cdot n/m \cdot k \cdot m_R}{m_R - m/\betavh}.
    \]
    As this bound is a decreasing function in $m_R$, the assumption $m_R \ge
    \deltavN/\deltavD\cdot m$ yields
    \[\OccE_k(P, T) \le \frac{\varphivbq \cdot n/m \cdot k \cdot \deltavN/\deltavD\cdot m}
        {\deltavN/\deltavD\cdot m - m/\betavh} = \varphig \cdot n/m \cdot k,
    \]completing the proof.
\end{proof}

\begin{lemma}[Compare~\cref{lm:hdA}]\label{lm:EdA}
    Let $P$ denote a pattern of~length $m$, let $T$ denote a text of~length~$n$,
    and let $k\le m$ denote a positive integer.
    If there is a primitive string $Q$
    of~length at most $|Q| \le m/\alphav k$ that satisfies $2k\le\edl{P}{Q}\le\betav k$,
    then $|\floor{\OccE_k(P,T)/k}|\le \gammaEp \cdot n/m \cdot k$.
\end{lemma}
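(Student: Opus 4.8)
The plan is to replay the proof of~\cref{lm:hdA} almost verbatim, with \cref{cor:aux} replaced by its edit-distance analogue \cref{cor:Eaux}, and then to pay a further constant factor in order to pass from the granularity $d:=\edl{P}{Q}$ delivered by \cref{cor:Eaux} to the granularity $k$ required in the statement. Concretely, I would set $d:=\edl{P}{Q}$. The hypothesis $2k\le\edl{P}{Q}\le\betav k$ then records $\edl{P}{Q}=d$ (exactly the equality \cref{cor:Eaux} needs), together with $d\ge 2k$ and $d\le\betav k$. Before invoking the corollary I would check the only remaining side condition, $|Q|\le m/8d$: since $d\le\betav k$ we have $8d\le 8\betav k$, and since $8\betav\le\alphav$ this gives $|Q|\le m/\alphav k\le m/(8\betav k)\le m/8d$. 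Thus \cref{cor:Eaux} applies with this $d$ and yields $|\floor{\OccE_k(P,T)/d}|\le\varphiv\cdot n/m\cdot d$; equivalently, $\OccE_k(P,T)$ is contained in the union of at most $\varphiv\cdot n/m\cdot d$ blocks of the form $\fragmentco{jd}{(j+1)d}$ with $j\in\mathbb{Z}$.

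\textbf{Granularity conversion.} The remaining step is arithmetic. Since $k\le d\le\betav k$, a block of $d$ consecutive integers intersects at most $\betav+1$ blocks of the form $\fragmentco{j'k}{(j'+1)k}$: the indices of the length-$k$ blocks containing its first and its last element differ by at most $\lceil(d-1)/k\rceil\le\betav$. Summing over the at most $\varphiv\cdot n/m\cdot d$ covering blocks of length $d$ gives
\[
    |\floor{\OccE_k(P,T)/k}|\ \le\ (\betav+1)\cdot\varphiv\cdot n/m\cdot d\ \le\ (\betav+1)\betav\cdot\varphiv\cdot n/m\cdot k,
\]
and a direct check of the constants (namely $\gammaEp=9\cdot 8\cdot\varphiv=(\betav+1)\betav\varphiv$) shows that this is exactly $\gammaEp\cdot n/m\cdot k$, as claimed.

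\textbf{Main obstacle.} I do not anticipate a genuine obstacle: all the structural content has already been packaged into \cref{lem:Eaux} and \cref{cor:Eaux}. The only points needing care are (i) verifying the numeric preconditions $d\ge 2k$ and $|Q|\le m/8d$ for the chosen $d$, both of which follow from $d\le\betav k$ together with the separation $\alphav\ge 8\betav$; and (ii) the granularity conversion, where one should be slightly careful to note that a length-$d$ interval can straddle $\betav+1$ rather than $\betav$ length-$k$ blocks, and then confirm that the resulting constant $(\betav+1)\betav\varphiv$ coincides with the definition of~$\gammaEp$.
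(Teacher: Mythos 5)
Your proposal is correct and follows essentially the same route as the paper: apply \cref{cor:Eaux} with $d=\edl{P}{Q}$ (checking $d\ge 2k$ and $|Q|\le m/\alphav k\le m/8d$ via $d\le\betav k$), then convert length-$d$ blocks to length-$k$ blocks at a cost of a factor $\betav+1=9$, using $d\le\betav k$ to absorb the remaining factor, which reproduces the constant $\gammaEp=9\cdot\betav\cdot\varphiv$. The only difference is the order in which you apply the bound $d\le\betav k$ and the block-granularity conversion, which is immaterial.
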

\begin{proof}
    We apply \cref{cor:Eaux} with $d =\edl{P}{Q}$.
    Observe that $d\ge 2k$ and that $|Q|\le m/\alphav k \le m/8d$ due to $d\le \betav k$.
    Hence, the assumptions of~\cref{cor:Eaux} are met.

    Consequently, $|\floor{\OccE_k(P,T)/d}|
        \le \varphiv \cdot n/m \cdot d
    \le \varphibq \cdot n/m \cdot k.$
    Every block $\fragmentco{j'd}{(j'+1)d}$
    is of~length at most $8k$, and thus may intersect at most 9
    blocks of~the form $\fragmentco{jk}{(j+1)k}$.
    Consequently, $\floor{\OccE_k(P,T)/k}\le 9 \cdot \varphibq \cdot n/m \cdot k$,
    completing the proof.
\end{proof}

\edmain*
\begin{proof}
    The proof~proceeds similarly to the proof~of~\cref{thm:hdmain}:
    We apply \cref{prp:EI} on the string $P$ and proceed
    depending on the structure found in $P$.

    If the string $P$ contains $2k$ disjoint breaks $B_1,\dots,B_{2k}$
    (in the sense of~\cref{prp:EI}), we apply \cref{lm:EdC}
    and obtain that $|\floor{\OccE_k(P,T)/k}|\le \alphavdt \cdot n/m \cdot k$.

    If the string $P$ contains disjoint repetitive regions $R_1,\dots,R_r$
    (again, in the sense of~\cref{prp:EI}), we apply \cref{lm:EdB} and obtain that
    $|\floor{\OccE_k(P,T)/k}|\le \varphig \cdot n/m \cdot k$.

    Otherwise, \cref{prp:EI} guarantees that there is a primitive string $Q$ of~length
    at most $|Q| \le m/\alphav k$ that satisfies $\edl{P}{Q} < \betav k$.
    If $\edl{P}{Q}\ge 2k$, then \cref{lm:EdA} yields
    $|\floor{\OccE_k(P,T)/k}|\le \gammaEp \cdot n/m \cdot k$.
    If, however, $\edl{P}{Q}< 2k$, then we are in the second alternative of~the theorem statement.
\end{proof}

\section{Algorithm: Pattern Matching with Edits in the \modelname Model}\label{sec:pme}

In this section, we discuss how to solve pattern matching with edits in the \modelname
model. Specifically, we prove the following.
\edalgI*
\def\edalgIt{1}

The overall structure of the algorithm is similar to the Hamming distance case: We first
introduce useful tools for the algorithms later. Then, we implement {\tt Analyze}, which
is then followed by a discussion of the case when the pattern is periodic. Finally, we
discuss the easier non-periodic case and conclude with combining the various auxiliary
algorithms.

\subsection{Auxiliary \modelname Model Operations for Pattern Matching with Edits}
\label{sc:auxped}

As in the Hamming distance setting, we start the discussion of~the algorithms with
general tools that we use as auxiliary procedures in the remaining algorithms.
Specifically, we discuss a generator that computes the ``next'' error between two
strings. Further, we discuss a procedure to verify whether there is an occurrence of~the pattern
at a given (interval of) position(s) in the text.

\begin{lemma}[{\tt EditGenerator($S$, $Q$)}, {\tt EditGenerator$^R$($S$, $Q$)}]\label{lm:misope}
    Let $S$ denote a string and let $Q$ denote a string (that is possibly given as
    a cyclic rotation $Q' = \rot^j(Q)$).
    Then, there is an $(\Oh(1), \Oh(k))$-time generator that in the $k$-th call to
    \nxt returns the length of~the longest prefix (suffix) $S'$ of~$S$ and the length of
    the corresponding prefix (suffix) $Q'$ of~$Q^\infty$ such that $\ed(S', Q') \le k$.
    (Note that $k \ge 0$, that is, the initial call to \nxt is the zeroth call.)

    Further, both generators support an additional operation {\tt Alignment},
    that outputs a witness for the result returned by $k$-th call to \nxt
    that is, {\tt Alignment} outputs a sequence of~edits ($(i, j)$ for a replacement,
    $(i, \bot)$ for an insertion in $S$, and $(\bot, i)$ for an insertion in $Q$).
    The operation {\tt Alignment} takes $\Oh(k)$ time in the \modelname model.
\end{lemma}
\begin{algorithm}[p]
    \SetKwBlock{Begin}{}{end}
    \SetKwFunction{errgen}{EditGenerator}
    \SetKwFunction{errgenR}{EditGenerator$^R$}
    \SetKwFunction{nxt}{Next}
    \SetKwFunction{align}{Alignment}

    \errgen{$S$, $Q' = \rot^j(Q)$}\Begin{
        \Return{{\tt\{}\Begin{
                    $S \gets S$; $Q' \gets Q'$; $k \gets 0$; $j \gets j$; $end \gets {\tt
                    false}$\;
                $(\ell^{(-1)}_{-1},\dots,\ell^{(-1)}_1)\gets (-\infty, -\infty, -\infty)$\;
                    $A\fragment{-2}{2} \gets \position{(), \dots, ()}$;
     }{\tt\}}}\;
    }
    \BlankLine
    \nxt{$\mathbf{G}=\{S;\,Q';\,k;\,j;\,(\ell^{(k-1)}_{-k-1},\dots,\ell^{(k-1)}_{k+1})$;
        $A$; $end\}$}\Begin{
        \If{$k = 0$}{
            $(\ell^{(0)}_{-2},\dots,\ell^{(0)}_2)\gets (-\infty, -\infty, \lceOp{S}{Q^{\infty}},
                -\infty, -\infty)$\;
            replace $\ell^{(k-1)}$ with  $\ell^{(k)}$; $k \gets k + 1$\;
            \Return{$\ell^{(0)}_0$}\;
        }

        \BlankLine
        $(\ell^{(k)}_{-k - 2},\dots,\ell^{(k)}_{k + 2}) \gets (-\infty,
        \dots, -\infty)$\;
        $A'\fragment{-k-2}{k+2} \gets \position{(),\dots, ()}$;
        $r \gets -\infty; a_r \gets -1$\;
        \For{$i \gets -k - 1$ \KwSty{to} $k + 1$}{
            \tcp{Compute new prefix lengths as long as we did not reach the end of~$S$}
            \If{{\tt not} $end$}{
            $\ell_{insert} \gets \ell^{(k-1)}_{i-1} + 1 +
            \lceOp{S\fragmentco{\ell^{(k-1)}_{i-1} - i}{|S|}}
            {Q'^{\infty}\fragmentco{j + \ell^{(k-1)}_i + 1}{}}$\;
            $\ell_{replace} \gets \ell^{(k-1)}_i + 1 +
            \lceOp{S\fragmentco{\ell^{(k-1)}_{i} - i + 1}{|S|}}
                {Q'^{\infty}\fragmentco{j + \ell^{(k-1)}_i + 1}{}}$\;
            $\ell_{delete} \gets \ell^{(k-1)}_{i+1} + \lceOp{S\fragmentco{\ell^{(k-1)}_{i} - i +
                1}{|S|}}{Q'^{\infty}\fragmentco{j + \ell^{(k-1)}_i}{}}$\;
                $\ell^{(k)}_i \gets \max(\ell_{insert}, \ell_{replace}, \ell_{delete})$\;
            }\lElse{$\ell^{(k)}_i \gets \ell^{(k-1)}_i$}
            $r \gets \max(r, \ell^{(k)}_i)$\;
            \lIf{$r = \ell^{(k)}_i - i$}{$a_r \gets \ell^{(k)}_i + i$}
            \lIf{$end$}{{\tt continue}}
            \BlankLine
            \tcp{Store witness for \align}
            \If{$\ell^{(k)}_i = \ell_{insert}$}{$A'\position{i} \gets (A\position{i-1},
            (\bot,j + \ell^{(k-1)}_i))$\;}
            \If{$\ell^{(k)}_i = \ell_{replace}$}{$A'\position{i} \gets (A\position{i},
            (\ell_i - i, j + \ell^{(k-1)}_i))$\;}
            \If{$\ell^{(k)}_i = \ell_{delete}$}{$A'\position{i} \gets (A\position{i+1},
            (\ell^{(k-1)}_i - i, \bot))$\;}
        }
        replace $\ell^{(k-1)}$ with  $\ell^{(k)}$;
        $k \gets k + 1$;
        $A \gets A'$\;
        \lIf{$r \ge |S|$}{$end \gets {\tt true}$}
        \Return{$(r, a_r)$}\;
    }
    \BlankLine
    \align{$\mathbf{G}=\{S;\,Q';\,k;\,j;\,(\ell^{(k-1)}_{-k-1},\dots,\ell^{(k-1)}_{k+1})$;
        $A$; $end\}$}\Begin{
        $r \gets -\infty$; $a_r \gets -1$\;
        \For{$i \gets -k - 1$ \KwSty{to} $k + 1$}{
            $r \gets \max(r, \ell^{(k-1)}_i)$\;
            \lIf{$r = \ell^{(k-1)}_i$}{$a_r \gets i$}
        }
        \Return{$A\position{a_r}$}\;
    }

    \caption{An analogue of~\cref{alg:gen} for edit distance: We adapt Landau and
    Vishkin's algorithm to make its execution ``resumable''.}\label{alg:errgen}
\end{algorithm}
\begin{proof}
    We focus on the generator {\tt EditGenerator($S$, $Q$)};
    the generator {\tt EditGenerator$^R$($S$, $Q$)} can be obtained in a symmetric
    manner.

    We construct the generator as follows:
    We run the dynamic programming algorithm by Landau and Vishkin \cite{LandauV89}
    for one additional error (per call to \nxt) at a time, storing the dynamic
    programming table as a state in the generator.
    In particular, we maintain a sequence $\ell$ that after $k$ calls to \nxt
    stores at a position $i \in \fragment{-k}{k}$ the length of~the longest
    prefix $S'$ of~$S$ that satisfies $\ed(S', Q^\infty\fragment{0}{|S'| + i}) \le k$
    (that is, intuitively, we store how far each of~the ``diagonals'' of~the dynamic
    programming table extends).

    \noindent
    In each call to \nxt, we update the values stored in the sequence $\ell$ by using three
    calls to \lceOpName from \cref{lm:inflcp} (one call for each of~the {\tt insert}, {\tt
    replace}, {\tt delete} cases of~the
    edit distance) to compute each new entry. We then obtain the result as the maximum
    value of~the newly computed sequence.

    In order to support the \align operation, we additionally store every diagonal
    represented as list of~pairs of~{\tt insert}, {\tt replace}, and {\tt delete}
    operations and the position(s) in the strings $S$ and $Q^\infty$ where the edits happened.
    In each call to \nxt, we also update the representations of~the diagonals.
    (Note that, for performance reasons, we need to avoid copying whole diagonals;
    this can be done by storing all diagonals together as a (directed) graph.)

    Consider \cref{alg:errgen} for a visualization of~the generator as pseudo-code;
    note that we simplified how we store the diagonals for the \align operation
    to improve readability.

    For the correctness, we show by induction that the values computed in the array $\ell$
    are indeed correct, that is, after $k$ calls to the \nxt operation,
    for all $i \in \fragment{-k}{k}$ we have
    \[
        \ell^{(k)}_i = \max_{r} \{ r \mid \ed(S\fragment{0}{r}, Q^{\infty}\fragment{0}{r +
        i}) \le k \},
    \] or we reached the end of~the string $S$ (in which case the output does not change
    anymore after calling \nxt).
    For the zeroth call to next, we explicitly return
    \[  \ell^{(0)}_0 = \lceOp{S}{Q^{\infty}} = \max_{r} \{ r \mid \ed(S\fragment{0}{r},
        Q^{\infty}\fragment{0}{r+0}) = 0 \},
    \] which is thus correct. Now consider the $k$-th call to \nxt and assume that
    the values computed so far are correct. In particular, for all
    $i \in \fragment{-k+1}{k-1}$, we have
    \[
        \ell^{(k-1)}_i = \max_{r} \{ r \mid \ed(S\fragment{0}{r},
        Q^{\infty}\fragment{0}{r+i}) \le k - 1 \}.
    \] Now, fix a $j \in \fragment{k}{-k}$ and consider the longest prefix
    $S' = S\fragment{0}{r}$ with $\ed(S', Q^{\infty}\fragment{0}{r + j}) = k'$,
    for some $0 < k' \le k$. By definition of~the edit distance, there is an integer $r'$
    that satisfies $S'\fragmentoc{r'}{r} = Q^{\infty}\fragmentoc{r' + j}{r + j}$ and
    at least one of~the following
    \begin{itemize}
        \item $k' = \ed(S'\fragmentco{0}{r'}, Q^{\infty}\fragmentco{0}{r' + j}) + 1$
            and $S\position{r'} \ne Q^{\infty}\position{r'}$
            (when changing the character $S\position{r'}$ to the character
            $Q^{\infty}\position{r' + j}$);
        \item $k' = \ed(S'\fragmentco{0}{r'}, Q^{\infty}\fragmentco{0}{r' + j - 1}) + 1$
            (when inserting the character $Q^{\infty}\position{r' + j}$); or
        \item $k' = \ed(S'\fragment{0}{r'}, Q^{\infty}\fragmentco{0}{r' + j}) + 1$
            (when deleting the character $Q^{\infty}\position{r' + j}$).
    \end{itemize}
    Note that (as $S'\fragmentoc{r'}{r} = Q^{\infty}\fragmentoc{r' + j}{r + j}$) we may
    assume that $r'$ is maximal (that is there is no larger integer with the same
    properties as $r'$).
    In particular, we have \[
        r' = \max(\ell^{(k-1)}_{j} + 1, \ell^{(k-1)}_{j-1} + 1, \ell^{(k-1)}_{j+1}),
    \] and hence \nxt computes $\ell^{(k)}_j$ indeed correctly.

    Using the computed values $\ell^{(k)}$, we can easily compute
    the length $|S'|$ of~the longest prefix $S'$ of~$S$ and the length $|Q'|$ of
    the corresponding prefix $Q'$ of~$Q^\infty$ such that $\ed(S', Q') \le k$:
    For $|S'|$, we have\[
        |S'| = \max_{j, r} \{ r \mid \ed(S\fragment{0}{r}, Q^{\infty}\fragment{0}{r +
        y}) \le k \} = \max_j \ell^{(k)}_j,
    \] as $k$ edits only allow for up to $k$ insertions or deletions (that is, operations
    that can change the shift between $Q^{\infty}$ and $S$). Hence, the computation
    of~$|S'|$ in \nxt is correct. For $|Q'|$, observe that if $|S'| = \ell^{(k)}_j$
    for some $j \in \fragment{-k}{k}$, by construction, we have $|Q'| = \ell^{(k)}_j + j$.
    Hence, also the computed value for $|Q'|$ is indeed correct.

    For the correctness of~\align, observe that we store information computed in
    \nxt (which is correct); further we always synchronize the information stored, hence
    also \align is correct.

    For the running time, observe that in the $k$-th call to \nxt we call \lceOpName
    three times for each of~the $2k + 3$ values~$\ell^{(k)}_i$. Further, all other
    operations are essentially book-keeping that can be implemented in $\Oh(1)$ time.
    Hence in total, the $k$-th call to \nxt takes $\Oh(k)$ time in the \modelname model.

    For the {\tt Alignment} operation, observe that we traverse the sequence $\ell^{(k)}$
    exactly once, hence  {\tt Alignment} uses $\Oh(k)$ time in the \modelname model,
    completing the proof.
\end{proof}

\begin{lemma}[{\tt Verify($P$, $T$, $k$, $I$)}, {\cite[Section 5]{ColeH98}}]\label{lm:verifye}
    Let $P$ denote a string of~length $m$, let $T$ denote a string,
    and let $k \le m$ and denote a positive integer.
    Further, let $I$ denote an interval of~positive integers. Using $\Oh(k(k + |I|))$
    \modelname operations, we can compute $\{(\ell,\min_r \ed(P,T\fragmentco{\ell}{r}))
    \mid \ell \in \OccE_k(P, T)\cap I\}$.
\end{lemma}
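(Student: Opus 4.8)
The plan is to adapt the classical Landau--Vishkin $k$-differences algorithm, in the form used by Cole and Hariharan~\cite{ColeH98}, to the \modelname model, and to confine its work to the window $I$ of candidate starting positions. The one point where a little thought is needed over the textbook pattern-matching-with-errors routine is that we must output an edit-distance value for \emph{each} starting position $\ell\in I$; a single Landau--Vishkin run anchored at the left end of~the text naturally produces per-\emph{ending}-position data instead. So I would run the algorithm on the reversed strings $P^R$ and $T^R$: writing $n:=|T|$, a $k$-error occurrence $T\fragmentco{\ell}{r}$ of~$P$ corresponds to an occurrence of~$P^R$ in $T^R$ ending at position $n-\ell$, and $\min_{r}\ed(P,T\fragmentco{\ell}{r})=\min_{r'}\ed(P^R,T^R\fragmentco{r'}{n-\ell})$ is exactly the standard Landau--Vishkin output at ending position $n-\ell$. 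Since $I$ is an interval, the set $I':=\{n-\ell:\ell\in I\}$ of~relevant ending positions in $T^R$ is again an interval of~size $|I|$ (this is where we use that $I$ is an interval rather than an arbitrary set).

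Concretely, for error levels $e=0,1,\dots,k$ and for diagonals $h$ (a diagonal being $\mathit{column}-\mathit{row}$ in the edit graph of~$P^R$ versus $T^R$) restricted to an interval $H$ of~size $|I|+2k$ chosen to contain $(I'-m)+\{-k,\dots,k\}$, I would maintain the furthest-reaching value $L_{e,h}$: the largest row $i\le m$ such that $\ed(P^R\fragmentco{0}{i},T^R\fragmentco{h'}{i+h})\le e$ for some $h'$, or $-\infty$ if there is none. The usual recurrence
$L_{e,h}=\min\bigl\{m,\ b+\lambda\bigr\}$, where $b:=\max(L_{e-1,h}+1,\ L_{e-1,h-1}+1,\ L_{e-1,h+1},\ 0)$ and $\lambda$ is the length of~the longest common prefix of~$P^R\fragmentco{b}{m}$ and $T^R\fragmentco{b+h}{n}$, holds (with the constant $0$ reflecting that every cell $(0,h)$ with $0\le h\le n$ carries value $0$, since a $k$-error occurrence of~$P^R$ may begin anywhere in $T^R$, and with diagonals outside $H$ treated as $-\infty$). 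The key point is that $\lambda=\lcbOp{\extractOpName(P,0,m-b)}{\extractOpName(T,0,n-b-h)}$: reversing turns a longest common prefix into a longest common suffix of~the corresponding \emph{prefixes} of~$P$ and $T$, so $\lambda$ costs one \lcbOpName operation plus $\Oh(1)$ auxiliary \modelname operations. Restricting to diagonals in $H$ is correct because any alignment using at most $k$ errors that reaches row $m$ on a diagonal in $I'-m$ visits only cells whose diagonal is within $k$ of~that final diagonal (each insertion/deletion both costs an error and shifts the diagonal by one), hence lies inside $H$; therefore $L_{e,h}$ for $h\in I'-m$ and $e\le k$ is computed correctly even though cells outside $H$ are dropped.

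After filling this table bottom-up, for each $j\in I'$ — equivalently each $\ell=n-j\in I$ — I would read off $e^\star(\ell):=\min\{e\le k: L_{e,\,j-m}=m\}$; under the cumulative convention (taking $L_{e,h}$ nondecreasing in $e$) this is exactly $\min_{r}\ed(P,T\fragmentco{\ell}{r})$ precisely when that value is at most $k$, i.e.\ precisely when $\ell\in\OccE_k(P,T)$. Emitting all pairs $(\ell,e^\star(\ell))$ with $e^\star(\ell)\le k$ yields the required set. The running time is $\Oh(1)$ \modelname operations (and $\Oh(1)$ arithmetic) for each of~the $(k+1)\cdot|H|=\Oh(k(k+|I|))$ table cells, which is the claimed bound.

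I expect no conceptual obstacle; the delicate part is purely the Landau--Vishkin bookkeeping: pinning down the cumulative definition of~$L_{e,h}$ so that ``least error level at which the diagonal is fully traversed'' equals the true edit distance, handling the constant-$0$ term and the boundary diagonals of~$H$ consistently with the $-\infty$ convention, and checking that the reversal together with the translation of~reversed-fragment LCP into \lcbOpName on original prefixes lines up index-for-index. These are routine once the window restriction $H$ and its correctness argument (every $\le k$-error path reaching row $m$ on a diagonal in $I'-m$ stays inside $H$) are in place, so the main care needed is with off-by-one issues in the diagonal indexing.
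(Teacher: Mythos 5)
Your proposal is correct and follows essentially the same route as the paper: the paper's proof simply observes that the verification procedure of Cole and Hariharan (Section 5) is a Landau--Vishkin-style computation driven entirely by longest-common-prefix queries, of which it performs $\Oh(k(k+|I|))$, and your explicit reconstruction (reversing $P$ and $T$ to turn per-ending-position values into per-starting-position values, restricting to a diagonal band of width $|I|+\Oh(k)$, and implementing each furthest-reaching extension with one \lcbOpName call) is exactly that algorithm spelled out in the \modelname model. The remaining concerns you flag (diagonal indexing, the cumulative convention, boundary handling) are indeed only routine bookkeeping.
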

\begin{proof}
    Observe that the algorithm in \cite[Section 5]{ColeH98} only uses \lceOpName
    operations, as it mainly implements~\cite{LandauV89}.
    In particular, the algorithm in \cite[Section 5]{ColeH98} uses
    $\Oh(k(k + |I|))$ \lceOpName operations. The claim follows.
\end{proof}
Note that we can also call {\tt Verify($P$, $T$, $k$, $I$)} for strings $T = Q^{\infty}$
(for some primitive $Q$), as, by \cref{lm:inflcp}, we can also efficiently compute
$\lceOp{P}{Q^{\infty}}$.

\subsection{Computing Structure in the Pattern}

We proceed to discuss the implementation of~\cref{prp:EI}, that is, the analysation of the
pattern. While the algorithm itself is similar to the Hamming distance case, the analysis
requires more involved arguments. We start with an auxiliary combinatorial lemma:

\begin{lemma}\label{lem:fixed}
    Let $S$ denote a string, let $k$ denote a positive integer,
    and let $Q$ denote a primitive string such that $|Q|=1$ or $|S|\ge (2k+1)|Q|$.
    Suppose that $\edl{S}{Q}=\ed(S,Q^\infty\fragmentco{x}{y})\le k$ for integers $x\le y$.
    Then, for any string $S'$,
    if $\edl{SS'}{Q}\le k$, then $\edl{SS'}{Q}=\ed(SS',\rot^{-x}(Q)^*)$,
    and if $\edl{S'S}{Q}\le k$, then $\edl{S'S}{Q}=\eds{S'S}{\rot^{-y}(Q)}$.
\end{lemma}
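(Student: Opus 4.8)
The statement is essentially a one-sided refinement of \cref{fct:split}: we already know that $S$ being at edit distance at most $k$ from two substrings of $Q^\infty$ forces a common ``synchronizing'' split point; here the second substring is replaced by the whole of $SS'$ (resp.\ $S'S$), and we want to conclude that $SS'$ (resp.\ $S'S$) can be optimally aligned against a substring of $Q^\infty$ whose \emph{left} (resp.\ \emph{right}) endpoint is pinned to the rotation $\rot^{-x}(Q)$ (resp.\ $\rot^{-y}(Q)$). The key point is that an optimal alignment of $SS'$ with $Q^\infty$ restricts to an alignment of the prefix $S$ with a substring $Q^\infty\fragmentco{x'}{z}$ of $Q^\infty$ of cost at most $k$, and we want to argue that $x'$ may be taken congruent to $x$ modulo $|Q|$ — equivalently, that $Q^\infty\fragmentco{x'}{\cdot}$ starts with $\rot^{-x}(Q)$.

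\smallskip
\noindent First I would treat the degenerate case $|Q|=1$ separately, exactly as in \cref{fct:split}: then every alignment of a prefix against $Q^\infty$ starts at an endpoint congruent to $x\equiv 0\pmod 1$, so there is nothing to prove. So assume $|Q|>1$ and $|S|\ge(2k+1)|Q|$. For the first claim, fix an optimal alignment $\mathcal{A}$ of $SS'$ with some $Q^\infty\fragmentco{x'}{y'}$ of cost $\edl{SS'}{Q}\le k$. Splitting $\mathcal{A}$ at the boundary between $S$ and $S'$ yields an index $z$ with
\[
  \edl{SS'}{Q} = \ed\!\big(S, Q^\infty\fragmentco{x'}{z}\big) + \ed\!\big(S', Q^\infty\fragmentco{z}{y'}\big),
\]
and in particular $\ed(S, Q^\infty\fragmentco{x'}{z})\le k$. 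Now apply \cref{fct:split} to the string $S$ with the two substrings $Q^\infty\fragmentco{x}{y}$ (the optimal one from the hypothesis) and $Q^\infty\fragmentco{x'}{z}$ (both of cost $\le k$, using $|S|\ge(2k+1)|Q|$): this produces a decomposition $S=S_LS_R$ and integers $j,j'$ with
\begin{align*}
  \ed\!\big(S, Q^\infty\fragmentco{x}{y}\big) &= \ed\!\big(S_L, Q^\infty\fragmentco{x}{j|Q|}\big) + \ed\!\big(S_R, Q^\infty\fragmentco{j|Q|}{y}\big), \\
  \ed\!\big(S, Q^\infty\fragmentco{x'}{z}\big) &= \ed\!\big(S_L, Q^\infty\fragmentco{x'}{j'|Q|}\big) + \ed\!\big(S_R, Q^\infty\fragmentco{j'|Q|}{z}\big).
\end{align*}
Since $\ed(S,Q^\infty\fragmentco{x}{y})=\edl{S}{Q}$ is minimal, the first split forces $\ed(S_L,Q^\infty\fragmentco{x}{j|Q|})=\edl{S_L}{Q}$. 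Crucially, $\edl{S_L}{Q}$ depends only on the multiset of rotations $\{\rot^a(Q):a\}$, so the \emph{same} value is achieved by $\ed(S_L, Q^\infty\fragmentco{x'}{j'|Q|})$ only if $x'\equiv x\pmod{|Q|}$ — more precisely, I would argue that we may replace $x'$ by a congruent value $x''\equiv x\pmod{|Q|}$ without increasing the cost $\ed(S_L,Q^\infty\fragmentco{x'}{j'|Q|})+\ed(S',Q^\infty\fragmentco{z}{y'})$, because $S_R$ is aligned against $Q^\infty\fragmentco{j'|Q|}{\cdot}$, whose right part is block-aligned, and the left endpoint $x'$ of $Q^\infty\fragmentco{x'}{j'|Q|}$ only affects the alignment of $S_L$, for which $x$ (hence $\rot^{-x}(Q)$) is optimal. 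Putting $x''=x\bmod|Q|$ gives $Q^\infty\fragmentco{x''}{y''}$ with $\ed(SS', Q^\infty\fragmentco{x''}{y''})=\edl{SS'}{Q}$ and $Q^\infty\fragmentco{x''}{\cdot}$ a prefix of $\rot^{-x}(Q)^\infty$, i.e.\ $\edl{SS'}{Q}=\ed(SS',\rot^{-x}(Q)^*)$. The claim for $S'S$ is entirely symmetric — reverse all strings in scope (using that $\ed$ and $\edl{\cdot}{\cdot}$ are reversal-invariant up to replacing $Q$ by $Q^R$ and $x\leftrightarrow y$), so the pinned endpoint becomes the right endpoint $y$ and the relevant rotation is $\rot^{-y}(Q)$, yielding $\edl{S'S}{Q}=\eds{S'S}{\rot^{-y}(Q)}$.

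\smallskip
\noindent The main obstacle I anticipate is the argument that $x'$ can be shifted to a value congruent to $x$ without cost increase: it is not literally true that \emph{every} optimal alignment of $SS'$ starts at an endpoint $\equiv x$, only that \emph{some} optimal one does, and making this rigorous requires carefully separating the contribution of $S_L$ (which ``remembers'' the left endpoint and is optimized by the rotation $\rot^{-x}(Q)$) from the contribution of $S_R S'$ (which is glued along a block boundary $j'|Q|$). The cleanest route is: from the two \cref{fct:split} decompositions, observe $\ed(S_L,Q^\infty\fragmentco{x'}{j'|Q|})\ge\edl{S_L}{Q}=\ed(S_L,Q^\infty\fragmentco{x}{j|Q|})$, so replacing the block $Q^\infty\fragmentco{x'}{j'|Q|}$ aligned with $S_L$ by $Q^\infty\fragmentco{x}{j|Q|}$ (and prepending nothing, appending $Q^\infty\fragmentco{j'|Q|}{y'}$ unchanged, legitimate since both end at multiples of $|Q|$ and $Q$ is primitive so block boundaries are canonical) does not increase the total cost and produces an alignment of $SS'$ with $Q^\infty\fragmentco{x}{y'+(j-j')|Q|}$ — exactly the structure asserted. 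I would also double-check the edge case where $S'$ is empty (statement reduces to the trivial observation that the hypothesis already gives the conclusion) and where $SS'$ is short enough that $S_L=SS'$, handled as in the $y\le|Q|$ branch of \cref{lem:locked}'s proof.
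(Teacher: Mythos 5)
Your overall route is the same as the paper's: split an optimal alignment of $SS'$ at the boundary between $S$ and $S'$, apply \cref{fct:split} to $S$ with the two sub-$k$ alignments (the hypothesis one with endpoints $x,y$ and the induced one with endpoints $x',z$), and then re-anchor the prefix part at $x$ by regluing at block boundaries; the treatment of $|Q|=1$, of empty $S'$, and of the second claim by reversal also matches. Your final recombination step (replace the target of $S_L$, keep $Q^\infty\fragmentco{j'|Q|}{y'}$ for the rest, legal because both junction points are multiples of $|Q|$) is exactly the paper's concluding display.

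However, there is a genuine gap at the pivotal comparison. You derive $\ed(S_L,Q^\infty\fragmentco{x}{j|Q|})\le\ed(S_L,Q^\infty\fragmentco{x'}{j'|Q|})$ from the claim $\edl{S_L}{Q}=\ed(S_L,Q^\infty\fragmentco{x}{j|Q|})$, asserting that the minimality of $\ed(S,Q^\infty\fragmentco{x}{y})=\edl{S}{Q}$ ``forces'' it. That does not follow: an unconstrained optimal witness for $S_L$ alone may end at a position of $Q^\infty$ that is not a multiple of $|Q|$, and such an alignment cannot be concatenated with the $S_R$-part $Q^\infty\fragmentco{j|Q|}{y}$ (which, $Q$ being primitive, must resume at a block boundary), so optimality of the alignment of the whole string $S$ does not transfer to its restriction to $S_L$. (Your earlier heuristic that the same value is attained ``only if $x'\equiv x\pmod{|Q|}$'' is likewise unsupported and not needed.) What you actually need --- and what the paper proves --- is the weaker, block-anchored statement: for any $a$ and any block boundary $b|Q|$ one has $\ed(S_L,Q^\infty\fragmentco{x}{j|Q|})\le\ed(S_L,Q^\infty\fragmentco{a}{b|Q|})$, since otherwise gluing $Q^\infty\fragmentco{a}{b|Q|}$ with the block-shifted copy of $Q^\infty\fragmentco{j|Q|}{y}$ would give $\ed(S,Q^\infty\fragmentco{a}{b|Q|+y-j|Q|})<\edl{S}{Q}$, a contradiction. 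Applying this with $(a,b|Q|)=(x',j'|Q|)$ is precisely the paper's chain $\edl{S}{Q}\le\ed(S,Q^\infty\fragmentco{x'}{y+(j'-j)|Q|})\le\ed(S_L,Q^\infty\fragmentco{x'}{j'|Q|})+\ed(S_R,Q^\infty\fragmentco{j|Q|}{y})$, which yields the needed inequality directly. With that substitution your argument goes through; as written, the key equality is unproven and the recombination rests on it.
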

\begin{proof}
    Suppose that $\edl{SS'}{Q}=\ed(SS',Q^\infty\fragmentco{x'}{z'})\le k$ for some
    integers $x'\le z'$.
    Then, there is a position $y'\in \fragment{x'}{z'}$ such that
    \[\edl{SS'}{Q} = \ed(S,Q^\infty\fragmentco{x'}{y'}) +
    \ed(S',Q^\infty\fragmentco{y'}{z'})\le k.\]
    Due to $\ed(S,Q^\infty\fragmentco{x}{y})\le k$ and $\ed(S,Q^\infty\fragmentco{x'}{y'})\le k$,
    we may apply \cref{fct:split},
    which yields a decomposition $S=S_L\cdot S_R$ and integers $j,j'$ such that
    \begin{align*}
        \ed(S,Q^\infty\fragmentco{x\phantom{'}}{\phantom{'}y})
        & =\ed(S_L,Q^\infty\fragmentco{x\phantom{'}}{\phantom{'}j|Q|})+\ed(S_R,
        Q^\infty\fragmentco{j|Q|\phantom{'}}{\phantom{'}y})\quad\text{and}\\
        \ed(S,Q^\infty\fragmentco{x'}{y'})
        &=\ed(S_L,Q^\infty\fragmentco{x'}{j'|Q|})+\ed(S_R,Q^\infty\fragmentco{j'|Q|}{y'}).
    \end{align*}
    In particular, we have
    \begin{multline*}
        \ed(S_L,Q^\infty\fragmentco{x}{j|Q|})+\ed(S_R,
        Q^\infty\fragmentco{j|Q|}{y}) = \ed(S,Q^\infty\fragmentco{x}{y}) = \edl{S}{Q}
        \\ \le \ed(S,Q^\infty\fragmentco{x'}{y+(j'-j)|Q|})\le
        \ed(S_L,Q^\infty\fragmentco{x'}{j'|Q|})+\ed(S_R,
        Q^\infty\fragmentco{j|Q|}{y}),
    \end{multline*}
    which implies $\ed(S_L,Q^\infty\fragmentco{x}{j|Q|})\le
    \ed(S_L,Q^\infty\fragmentco{x'}{j'|Q|})$.
    Consequently,
    \begin{align*}
        \ed(SS',\rot^{-x}(Q)^*)&\le\ed(SS',Q^\infty\fragmentco{x}{z'+(j-j')|Q|})\\
        &\le \ed(S_L,Q^\infty\fragmentco{x\phantom{'}}{\phantom{'}j|Q|})+
        \ed(S_R,Q^\infty\fragmentco{j'|Q|}{y'})+ \ed(S',Q^\infty\fragmentco{y'}{z'})\\
        &\le \ed(S_L,Q^\infty\fragmentco{x'}{j'|Q|})+
        \ed(S_R,Q^\infty\fragmentco{j'|Q|}{y'})+ \ed(S',Q^\infty\fragmentco{y'}{z'})\\
        &=\edl{SS'}{Q},\end{align*}
        which implies $\edl{SS'}{Q}=\ed(SS',\rot^{-x}(Q)^*)$.

    The claim regarding $\edl{S'S}{Q}$ and $\eds{S'S}{\rot^{-y}(Q)}$ is symmetric.
\end{proof}

\begin{algorithm}[t!]
    \SetKwBlock{Begin}{}{end}
    \SetKwFunction{anly}{Analyze}
    \SetKwRepeat{Do}{do}{while}
    \anly{$P$, $k$}\Begin{
        $j\gets 0$; $r \gets 1$; $b \gets 1$\;
        \While{\bf true}{
            $j' \gets j+\floor{m/\betav k}$\; 
            \If{$\perOp{P\fragmentco{j}{j'}} > m/\alphav k$}{
                $B_b \gets P\fragmentco{j}{j'}$\;
                \lIf{$b = 2k$}{\Return{breaks $B_1,\ldots, B_{2k}$}}
                $b \gets b+1$; $j \gets j'$\;
                }\Else{
                $q \gets \perOp{P\fragmentco{j}{j'}}$;
                $Q_r \gets P\fragmentco{j}{j+q}$\;
                generator $\mathbf{G} \gets \errgen{$P\fragmentco{j}{m}$, $Q_r$}$\;
                $\delta \gets 0$\;
                \While{$\delta < \betav k/m\cdot (j'-j)$ \KwSty{and} $j' \le m$}{
                    $(\pi, \pi') \gets \nxt{$\mathbf{G}$}$\;
                    $j'\gets j+\pi+1$; $\delta \gets \delta + 1$\;
                }
                \If{$j' \le m$}{
                    $R_r \gets P\fragmentco{j}{j'}$\;
                    \If{$\sum_{i=1}^r |R_i|\ge \deltavN/\deltavD \cdot  m$}{
                        \Return{repetitive regions $R_1,\ldots,R_r$ with periods
                        $Q_1,\ldots, Q_r$}\;
                    }
                    $r \gets r+1$;  $j \gets j'$\;
                    }\Else{
                    $Q \gets Q_r$\;
                    generator ${\mathbf{G}'} \gets \errgenR{$P$, $\rot^{-\pi'}(Q)$}$\;
                    $j''=m$; $\delta \gets 0$\;
                    \While{$(j''\ge j$ \KwSty{or} $\delta < \betav k/m\cdot (m-j''))$ \KwSty{and}
                        $j'' \ge 0$}{
                        $(\pi,\_) \gets \nxt{${\mathbf{G}'}$}$\;
                        $j''\gets m-\pi - 1$; $\delta \gets \delta + 1$\;
                    }
                    \lIf{$j'' \ge 0$}{%
                        \Return{repetitive region $P\fragmentco{j''}{m}$
                        with period $Q$}
                    }
                    \lElse{\Return{approximate period $Q$}}
                }
            }
        }
    }
    \caption{Analyzing the pattern: A \modelname model implementation
    of~\cref{alg:E1}.}\label{alg:EiP1}
\end{algorithm}
\begin{lemma}[{\tt Analyze($P$, $k$)}: Implementation of~\cref{prp:EI}]\label{prp:EIalg}
    Let $P$ denote a string of~length $m$ and let $k \le m$ denote a positive integer.
    Then, there is an algorithm that computes one of~the following:
    \begin{enumerate}[(a)]
        \item $2k$ disjoint breaks $B_1,\ldots, B_{2k} \substr P$,
            each having period $\per(B_i)> m/\alphav k$ and length $|B_i| = \lfloor
            m/\betav k\rfloor$.
        \item Disjoint repetitive regions $R_1,\ldots, R_{r} \substr P$
            of~total length $\sum_{i=1}^r |R_i| \ge \deltavN/\deltavD \cdot m$ such
            that each region~$R_i$ satisfies
            $|R_i| \ge m/\betav k$ and is constructed along with a primitive approximate
            period $Q_i$
            such that $|Q_i| \le m/\alphav k$ and $\edl{R_i}{Q_i} = \ceil{\betav k/m\cdot
            |R_i|}$.
        \item A primitive approximate period $Q$ of~$P$
            with $|Q|\le m/\alphav k$ and $\edl{P}{Q} < \betav k$.
    \end{enumerate}
    \noindent The algorithm uses $\Oh(k^2)$ time plus $\Oh(k^2)$ \modelname operations.
\end{lemma}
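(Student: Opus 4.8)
The plan is to implement \cref{alg:E1} from the proof~of~\cref{prp:EI} essentially verbatim, replacing each ``primitive'' step by its \modelname counterpart; the resulting procedure is \cref{alg:EiP1}. First I would observe that the break test ``$\per(P')>m/\alphav k$'' together with the extraction of~the string period $Q=P\fragmentco{j}{j+\per(P')}$ reduces to a single $\perOpName$ call (legitimate because $\floor{m/\alphav k}\le\floor{m/\betav k}/2$, so the relevant threshold stays below $|P'|/2$) followed by $\Oh(1)$ \accOpName and \extractOpName operations, so these iterations are cheap. The forward search for a repetitive region is driven by an {\tt EditGenerator}$(P\fragmentco{j}{m},Q)$ of~\cref{lm:misope}: the $t$-th call to {\tt Next} moves the growing region to the longest prefix of~$P\fragmentco{j}{m}$ within edit distance $t$ of~a prefix of~$Q^\infty$, and we stop once the error count $\delta$ reaches $\betav k/m\cdot|R|$. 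When the forward search exhausts $P$, the symmetric search to the left is driven by an {\tt EditGenerator$^R$}$(P,\rot^{-\pi'}(Q))$, where $\pi'$ is the length of~the prefix of~$Q^\infty$ aligned with $P\fragmentco{j}{m}$ in the last forward step; this rotated period is the only rotation of~$Q$ that is needed, and it is supplied as an input rather than materialised. The bound $j<\ubjv m$ (\cref{clm:Ej}), the $\Delta_\rho$/$\Delta'_\rho$ monotonicity arguments, and the disjointness of~the produced fragments are then inherited unchanged from \cref{prp:EI}; what remains to be verified is that the generators compute the edit-distance quantities those arguments refer to.

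This last point is the crux, and it is exactly why \cref{lem:fixed} was proved: {\tt EditGenerator} measures edit distance to a prefix of~a \emph{fixed} copy of~$Q^\infty$, whereas \cref{prp:EI} reasons about $\edl{\cdot}{Q}$, the optimum over \emph{all} placements inside $Q^\infty$, and the two need not agree (optimal edit alignments into $Q^\infty$ are not unique). For the forward search I would argue by induction on~$t$ that the region $R$ obtained after $t$ errors has an optimal $\edl{\cdot}{Q}$-alignment with left offset $0$: this is clear for $R=P'=Q^\infty\fragmentco{0}{|P'|}$, and if it holds after $t-1$ errors then \cref{lem:fixed}, applied with $S$ the current region (offset $0$, by induction) and $S'$ the stretch up to the $t$-th error, transfers the property to the region after $t$ errors, \emph{provided} $|S|\ge(2\edl{SS'}{Q}+1)|Q|$. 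Since the inner loop performs the $t$-th step only while $\delta=t-1<\betav k/m\cdot|S|$, the length $|S|$ exceeds $(t-1)\,m/\betav k$, while $|Q|\le m/\alphav k$ and $\edl{SS'}{Q}\le t$; the constant choice $\alphav=2\betav^2$ then makes this a short arithmetic check. Hence at every stage ``the next error under $\edl{\cdot}{Q}$'' and ``the next error against the fixed copy of~$Q^\infty$'' coincide, so the generator discovers exactly the endpoint sequence of~an idealized run of~\cref{alg:E1}, and the reported region satisfies $\edl{R_r}{Q}=\ceil{\betav k/m\cdot|R_r|}$ as required. For the backward search, the forward pass has certified $\edl{P\fragmentco{j}{m}}{Q}=\ed(P\fragmentco{j}{m},Q^\infty\fragmentco{0}{\pi'})<\ceil{\betav k/m\cdot(m-j)}\le\betav k$; since $m-j>\tfrac{\deltavN}{\deltavD}m\ge(2\betav k+1)|Q|$ by \cref{clm:Ej} and $|Q|\le m/\alphav k$, \cref{lem:fixed} (its $\eds{\cdot}{\rot^{-\pi'}(Q)}$ branch) forces every in-budget left-extension of~$P\fragmentco{j}{m}$ to keep its right end at offset $-\pi'$, so {\tt EditGenerator$^R$}$(P,\rot^{-\pi'}(Q))$ faithfully tracks $\edl{\cdot}{Q}$; in particular, if it consumes all of~$P$ while $\delta<\betav k$, we correctly declare $Q$ an approximate period. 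With these identifications in place, the rest of~the correctness argument is a transcription of~the proof~of~\cref{prp:EI}.

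For the complexity I would count as in the Hamming case but weight the {\tt Next} calls by their cost. Each outer iteration consumes at least $\floor{m/\betav k}$ fresh characters of~$P$, so there are $\Oh(k)$ of~them; each performs one $\perOpName$ call, $\Oh(1)$ further primitives, and, in the repetitive-region case, a run of~{\tt Next} calls on a freshly created {\tt EditGenerator}. An iteration producing a region with $\delta_i$ errors makes $\delta_i+\Oh(1)$ {\tt Next} calls, whose $t$-th call costs $\Oh(t)$ time in the \modelname model (\cref{lm:misope}), hence $\Oh(\delta_i^2+1)$ in total. Since $\delta_i\le\betav k/m\cdot|R_i|+1$, we get $\sum_i\delta_i=\Oh(\tfrac{k}{m}\sum_i|R_i|)+\Oh(k)=\Oh(k)$, and each $\delta_i=\Oh(k)$, so $\sum_i\delta_i^2\le(\max_i\delta_i)\sum_i\delta_i=\Oh(k^2)$. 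The single iteration that also runs the backward generator finds fewer than $\betav k$ errors and thus costs $\Oh(k^2)$ as well, and all remaining bookkeeping is dominated; this yields the claimed $\Oh(k^2)$ time plus $\Oh(k^2)$ \modelname operations.

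The main obstacle is precisely the middle step: making \cref{lem:fixed} applicable at every moment of~the region-growing process requires that the already-aligned part of~the current region always contains $\Omega(\delta)$ copies of~$Q$, and this is exactly what the (otherwise unmotivated) gap between the constants $\alphav$ and $\betav$ buys; once that is secured, the implementation is a faithful rendering of~\cref{alg:E1} and the analysis follows \cref{prp:EI} and the complexity count above.
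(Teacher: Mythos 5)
Your proposal is correct and takes essentially the same route as the paper's proof: the same \modelname implementation of \cref{alg:E1} (a \perOpName call for breaks, {\tt EditGenerator} for growing a region to the right and {\tt EditGenerator$^R$} for the left extension), the same use of \cref{lem:fixed}—enabled by the gap between the constants $\alphav$ and $\betav$—to identify the fixed-offset edit distances reported by the generators with $\edl{\cdot}{Q}$ at every checkpoint, and the same $\Oh(k^2)$ accounting of the generator calls. The only immaterial deviation is that for the backward phase you apply \cref{lem:fixed} once to $P\fragmentco{j}{m}$ with budget $\Oh(k)$ rather than via the paper's step-by-step induction over suffixes, which yields the same conclusion.
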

\begin{proof}
    In a similar manner to the Hamming distance case,
    our implementation follows \cref{alg:E1} from the proof~of~\cref{prp:EI}.

    Recall that $P$ is processed from left to right and split into breaks and repetitive regions.
    In each iteration, the algorithm first considers a fragment of~length $\floor{m/\betav k}$.
    This fragment either becomes the next break (if its shortest period is long enough)
    or is extended to the right to a repetitive region (otherwise).
    Having constructed sufficiently many breaks or repetitive regions of~sufficiently large
    total length, the algorithm stops. Processing the string $P$ in this manner
    guarantees disjointness of~breaks and repetitive regions.
    As in the proof~of~\cref{prp:EI}, a slightly different approach is needed if the
    algorithm encounters the end of~$P$ while growing a repetitive region. If this
    happens, the region is also extended to the left.
    This way, the algorithm either obtains a single repetitive region (which is not
    necessarily disjoint with the previously created ones, so it is returned on its own)
    or learns that the whole string $P$ is approximately periodic.

    Next, we fill in missing details of~the implementation of~the previous steps in the
    \modelname model.
    To that end, first note that the \modelname model includes a \perOpName operation of
    checking if the period of~a string~$S$ satisfies $\per(S)\le |S|/2$; computing $\per(S)$
    in case of~a positive answer. Since our threshold $m/\alphav k$ satisfies $\floor{m/\alphav k}
    \le \floor{m/\betav k}/2$, no specific work is required to obtain the period of~an
    unprocessed fragment of~$\floor{m/\betav k}$ characters of~$P$.

    To compute a repetitive region starting from a fragment $P\fragmentco{j}{j'}$ with
    string period $Q$, we use a generator $\mathbf{G}=\errgen{$P\fragmentco{j}{m}$, $Q$}$
    from \cref{lm:misope}: for subsequent values $\delta \ge 1$, we
    find the shortest prefix $P'_\delta$ of~$P\fragmentco{j}{m}$ such that
    $\ed(P'_\delta,Q^*)=\delta$,
    until no such prefix exists or $\delta\ge \betav k/m\cdot |P'_\delta|$.
    This is possible because the $(\delta-1)$-st call to \nxt{$\mathbf{G}$}
    returns the length $\pi$ of~the longest prefix of~$P\fragmentco{j}{m}$ with
    $\ed(P\fragmentco{j}{j+\pi},Q^*)<\delta$.
    If $\delta\ge \betav k/m\cdot |P'_\delta|$, then we have identified a repetitive
    region $P'_\delta$.
    Otherwise, we reach  $\pi=m-j$ and retrieve $\pi'$ such that
    $\ed(P\fragmentco{j}{m},Q^*)=\ed(P\fragmentco{j}{m},Q^\infty\fragmentco{0}{\pi'})$
    from the last call to  \nxt{$\mathbf{G}$}.
    In this case, we similarly use a generator $\mathbf{G}'=\errgenR{$P$,
    $\rot^{-\pi'}(Q)$}$ from \cref{lm:misope}:
    For subsequent values $\delta \ge 1$,
    we find the shortest suffix $P''_\delta$ of~$P$ such that
    $\eds{P''_\delta}{\rot^{-\pi'}(Q)}=\delta$,
    until no such suffix exists or $|P''_{\delta}|\ge |P\fragmentco{j}{m}|$ and $\delta\ge
    \betav k/m\cdot |P''_\delta|$.
    Again, this is possible because the $(\delta-1)$-st call to \nxt{$\mathbf{G'}$}
    returns the length $\pi$ of~the longest suffix of~$P$ with
    $\eds{P\fragmentco{m-\pi}{m}}{\rot^{-\pi'}(Q)}<\delta$.
    If we reach $\pi=m$, then we return $Q$ as an approximate period of~$P$;
    otherwise, we return the final suffix $P''_\delta$ as a long repetitive region.
    Consider \cref{alg:EiP1} for implementation details.

    For the correctness, since our algorithm follows the proof~of~\cref{prp:EI}, we only
    need to show that our implementation of~finding repetitive regions correctly
    implements the corresponding steps in \cref{alg:E1}.

    First, we inductively prove that each considered prefix $P'_\delta$
    of~$P\fragmentco{j}{m}$ satisfies $\edl{P'_\delta}{Q}=\delta \le \lceil \betav k/m
    \cdot |P'_{\delta}| \rceil$.
    The case of~$\delta=1$ is easy since $\edl{P'_1}{Q}\le\ed(P'_1,Q^*) = 1$,
    since $\edl{P'_1}{Q}=0$ would imply $\ed(P'_1,Q^*)=0$
    because $Q$ is a prefix of~$P'_1$,
    and since $\betav k/m \cdot |P'_{1}| > 0$ due to $|P'_{1}|>0$.
    Next, we prove that the claim holds for $\delta+1$ assuming that it holds for $\delta$.
    The inductive assumption guarantees $\edl{P'_\delta}{Q}=\ed(P'_\delta,Q^*)=\delta$.
    Since the algorithm proceeded to the next step, we have
    $\delta < \betav k/m\cdot |P'_{\delta}|$.
    In particular, $|P'_{\delta}| \ge (2\delta+1)\cdot m/\alphav k \ge (2\delta+1)|Q|$,
    so we can apply \cref{lem:fixed} to $P'_{\delta}$.
    If $\edl{P'_{\delta+1}}{Q}\le\delta$, then \cref{lem:fixed} yields
    $\ed(P'_{\delta+1},Q^*)=\edl{P'_{\delta+1}}{Q}\le \delta$,
    which contradicts the definition of~$P'_{\delta+1}$.
    Hence, $\edl{P'_{\delta+1}}{Q}\ge \delta+1$.
    Due to $\ed(P'_{\delta+1},Q^*)=\delta+1$, we have $\edl{P'_{\delta+1}}{Q}=\delta+1$.
    Moreover, $\lceil  \betav k/m\cdot |P'_{\delta+1}|\rceil \ge \lceil  \betav k/m\cdot
    |P'_{\delta}|\rceil > \delta$ guarantees $ \lceil  \betav k/m\cdot
    |P'_{\delta+1}|\rceil\ge \delta+1$,
    which completes the inductive proof.

    In particular, if we encounter a prefix $P'_{\delta}$ that satisfies $\delta \ge \betav
    k/m\cdot |P'_{\delta}|$, then $\edl{P'_\delta}{Q}=\lceil \betav k/m \cdot
    |P'_{\delta}| \rceil$.
    However, if no such prefix $P'_{\delta}$ exists,
    then $\edl{R}{Q} < \betav k/m\cdot |R|$ holds for each non-empty prefix
    of~$P\fragmentco{j}{m}$
    (because $R=P'_{\delta}$ is the shortest prefix $R$ of~$P\fragmentco{j}{m}$ with
    $\edl{R}{Q}=\delta$).
    Thus, \cref{ln:efwd} of~\cref{alg:E1} is implemented correctly.

    In the following, we assume that no such prefix $R$ exists.
    In particular, we have $\edl{P\fragmentco{j}{m}}{Q}< \betav k/m\cdot |P\fragmentco{j}{m}|$.
    Then, the last call to \nxt{$\mathbf{G}$} resulted in $(m-j,\pi')$
    such that $\ed(P\fragmentco{j}{m},Q^*)=\ed(P\fragmentco{j}{m},Q^\infty\fragmentco{0}{\pi'})$.
    Moreover, since $P'_{\delta}$ with $\delta=\ed(P\fragmentco{j}{m},Q^*)$
    satisfies $\edl{P'_{\delta}}{Q}=\ed(P'_{\delta},Q^*)=\delta$,
    we have
    \[\edl{P\fragmentco{j}{m}}{Q}=\ed(P\fragment{j}{m},Q^*)=\eds{P\fragmentco{j}{m}}{\rot^{-\pi'}(Q)}=\delta.\]

    We inductively prove that each considered suffix of~$P''_{\delta}$ of~$P$
    with $|P''_{\delta}|> j-m$ satisfies
    $\edl{P''_\delta}{Q}=\delta \le \lceil \betav k/m \cdot |P''_{\delta}| \rceil$.
    Let us prove that this claim is true for $\delta+1$ assuming that is it true for
    $\delta$ (unless $\delta=0$, when there is no assumption).
    If $\delta < \edl{P\fragmentco{j}{m}}{Q}$, then $|P''_{\delta+1}|\le j-m$ and the
    claim is void, so we only consider $\delta \ge \edl{P\fragmentco{j}{m}}{Q}$.
    If $\delta > \edl{P\fragmentco{j}{m}}{Q}=\eds{P\fragmentco{j}{m}}{\rot^{-\pi'}(Q)}$,
    then $|P''_{\delta}|>j-m$ and the inductive assumption guarantees
    $\edl{R}{Q}=\eds{R}{\rot^{-\pi'}(Q)}=\delta$ for $R=P''_{\delta}$.
    Otherwise, we have
    \[\delta=\edl{P\fragmentco{j}{m}}{Q}=\eds{P\fragmentco{j}{m}}{\rot^{-\pi'}(Q)},\]
    in which case $\edl{R}{Q}=\eds{R}{\rot^{-\pi'}(Q)}=\delta$ holds for $R=P\fragmentco{j}{m}$.
    In either case, we also have $\delta < \betav k/m\cdot |R|$,
    and hence $|R| \ge (2\delta+1)\cdot m/\alphav k \ge (2\delta+1)|Q|$.
    Therefore, we can apply \cref{lem:fixed} to $R$.
    If $\edl{P''_{\delta+1}}{Q}\le \delta$, then \cref{lem:fixed} yields
    $\eds{P''_{\delta+1}}{\rot^{-\pi'}(Q)}=\edl{P''_{\delta+1}}{Q}\le \delta$,
    which contradicts the definition of~$P''_{\delta+1}$.
    Hence, $\edl{P''_{\delta+1}}{Q}\ge \delta+1$.
    Due to $\eds{P''_{\delta+1}}{\rot^{-\pi'}(Q)}=\delta+1$, we have
    $\edl{P''_{\delta+1}}{Q}=\delta+1$.
    Moreover, $\lceil  \betav k/m\cdot |P''_{\delta+1}|\rceil \ge \lceil  \betav k/m\cdot
    |R|\rceil > \delta$ guarantees $ \lceil  \betav k/m\cdot |P''_{\delta+1}|\rceil\ge
    \delta+1$,
    which completes the inductive proof.

    In particular, if we encounter a suffix $P''_{\delta}$ that satisfies $|P''_{\delta}|>m-j$
    and $\delta \ge \betav k/m\cdot |P''_{\delta}|$,
    then $\edl{P''_\delta}{Q}=\lceil \betav k/m \cdot |P''_{\delta}| \rceil$.
    On the other hand, if no such suffix $P''_{\delta}$ exists,
    then $\edl{R}{Q} < \betav k/m\cdot |R|$ holds for each suffix $R$ of~$P$ of~length $|R|>m-j$
    (because $R=P'_{\delta}$ is the shortest suffix $R$ of~$P$ with $\edl{R}{Q}=\delta$
    assuming $\delta > \edl{P\fragmentco{j}{m}}{Q}$). Thus, \cref{ln:ebcw}
    of~\cref{alg:E1} is also implemented correctly.

    For the running time analysis, observe that each iteration of~the outer while loop
    processes at least $\floor{m/\betav k}$ characters of~$P$,
    so there are at most $\Oh(k)$ iterations of~the outer while loop.
    In each iteration, we perform one \perOpName operation, a constant number of
    \accOpName operations, and at most $\betav k/m \cdot (j'-j)$ calls to the generator
    \errgen. These calls use $\Oh((\betav k/m \cdot (j'-j))^2)$ time in the \modelname model,
    which is $\Oh(k^2)$ in total across all iterations (since the function $x\mapsto x^2$
    is convex).
    Similarly, we bound the running time of~the calls to the generator \errgenR:
    As we find at most $\betav k/m \cdot m = \betav k$ errors,
    \errgenR uses at most $\Oh(k^2)$ time.
    Overall, \cref{alg:EiP1} thus uses $\Oh(k^2)$ time in the \modelname model.
\end{proof}

\subsection{Computing Occurrences in the Periodic Case}

We start this subsection with a subroutine to compute a \emph{witness} that
a string $S$ has a small edit distance to a string $Q^{\infty}\!$.

\begin{lemma}[\texttt{FindAWitness($k$, $Q$, $S$)}]\label{lm:witness}
    Let $k$ denote a positive integer,
    let $S$ denote a string,
    and let $Q$ denote a primitive string that satisfies $|S|\ge (2k+1)|Q|$ or $|Q|\le 3k+1$.

    Then, we can be compute a \emph{witness} $Q^\infty\fragmentco{x}{y}$
    such that $\ed(S,Q^\infty\fragmentco{x}{y})=\edl{S}{Q}\le k$,
    or report that $\edl{S}{Q}>k$.
    The algorithm takes $\Oh(k^2)$ time in the \modelname model.
\end{lemma}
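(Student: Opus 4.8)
The plan is to reduce the task to locating the rotation of~$Q$ at which an optimal substring of~$Q^{\infty}$ begins, and then to extract the witness with a single batched Landau--Vishkin sweep.

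First I would dispose of the degenerate case $|Q|=1$: here $\edl{S}{Q}=\ed(S,Q^*)=\hd(S,Q\position{0}^{|S|})$ (substituting every off-character is never worse than shifting), so a \misOpLName generator from~\cref{lm:misoph} lists the mismatches in $\Oh(k)$ time; the witness is $Q^{\infty}\fragmentco{0}{|S|}$ if their number is at most~$k$, and otherwise we report $\edl{S}{Q}>k$. For $|Q|\ge 2$, observe that $\edl{S}{Q}=\min_{p}\ed\bigl(S,\rot^{p}(Q)^*\bigr)$ over the rotations of~$Q$, so it suffices to (i)~produce an $\Oh(k)$-size set $\Pi$ of candidate starting phases that provably contains every phase realising $\ed(S,\rot^{p}(Q)^*)\le k$, and then (ii)~evaluate the minimum over $p\in\Pi$ together with a witness.

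For step~(i): if $|Q|\le 3k+1$ we simply take all fewer than $3k+1$ phases. Otherwise the hypothesis yields $|S|\ge(2k+1)|Q|$, so $S$ contains the $2k+1$ disjoint length-$|Q|$ fragments $S_i:=S\fragmentco{i|Q|}{(i+1)|Q|}$. Any $S_i$ that is not a rotation of~$Q$ (as a string) must be touched by an error in every $\le k$-error alignment of~$S$ with a substring of~$Q^{\infty}$, so at least $k+1$ of the $S_i$ are rotations of~$Q$, and for each such $i$ the offset $q_i$ with $S_i=\rot^{q_i}(Q)$ is obtained in $\Oh(1)$ time via the \cycEqOpName operation. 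The crux -- and the step I expect to be the main obstacle -- is to show, using the splitting and synchronization machinery of~\cref{fct:split,lem:fixed,lem:synchr} (together with the locked-fragment decomposition of~\cref{lem:locked}), that every optimal window starts at a phase within $\Oh(k)$ of the offset of an appropriately anchored rotation-fragment, and that all phases with $\ed(S,\rot^{p}(Q)^*)\le k$ are mutually within $\Oh(k)$; this lets us take $\Pi$ to be an $\Oh(k)$-wide band of phases. One must be careful here that a fragment which is a rotation of~$Q$ \emph{as a string} need not be matched without error in the \emph{global} optimal alignment, so the band has to be anchored at a fragment whose (consistent) neighbourhood certifies that the local phase is close to the true starting phase.

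For step~(ii) I would run one resumable adaptation of the dynamic program underlying \cref{lm:misope}: treat $S$ as the pattern and the periodic string $Q^{\infty}$ as the text, but seed the initial row so that a match may begin on any of the $\Oh(k)$ diagonals determined by~$\Pi$, and then advance the error level $k{+}1$ times, extending each maintained diagonal by a single $\lceOp{\cdot}{Q^{\infty}\fragmentco{\cdot}{\cdot}}$ query (\cref{lm:inflcp}), costing $\Oh(1)$ \modelname operations apiece. Since the $\Oh(k)$ initial diagonals spread outward by at most~$k$, only $\Oh(k)$ diagonals are ever active, so the whole sweep uses $\Oh(k^2)$ \modelname operations. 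The best cell gives $\edl{S}{Q}$; if it exceeds~$k$ we report so, and otherwise we recover the witness $Q^{\infty}\fragmentco{x}{y}$ by following back-pointers, exactly as in the {\tt Alignment} routine of~\cref{lm:misope}. The $|Q|=1$ case, the batched sweep, and the traceback are routine given the tools already established; only the phase-localization argument in step~(i) requires real work.
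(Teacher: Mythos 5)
You follow the same two-stage architecture as the paper: restrict the starting phase to an $\Oh(k)$-wide candidate set, then run a multi-start Landau--Vishkin computation over it (the paper calls \texttt{Verify} from \cref{lm:verifye} on an interval $J$ and then uses the generator of \cref{lm:misope} to recover the end position), and your step~(ii) is essentially identical to that. The gap is in step~(i), which you yourself flag as the main obstacle and never carry out: you do not establish that an $\Oh(k)$-wide band of phases containing every realizable start can actually be \emph{computed} from the offsets $q_i$ of the blocks $S_i$ that equal rotations of $Q$. Anchoring at a single such block is unsound for exactly the reason you note (that block need not be matched error-free in the optimal alignment, so its offset may be far from the true phase $x$), while taking the union of width-$\Oh(k)$ bands around all the $q_i$ can yield a candidate set of total size $\Theta(k^2)$, which makes the verification stage cost $\Theta(k^3)$. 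Moreover, the lemmas you invoke (\cref{fct:split,lem:fixed,lem:synchr,lem:locked}) give at most that any two realizable phases are $\Oh(k)$-close to each other; they do not locate that band among the $q_i$'s.

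The paper closes this gap with a voting argument plus primitivity, and this is the actual content of its proof (its two claims). It collects \emph{all} outputs of $\cycEqOp{S_i}{Q}$ over the $2k+1$ blocks into a multiset $R$ and keeps only intervals $\fragment{p}{p+k}$ containing at least $k+1$ elements of $R$. Correctness: in any alignment of $S$ with $Q^\infty\fragmentco{x}{y}$ with at most $k$ errors, at least $k+1$ blocks are matched exactly, and the offsets $\delta_i$ of the fragments of $Q^\infty$ aligned to the blocks drift from $\delta_0=x$ by at most the total number of insertions and deletions, hence all lie in an interval $\fragment{p}{p+k}$ containing $x$; each exactly matched block contributes its offset to $R$, so this interval is heavy and $x$ is covered. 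Width: since $Q$ is primitive, each block contributes at most one offset per residue class modulo $|Q|$, so $R$ is a union of at most $2k+1$ arithmetic progressions with difference $|Q|$; consequently any two heavy intervals intersect modulo $|Q|$, and the union $J$ of heavy intervals satisfies $|J|\le 3k+1$, so the single \texttt{Verify} call costs $\Oh(k^2)$. Without this (or an equivalent) argument your set $\Pi$ is not justified, so the proposal as written does not prove the lemma.
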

\begin{algorithm}[t]
    \SetKwBlock{Begin}{}{end}
    \SetKwFunction{verify}{Verify}
    \SetKwFunction{witness}{FindAWitness}
    \witness{$k$, $Q$, $S$}\Begin{
        \tcp{Compute ``correct'' rotation(s) of~$Q$}
        \lIf{$|Q|\le 3k+1$}{$J \gets \fragmentco{0}{|Q|}$}
        \Else{
        multi-set $R \gets \{\}$\;
        \For{$i\gets 0$ \KwSty{to} $2k$}{
            $R \gets R \cup \cycEqOp{S\fragmentco{i|Q|}{(i+1)|Q|}}{Q}$\;
        }
        $I \gets \bigcup\{\fragment{p}{p+k} : p\in \mathbb{Z}\text{ and }\fragment{p}{p+k}
        \text{ contains at least }k+1\text{ elements of~}R\}$\;
        Let $J\subseteq \fragmentco{0}{2|Q|}$ denote a shortest interval that satisfies $I \bmod
        |Q|\sub J\bmod |Q|$\;
        }
        \BlankLine
        \tcp{Compute the start position of~the witness}
        $Occ \gets \verify{$S$, $Q^\infty$, $k$, $J$}$\;
        \lIf{$Occ = \emptyset$}{\Return$\bot$}
        Let $(x,k') \in Occ$ be an arbitrary element minimizing $k'$\;
        \BlankLine
        \tcp{Compute the end position of~the witness}
        generator $\mathbf{G} \gets \errgen{$S$, $\rot^{-x}(Q)$}$\;
        \lFor{$i \gets 1$ \KwSty{to} $k'$}{\nxt{$\mathbf{G}$}}
        $(\lambda,\lambda')\gets \nxt{$\mathbf{G}$}$\;
        \Return{$Q^\infty\fragmentco{x}{x+\lambda'}$}\;
    }
    \caption{Finding a witness $Q^{\infty}\fragmentco{x}{y}$ for $\edl{S}{Q} \le k$.}\label{alg:witness}
\end{algorithm}
\begin{proof}
    For a set $A\subseteq \mathbb{Z}$ and an integer $p>0$, we define
    $A\bmod p := \{a \bmod p \mid a\in A\}$.
    We first compute a (short) interval $J$ such that $\OccE_k(S,Q^\infty)\bmod |Q| \sub J
    \bmod |Q|$.
    If $|Q|\le 3k+1$, then we simply set $J=\fragmentco{0}{|Q|}$.
    Otherwise, we proceed similarly as in the Hamming distance setting (\cref{lem:findAPeriod}),
    where we computed a majority string of~the first $2k + 1$ length-$|Q|$ subsequent
    fragments $S_1,\ldots,S_{2k}$ of~$S$.
    However, now we need to accommodate for insertions and deletions of~a $k$-error occurrence.
    Hence, we first compute an auxiliary
    set $I$ defined as the union of~intervals $\fragment{p}{p+k}$
    such that for at least $k + 1$ fragments $S_i$ of~$S$,
    we have $Q=\rot^{j}(S_i)$ for $j \in \fragment{p}{p+k}$.
    Finally, we set $J\sub \fragmentco{0}{2|Q|}$ to be a shortest interval satisfying
    $I\bmod |Q|\sub J\bmod |Q|$.
    Here, $J\bmod |Q|$ can be interpreted as a shortest cyclic interval (modulo $|Q|$)
    containing $I\bmod |Q|$.

    Having computed the set $J$, we use \verify
    from \cref{lm:verifye} to determine at which starting position~$x$ in $J$ we have an occurrence
    with the fewest number of~errors (or to report that the number of~errors is greater
    than $k$ everywhere).
    Finally, we use an \errgen
    from \cref{lm:misope} to compute the ending position $y$ of~the occurrence of
    $S$ as a prefix of~$Q^{\infty}\fragmentco{x}{}$.
    Consider \cref{alg:witness} for a pseudo-code implementation.

    The correctness is based on the aforementioned characterization of~$J$:
    \begin{claim}\label{cl:wt0}
        The interval $J$ satisfies
        $\OccE_k(S,Q^\infty)\bmod |Q| \subseteq J \bmod |Q|$.
    \end{claim}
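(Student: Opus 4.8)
The plan is to handle the two cases in the definition of~$J$ separately. If $|Q|\le 3k+1$, the algorithm sets $J=\fragmentco{0}{|Q|}$, so $J\bmod|Q|$ is already the set of all residues modulo~$|Q|$ and the inclusion is trivial. So assume $|Q|>3k+1$; then the hypothesis of the lemma forces $|S|\ge (2k+1)|Q|$, so the $2k+1$ fragments $S_i:=S\fragmentco{i|Q|}{(i+1)|Q|}$ (for $i\in\position{2k+1}$) are all well defined and have length~$|Q|$.

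Fix $x\in\OccE_k(S,Q^\infty)$, pick $y\ge x$ with $\ed(S,Q^\infty\fragmentco{x}{y})\le k$, and fix an optimal alignment $\mathcal{A}$ of~$S$ with $Q^\infty\fragmentco{x}{y}$. First I would show that at least $k+1$ of the fragments $S_0,\dots,S_{2k}$ are \emph{clean}, meaning that $\mathcal{A}$ maps all of their positions, in order and without an edit, onto a contiguous block $Q^\infty\fragmentco{q_i}{q_i+|Q|}$ of~$Q^\infty$. Indeed, each of the at most~$k$ edit operations of~$\mathcal{A}$ can spoil cleanness of at most one fragment: a substitution or a deletion of an $S$-position~$p$ concerns only the fragment containing~$p$, and an inserted character of~$Q^\infty$ concerns a fragment only when it is placed strictly between two positions of that fragment. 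For a clean fragment, $S_i=Q^\infty\fragmentco{q_i}{q_i+|Q|}=\rot^{-(q_i\bmod|Q|)}(Q)$; since $Q$ is primitive, $\cycEqOp{S_i}{Q}$ is the singleton $\{q_i\bmod|Q|\}$, so the residue $q_i\bmod|Q|$ is an element of the multiset~$R$.

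Next I would control the residues of the clean fragments. For a clean~$S_i$, the quantity $\varphi_i:=q_i-x-i|Q|$ equals $b_i-d_i$, where $b_i$ and~$d_i$ count the insertions and the deletions that~$\mathcal{A}$ performs before position $i|Q|$ of~$S$; both are non-decreasing in~$i$ and $b_{2k}+d_{2k}\le k$. Hence, for any clean $S_i,S_j$ we have $|\varphi_i-\varphi_j|\le (b_{2k}-b_0)+(d_{2k}-d_0)\le k$ and $|\varphi_i|\le\max(b_i,d_i)\le k$, so $\{0\}\cup\{\varphi_i : S_i\text{ clean}\}\subseteq\fragment{a}{a+k}$ for some $a\in\mathbb{Z}$. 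Reducing modulo~$|Q|$ and using $q_i\equiv x+\varphi_i\pmod{|Q|}$, this means that $x\bmod|Q|$ and all residues $q_i\bmod|Q|$ of clean fragments lie in one cyclic arc $A:=\fragment{a}{a+k}\bmod|Q|$ of exactly $k+1$ residues (recall $k+1<|Q|$).

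Finally, I would conclude that $A\subseteq I\bmod|Q|$. Since there are at least $k+1$ clean fragments, $R$ contains at least $k+1$ elements whose residues lie in~$A$; their representatives inside any window of~$|Q|$ consecutive integers that reduces bijectively onto the residues all fall into the length-$(k+1)$ subwindow $\fragment{p}{p+k}$ with $\fragment{p}{p+k}\bmod|Q|=A$. Thus $\fragment{p}{p+k}$ contains at least $k+1$ elements of~$R$, so $A=\fragment{p}{p+k}\bmod|Q|\subseteq I\bmod|Q|\subseteq J\bmod|Q|$, and in particular $x\bmod|Q|\in A\subseteq J\bmod|Q|$; as~$x$ was arbitrary, the claim follows. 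The step I expect to be the main obstacle is this last one: one must make sure that an arc~$A$ wrapping around~$0$ is still captured by~$I$ --- which is precisely why the elements of~$R$ are re-indexed inside a window of width~$2|Q|$ and why~$J$ is an interval in $\fragmentco{0}{2|Q|}$, not in $\fragmentco{0}{|Q|}$ --- and one must double-check the length accounting $|A|\le k+1\le|Q|$ that lets a single run of $k+1$ consecutive integers suffice.
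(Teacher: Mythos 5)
Your proof is correct and follows essentially the same route as the paper's: among the $2k+1$ length-$|Q|$ blocks of $S$, at least $k+1$ are aligned without error, their rotation offsets together with $x$ fit in a window of width $k+1$ (the paper telescopes $\sum_i\bigl||Q_i|-|S_i|\bigr|\le k$ where you track insertion/deletion counts, which is the same accounting), so $x$ is captured by $I$ and hence by $J$. Your closing remark about wrap-around and treating the elements of $R$ as full arithmetic progressions with difference $|Q|$ matches how the paper itself handles this (cf.\ its Claim on $|J|\le 3k+1$), so no gap there.
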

    \begin{claimproof}
        The claim trivially holds if $|Q|\le 3k+1$, so we assume that $|Q|> 3k+1$.
        For every $i \in \fragment{0}{2k}$, define $S_i := S\fragmentco{i\,|Q|}{(i+1)\,|Q|}$.
        Consider an optimum alignment between $S$ and its $k$-error occurrence
        $Q^\infty\fragmentco{x}{y}$,
        and let $Q_i=Q^\infty\fragmentco{x_i}{x_{i+1}}$ denote the fragment aligned to $S_i$.
        Consider the multi-set $R:= \bigcup_i \cycEqOp{S_i}{Q}$.
        Next, consider the values $\delta_i := x_i-i|Q|$ for $i \in \fragment{0}{2k}$.
        We have $\delta_0=x$, and $\delta_{i+1}=\delta_i+|Q_i|-|S_i|$ for $i>0$.
        Since \[
            \sum_{i=0}^{2k}\big||Q_i|-|S_i|\big|\le \sum_{i=0}^{2k}\ed(Q_i,S_i)\le k,
        \]
        all values $\delta_i$ belong to an interval of~the form $\fragment{p}{p+k}$ for some integer $p$.
        Moreover, note that $Q_i=S_i$ holds for at least $k+1$ fragments $Q_i$;
        these fragments satisfy $Q=\rot^{\delta_i}(S_i)$ and thus contribute $\delta_i$ to $R$.
        We conclude that there is an interval $\fragment{p}{p+k}$ containing $x$ and at least
        $k+1$ elements of~$R$.
        Consequently, we have $x\in I$. By definition of~$J$, this yields $x\bmod |Q| \in
        J \bmod |Q|$.
    \end{claimproof}

    Now, let $Q^\infty\fragmentco{x}{y}$ denote a witness that satisfies
    $\edl{S}{Q}=\ed(S,Q^\infty\fragmentco{x}{y})$.
    By \cref{cl:wt0}, there is a matching fragment
    $Q^\infty\fragmentco{x'}{y'}=Q^\infty\fragmentco{x}{y}$
    starting at $x'\in J$. Thus, we may assume without loss of~generality that $x\in J$.
    As we verify all possible starting positions in $J$ using \verify from
    \cref{lm:verifye}, we correctly compute the starting position $x$ of~a witness occurrence of
    $S$ in $Q^{\infty}$. Further, as we use an \errgen from \cref{lm:misope}, we also
    compute the corresponding ending position correctly.

    As for the running time, we prove the following characterization of~$J$:
    \begin{claim}\label{cl:wt1}
        The interval $J$ satisfies $|J|\le 3k+1$.
    \end{claim}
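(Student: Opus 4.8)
The plan is to follow the two-case split by which \cref{alg:witness} defines $J$. If $|Q|\le 3k+1$, the algorithm sets $J=\fragmentco{0}{|Q|}$, so $|J|=|Q|\le 3k+1$ and there is nothing more to do. All the content is in the case $|Q|>3k+1$, where $J$ is extracted from the multi-set $R$ and the auxiliary set $I$.

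First I would bound $|R|$. The multi-set $R$ is the union of the $2k+1$ outputs $\cycEqOp{S_i}{Q}$ for $i\in\fragment{0}{2k}$, where each $S_i=S\fragmentco{i|Q|}{(i+1)|Q|}$ has length $|Q|$. Since $Q$ is primitive, for each $i$ there is at most one shift $p\in\fragmentco{0}{|Q|}$ with $Q=\rot^p(S_i)$: either $S_i$ is a cyclic rotation of $Q$ — in which case $S_i$ is itself primitive and the shift is unique — or no such shift exists and the output is empty. Hence every call contributes at most one element, giving $|R|\le 2k+1$.

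Next I would argue that all ``witness intervals'' cluster together. Call an interval $\fragment{p}{p+k}$ \emph{heavy} if it contains at least $k+1$ elements of $R$ (counted with multiplicity); by definition $I$ is the union of all heavy intervals. Take two heavy intervals $\fragment{p_1}{p_1+k}$ and $\fragment{p_2}{p_2+k}$ with $p_1\le p_2$. If they were disjoint, i.e.\ $p_2\ge p_1+k+1$, then together they would contain at least $2k+2$ elements of $R$, contradicting $|R|\le 2k+1$. So $p_2-p_1\le k$, and therefore the left endpoints of all heavy intervals lie in a window of $k+1$ consecutive integers; consequently $I$, being the union of these length-$(k+1)$ intervals, is contained in a window of $2k+1$ consecutive integers.

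Finally, $I\bmod|Q|$ is contained in the reduction of such a window, hence in a cyclic interval of at most $2k+1$ residues modulo $|Q|$; since $2k+1\le 3k+1<|Q|$, this cyclic interval lifts to an honest interval of at most $2k+1$ integers lying inside $\fragmentco{0}{2|Q|}$. That interval is a valid candidate for the ``shortest interval $J\subseteq\fragmentco{0}{2|Q|}$ with $I\bmod|Q|\subseteq J\bmod|Q|$'', so the chosen $J$ satisfies $|J|\le 2k+1\le 3k+1$. Combining the two cases yields $|J|\le 3k+1$. I expect the only delicate point to be the bookkeeping: pinning down that $\cycEqOp{}{}$ contributes a single residue per $S_i$ (so that $|R|\le 2k+1$) and that reducing a $(2k+1)$-integer window modulo $|Q|$ and re-lifting stays within $\fragmentco{0}{2|Q|}$; once those conventions are fixed, the argument is a pure pigeonhole count and needs no combinatorial machinery beyond primitivity of $Q$. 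This bound will in turn feed into the running-time analysis of \cref{lm:witness}, since it caps the size of the interval passed to \verify and the number of subsequent \errgen steps.
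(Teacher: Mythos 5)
There is a genuine gap, and it sits exactly where you flagged the ``delicate bookkeeping'': your bound $|R|\le 2k+1$ is not what the algorithm produces. The operation $\cycEqOp{S_i}{Q}$ returns \emph{all integers} $j$ with $Q=\rot^j(S_i)$, and since $\rot$ acts with period $|Q|$ on length-$|Q|$ strings, primitivity of $Q$ gives one residue class modulo $|Q|$ per block, i.e.\ an \emph{infinite} arithmetic progression with difference $|Q|$ (this is also what the correctness argument in \cref{cl:wt0} needs, because the offsets $\delta_i$ contributed there are not reduced modulo $|Q|$, and the definition of $I$ ranges over all $p\in\mathbb{Z}$). So $R$ is a union of at most $2k+1$ infinite progressions, not a multiset of at most $2k+1$ integers. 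Your pigeonhole step then fails: two disjoint heavy windows do \emph{not} force $2k+2$ elements ``of a $(2k+1)$-element set'', and the conclusion that all heavy left endpoints lie in a window of $k+1$ consecutive integers is false. Concretely, if $S=Q^{2k+1}$, then every window $\fragment{p}{p+k}$ with $p\in\fragment{t|Q|-k}{t|Q|}$ for \emph{any} $t\in\mathbb{Z}$ is heavy, so heavy windows recur with period $|Q|$ and are spread over all of $\mathbb{Z}$; only their reductions modulo $|Q|$ cluster.

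The repair is to run the pigeonhole on progressions rather than on elements, which is the paper's argument: since $|Q|>3k+1$, a window of length $k+1$ meets each progression in at most one point, so a heavy window meets at least $k+1$ of the at most $2k+1$ progressions; hence any two heavy windows $\fragment{p}{p+k}$ and $\fragment{p'}{p'+k}$ share a progression and therefore intersect \emph{modulo} $|Q|$, which gives $\fragment{p'}{p'+k}\bmod|Q|\subseteq\fragment{p-k}{p+2k}\bmod|Q|$ and thus $|J|\le 3k+1$. Note that your ``tighter'' bound $2k+1$ is a symptom of the miscount: sharing a residue class only forces intersection modulo $|Q|$, not coincidence of windows in $\mathbb{Z}$, so the honest bound one gets is $3k+1$, exactly as claimed. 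Your handling of the easy case $|Q|\le 3k+1$ and the final lift of a cyclic interval into $\fragmentco{0}{2|Q|}$ are fine.
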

    \begin{claimproof}
        The claim trivially holds if $|Q|\le 3k+1$, so we assume that $|Q|>3k+1$.
        Recall that the multiset~$R$ in \cref{alg:witness} is the union of~$2k+1$ sets
        $\{j\in \mathbb{Z} : \rot^j(S_i)=Q\}$.
        As the string $Q$ is primitive, $R$ is the union of~at most $2k+1$ infinite
        arithmetic progressions with difference $|Q|$.
        In particular, if $\fragment{p}{p+k}$ and $\fragment{p'}{p'+k}$
        contain at least $k+1$ elements of~$R$ each, then $(\fragment{p}{p+k}\bmod |Q|)
        \cap (\fragment{p'}{p'+k} \bmod |Q|) \ne \emptyset$, and thus $\fragment{p'}{p'+k}
        \bmod |Q| \sub \fragment{p-k}{p+2k}\bmod |Q|$.
        Since $J$ is the union of~such intervals $\fragment{p'}{p'+k}$,
        we have $J \bmod |Q| \sub \fragment{p-k}{p+2k}\bmod |Q|$.
        By definition of~$I$, we conclude that $|I| \le |\fragment{p-k}{p+2k}|=3k+1$.
    \end{claimproof}

    Now, observe that computing the multiset $R$ (represented as the union of~infinite
    arithmetic progressions modulo $|Q|$) takes $\Oh(k)$ time in the \modelname
    model; computing the sets $I$ and $J$ can be done in $\Oh(k \log\log k)$ time
    by sorting $R$ (restricted to $\fragmentco{0}{|Q|}$) and a subsequent cyclic scan over $R$.
    Further, by \cref{cl:wt1}, we call \verify on an interval of~length $\Oh(k)$;
    hence the call to \verify takes $\Oh(k^2)$ time in the \modelname model.
    Finally, as we query the \errgen for up to $k$ errors,
    the last step of~the algorithm takes $\Oh(k^2)$ time in the \modelname model as well.
    Hence in total, \witness runs in $\Oh(k^2)$ time in the \modelname model, completing
    the proof.
\end{proof}

\begin{lemma}[\texttt{FindRelevantFragment($P$, $T$, $k$, $d$, $Q$)}]\label{lem:erelevant}
    Let $P$ denote a pattern of~length $m$, let $T$ denote a text of~length $n$,
    and let $0 \le k\le m$ denote a threshold such that $n<\threehalfs m+k$.
    Further, let $d\ge 2k$ denote a positive integer and let
    $Q$ denote a primitive string that satisfies $|Q|\le m/8d$ and $\edl{P}{Q}\le d$.

    Then, there is an algorithm that computes a fragment $T'=T\fragmentco{\ell}{r}$ and an
    integer range~$I$ such that
    $\edl{T'}{Q}\le 3d$, $|\OccE_k(P,T)|=|\OccE_k(P,T')|$, $|I|\le 6d+1$, and
    $\OccE_k(P,T')\bmod |Q| \subseteq I \bmod |Q|$.
    The algorithm runs in $\Oh(d^2)$ time in the \modelname model,
\end{lemma}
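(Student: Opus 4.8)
The plan is to follow the blueprint of~\cref{lem:relevant}, substituting edit-distance primitives for the Hamming ones -- \texttt{FindAWitness} (\cref{lm:witness}) in place of \texttt{FindRotation}, and \texttt{EditGenerator}/\texttt{EditGenerator}$^R$ (\cref{lm:misope}) in place of the mismatch generators -- and paying for the fact that insertions and deletions can slide the alignment phase, which is controlled through \cref{fct:split} and \cref{lem:synchr}. First I would dispose of the case $\OccE_k(P,T)=\emptyset$, returning $T'=\varepsilon$ and $I=\fragmentco{0}{1}$; so assume otherwise. Set $d':=\floor{\threehalfs d}$ and let $C:=T\fragmentco{n-m+k}{m-k}$ be the \emph{core} fragment. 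Since every $k$-error occurrence $T\fragmentco{p}{j}$ satisfies $0\le p$, $j\le n$, and $m-k\le j-p$, the core $C$ lies inside every occurrence window $\fragmentco{p}{j}$. A short computation -- distinguishing whether $|Q|\le 3d'+1$ (so the length precondition of \cref{lm:witness} is vacuous) or $|Q|>3d'+1$ (so that $m\ge 8d|Q|$ forces $m$ to be large, whence $|C|\ge m/2-3k$ comfortably exceeds $(2d'+1)|Q|\le\tfrac{7}{16}m$) -- shows that \texttt{FindAWitness}$(d',Q,C)$ is applicable. I run it. If it reports $\bot$, then $\edl{C}{Q}>d'\ge k+d$; but any $k$-error occurrence $T\fragmentco{p}{j}$ gives $\edl{C}{Q}\le\edl{T\fragmentco{p}{j}}{Q}\le\ed(T\fragmentco{p}{j},P)+\edl{P}{Q}\le k+d$ by \cref{Etria}, a contradiction -- so $\OccE_k(P,T)=\emptyset$, already handled. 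Otherwise I obtain a witness $Q^\infty\fragmentco{x}{y}$ with $\ed(C,Q^\infty\fragmentco{x}{y})=\edl{C}{Q}\le d'$, which fixes the phase of~$Q$ inside~$C$.

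Since $|C|\ge(2d'+1)|Q|$, among the $2d'+1$ disjoint length-$|Q|$ blocks of~$C$ at least one is aligned without errors in this witness (as in the proof of~\cref{fct:split}), yielding a position $\pi^\star$ inside~$C$ that aligns to a multiple of~$|Q|$, with $\edl{T\fragmentco{n-m+k}{\pi^\star}}{Q}+\edl{T\fragmentco{\pi^\star}{m-k}}{Q}\le d'$. I then grow a fragment around~$\pi^\star$: running \texttt{EditGenerator}$(T\fragmentco{\pi^\star}{n},\rot^{-x'}(Q))$ for the rotation placing $\pi^\star$ at phase~$0$, and calling \texttt{Next} until the suffix is exhausted or the error count would exceed~$d'$, returns the largest $r\le n$ with $\ed(T\fragmentco{\pi^\star}{r},\text{prefix of a power of }Q)\le d'$; symmetrically, \texttt{EditGenerator}$^R$ returns the smallest $\ell\ge 0$ with $\ed(T\fragmentco{\ell}{\pi^\star},\text{suffix of a power of }Q)\le d'$. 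I set $T':=T\fragmentco{\ell}{r}$. Concatenating the two witnesses at~$\pi^\star$ (both landing at a multiple of~$|Q|$ there) exhibits $T'$ as being within $2d'\le 3d$ edits of a substring of~$Q^\infty$, i.e.\ $\edl{T'}{Q}\le 3d$.

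Next I would verify that $T'$ contains every occurrence window and extract~$I$. For arbitrary $p\in\OccE_k(P,T)$ with optimal end~$j$ we have $\edl{T\fragmentco{p}{j}}{Q}\le k+d\le\threehalfs d$. Applying \cref{fct:split} to the core $C$ -- a sub-fragment of $T\fragmentco{p}{j}$ -- with the two witnesses ``$C$'s own'' and ``the restriction of an optimal alignment of $T\fragmentco{p}{j}$'', and invoking primitivity of~$Q$, shows that an optimal alignment of $T\fragmentco{p}{j}$ with~$Q^\infty$ may be taken to pass through~$\pi^\star$ at a multiple of~$|Q|$; hence $T\fragmentco{p}{\pi^\star}$ is within $\threehalfs d$ edits of a suffix of a power of~$Q$ and $T\fragmentco{\pi^\star}{j}$ within $\threehalfs d$ edits of a prefix of a power. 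Since the left and right extensions stopped exactly at the $d'$-edit boundaries, this forces $\ell\le p$ and $j\le r$, so all occurrence windows lie in~$T'$; as $T'$ is a fragment of~$T$, we get $\OccE_k(P,T')=\{p-\ell:p\in\OccE_k(P,T)\}$, and the two sets have equal size. Shortening $T'$ at both ends down to the extreme occurrence start/end -- which are visible among the values already computed, requiring no further \modelname operations -- makes $T'$ have a $k$-error occurrence of~$P$ as both a prefix and a suffix, so \cref{lem:Eaux}\eqref{it:Emult} yields $\OccE_k(P,T')\bmod|Q|\subseteq\fragment{0}{3d}\cup\fragment{|Q|-3d}{|Q|-1}$; taking $I:=\fragment{-3d}{3d}$ gives $|I|=6d+1$ and $\OccE_k(P,T')\bmod|Q|\subseteq I\bmod|Q|$.

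For the running time, \texttt{FindAWitness} costs $\Oh(d'^2)=\Oh(d^2)$ \modelname time, each generator makes $\Oh(d')$ calls to \texttt{Next} with the $i$-th costing $\Oh(i)$ for $\Oh(d^2)$ total, and all remaining bookkeeping is $\Oh(d)$; hence $\Oh(d^2)$ overall. I expect the main obstacle to be exactly the phase bookkeeping of the third paragraph: because insertions and deletions shift the alignment phase, one must certify that a single reference point $\pi^\star$ inside the core is simultaneously ``phase-$0$'' for the core's witness and, up to the available slack, for the witness of every $k$-error occurrence, and this is where \cref{fct:split} (or, packaged differently, \cref{lem:synchr}) does the real work; the accompanying constant-chasing needed to discharge the preconditions of \cref{lm:witness} and \cref{lem:Eaux} is routine but must be carried out with care.
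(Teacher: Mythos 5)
Your overall plan (find a witness for the core $T\fragmentco{n-m+k}{m-k}$, extend it with \texttt{EditGenerator}/\texttt{EditGenerator}$^R$, and use \cref{fct:split}/\cref{lem:synchr} to control phases) is the right family of ideas, but the central step is broken. You anchor both extensions at an interior point $\pi^\star$ that is block-aligned in the \emph{core's own} witness, and you claim via \cref{fct:split} that an optimal alignment of an arbitrary $k$-error occurrence $T\fragmentco{p}{q}$ ``may be taken to pass through $\pi^\star$ at a multiple of $|Q|$''. \cref{fct:split} does not give this: it produces \emph{some} split point that is block-aligned in both alignments simultaneously, and that point depends on the occurrence; it need not be the pre-chosen $\pi^\star$, at which the occurrence's restricted alignment may sit at an arbitrary phase. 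If you repair this by splicing at the common split point $\sigma$, the re-phased alignment of $T\fragmentco{\pi^\star}{q}$ costs up to (core-witness cost on $\fragmentco{\pi^\star}{\sigma}$) plus (occurrence cost on $\fragmentco{\sigma}{q}$), i.e.\ up to roughly $3d$, whereas your generators stop at budget $\floor{\threehalfs d}$ per side --- so the conclusion $q\le r$ (and symmetrically $\ell\le p$) does not follow, and you cannot certify that $T'$ contains all occurrences. The paper avoids exactly this by extending from the two \emph{ends} of the core, with rotations $\rot^{-x'}(Q)$ and $\rot^{-y'}(Q)$ fixed by the core's witness, and invoking \cref{lem:fixed}: for an occurrence, its own alignment already bounds $\ed(T\fragmentco{n-m+k}{q},Q^\infty\fragmentco{x''}{y})\le\threehalfs d$, and \cref{lem:fixed} converts this to the \emph{same} anchored rotation without any additive re-phasing cost; the overlap of the two extensions on the core is also what makes $\edl{T'}{Q}\le 3d$ via \cref{lem:synchr}.

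The construction of $I$ is also not implementable as you describe. ``Shortening $T'$ at both ends down to the extreme occurrence start/end'' presupposes knowing extremal elements of $\OccE_k(P,T)$, which the algorithm has not computed --- \texttt{FindRelevantFragment} is precisely the preprocessing used \emph{before} occurrences are found (in \texttt{PeriodicMatches}/\texttt{SynchedMatches}), and these positions are certainly not ``visible among the values already computed''. Even granting the trimming, applying \cref{lem:Eaux}\eqref{it:Emult} needs occurrences as a prefix and a suffix of the trimmed text and the rotation hypothesis $\ed(P,Q^*)=\edl{P}{Q}$, which you have not arranged. The paper instead makes a first \texttt{FindAWitness} call on $P$ itself to obtain the phase $x$ of an optimal alignment of $P$, and then \cref{lem:synchr} applied to $T\fragmentco{p}{m-k}$ yields $(n-m+k-p+x-x'+3d)\bmod|Q|\le 6d$, so the algorithm can output the explicit range $I=\fragment{n-m+k-\ell+x-x'-3d}{n-m+k-\ell+x-x'+3d}$ from quantities it actually holds; your proposal omits the witness for $P$ altogether and hence has no computable handle on $I$.
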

\begin{proof}
    We start with two calls to \texttt{FindAWitness} from \cref{lem:findAPeriod}
    in order to find a fragment $Q^\infty\fragmentco{x}{y}$ such that
    $\ed(P,Q^\infty\fragmentco{x}{y})=\edl{P}{Q}\le d$
    and a fragment $Q^\infty\fragmentco{x'}{y'}$ such that \[
        \ed(T\fragmentco{n-m+k}{m-k},Q^\infty\fragmentco{x'}{y'})
        =\edl{T\fragmentco{n-m+k}{m-k}}{Q} \le \threehalfs d.
    \]
    If the latter fragment does not exist, then we return the empty string $T'=\varepsilon$
    and the empty interval $I=\emptyset$.
    Otherwise, we proceed by computing the rightmost position $r$
    such that $\ed(T\fragmentco{n-m+k}{r},\rot^{-x'}(Q)^*)\le \threehalfs d$
    and the leftmost position $\ell$ such that
    $\eds{T\fragmentco{\ell}{{m-k}}}{\rot^{-y'}(Q)}\le \threehalfs d$.
    That is, we ``extend'' the fragment found in the text as much as possible.
    Afterwards, we return the fragment $T'=T\fragmentco{\ell}{r}$
    and the interval $I=\fragment{n-m+k-\ell+x-x'-3d}{n-m+k-\ell+x-x'+3d}$.
    Consider \cref{alg:erelevant} for implementation details.

    For the correctness, note that the due to the assumption on $Q$,
    the first call to \witness\ is valid and returns a witness $Q^\infty\fragmentco{x}{y}$
    (rather than $\bot$).
    Next, consider a $k$-error occurrence $T\fragmentco{p}{q}$ of~$P$.
    By triangle inequality (\cref{Etria}), we have\[
        \ed(T\fragmentco{p}{q},Q^\infty\fragmentco{x}{y})\le
        k+\ed(P,Q^\infty\fragmentco{x}{y}))\le \threehalfs d.
    \] Due to $|T\fragmentco{p}{q}|\ge m-k$, we have $q\ge m-k$ and $p\le n-m+k$,
    which yields \[
        \ed(T\fragmentco{n-m+k}{m-k},Q^\infty\fragmentco{x''}{y''})\le \threehalfs d
    \] for some integers $x'',y''$ with $x\le x'' \le y'' \le y$.
    Moreover, as in the proof~of~\cref{lem:Eaux}, we have
    $|T\fragmentco{n-m+k}{m-k}|=2(m-k)-n\ge(3d + 1)|Q|$ or $|Q|=1$,
    so the second call to \witness\ is valid.
    Thus, if the call returns $\bot$, then $\OccE_k(P,T)=\emptyset$.

    We henceforth assume that the call returned a witness $Q^\infty\fragmentco{x'}{y'}$.
    Next, we apply \cref{lem:fixed} for $S=T\fragmentco{n-m+k}{m-k}$. This is indeed
    possible because  $|S|\ge(3d + 1)|Q|$ or $|Q|=1$.
    Due to \[
        \edl{T\fragmentco{n-m+k}{q}}{Q}\le
        \ed(T\fragmentco{n-m+k}{q},Q^\infty\fragmentco{x''}{y})\le \threehalfs d,
    \]\cref{lem:fixed} yields
    \[
        \ed(T\fragmentco{n-m+k}{q},\rot^{-x'}(Q)^*)=\edl{T\fragmentco{n-m+k}{q}}{Q}\le
    \threehalfs d.\]
    Consequently, we have $q \le r$, because $r$ is computed correctly using \errgen from
    \cref{lm:misope}.
    Symmetrically, due to \[
        \edl{T\fragmentco{p}{m+k}}{Q}\le
    \ed(T\fragmentco{p}{m+k},Q^\infty\fragmentco{x}{y''})\le \threehalfs d,\]
    \cref{lem:fixed} yields
    \[\eds{T\fragmentco{p}{m-k}}{\rot^{-y'}(Q)}=\edl{T\fragmentco{p}{m+k}}{Q}\le
    \threehalfs d.\]
    Consequently, we have $p \ge \ell$, because $\ell$ is computed correctly using
    \errgenR from \cref{lm:misope}.
    We conclude that $T\fragmentco{p}{q}$ is contained in $T'=T\fragmentco{\ell}{r}$.
    Since $T\fragmentco{p}{q}$ was an arbitrary $k$-error occurrence of~$P$ in $T$,
    this implies $|\OccE_k(P,T')|=|\OccE_k(P,T)|$.

    Now consider the fragment $T\fragmentco{p}{m-k}$ whose prefix $T\fragmentco{p}{m-k}$
    satisfies \[\ed(T\fragmentco{p}{m-k},Q^\infty\fragmentco{x}{y''})\le  \threehalfs d\]
    and whose suffix $T\fragmentco{n-m+k}{m-k}$ satisfies
    \[\ed(T\fragmentco{n-m+k}{m-k},Q^\infty\fragmentco{x'}{y'})\le  \threehalfs d.\]
    We apply \cref{lem:synchr} to $T\fragmentco{p}{m-k}$;
    this is indeed possible because $|T\fragmentco{n-m+k}{m-k}|\ge(3d + 1)|Q|$ or $|Q|=1$.
    \Cref{lem:synchr} now implies $(n-m+k-p+x-x'+3d)\bmod |Q| \le 6d$.
    Consequently, we have \[
        p\bmod |Q| \in \fragment{n-m+k+x-x'-3d}{n-m+k+x-x'+3d}\bmod |Q|.
    \]
    Since $T\fragmentco{p}{q}$ was an arbitrary $k$-error occurrence of~$P$ in $T$,
    this implies \[\OccE_k(P,T')\bmod |Q| \sub I \bmod |Q|.\]
    Moreover, $|I|\le 6d+1$ holds trivially by construction.

    Now consider the fragment $T$,
    whose prefix $T\fragmentco{\ell}{m-k}$ satisfies
    \[\edl{T\fragmentco{\ell}{m-k}}{Q}\le \threehalfs d\]
    and whose suffix $T\fragmentco{n-m+k}{r}$ satisfies
    \[\edl{T\fragmentco{n-m+k}{r}}{Q}\le \threehalfs d.\]
    Again, we apply \cref{lem:synchr} to $T'$;
    this is indeed possible because
    $|T\fragmentco{n-m+k}{m-k}|\ge(3d + 1)|Q|$ or $|Q|=1$.
    \Cref{lem:synchr} now implies $\edl{T'}{Q}\le 3d$, as claimed.

    As for the running time in the \modelname model, the calls to
    \witness\ use $\Oh(d^2)$ time;
    the same is true for the usage of~\errgen\ and \errgenR.
    Thus, the algorithm takes $\Oh(d^2)$ time in the \modelname model.
\end{proof}
\begin{algorithm}[t!]
    \SetKwBlock{Begin}{}{end}
    \SetKwFunction{RFR}{FindRelevantFragment}
    \RFR{$P$, $T$, $k$, $d$, $Q$}\Begin{
        $Q^\infty\fragmentco{x}{y}\gets \witness{$d$, $Q$, $P$}$\;
        $Q' \gets \witness{$\floor{\threehalfs d}$, $Q$, $T\fragmentco{n-m+k}{m-k}$}$\;
        \lIf{$Q'=\bot$}{\Return{$(\eps,\emptyset)$}}
        $Q^\infty\fragmentco{x'}{y'}\gets Q'$\;
        \BlankLine
        \tcp{Extend $Q'$ as much as possible to the right.}
        generator $\mathbf{G} \gets \errgen{$T\fragmentco{n-m+k}{n}$, $\rot^{-x'}(Q)$}$\;
        \lFor{$i \gets 0$ \KwSty{to} $\floor{\threehalfs d}$}{%
            $(\lambda,\_)\gets \nxt{$\mathbf{G}$}$%
        }
        $r \gets n-m+k+\lambda$\;
        \BlankLine
        \tcp{Extend $Q'$ as much as possible to the left.}
        generator $\mathbf{G'} \gets \errgenR{$T\fragmentco{0}{m-k}$, $\rot^{-y'}(Q)$}$\;
        \lFor{$i \gets 0$ \KwSty{to} $\floor{\threehalfs d}$}{%
            $(\lambda',\_)\gets \nxt{$\mathbf{G'}$}$%
        }
        $\ell \gets m-k-\lambda'$\;
        \Return{$(T\fragmentco{\ell}{r},\fragment{n-m+k-\ell+x-x'-3d}{n-m+k-\ell+x-x'+3d})$}\;
    }
    \caption{A \modelname algorithm computing a \emph{relevant} fragment $T'$ of~$T$
    containing all $k$-error  occurrences of~$P$ in $T$,
    and an interval $I$ such that  $\OccE_k(P,T')\bmod |Q| \subseteq I \bmod
|Q|$.}\label{alg:erelevant}
\end{algorithm}

\begin{lemma}[{\tt Locked($S$, $Q$, $d$, $k$)}: Implementation of~\cref{lem:klocked}]\label{lem:alg_locked}
    Let $S$ denote a string, let $Q$ denote a primitive string,
    let $d$ denote a positive integer such that $\edl{S}{Q}\le d$
    and $|S| \ge (2d+1)|Q|$, and let $k$ denote a non-negative integer.

    Then, there is an algorithm that computes
    disjoint locked fragments $L_1,\ldots,L_{\ell} \preceq S$
    such that $L_1$ is a $k$-locked prefix of~$S$, $L_\ell$ is a suffix of~$S$, and
    $\edl{L_i}{Q} > 0$ for $1<i<\ell$. Moreover, we have \[
        \edl{S}{Q}=\sum_{i=1}^{\ell} \edl{L_i}{Q}\quad
        \text{and}\quad\sum_{i=1}^{\ell}|L_i| \le (5|Q|+1)d+2(k+1)|Q|.\]
    The algorithm takes $\Oh(d^2+k)$ time in the \modelname model.
\end{lemma}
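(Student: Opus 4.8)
The algorithm will realize the constructive argument behind \cref{lem:klocked}, which itself refines the proof of \cref{lem:locked}: start from the block partition $S=S_0^{(0)}\cdots S_{s^{(0)}}^{(0)}$ induced by an optimal alignment of $S$ with a substring of $Q^\infty$, artificially raise $\Delta_0^{(0)}$ by $k+1$, exhaustively apply the four merging rules (with their stated priorities), and always keep the first and last blocks among the output fragments. The two things I need to supply are how to obtain the initial partition in the \modelname model and how to simulate the merging process fast enough: a literal simulation touches $\Theta(|S|/|Q|)$ blocks, which is way above the $\Oh(d^2+k)$ budget, so I will work with a \emph{compressed} representation.

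First I would call $\texttt{FindAWitness}(d,Q,S)$ from \cref{lm:witness}; the hypothesis $|S|\ge(2d+1)|Q|$ makes the call valid, and in $\Oh(d^2)$ time it returns a witness $Q^\infty\fragmentco{x}{y}$ with $\ed(S,Q^\infty\fragmentco{x}{y})=\edl{S}{Q}\le d$. Next I would run an $\texttt{EditGenerator}(S,\rot^{-x}(Q))$ from \cref{lm:misope} (the rotation chosen so that a prefix of $Q'^\infty$ equals $Q^\infty\fragmentco{x}{\cdot}$), call \texttt{Next} exactly $\edl{S}{Q}$ times so that the reported prefix spans all of $S$ within $\edl{S}{Q}$ errors, and invoke \texttt{Alignment} to retrieve the list of at most $d$ edits of this optimal alignment, all in $\Oh(d^2)$ time. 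From the sorted edit list I reconstruct the partition in compressed form in $\Oh(d)$ time: a block is \emph{interesting} if it contains an edit or is the first or last block, and between consecutive interesting blocks lie maximal runs of exact copies of $Q$, which I record by their multiplicity only. There are at most $d+2$ interesting blocks, each stored as a fragment of $S$ together with its error count $\Delta_i^{(0)}$ (the number of edits it contains), and finally I add $k+1$ to $\Delta_0^{(0)}$. I would then keep this as a doubly linked list of items (interesting fragments carrying length and current $\Delta$; $Q$-runs carrying a multiplicity $\ge 1$) and simulate the merges: a type-1 merge fuses two \emph{adjacent} interesting fragments; a type-2/3/4 merge peels one or two copies of $Q$ off the neighbouring run(s) into an interesting fragment with $\Delta>0$, decrements its $\Delta$, and deletes a run once its multiplicity reaches $0$ (possibly exposing a new adjacency to re-examine). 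The key counting observation is that every merge either decreases the number of interesting fragments by one (type-1: at most $d+2$ times, since merging interesting fragments yields an interesting fragment, and once a fragment has absorbed a copy of $Q$ its length exceeds $|Q|$ so it stays interesting even after $\Delta$ reaches $0$) or decreases $\sum_i\Delta_i$ by one (the other types: at most $\edl{S}{Q}+k+1\le d+k+1$ times). Hence only $\Oh(d+k)$ compressed merges occur, and by maintaining small worklists of currently applicable merges of each type, updated in $\Oh(1)$ per merge, the simulation runs in $\Oh(d+k)$ time. When no rule applies, every block of the final partition is locked and $S_0$ is $k$-locked by \cref{clm:locked} and the analogous claim in the proof of \cref{lem:klocked}; I output $S_0$ as $L_1$, $S_{s}$ as $L_\ell$, and every intermediate block with $\edl{\cdot}{Q}>0$ in between, computing each $\edl{L_i}{Q}$ via a further $\texttt{FindAWitness}$ call (each such call is on a string of length $\Oh(d|Q|)\ge(2d_i+1)|Q|$, so it is valid). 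The identity $\edl{S}{Q}=\sum_i\edl{L_i}{Q}$ and the bound $\sum_i|L_i|\le(5|Q|+1)d+2(k+1)|Q|$ are exactly \cref{clm:locked_whole} and the (modified) \cref{clm:locked_short}, so no new analysis is required. The overall time is dominated by the $\Oh(d^2)$ witness/alignment computations, giving $\Oh(d^2+k)$.

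The hard part will be item (ii): arguing that operating on the compressed representation (runs of $Q$ plus $\Oh(d)$ interesting blocks) faithfully implements the exhaustive merging of \cref{lem:locked}/\cref{lem:klocked} — here one uses that each merge rule inspects only $\Oh(1)$ blocks around the fragment being merged, so a run of copies of $Q$ can be treated as a single unit — together with the potential-function argument that the number of \emph{compressed} merges is $\Oh(d+k)$ rather than $\Theta(s^{(0)})$. Reconstructing the initial compressed partition from the edit list without enumerating the (possibly huge) intermediate blocks is the other place where care is needed, but it reduces to $\Oh(d)$ arithmetic once the edit positions are sorted.
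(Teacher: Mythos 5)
Your proposal follows essentially the same route as the paper's implementation: a call to \texttt{FindAWitness}, an \texttt{EditGenerator}/\texttt{Alignment} pass extracting the $\Oh(d)$ edits of an optimal alignment, a compressed initial partition that stores only the $\Oh(d)$ interesting blocks (with the budget of the first block raised by $k+1$), and a simulation of the merge rules whose step count is bounded by the potential ``number of interesting fragments plus total budget'', giving $\Oh(d+k)$ merges and $\Oh(d^2+k)$ time overall; your run-multiplicity bookkeeping is equivalent to the paper's endpoint bookkeeping on a queue and stack. One small caveat: the concluding \texttt{FindAWitness} calls you use to test $\edl{L_i}{Q}>0$ are neither justified nor needed --- an individual locked fragment can be far shorter than $(2d_i+1)|Q|$, so the precondition of \cref{lm:witness} may fail; instead note that the per-fragment costs of the fixed alignment sum to $\edl{S}{Q}=\sum_i \edl{L_i}{Q}$ by \cref{clm:locked_whole}, hence each fragment's alignment cost equals $\edl{L_i}{Q}$ exactly, and the error counts you already track suffice for the test.
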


\begin{algorithm}[p]
    \SetKwBlock{Begin}{}{end}
    \SetKwFunction{Locked}{Locked}
    \SetKwFunction{push}{push}
    \SetKwFunction{pop}{pop}
    \SetKwFunction{front}{front}
    \SetKwFunction{tops}{top}
    \SetKwRepeat{Do}{do}{while}
    \Locked{$S$, $Q$, $d$, $k$}\Begin{
        $Q^\infty\fragmentco{x}{y} \gets \witness{$d$, $Q$, $S$}$\;
        generator $\mathbf{G} \gets \errgen{$S$, $\rot^{-x}(Q)$}$\;
        \lDo{$\pi < |S|$}{%
            $(\pi,\pi')\gets \nxt{$\mathbf{G}$}$%
        }
        $A \gets \align{$\mathbf{G}$}$\;
        $\ell_Q \gets x$; $\ell_S \gets 0$\;
        $r_Q \gets |Q|\lceil{x/|Q|\rceil}$; $r_S \gets r_Q - \ell_Q$\;
        $\Delta \gets k+1$\;
        queue $F$\;
        \ForEach{$(s,q)\in A\cup (\pi,\pi')$}{\label{ln:locked_for}
            \lIf{$s=\bot$}{$s \gets q+x+r_S - r_Q-1$}
            \lIf{$q=\bot$}{$q \gets s-x+r_Q - r_S-1$}
            \If{$x+q  \ge r_Q$}{
                \push{$F$, $(S\fragmentco{\ell_S}{r_S},\Delta)$}\;
                $\ell_Q \gets |Q|\lfloor{(x+q)/|Q|\rfloor}$; $\ell_S \gets r_S+\ell_Q-r_Q$\;
                $r_Q \gets \ell_Q+|Q|$; $r_S \gets r_S+|Q|$\;
                $\Delta \gets 0$\;
            }
            $r_S \gets r_Q - x + s-q$\;
            \lIf{$(s,q)\ne (\pi,\pi')$}{$\Delta \gets \Delta+1$}
        }
        \push{$F$, $(S\fragmentco{\ell_S}{|S|},\Delta)$}\;
        stack $L$\;
        \While{$F$ is not empty}{
            $(S\fragmentco{\ell}{r},\Delta) \gets \front(F)$;
            \pop{$F$}\;
            \While{\KwSty{true}}{\label{ln:locked_while}
            \If{$\tops(L)=S\fragmentco{\ell'}{r'}$ \KwSty{and} $r'=\ell$}{
                $\ell \gets \ell'$\;
                \pop{$L$}\;
            }
            \ElseIf{$\front(F)=(S\fragmentco{\ell'}{r'},\Delta')$ \KwSty{and} $\ell'=r$}{
                $r \gets r'$\;
                $\Delta \gets \Delta+\Delta'$\;
                \pop{$F$}\;
            }
            \ElseIf{$\Delta > 0$}{
                $\ell \gets \max(0, \ell-|Q|)$\;
                $r \gets \min(|S|,r+|Q|)$\;
                $\Delta \gets \Delta-1$\;
            }
            \Else{
                \push{$L$, $S\fragmentco{\ell}{r}$}\;
                \KwSty{break}\;
            }
        }
    }
    \Return{$L$}\;
}
\caption{Computing locked fragments in a string $S$.}\label{alg:locked}
\end{algorithm}
\begin{proof}
    We implement the process from in the proof of \cref{lem:klocked}.
    Our algorithm is described below; see also \cref{alg:locked} for implementation details.

    First, we construct an optimal alignment between $S$ and a substring of~$Q^\infty$.
    For this, we first use \witness\ of~\cref{lm:witness}
    to obtain positions $x\le y$ such that $\edl{S}{Q}=\ed(S,Q^\infty\fragmentco{x}{y})$.
    Then, we apply a generator \errgen($S$, $\rot^{-x}(Q)$) of~\cref{lm:misope}
    to construct an optimal alignment $A$ between $S$ and $Q^\infty\fragmentco{x}{x+\pi'}$
    for some integer $\pi'\ge 0$ (note that we cannot guarantee $y=x+\pi'$).

    Then, based on the alignment $A$, we construct a decomposition $S=S_0^{(0)}\cdots
    S^{(0)}_{s^{(0)}}$
    such that $S^{(0)}_i$ is aligned against \[
        Q^{(0)}_i :=
        Q^\infty\fragmentco{\max(x,(|Q|-1)\lceil{x/|Q|\rceil})}{\min(|Q|\lceil{x/|Q|\rceil},x+\pi)}
    \] in the decomposition $A$, and a sequence $\Delta^{(0)}_i$ such that we have
     $\Delta^{(0)}_i=\ed(S^{(0)}_i,Q^{(0)}_i)$ for $i>0$ and
        $\Delta^{(0)}_0=\ed(S^{(0)}_0,Q^{(0)}_0)+k+1.$
    Since this sequence might be long, we only generate \emph{interesting} fragments $S^{(0)}_i$
    and store them, along with the values $\Delta^{(0)}_i$, in a queue $F$ in
    left-to-right order.
    (Recall that $S^{(t)}_i$ is interesting if $i=0$, $i=s^{(t)}$, $S_i^{(t)}\ne Q$, or
    $\Delta_i^{(t)}>0$.)

    The process of constructing the interesting fragments $S^{(0)}_i$ is somewhat tedious.
    We maintain a fragment $Q^\infty\fragmentco{\ell_Q}{r_Q}$, interpreted as $Q^{(0)}_i$
    for increasing values of $i$, a fragment $S\fragmentco{\ell_S}{r_S}$, interpreted as a
    candidate for $S^{(0)}_i$,
    and an integer $\Delta$, interpreted as $\Delta^{(0)}_i$.
    They are initialized to $Q^\infty\fragmentco{x}{|Q|\lceil{x/|Q|\rceil}}$,
    $S\fragmentco{0}{Q|\lceil{x/|Q|\rceil}-x}$, and $k+1$, respectively.

    Next, we process pairs $(s,q)$ corresponding to subsequent errors in the alignment $A$.
    The interpretation of the $j$-th pair $(s,q)$ is that $S\fragmentco{0}{s}$ is aligned
    with $Q^\infty\fragmentco{x}{x+q}$ with $j$ errors so that the $j$-th error is a
    substitution of $S\position{s}$ into $Q^\infty\position{x+q}$, and insertion of
    $Q^\infty\position{x+q}$, or a deletion of $S\position{s}$.

    The first step of processing $(s,q)$ is only performed if
    $Q^\infty\fragmentco{x}{x+q}$ is not (yet)
    contained in $Q^\infty\fragmentco{\ell_Q}{r_Q}$.
    If this is not the case, then we push $S\fragmentco{\ell_S}{r_S}$ with budget $\Delta$
    to the queue $F$ of interesting fragments, and we update the maintained data:
    The fragment $Q^\infty\fragmentco{\ell_Q}{r_Q}$ is set to be the fragment of $Q^\infty$
    matching $Q$ and containing $Q^\infty\position{x+q}$;
    between the previous and the current value of $Q^\infty\fragmentco{\ell_Q}{r_Q}$,
    there are zero or more copies of $Q$ aligned in $A$ without error. Hence, we skip the
    same number of copies of $Q$ in $S$
    (these are the uninteresting fragments $S^{(0)}_i$)
    and set $S\fragmentco{\ell_S}{r_S}$ to be the subsequent fragment of length $|Q|$.
    Finally, the budget $\Delta$ is reset to $0$.

    In the second step, we update $r_S$ according to the type of the currently processed error:
    We increment~$r_S$ in case of deletion of $S\position{s}$ and
    we decrement $r_S$  in case of insertion of
    $Q^\infty\position{x+q}$. This way, we guarantee that
    $|S\fragmentoo{s}{r_S}|=|Q^\infty\fragmentoo{x+q}{r_Q}|$,
    and that $A$ aligns $S\fragmentco{\ell_S}{r_S}$ with
    $Q^\infty\fragmentco{\ell_Q}{r_Q}$ provided
    that we have already processed all errors involving $Q^\infty\fragmentco{\ell_Q}{r_Q}$.
    Additionally, we increase $\Delta$ to acknowledge the currently processed error between
    $S\fragmentco{\ell_S}{r_S}$ and $Q^\infty\fragmentco{\ell_Q}{r_Q}$.

    In a similar way, we process $(s,q)=(|S|,\pi')$, interpreting it as extra substitution.
    This time, however, we do not increase $\Delta$ (because this is a not a real error).
    Finally, we push $S\fragmentco{\ell_S}{|S|}=S^{(0)}_{s^{(0)}}$ with budget~$\Delta$ to
    the queue $F$.

    In the second phase of the algorithm, we transform the decomposition
    $S=S_0^{(0)}\cdots S^{(0)}_{s^{(0)}}$ and the sequence $\Delta_0^{(0)}\cdots
    \Delta^{(0)}_{s^{(0)}}$
    using the four types of merge operations described in the proof of \cref{lem:locked}.

    We maintain an invariant that a stack $L$ contains already processed interesting fragments,
    all with budget equal to $0$, in left-to-right order (so that $\tops(L)$
    represents the rightmost one), while $F$ contains fragments that have not been
    processed yet (and may have positive budgets) also in the left-to-right order (so that
    $\front(F)$ represents the leftmost one).
    Additionally, the currently processed fragment $S\fragmentco{\ell}{r}$ is guaranteed
    to be to the right of all fragments in $L$ and to the left of all fragments in $F$.
    The fragments in $L$, the fragment $S\fragmentco{\ell}{r}$, and the fragments in $F$
    form the sequence of all interesting fragments in the current decomposition
    $S=S_0^{(t)}\cdots S^{(t)}_{s^{(t)}}$.

    In each iteration of the main loop, we pop the front fragment $S\fragmentco{\ell}{r}$
    with budget $\Delta$ from the queue~$F$ and exhaustively perform merge operations
    involving it:
    We first try applying a type-\ref{it:type1} merge with the fragment to the left (which
    must be $\tops(L)$).
    If this is not possible, we type applying a type-\ref{it:type1} merge with the
    fragment to the right (which must be $\front(F)$).
    If also this is not possible, then $S\fragmentco{\ell}{r}$ is surrounded by
    uninteresting fragments.
    In this case, we perform a type-\ref{it:type2}, type-\ref{it:type3}, or \ref{it:type4}
    merge provided that $\Delta > 0$. Otherwise, we push $S\fragmentco{\ell}{r}$ to $L$
    and proceed to the next iteration.

    Finally, the algorithm returns the sequence of (locked) fragments represented in the stack $L$.

    The correctness of the algorithm follows from \cref{lem:klocked}; no deep insight is needed
    to prove that our implementation indeed follows the procedure described in the proof
    of \cref{lem:locked} and extended in the proof of \cref{lem:klocked}.

    For the running time, the initial call to \witness and applying the generator
    $\mathbf{G}$    each take  $\Oh(d^2)$ time in the \modelname model.
    As the alignment $A$ is of size $|A|\le d$, the \textbf{for} loop in \cref{ln:locked_for}
    takes $\Oh(d)$ time and generates $\Oh(d)$ interesting locked fragments with total
    budget $\Oh(d+k)$.
    Each iteration of the \textbf{while} loop in \cref{ln:locked_while} decreases
    the number of interesting locked fragments or their total budget, so there are
    at most $\Oh(d + k)$ iterations in total. Overall the algorithm runs in $\Oh(d^2+k)$ time
    in the \modelname model.
\end{proof}

\begin{lemma}[\texttt{SynchedMatches($P$, $T$, $I$, $d$, $d'$, $k$, $Q$)}]\label{lem:sth}
    Let $P$ denote a pattern of~length $m$, let $0 \le k \le m$ denote a threshold,
    and let $T$ denote a text of~length $n \le  \threehalfs m + k$.
    Further, let $I$ denote an integer range
    and let $Q$ denote a primitive string that satisfies $\edl{P}{Q}\le d$ and $\edl{T}{Q}\le d'$.

    There is an algorithm that computes the set $\OccE_k(P,T)\cap
    (I+|Q|\mathbb{Z})$ as $\Oh(|I|d'(d+k))$ arithmetic progressions.
    The algorithm takes $\Oh(kd'(d+k)(k+|I|+d+d'))$ time in the \modelname model.
\end{lemma}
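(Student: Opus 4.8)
The plan is to make the combinatorial argument behind \cref{lem:Eaux}~\eqref{it:Eprog} algorithmic, restricting attention throughout to the positions in $I+|Q|\mathbb{Z}$. First I would call \texttt{Locked} (\cref{lem:alg_locked}) twice: on $(P,Q,d,k)$ to obtain locked fragments $L^P_1,\dots,L^P_{\ell^P}$ of~$P$ with $L^P_1$ a $k$-locked prefix, $\ell^P=\Oh(d)$, and $\sum_i|L^P_i|=\Oh((d+k)|Q|)$; and on $(T,Q,d',0)$ to obtain locked fragments $L^T_1,\dots,L^T_{\ell^T}$ of~$T$ with $\ell^T=\Oh(d')$ and $\sum_j|L^T_j|=\Oh(d'|Q|)$. (If $|P|<(2d+1)|Q|$ or $|T|<(2d'+1)|Q|$, so the precondition of \cref{lem:alg_locked} fails, then $P$ resp.\ $T$ spans only $\Oh(d)$ resp.\ $\Oh(d')$ full copies of~$Q$, so $I+|Q|\mathbb{Z}$ meets the relevant range of starting positions in only $\Oh((d+k)|I|)$ places, which I would check one by one with \cref{lm:verifye}; I dispatch this case first.) Exactly as in the proof of \cref{lem:Eaux}~\eqref{it:Eprog}, I then mark the interval $\fragment{n-m-k}{n-m+k}$ together with the $\ell^P\ell^T$ intervals arising from the pairs $(L^P_i,L^T_j)$; the complement of the union of the marked intervals inside $\fragmentco{0}{n-m-k}$ is a family of $\Oh(dd')$ \emph{unmarked} intervals.

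On an unmarked interval $X$, \cref{cl:progr_un} guarantees that membership of a starting position in $\OccE_k(P,T)$ depends only on its residue modulo $|Q|$, so $X\cap(I+|Q|\mathbb{Z})$ splits into at most $|I|$ arithmetic progressions of difference $|Q|$, each entirely inside or entirely outside $\OccE_k(P,T)$. Since $I\bmod|Q|$ meets any window of $|Q|$ consecutive integers in at most two intervals of length $\le|I|$, I extract one representative per residue class appearing in $X$ and decide all of them with $\Oh(1)$ calls to \texttt{Verify} (\cref{lm:verifye}) on intervals of length $\Oh(|I|)$, i.e.\ in $\Oh(k(k+|I|))$ time per unmarked interval. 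Over all unmarked intervals this contributes $\Oh(dd'|I|)=\Oh(|I|d'(d+k))$ output progressions and $\Oh(dd'\cdot k(k+|I|))$ time.

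The delicate part is the marked region, where \cref{cl:progr_un} fails and every candidate of $I+|Q|\mathbb{Z}$ inside a marked interval $M$ must be examined. I would split on whether $|Q|\le\max(k,|I|)$. If so, the total length $\sum_M|M|=\sum_{i,j}(|L^P_i|+|L^T_j|+2k)+2k+1$ is $\Oh(\max(k,|I|)\,d'(d+k)+k\,dd')$ by the length bounds from \cref{lem:alg_locked}, so running \texttt{Verify($P$, $T$, $k$, $M$)} on each whole marked interval costs $\Oh(k^2dd'+k\sum_M|M|)=\Oh(kd'(d+k)(k+|I|))$, after which I intersect the reported occurrences with $I+|Q|\mathbb{Z}$. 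Otherwise $|Q|>\max(k,|I|)$, $I+|Q|\mathbb{Z}$ is sparse, and $M\cap(I+|Q|\mathbb{Z})$ is a union of $\Oh(|M|/|Q|+1)$ intervals of length $\le|I|$, totalling $\Oh(d'(d+k))$ sub-intervals across all marked intervals, each verified in $\Oh(k(k+|I|))$ time. In both branches the marked region costs $\Oh(kd'(d+k)(k+|I|))$ time and yields $\Oh(|I|d'(d+k))$ occurrences, matching the count of synchronized marked positions in \cref{cl:sync_marked}. Adding the three contributions and the cost of \cref{lem:alg_locked} yields the claimed $\Oh(kd'(d+k)(k+|I|+d+d'))$ time and $\Oh(|I|d'(d+k))$ progressions. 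The main obstacle I anticipate is exactly this accounting for the marked region — one must simultaneously avoid \texttt{Verify}ing whole marked intervals when $|Q|$ is large (a single $M$ can have length $\Theta(|Q|\,\edl{\cdot}{Q})$) and avoid verifying positions individually when $|Q|$ is small (there can be $\Omega(kdd')$ of them) — and it is the case split above, together with the summations relying on $\ell^P=\Oh(d)$, $\ell^T=\Oh(d')$, $\sum_i|L^P_i|=\Oh((d+k)|Q|)$, and $\sum_j|L^T_j|=\Oh(d'|Q|)$, that makes the bound come out; a minor point is that \cref{cl:progr_un} needs $L^P_1$ to be $k$-locked (hence the last argument $k$, not $0$, in the call for $P$), while the correct rotation $\rot^r(Q)$ against which $L^P_1$ must be aligned is determined arithmetically from the residue class and the edit-distance test itself is subsumed by the \texttt{Verify} call, so nothing beyond \cref{lm:verifye} is required.
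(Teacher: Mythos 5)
Your proposal follows the paper's proof essentially step by step: two calls to \texttt{Locked} (\cref{lem:alg_locked}), with the $k$-locked prefix for $P$ and budget $0$ for $T$; the same marking scheme (the interval $\fragment{n-m-k}{n-m+k}$ plus one interval per pair of locked fragments); \cref{cl:progr_un} to reduce each unmarked interval to $\Oh(1)$ calls to \texttt{Verify} on windows of length $\Oh(|I|)$, extended to difference-$|Q|$ progressions; and \texttt{Verify} restricted to $I+|Q|\mathbb{Z}$ inside the marked region. The organizational difference is that the paper first normalizes $|I|\le |Q|$ (replacing $I$ by $\fragmentco{0}{|Q|}$ when it is larger) and then uniformly verifies the pieces of each marked range intersected with $I+|Q|\mathbb{Z}$, each of length at most $|I|$, whereas you case-split the marked region on $|Q|\le\max(k,|I|)$ versus $|Q|>\max(k,|I|)$; your time accounting in both branches does come out within the claimed bound, and your explicit treatment of the precondition of \cref{lem:alg_locked} is a reasonable extra care.

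The one step that does not hold up as written is the output-size bound in your small-$|Q|$ branch. There you run \texttt{Verify} on entire marked intervals and report the surviving positions of $I+|Q|\mathbb{Z}$ individually, justifying the count of $\Oh(|I|d'(d+k))$ progressions by appeal to \cref{cl:sync_marked}; but that claim is established under the hypotheses of \cref{lem:Eaux} (in particular $d\ge 2k$ and $|Q|\le m/8d$), which are not assumptions of the present lemma, so it cannot be invoked here. Concretely, when $|Q|\ll k$ the marked intervals have total length $\Theta(kdd')$ (the $2k$-padding of the $\Theta(dd')$ pairs dominates), so reporting the occurrences in $I+|Q|\mathbb{Z}$ as singletons can exceed $\Oh(|I|d'(d+k))$ progressions, and merging them afterwards would itself require an argument. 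The repair is exactly the paper's uniform treatment: after normalizing $|I|\le|Q|$, call \texttt{Verify} only on the pieces of $M\cap(I+|Q|\mathbb{Z})$, each of length at most $|I|$, so that the number of reported positions is at most $|I|$ times the number of \texttt{Verify} calls; this yields the stated progression bound directly and also makes your case distinction unnecessary.
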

\begin{proof}
    The algorithm resembles the proof~of~\cref{lem:Eaux}\eqref{it:Eprog}.
    Consult \cref{alg:sth} for a visualization of~the algorithm as pseudo-code; in the
    interest of~readability we
    use $\OccE_k(P,T)$ instead of~$\OccE_k(P,T)\cap (I+|Q|\mathbb{Z})$ in the pseudo-code.
    Note that if $|I|>|Q|$ we can replace $I$ with $\fragmentco{0}{Q}$, since in this case
    $I+|Q|\mathbb{Z}=\mathbb{Z}=\fragmentco{0}{Q}+|Q|\mathbb{Z}$.
    Hence, we can assume that $|I|\leq |Q|$.

    We first compute $\mathcal{L}^P:=\Locked{$P$, $Q$, $d$, $k$}$ and $\mathcal{L}^T:=
    \Locked{$T$, $Q$, $d'$, $0$}$ using \cref{alg:locked}.
    We have $\ell^P:=|\mathcal{L}^P| \le \edl{P}{Q}+2\le d+2$ and $\ell^T=|\mathcal{L}^T|
    \le \edl{T}{Q}\le d'+2$.

    Then, for each of~the $\cO(dd')$ pairs of~locked fragments
    $L_i^P=P\fragmentco{\ell}{r}\in \mathcal{L}^P$ and
    $L_j^T=T\fragmentco{\ell'}{r'} \in \mathcal{L}^T$ we (implicitly) mark the positions
    in the interval $\fragmentco{\ell'-r-k}{r'-\ell+k}$.
    We also mark all positions in $\fragment{n-m-k}{n-m+k}$.
    We decompose the set of~marked positions $M$ into $\cO(dd')$ maximal ranges $J \subseteq M$.
    For each such maximal range $J$, for each maximal range $J'\subseteq J\cap (I+|Q|\mathbb{Z})$,
    we call $\verify{$P$, $T$, $k$, $J'$}$ and add its output to $\OccE_k(P,T)\cap
    (I+|Q|\mathbb{Z})$.
    This guarantees that we correctly compute all elements of~the set $\OccE_k(P,T)\cap
    (I+|Q|\mathbb{Z}) \cap M$.

    The decomposition of~$M$ into maximal ranges yields a decomposition
    of~$\fragmentco{0}{n-m+k}\setminus M$ into $\cO(dd')$ maximal ranges.
    For each such maximal range $J$, we rely on the characterization of~\cref{cl:progr_un}
    in order to compute $\OccE_k(P,T)\cap (I+|Q|\mathbb{Z}) \cap J$.
    Recall that for $p,p' \in J$ with $p \equiv p' \pmod{|Q|}$ we have $p \in
    \OccE_k(P,T)$ if and only if $p' \in \OccE_k(P,T)$.
    Hence, it suffices to restrict our attention to the intersection of~the first (at
    most) $|Q|$ positions of~$J$ with $I+|Q|\mathbb{Z}$.
    This intersection consists of at most two intervals of~total size at most $|I|$.
    We call \verify for each of~them, and for each position returned by these \verify
    queries, we add an arithmetic progression to $\OccE_k(P,T)\cap (I+|Q|\mathbb{Z})$.

    We now proceed to analyze the time complexity of~the algorithm in the \modelname model.
    The two calls to \Locked require $\cO(d^2+k+d'^2)$ time in total in the \modelname
    model due to \cref{lem:alg_locked}.
    We then decompose $M$ into maximal ranges, which can be implemented in
    $\Oh(dd'\log\log(dd'))$ time.
    The interval~$R$ of~positions marked due to locked regions $L_i^P$ and $L_j^T$ is of~size
    $|L_i^P|+|L_j^T|+2k-1$;
    the number of~maximal ranges $R' \subseteq R \cap (I+|Q|\mathbb{Z})$ is at most
    $(|R|+2|Q|-2|I|)/|Q|$.
    Consequently, the total number of maximal ranges of size at most $|I|$ that we need
    to consider intervals does not exceed $2k+1$ plus
    \begin{align*}
        \sum_{i=1}^{\ell^P}\sum_{j=1}^{\ell^T}\frac{|L_i^P|+|L_j^T|+2|Q|-2|I|}{|Q|}
         &
         \le \frac{\ell^T
         }{|Q|}\sum_{i=1}^{\ell^P}|L_i^P|+\frac{\ell^P}{|Q|}\sum_{i=1}^{\ell^T}|L_i^T|+
         \frac{2\ell^P \ell^T(|Q|-|I|)}{|Q|} \\
         &=\cO((d'(d|Q|+k|Q|)+dd'|Q|+dd'|Q|)/|Q|) \\
         &=\cO(d'(d+k)).
    \end{align*}
    Each call to \verify in \cref{ln:maxrangever} of~\cref{alg:sth} requires time
    $\cO(k(k+|J'|))$ by~\cref{lm:verifye}.
    By the above analysis, we make $\cO(d'(d+k))$ calls to \verify, each time for an
    interval of~size at most~$|I|$.
    Hence, we can upper bound the overall running time for this step by $\cO(d'(d+k) k(k+|I|))$.
    Finally, the total time required by \verify queries in~\cref{ln:maxrangever2}
    of~\cref{alg:sth} is $\cO(dd'k(k+|I|))$
    as we call \verify $\cO(dd')$ times, each time for an interval of~size $\cO(|I|)$.
    Thus, the overall running time is $\cO(d'(d+k)
    k(k+|I|)+d^2+d'^2+dd'\log\log(dd'))=\cO(kd'(d+k)(k+|I|+d+d'))$.

    The bounds obtained in the time complexity analysis also imply that our representation
    of~$\OccE_k(P,T)\cap (I+|Q|\mathbb{Z})$
    consists of $\cO(|I|d'(d+k))$ arithmetic progressions.
\end{proof}
\begin{algorithm}
    \SetKwBlock{Begin}{}{end}
    \SetKwFunction{SynchedMatches}{SynchedMatches}
    \SynchedMatches{$P$, $T$, $I$, $k$, $d$, $d'$, $Q$}\Begin{
        $\mathcal{L}^P\gets \Locked{$P$, $Q$, $d$, $k$}$\;
        $\mathcal{L}^T\gets \Locked{$T$, $Q$, $d'$, $0$}$\;
        $M \gets \fragment{n-m-k}{n-m+k}$\;
        \ForEach{$P\fragmentco{\ell}{r}\in \mathcal{L}^P$}{
            \ForEach{$T\fragmentco{\ell'}{r'}\in \mathcal{L}^T$}{
                $M\gets M \cup\fragmentoo{\ell'-r-k}{r'-\ell+k}$\;
            }
        }
        $M \gets M \cap \fragmentco{0}{n-m+k}$\;
        \ForEach{maximal range $J\subseteq M$}{
            \ForEach{maximal range $J'\subseteq J\cap (I+|Q|\mathbb{Z})$}{
                $\OccE_k(P, T) \gets \OccE_k(P, T) \cup
                \{ pos \mid (pos, k_{pos}) \in \verify{$P$, $T$, $k$, $J'$}
                \}$\;\label{ln:maxrangever}
            }
        }
        \ForEach{maximal range $\fragmentco{\ell}{r}\subseteq \fragmentco{0}{n-m+k}\setminus M$}{
            $J \gets \fragmentco{\ell}{\min(r,\ell+|Q|)}$\;
            \ForEach{maximal range $J'\subseteq J\cap (I+|Q|\mathbb{Z})$}{
                \ForEach{$(pos, k_{pos}) \in \verify{$P$, $T$, $k$, $J'$}$\label{ln:maxrangever2}
                }{
                    $\OccE_k(P, T)\gets \OccE_k(P, T)\cup ((pos+|Q|\mathbb{Z})\cap
                    \fragmentco{\ell}{r})$\;
                }
            }
        }
        \Return{$\OccE_k(P, T)$}\;
    }
    \caption{Computing $k$-error occurrences in the presence of locked regions in text
    and pattern.}\label{alg:sth}
\end{algorithm}

\begin{lemma}[\texttt{PeriodicMatches($P$, $T$, $k$, $d$, $Q$)}]\label{lm:milpermat}
    Let $P$ denote a pattern of~length $m$ and let $T$ denote a text of~length $n$.
    Further, let $0 \le k \le m$ denote a threshold, let $d \ge 2k$
    denote a positive integer, and let $Q$ denote a primitive string that satisfies
    $|Q| \le m/8d$ and $\edl{P}{Q} \le d$. 

    There is an algorithm that computes the set $\OccE_k(P,T)$,
    using $\Oh(n/m\cdot d^4)$ time in the \modelname model.
\end{lemma}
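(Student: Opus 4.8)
The plan is to mimic the structure of the Hamming-distance procedure \texttt{PeriodicMatches} from \cref{lm:permat}, but relying on the edit-distance machinery developed in this section. First I would split the text $T$ into $\Oh(n/m)$ overlapping blocks $T_i := T\fragmentco{\floor{i\cdot m/2}}{\min\{n,\floor{(i+3)\cdot m/2}+k-1\}}$ for $0\le i<\floor{2n/m}$, each of length at most $\threehalfs m + k - 1$; every $k$-error occurrence of $P$ in $T$ is contained in at least one block (the block $T_i$ covers all occurrences starting in $\fragmentco{\floor{i\cdot m/2}}{\floor{(i+1)\cdot m/2}}$), so it suffices to compute $\OccE_k(P,T_i)$ for each $i$ and shift-and-merge the results. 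For a fixed block $T_i$ I would first invoke \texttt{FindRelevantFragment}$(P,T_i,k,d,Q)$ from \cref{lem:erelevant}, which (since $|T_i|<\threehalfs m + k$, $d\ge 2k$, $|Q|\le m/8d$, and $\edl{P}{Q}\le d$) returns in $\Oh(d^2)$ \modelname-model time a fragment $T'_i=T_i\fragmentco{\ell_i}{r_i}$ with $\edl{T'_i}{Q}\le 3d$, with $|\OccE_k(P,T_i)|=|\OccE_k(P,T'_i)|$, and an interval $I_i$ with $|I_i|\le 6d+1$ and $\OccE_k(P,T'_i)\bmod|Q|\subseteq I_i\bmod|Q|$.

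Next I would feed $T'_i$ into \texttt{SynchedMatches}$(P,T'_i,I_i,k,d,d',Q)$ from \cref{lem:sth} with $d':=3d$ (valid because $\edl{P}{Q}\le d$ and $\edl{T'_i}{Q}\le 3d=d'$). Since $\OccE_k(P,T'_i)\subseteq I_i + |Q|\mathbb{Z}$, the call returns exactly $\OccE_k(P,T'_i)$, represented as $\Oh(|I_i|\,d'(d+k)) = \Oh(d\cdot d\cdot d)=\Oh(d^3)$ arithmetic progressions (using $k\le d$ and $|I_i|=\Oh(d)$), in time $\Oh(k\,d'(d+k)(k+|I_i|+d+d')) = \Oh(d\cdot d\cdot d\cdot d)=\Oh(d^4)$ in the \modelname model. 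Finally I would translate each progression back by $\ell_i + \floor{i\cdot m/2}$ and union the results over all $i$. Correctness is immediate from the three cited lemmas together with the block-covering observation; I should also note that a block whose call to \texttt{FindRelevantFragment} returns $(\eps,\emptyset)$ contributes nothing.

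For the running time, each of the $\floor{2n/m}=\Oh(n/m)$ blocks costs $\Oh(d^2)$ for \texttt{FindRelevantFragment} plus $\Oh(d^4)$ for \texttt{SynchedMatches}, i.e.\ $\Oh(d^4)$ per block, for a total of $\Oh(n/m\cdot d^4)$ \modelname-model operations; the merging of the $\Oh(n/m\cdot d^3)$ arithmetic progressions is subsumed. This matches the claimed bound. I do not expect a genuine obstacle here: the lemma is essentially a bookkeeping composition of \cref{lem:erelevant} and \cref{lem:sth}. The one point that needs a little care is verifying that the hypotheses of \cref{lem:sth}—in particular that $Q$ is primitive with $\edl{P}{Q}\le d$ and $\edl{T'_i}{Q}\le d'$, and that the interval $I_i$ passed in indeed satisfies $\OccE_k(P,T'_i)\bmod|Q|\subseteq I_i\bmod|Q|$—are all guaranteed by the output contract of \cref{lem:erelevant}, so that the restriction to $I_i+|Q|\mathbb{Z}$ loses no occurrences; and tracking the arithmetic-progression count so the claimed $\Oh(d^3)$-size output (implicit in later use) and $\Oh(d^4)$ time both come out, which is just the substitution $|I|=\Oh(d)$, $d'=\Oh(d)$, $k=\Oh(d)$ into the bounds of \cref{lem:sth}.
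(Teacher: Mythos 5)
Your proposal is correct and matches the paper's own proof essentially step for step: the same block decomposition of $T$, a call to \texttt{FindRelevantFragment} per block, then \texttt{SynchedMatches} with $d'=3d$, and the same $\Oh(d^2)+\Oh(d^4)$ per-block accounting. The only (cosmetic) difference is that the paper additionally intersects each block's output with $\fragmentco{\floor{i\cdot m/2}}{\floor{(i+1)\cdot m/2}}$ so that every occurrence is attributed to exactly one block, whereas you simply take the union, which is still correct since $\OccE_k(P,T)$ is a set.
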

\begin{proof}
    We consider $\floor{2n/m}$ blocks $T_0, \dots, T_{\floor{2n/m}-1}$ of~$T$, each
    of~length at most $\threehalfs m+k-1$, where the $i$-th block starts at position
    $i\cdot m/2$, that is, \[
        T_i := T\fragmentco{\floor{i\cdot {m}/2}}
        {\min\{n, \floor{(i+3)\cdot {m}/2} + k - 1\}}.
    \]
    Observe that each $k$-error occurrence of~$P$ in $T$ is contained in at least one of~the
    fragments $T_i$:
    Specifically, $T_i$ covers all occurrences starting in $\fragmentco{\floor{i\cdot
    m/2}}{\floor{(i+1)\cdot m/2}}$.
    For each block $T_i$, we call \texttt{FindRelevantFragment($P$, $T_i$, $k$, $d$, $Q$)}
    from~\cref{lem:erelevant}
    and obtain a fragment $T'_i=T\fragmentco{\ell_i}{r_i}$ containing all $k$-error
    occurrences of~$P$ in $T_i$ and an integer range $I_i$.
    \cref{lem:erelevant} guarantees that $\ed(T'_i,Q)\leq 3d$ and $|I_i|\le 6d+1$.
    Next, we call \texttt{SynchedMatches($P$, $T'_i$, $I_i$, $d$, $3d$, $k$, $Q$)} from
    \cref{lem:sth}.
    The output of~the call to \texttt{SynchedMatches} consists of
    $\cO((6d + 1)3d(d+k))=\cO(d^3)$ arithmetic progressions.
    For each obtained arithmetic progression, we first add $\floor{i\cdot m/2}$ to all
    of~its elements, and, if $i<{\floor{2n/m}-1}$, we intersect the resulting arithmetic
    progression with $\fragmentco{\floor{i\cdot m/2}}{\floor{(i+1)\cdot m/2}}$; finally,
    we add the obtained set to $\OccE_k(P,T)$.
    The intersection step guarantees that each
    $k$-error occurrence is accounted for by exactly one block.

    For the correctness, note that by \cref{lem:erelevant}, for each $i$, we have
    $\OccE_k(P,T'_i)\bmod |Q| \subseteq I_i \bmod |Q|$.
    Hence, the call to \texttt{SynchedMatches} indeed computes all occurrences of~$P$ in $T_i$.

    Each call to \texttt{FindRelevantFragment} requires $\cO(d^2)$ time, while each call
    to \texttt{SynchedMatches}
    requires time $\cO(3kd(d+k)(k + (6d+1) + d+3d))=\cO(d^4)$. The claimed overall running
    time follows.
\end{proof}

\subsection{Computing Occurrences in the Non-Periodic Case}

\begin{algorithm}[t]
    \SetKwBlock{Begin}{}{end}
    \SetKwFunction{verify}{Verify}
    \SetKwFunction{brmtch}{BreakMatches}
    \SetKwFunction{exmtch}{ExactMatches}
    \brmtch{$P$, $T$, $\{ B_1 = P\fragmentco{b_1}{b_1 + |B_1|}, \dots, B_{2k} =
        P\fragmentco{b_{2k}}{b_{2k} + |B_{2k}|} \}$, $k$}\Begin{
        multi-set $M \gets \{\}$; $\OccE_k(P, T) \gets \{\}$\;
        \For{$i \gets 1$ \KwSty{to} $2k$}{
            \ForEach{$\tau \in \exmtch{$B_i$, $T$}$}{
                $M \gets M\cup\{ \floor{(\tau - b_i - k) / k}\}$\tcp*{Mark block
                $\floor{(\tau - b_i - k) / k}$ of~$T$}
                $M \gets M\cup\{ \floor{(\tau - b_i) / k}\}$\tcp*{Mark block
                $\floor{(\tau - b_i) / k}$ of~$T$}
                $M \gets M\cup\{ \floor{(\tau - b_i + k) / k}\}$\tcp*{Mark block
                $\floor{(\tau - b_i + k) / k}$ of~$T$}
                $M \gets M\cup\{ \floor{(\tau - b_i + 2k) / k}\}$\tcp*{Mark block
                $\floor{(\tau - b_i + 2k) / k}$ of~$T$}
            }
        }
        sort $M$\;
        \ForEach{$\pi\in \fragment{0}{n-m}$ that appears at least $k$ times in $M$}{
            $\OccE_k(P, T) \gets \OccE_k(P, T) \cup
            \{ pos \mid (pos, k_{pos}) \in \verify{$P$, $T$, $k$, $\fragmentco{\pi \cdot
            k}{(\pi + 1)\cdot k}$} \}$\;
        }
        \Return{$\OccE_k(P, T)$}\;
    }
    \caption{A \modelname model algorithm for \cref{lm:EdC}.}\label{alg:EdC}
\end{algorithm}
\begin{lemma}[{\tt BreakMatches($P$, $T$, $\{B_1,\dots,B_{2k}\}$, $k$)}:
    Implem. of~\cref{lm:EdC}]\label{lm:impEdA}
    Let $k$ denote a threshold and
    let $P$ denote a pattern of~length $m$ having $2k$ disjoint breaks $B_1,\dots,B_{2k}
    \substr P$ each satisfying $\per(B_i) \ge m / \alphav k$.
    Further, let $T$ denote a text of~length $n \le \threehalfs m + k$.

    Then, we can compute the set $\OccE_k(P, T)$ using $\Oh(k^3)$
    time in the \modelname model.
\end{lemma}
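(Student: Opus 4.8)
The plan is to show that Algorithm~\ref{alg:EdC} faithfully realizes the marking scheme from the proof of~\cref{lm:EdC} and then to account for its cost in the \modelname model. Recall that in that proof the integers are partitioned into length-$k$ blocks $\fragmentco{jk}{(j+1)k}$, and for each break $B_i=P\fragmentco{b_i}{b_i+|B_i|}$ a block is marked whenever $\fragmentco{(j-1)k+b_i}{(j+2)k+b_i}$ meets $\OccEx(B_i,T)$. First I would compute, for every break~$B_i$, the set $\OccEx(B_i,T)$ via a call to \texttt{ExactMatches}$(B_i,T)$ from~\cref{lm:emath}. For each occurrence $\tau\in\OccEx(B_i,T)$, the block indices $j$ for which $\tau\in\fragmentco{(j-1)k+b_i}{(j+2)k+b_i}$ form a set of consecutive integers contained in $\{\floor{(\tau-b_i)/k}-1,\ldots,\floor{(\tau-b_i)/k}+2\}$, which is exactly the set $\{\floor{(\tau-b_i-k)/k},\floor{(\tau-b_i)/k},\floor{(\tau-b_i+k)/k},\floor{(\tau-b_i+2k)/k}\}$ of blocks marked by the algorithm (up to repetitions); hence the algorithm's marking dominates the one prescribed in the proof of~\cref{lm:EdC} while placing at most four marks per exact occurrence. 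Negative or out-of-range block indices are harmless, since the subsequent loop only considers $\pi$ with $\fragmentco{\pi k}{(\pi+1)k}\cap\fragment{0}{n-m}\neq\emptyset$.

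For correctness, \cref{cl:Ec2} shows that every $\ell\in\OccE_k(P,T)$ lies in a block receiving at least $k$ of the prescribed marks, hence at least $k$ of the marks placed by the algorithm; thus it suffices to run \texttt{Verify}$(P,T,k,\fragmentco{\pi k}{(\pi+1)k})$ from~\cref{lm:verifye} for exactly those blocks $\fragmentco{\pi k}{(\pi+1)k}$ that receive at least $k$ marks. Since each such call returns precisely $\OccE_k(P,T)\cap\fragmentco{\pi k}{(\pi+1)k}$ together with the minimal edit distances, and since every $k$-error occurrence of $P$ in $T$ starts in some block, taking the union over all verified blocks returns $\OccE_k(P,T)$ exactly, with no false positives.

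For the running time, each break satisfies $|B_i|=\floor{m/\betav k}$ and $\per(B_i)>m/\alphav k$, and $n\le\threehalfs m+k=\Oh(m)$, so by~\cref{lm:emath} each call \texttt{ExactMatches}$(B_i,T)$ costs $\Oh(n/\per(B_i))=\Oh(k)$ time and $\Oh(n/|B_i|)=\Oh(k)$ \modelname operations and yields $\Oh(k)$ occurrences; over the $2k$ breaks this is $\Oh(k^2)$ time and \modelname operations, and at most $\Oh(k^2)$ marks are placed (consistent with the bound $\alphavdt\cdot n/m\cdot k^2=\Oh(k^2)$ of~\cref{cl:Ec1}). Sorting the multiset of marks and scanning it for blocks with at least $k$ marks takes $\Oh(k^2\log k)$ time. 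By~\cref{cl:Ec1}, only $\Oh(k^2/k)=\Oh(k)$ blocks qualify, and for each the call to \texttt{Verify} on an interval of length $k$ uses $\Oh(k(k+k))=\Oh(k^2)$ \modelname operations by~\cref{lm:verifye}; this verification phase therefore dominates and costs $\Oh(k^3)$. Assembling the output adds $\Oh(k^2)$, giving the claimed $\Oh(k^3)$ total.

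The proof is essentially a transcription of~\cref{lm:EdC} together with \modelname-model bookkeeping, so I do not anticipate a genuine obstacle; the only point that requires care is the block arithmetic -- verifying that the four explicitly marked blocks really cover every block required by the ``$\fragmentco{(j-1)k+b_i}{(j+2)k+b_i}$ meets $\OccEx(B_i,T)$'' rule (so that the lower bound of~\cref{cl:Ec2} is preserved) while remaining bounded by four per occurrence (so that the mark count of~\cref{cl:Ec1}, and hence the $\Oh(k)$ bound on blocks to verify, still holds).
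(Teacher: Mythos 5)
Your proposal is correct and follows essentially the same route as the paper: compute $\OccEx(B_i,T)$ via \texttt{ExactMatches}, place (up to four) marks per occurrence on length-$k$ blocks so that the marking of \cref{lm:EdC} is dominated, sort the $\Oh(k^2)$ marks, and run \texttt{Verify} on the $\Oh(k)$ blocks with at least $k$ marks, for $\Oh(k^3)$ total in the \modelname model. The only cosmetic difference is the sorting bound ($\Oh(k^2\log k)$ versus the paper's $\Oh(k^2\log\log k)$), which does not affect the stated $\Oh(k^3)$ bound.
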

\begin{proof}
    We proceed similarly to \cref{lm:imphdA}: Instead of~marking positions, we now mark
    blocks of~length~$k$; in the end, we verify complete blocks at once using \verify
    from \cref{lm:verifye}. Consider \cref{alg:EdC} for the complete algorithm visualized
    as pseudo-code.

    For the correctness, note that we have placed the marks as in the proof~of~\cref{lm:EdC};
    in particular, by \cref{cl:Ec2}, any block $\fragmentco{jk}{(j+1)k}$ that contains any
    position $\pi\in \OccE_k(P,T)$ has at least $k$ marks.
    As we verify each such block using \verify from \cref{lm:verifye},
    we report no false positives, and thus the algorithm is correct.

    We continue with analyzing the running time.
    As every break~$B_i$ has period $\per(B_i)>m/\alphav k$,
    every call to {\tt ExactMatches} uses $\Oh(k)$ time in the \modelname model by \cref{lm:emath};
    thus, all calls to {\tt ExactMatches} in total take $\Oh(k^2)$ time in total.
    Next, by \cref{cl:Ec1}, we place at most $\Oh(k^2)$ marks in $T$, so
    the marking step uses $\Oh(k^2)$ operations in total.
    Further, finding all positions in $T$ with at least $k$ marks can be done via a linear scan
    over the multi-set $M$ of~all marks after sorting $M$, which can be done in time
    $\Oh(k^2 \log \log k)$.
    Finally, as there are at most $\Oh(k^2 / k) = O(k)$ blocks that we verify, and every
    call to {\tt Verify} takes time $\Oh(k^2)$ in the  \modelname model,
    the verifications take $\Oh(k^3)$ time in the \modelname in total.
    Overall, \cref{alg:EdC} thus takes $\Oh(k^3)$ time in the \modelname model.
\end{proof}

\begin{algorithm}[t]
    \SetKwBlock{Begin}{}{end}
    \SetKwFunction{appm}{PeriodicMatches}
    \SetKwFunction{rpmtch}{RepetitiveMatches}
    \rpmtch{$P$, $T$, $\{ (R_1 = P\fragmentco{r_1}{r_1 + |R_1|}, Q_1) \dots, (R_{r} =
        P\fragmentco{r_{r}}{r_{r} + |R_{r}|}, Q_r) \}$, $k$}\Begin{
        multi-set $M \gets \{\}$; $\OccE_k(P, T) \gets \{\}$\;
        \For{$i \gets 1$ \KwSty{to} $r$}{
            set $M_i \gets \{\}$\;
            \ForEach{$\tau \in \appm{$R_i$, $T$, $\floor{\betavh \cdot k/m \cdot |R_i|}$,
                $\ceil{\betav\cdot k/m \cdot |R_i|}$, $Q_i$}$}{
                $M_i \gets M_i\cup\{ (\floor{(\tau - r_i - k)/k}, |R_i|) \}$\tcp*{Place $|R_i|$
                    marks at bl.$\floor{(\tau - r_i - k)/k}$}
                $M_i \gets M_i\cup\{ (\floor{(\tau - r_i)/k}, |R_i|) \}$\tcp*{Place $|R_i|$ marks at
                block $\floor{(\tau - r_i)/k}$}
                $M_i \gets M_i\cup\{ (\floor{(\tau - r_i + k)/k}, |R_i|) \}$\tcp*{Place $|R_i|$
                    marks at bl.$\floor{(\tau - r_i + k)/k}$}
                $M_i \gets M_i\cup\{ (\floor{(\tau - r_i + 2k)/k}, |R_i|) \}$\tcp*{Place $|R_i|$
                    marks at bl.$\floor{(\tau - r_i + 2k)/k}$}
            }
            $M \gets M \cup M_i$\;
        }
        sort $M$ by positions\;
        \ForEach{$\pi\in \fragment{0}{n-m}$ appearing at least
            $\sum_{(\pi, v) \in M} v \ge \sum_{i=1}^r |R_i| - m/\betavh$ times in $M$}{
            $\OccE_k(P, T) \gets \OccE_k(P, T) \cup
            \{ pos \mid (pos, k_{pos}) \in \verify{$P$, $T$, $k$, $\fragmentco{\pi \cdot
            k}{(\pi + 1)\cdot k}$} \}$\;
        }
        \Return{$\OccE_k(P, T)$}\;
    }
    \caption{A \modelname model algorithm for \cref{lm:EdB}.}\label{alg:EdB}
\end{algorithm}
\begin{lemma}[{\tt RepetitiveMatches($P$,$T$,$\{ (R_1, Q_1) \dots, (R_{r},Q_r)\}$,$k$)}:
    Implementation of~\cref{lm:EdB}]\label{lm:impEdB}
    Let $P$ denote a pattern of~length~$m$
    and let $k \le m$ denote a threshold.
    Further, let $T$ denote a string of~length~$n \le \threehalfs m + k$.
    Suppose that $P$ contains disjoint repetitive regions $R_1,\ldots, R_{r}$
    of~total length at least $\sum_{i=1}^r |R_i| \ge \deltavN/\deltavD\cdot m$
    such that each region $R_i$ satisfies $|R_i| \ge m/\betav k$ and has a
    primitive approximate period~$Q_i$
    with $|Q_i| \le m/\alphav k$ and $\hd(R_i,Q_i^*) = \ceil{\betav k/m\cdot |R_i|}$.

    Then, we can compute the set $\OccE_k(P,T)$ using $\Oh(k^4)$ time in the \modelname model.
\end{lemma}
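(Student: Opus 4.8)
The plan is to mirror the proof of \cref{lm:imphdB} (the Hamming-distance analogue) essentially verbatim, replacing each ingredient with its edit-distance counterpart that we have already developed in this section. Concretely, I would set $m_R := \sum_{i=1}^r |R_i| \ge \deltavN/\deltavD\cdot m$ and, for each repetitive region $R_i = P\fragmentco{r_i}{r_i+|R_i|}$, put $k_i := \floor{\betavh \cdot k/m\cdot |R_i|}$ and $d_i := \ceil{\betav\cdot k/m\cdot |R_i|} = \edl{R_i}{Q_i}$. The algorithm (see \cref{alg:EdB}) calls $\appm{$R_i$, $T$, $k_i$, $d_i$, $Q_i$}$ from \cref{lm:milpermat} for each $i$; for every returned position $\tau \in \OccE_{k_i}(R_i,T)$ it places $|R_i|$ weighted marks on each of the (at most four) blocks $\fragmentco{jk}{(j+1)k}$ intersecting $\fragment{\tau-r_i-k}{\tau-r_i+2k}$, exactly as in the marking scheme of \cref{lm:EdB}. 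After sorting the marks by block index, for every block that accumulates total weight at least $m_R - m/\betavh$ we run $\verify{$P$, $T$, $k$, \fragmentco{\pi k}{(\pi+1)k}}$ from \cref{lm:verifye} and collect the confirmed $k$-error occurrences.

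The correctness is inherited directly from \cref{lm:EdB}: by \cref{cl:Eb2} every block containing a $k$-error occurrence of $P$ receives at least $m_R - m/\betavh$ marks, so no occurrence is missed, and since every reported position is re-checked by \verify there are no false positives. One technical point worth stating is that before calling \appm we must check that its preconditions hold, i.e. $d_i \ge 2k_i$ and $|Q_i|\le |R_i|/8d_i$; both follow from $|R_i|\ge m/\betav k$ together with $d_i \le \tbetav k/m\cdot |R_i|$ and $|Q_i|\le m/\alphav k$, exactly as argued in \cref{cl:Eb1} and in the proof of \cref{lm:impEdB}'s Hamming counterpart.

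For the running-time analysis I would bound the three phases separately. Each call to \appm costs $\Oh(n/|R_i|\cdot d_i^4) = \Oh(m/|R_i|\cdot (k/m\cdot |R_i|)^4) = \Oh(k^4\cdot |R_i|^3/m^3)$ time in the \modelname model; since $|R_i|\le m$ this is $\Oh(k^4 \cdot |R_i|/m)$, which sums over all regions to $\Oh(k^4\cdot m_R/m) = \Oh(k^4)$. (Here convexity is not even needed — the crude bound $|R_i|^3/m^3 \le |R_i|/m$ suffices.) Each \appm call returns $\Oh(n/|R_i|\cdot d_i^3) = \Oh(k^3\cdot|R_i|^2/m^2) = \Oh(k^3)$ arithmetic progressions, each spanning at most four blocks, so the total number of weighted marks is $\Oh(r\cdot k^3) = \Oh(k^4)$; sorting them costs $\Oh(k^4\log\log k)$, which I would either absorb into $\Oh(k^4)$ by using radix sort on the block indices (they are $\Oh(n/k)$-bounded integers) or simply note is dominated. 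Finally, by \cref{cl:Eb1,cl:Eb2} the number of blocks passing the weight threshold is $\Oh(k)$, and each \verify call on a length-$k$ interval costs $\Oh(k\cdot(k+k)) = \Oh(k^2)$ by \cref{lm:verifye}, for $\Oh(k^3)$ total. Adding up, the algorithm runs in $\Oh(k^4)$ time in the \modelname model.

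I expect the main obstacle to be purely bookkeeping rather than conceptual: making sure the \appm outputs — which come as $\Oh(d_i^3\cdot n/|R_i|)$ arithmetic progressions rather than explicit positions — are expanded into block-marks without blowing up the mark count, and confirming that the $\Oh(k^4)$ bound really holds uniformly over all admissible splittings of the $R_i$ (the worst case being many short regions, each of length $\Theta(m/\betav k)$, giving $r = \Theta(k)$ regions each contributing $\Oh(k^3)$ marks). Since every such estimate already appeared in skeletal form in the proofs of \cref{lm:EdB} and \cref{lm:impEdB}, the write-up should be a routine, if somewhat lengthy, verification.
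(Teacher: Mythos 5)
Your algorithm and correctness argument coincide with the paper's (call \texttt{PeriodicMatches} per region with parameters $k_i,d_i,Q_i$, mark length-$k$ blocks, verify heavy blocks via \cref{lm:verifye}), but the running-time analysis has a genuine gap in the counting of marks and heavy blocks. First, you place $|R_i|$ weighted marks \emph{per reported position} $\tau$, whereas the scheme of \cref{lm:EdB} (and \cref{alg:EdB}) marks each block at most once per region; the paper explicitly deduplicates (a set $M_i$ per region) precisely so that \cref{cl:Eb1} applies. With per-position duplicates, your citation of \cref{cl:Eb1,cl:Eb2} to conclude that only $\Oh(k)$ blocks reach the weight threshold $m_R-m/\betavh$ is not valid: a single short region whose occurrence set hits the same block $\Theta(k)$ times can alone push that block over the threshold, so the $\Oh(k)$ bound on \texttt{Verify} calls no longer follows from those claims. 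Second, your bound on the number of marks rests on the assertion that each arithmetic progression returned by \texttt{PeriodicMatches} spans at most four length-$k$ blocks; this is false in general, since a progression with difference $|Q_i|$ may contain many elements spread over many blocks (e.g.\ the progressions produced for maximal unmarked ranges in \texttt{SynchedMatches}). The correct — and needed — counting step is the one the paper takes: by \cref{cor:Eaux}, $|\lfloor\OccE_{k_i}(R_i,T)/d_i\rfloor| = \Oh(n/|R_i|\cdot d_i) = \Oh(k)$, so each region marks only $\Oh(k)$ distinct blocks, giving $\Oh(k^2)$ weighted marks in total, $\Oh(k^2\log\log k)$ sorting time, $\Oh(k)$ heavy blocks, and $\Oh(k^3)$ verification time, with the $\Oh(k^4)$ total dominated by the \texttt{PeriodicMatches} calls (your analysis of those calls, and of the preconditions $d_i\ge 2k_i$ and $|Q_i|\le|R_i|/8d_i$, matches the paper).

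Even taking your own (over)estimate of $\Oh(k^4)$ marks at face value, the write-up does not deliver the claimed bound: comparison sorting then costs $\Oh(k^4\log\log k)$, which is not dominated by anything else in the analysis, and the proposed radix sort is not justified because the block indices range up to $\Theta(n/k)$, which need not be polynomial in $k$. The fix is exactly the ingredient you deferred as ``bookkeeping'': deduplicate marks per region and invoke \cref{cor:Eaux} (equivalently, \cref{lem:Eaux}) to bound the number of marked blocks per region by $\Oh(k)$ before sorting and thresholding — this is where the main counting argument of the paper's proof actually lives.
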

\begin{proof}
    As in the proof~of~\cref{lm:EdB},
    set $m_R := \sum_{i=1}^r |R_i| \ge \deltavN/\deltavD\cdot m$ and
    define for every $1 \le i \le r$ the values
    $k_i := \floor{\betavh \cdot k/m \cdot |R_i|}$ and  $d_i := \ceil{\betav \cdot k/m
    \cdot |R_i|}=|\MIS(R_i, Q_i^*)|$.
    Further, write $R_i = P\fragmentco{r_i}{r_i + |R_i|}$.

    Again, we proceed similarly to the Hamming distance setting (\cref{lm:imphdB}).
    However, instead of~marking positions, we now mark
    blocks of~length $k$; in the end, we then verify complete blocks at once using \verify
    from \cref{lm:verifye}. Note that we need to ensure that we mark a block of~$T$
    only at most once for each repetitive part $R_i$; we do so by first computing a set
    of~all blocks to be marked due to $R_i$ (thereby removing duplicates) and
    then merging the sets computed for every $R_i$ into a multi-set.
    Consider \cref{alg:EdB} for the complete algorithm visualized
    as pseudo-code.

    For the correctness, first note that in every call to \appm from
    \cref{lm:milpermat}, we have \[
        \tbetav k/m \cdot |R_i| \ge d_i =  \ceil{\betav k/m\cdot |R_i|}  = \hd(R_i,
        Q_i^*) \ge 2k_i,\]
    hence $|Q_i| \le m/\alphav k \le |R_i|/8d_i$; thus, we can indeed call \appm in this case.
    Further, note that we have placed the marks as in the proof~of~\cref{lm:EdB};
    in particular, by \cref{cl:Eb2}, any block $\fragment{jk}{(j+1)k}$ that contains
    any position $\pi \in \OccE_k(P,T)$ has at least $m_R - m/\betavh$ marks.
    As we verify every possible candidate using \verify from \cref{lm:verifye},
    we report no false positives, and thus the algorithm is correct.

    For the running time in the \modelname model,
    observe that during the marking step, for every repetitive region~$R_i$
    we call \appm once. In total, all calls to \appm take
    \[\sum_i \Oh(n/|R_i|\cdot d_i^4)=\sum_i \Oh(|R_i|/m \cdot k^4) = \Oh(k^4)\]
    time in the \modelname model.
    Further, for every~$R_i$, we place at most
    $\Oh(|\floor{\OccE_{k_i}(R_i, T)/k}|)$ (weighted) marks,
    which can be bounded by
    $\Oh(|\floor{\OccE_{k_i}(R_i, T)/k}|) = \Oh(n/|R_i| \cdot d_i) = \Oh(k)$  using
    \cref{cor:Eaux}.
    Thus, we place $|M| = \Oh(k^2)$ (weighted) marks in total.
    Hence, the marking step in total takes $\Oh(k^4)$ time in the \modelname model.

    As the multi-set $M$ contains at most $\Oh(k^2)$ (weighted) marks,
    we can sort $M$ (by positions) in time $\Oh(k^2 \log\log k)$;
    afterwards, we can find the elements with total weight at least $m_R - m/\betavh$
    via a linear scan over $M$ in time $\Oh(k^2)$.
    As there are (by \cref{cl:Eb1,cl:Eb2}) at most $O(k)$ blocks with
    at least $m_R - m/\betavh$ marks, we call \verify
    at most $\Oh(k)$ times. As we call verify always on a whole block of~length~$k$ at
    once, each call to \verify takes $\Oh(k^2)$ time in the \modelname model.
    Hence, the verification step in total takes $\Oh(k^3)$ time in the \modelname
    model.

    In total, \cref{alg:EdB} thus takes time $\Oh(k^4)$ in the \modelname model.
\end{proof}

\subsection{A \modelname Model Algorithm for Pattern Matching with Edits}

Finally, we are ready to prove \cref{thm:edalgI}.

\edalgI*
\begin{algorithm}[t]
    \SetKwBlock{Begin}{}{end}
    \SetKwFunction{nonpermatch}{EditOccurrences}
    \nonpermatch{$P$, $T$, $k$}\Begin{
        {\bf (} $B_1,\dots,B_{2k}$ {\bf or} $(R_1, Q_1),\dots,(R_r, Q_r)$ {\bf or}
        $Q$ {\bf )} $\gets \anly{$P$, $k$}$\;
        $\OccE_k(P, T) \gets \{\}$\;
        \If{approximate period $Q$ exists}{
             \Return{\appm{$P$, $T$, $k$, $\betav k$, $Q$}}\;
        }
        \For{$i \gets 0$ \KwSty{to} $\floor{2n/m}-1$}{
            $T_i \gets T\fragmentco{\floor{i\cdot {m}/2}}   {\min\{n,
            \floor{(i+3)\cdot {m}/2}-1 + k\}}$\;
        \If{breaks $B_1,\dots,B_{2k}$ exist}{
             $\OccE_k(P, T_i) \gets \brmtch{$P$, $T_i$, $\{ B_1, \dots, B_{2k}\}$, $k$}$\;
        }\ElseIf{repetitive regions $(R_1, Q_1),\dots,(R_r, Q_r)$ exist}{
             $\OccE_k(P, T_i) \gets \rpmtch{$P$, $T_i$, $\{(R_1, Q_1),\dots,(R_r, Q_r)\}$, $k$}$\;
        }
        \If{$i<\floor{2n/m}-1$}{
             $V \gets \{\ell + \floor{i\cdot m/2} \mid \ell \in \OccE_k(P,T_i)\} \cap
             \fragmentco{\floor{i\cdot m/2}}{\floor{(i+1)\cdot m/2}}$\;
        }\label{ln:int}
        $\OccE_k(P, T)\gets \OccE_k(P, T) \cup V $\;\label{ln:union}
    }
    \Return{$\OccE_k(P, T)$}\;
    }
    \caption{Computing $k$-error occurrences in the \modelname model.}\label{alg:5.1}
\end{algorithm}
\begin{proof}
    We proceed, as in \cref{thm:hdalg}, by separately considering each of~the three
    possible outcomes of \anly$(P, k)$. Consider \cref{alg:5.1} for a visualization
    of~the whole algorithm as pseudo-code.

    If there is an approximate period $Q$ of~$P$
    we call \appm (from \cref{lm:milpermat}).
    Else, for each of~the $\floor{2n/m}$ blocks $T_0, \dots, T_{\floor{2n/m}-1}$, where
    $T_i := T\fragmentco{\floor{i\cdot {m}/2}} {\min\{n, \floor{(i+3)\cdot {m}/2} + k - 1\}}$,
    we call \brmtch (from \cref{lm:impEdA}) or \rpmtch
    (from \cref{lm:impEdB}), depending on the case we are in, and add the computed
    occurrences in $\OccE_k(P, T)$.

	The correctness in the approximately periodic case follows from \cref{lm:milpermat}
    and the fact that we can indeed call \appm since, due to \cref{prp:EIalg}, string $Q$
    satisfies $\edl{P}{Q} \le \betav k$ and
	$|Q|\le m/\alphav k \le m/(8\cdot \betav k)$.
	In the other cases, first observe that each length-$(m + k)$ fragment of
    $T$ is contained in at least one of~the fragments $T_i$ and hence we do not lose any
    occurrences.
    Second, by \cref{prp:EIalg} and due to $|T_i|\le \threehalfs m + k$,
    the parameters in the calls to \brmtch (from \cref{lm:impEdA})  and \rpmtch
    (from \cref{lm:impEdB}) each satisfy the requirements.
    Finally, the intersection step in \cref{ln:int} of~\cref{alg:5.1} guarantees that we
    account for each $k$-error occurrence exactly once.

    For the running time in the \modelname model, we have that the call to \anly
    takes $\Oh(k^2)$ time in the \modelname model,
    the call to \appm takes $\Oh(n/m \cdot k^4)$ time in the \modelname model,
    each call to \brmtch takes $\Oh(k^3)$ time in the \modelname model,
    and each call to \rpmtch takes $\Oh(k^4)$ time in the \modelname model.
    Finally, as the output of~our calls to \brmtch and \rpmtch is of~size $\cO(k^2)$ and is sorted,
    \cref{ln:int,ln:union} require $\Oh(k^2)$ time.
    As there are at most $\Oh(n/m)$ calls to \brmtch and \rpmtch,
    we can bound the total time in the \modelname model by $\Oh(n/m \cdot k^4)$,
    completing the proof.
\end{proof}

\section{Implementing the \modelname Model: Faster Approximate Pattern Matching}\label{sec:model}

In this section we implement the \modelname model in the static, fully compressed and
dynamic settings, thereby lifting \cref{thm:hdalg,thm:edalgI} to these settings.
For each setting, we first show how to implement each of~the following primitive
\modelname operations:
\begin{itemize}
    \item $\extractOpName(S,\ell,r)$: Retrieve a string $S\fragment{\ell}{r}$.
    \item $\lceOp{S}{T}$: Compute the length of~the longest common prefix of~$S$ and $T$.
    \item $\ipmOp{P}{T}$: Assuming that $|T|\le 2|P|$, compute $\OccEx(P,T)$ (represented
        as an arithmetic progression with difference $\per(P)$).
    \item $\accOpName(S,i)$: Retrieve the character $\accOp{S}{i}$.
    \item $\lenOpName(S)$: Compute the length $|S|$ of~the string $S$.
\end{itemize}
\noindent Operation $\lcbOp{S}{T}$ can be implemented analogously to
$\lceOp{S}{T}$ by reversing all the strings in scope.
We then apply our main algorithmic results (\cref{thm:hdalg,thm:edalgI})
in order to obtain efficient algorithms for approximate pattern matching.

\subsection{Implementing the \modelname Model in the Standard Setting}

We start with the implementation of~the \modelname model in the standard setting.
This turns out to be a straightforward application of~known tools for strings.

Recall that in the \modelname model, we are to maintain a collection $\X$ of~strings.
Let us denote the total length of~all strings in $\X$ by $n$.
In the standard setting, a handle to $S=X\fragmentco{\ell}{r}$ is implemented as a pointer to $X\in \X$ (recall that $X$ is stored explicitly) along with the indices $\ell$ and $r$.
Hence, the implementations of~$\extractOpName$, $\accOpName$, and $\lenOpName$ are trivial.

Next, $\lceOp{S}{T}$ queries can by implemented efficiently as follows.
We construct the generalized suffix tree for the elements of~$\X$ in $\cO(n)$
time~\cite{F97} and preprocess it within the same time complexity
so that we can support $\cO(1)$-time lowest common ancestor queries~\cite{Bender2000}.

As for efficiently answering $\ipmOp{P}{T}$ queries, we build the data structure of
Kociumaka et al.~\cite{IPM,thesis}, encapsulated in the following theorem,
for the concatenation of~the elements of~$\X$.
\begin{lemma}[\cite{IPM,thesis}]
    For every string $S$ of~length $n$, there is a data structure of~size $\cO(n)$, which
    can be constructed in $\cO(n)$ time and answers $\ipmOp{P}{T}$ queries in $\cO(1)$
    time for fragments $P$ and $T$ of~$S$.\lipicsEnd
\end{lemma}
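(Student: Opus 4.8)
The plan is to follow the approach of Kociumaka et al.~\cite{IPM,thesis}, which I sketch here. The starting point is structural: since $|T|\le 2|P|$, the periodicity lemma implies that $\OccEx(P,T)$, whenever it contains at least three elements, is an arithmetic progression with difference exactly $\per(P)$; when it contains at most two elements it is trivially such a progression. Hence it suffices to locate $\Oh(1)$ ``extreme'' occurrences and, in the periodic regime, the period. I would first reduce to the case where $|P|$ lies in a fixed dyadic range $[2^k,2^{k+1})$: query lengths split into $\Oh(\log n)$ such classes, and every IPM query falls into exactly one of them. Throughout, I split into two regimes according to whether $\per(P)\le |P|/3$; this is decided in $\Oh(1)$ time by a \perOpName-type query built from the same string toolbox.

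\textbf{Non-periodic regime.} For each dyadic length class $2^k$ I would precompute a locally consistent parsing of $S$ — concretely, a $\Theta(2^k)$-synchronizing set $A_k$, of size $\Oh(n/2^k)$ — with the property that every non-periodic fragment of length $\Theta(2^k)$ contains a canonically determined anchor position of $A_k$, and only $\Oh(1)$ of them. Summed over all $k$, the synchronizing sets occupy $\Oh(n)$ space. An occurrence of $P$ in $T$ must align the anchor of $P$ with an anchor of $T$, so the query reduces to a constant number of sub-problems of the form ``among the occurrences of a prescribed context string around an anchor, which ones are consistent with $P$?''. Each sub-problem is answered in $\Oh(1)$ time using the generalized suffix tree of $S$ equipped with weighted-ancestor and longest-common-extension machinery, together with a rank/select structure over the anchor positions arranged by the context strings they induce. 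This produces the extreme occurrences, and hence the whole progression, in $\Oh(1)$ time.

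\textbf{Periodic regime.} Here I would write $P=Q^{a}Q'$ with $Q=P\fragmentco{0}{\per(P)}$ primitive and $Q'$ a proper prefix of $Q$. An occurrence of $P$ at position $i$ in $T$ forces $T$ to have period $|Q|$ throughout $\fragmentco{i}{i+|P|}$, so every occurrence lies inside a single maximal repetition (``run'') of $T$ with period $|Q|$, and within that run the occurrences of $P$ form one arithmetic progression determined by the run's endpoints together with $a$, $|Q'|$, and a boundary check on $Q'$. I would precompute the runs of $S$ (there are $\Oh(n)$ of them) indexed so that, given a fragment $T$, the relevant run and its intersection with $T$ are retrieved in $\Oh(1)$ time (via a suffix-tree component answering ``longest periodic extension'' queries, or via the runs index of $S$). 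Combining these pieces outputs the progression in $\Oh(1)$ time.

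\textbf{Construction cost and the main obstacle.} All ingredients — the generalized suffix tree and LCE/weighted-ancestor structure, the runs of $S$, and the dyadic family of synchronizing sets — admit $\Oh(n)$-time constructions and have total size $\Oh(n)$, so preprocessing time and space are $\Oh(n)$ as claimed, and each query is $\Oh(1)$. The main obstacle is the non-periodic anchored reduction: one must arrange the synchronizing sets and the associated suffix-tree queries so that (i)~the anchor of an arbitrary queried fragment is computable in $\Oh(1)$ time — which requires the parsing to be \emph{explicit} along fragments, not merely to exist — and (ii)~the matching of context strings on the two sides of the anchor collapses to $\Oh(1)$ weighted-ancestor and rank queries. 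Obtaining a genuinely $\Oh(n)$-space, $\Oh(1)$-query bound (rather than incurring an extra $\Oh(\log n)$ factor) is precisely where the careful choice of locally consistent parsing, and its interaction with the periodic case, is needed; this is the technical heart of~\cite{IPM,thesis}.
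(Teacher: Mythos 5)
The paper does not prove this lemma at all: it is imported as a black box from Kociumaka et al.~\cite{IPM,thesis}, so there is no internal argument to compare your attempt against. Your outline follows the same construction as those cited works (dyadic pattern-length classes, a split on $\per(P)$, runs for the periodic regime, and canonically chosen anchor positions matched via suffix-tree/LCE machinery for the non-periodic regime) and is accurate at that level of detail, but it remains a plan rather than a proof --- you explicitly defer the decisive components (constant-time canonical anchor retrieval and achieving $\Oh(n)$ space with $\Oh(1)$ query time without a logarithmic overhead) to the very reference the lemma cites, and your ``synchronizing set'' phrasing is an anachronism for the earlier locally consistent sampling actually used there, though it plays the same role.
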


We summarize the above discussion in the following theorem.

\begin{theorem}\label{thm:pilis}
    After an $\cO(n)$-time preprocessing of~a collection of~strings of~total length $n$,
    each \modelname operation can be performed in $\Oh(1)$ time.\lipicsEnd
\end{theorem}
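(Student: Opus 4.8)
The plan is to assemble the standard-setting \modelname implementation by collecting the data structures discussed in the lead-up to \cref{thm:pilis}, so the proof is essentially a bookkeeping exercise verifying that the preprocessing costs are all $\Oh(n)$ and the per-query costs are all $\Oh(1)$. First I would fix the representation: a handle to a fragment $S = X\fragmentco{\ell}{r}$ with $X \in \X$ is the triple (pointer to $X$, $\ell$, $r$), where each $X$ is kept explicitly in memory. With this representation, $\lenOpName(S) = r - \ell$, $\accOpName(S,i)$ reads $X\position{\ell + i}$, and $\extractOpName(S,\ell',r')$ returns (pointer to $X$, $\ell+\ell'$, $\ell+r'$); each of these is $\Oh(1)$ time and requires no preprocessing beyond having the strings in an array.

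Next I would handle the two non-trivial operations. For $\lceOp{S}{T}$, the plan is to build the generalized suffix tree of the strings in $\X$ in $\Oh(n)$ time (Farach~\cite{F97}), augment it with a lowest-common-ancestor structure in $\Oh(n)$ time supporting $\Oh(1)$-time queries (Bender and Farach-Colton~\cite{Bender2000}), and observe that $\lceOp{S}{T}$ for fragments $S = X\fragmentco{\ell}{r}$ and $T = Y\fragmentco{\ell'}{r'}$ equals the string-depth of the LCA of the leaves corresponding to the suffixes $X\fragmentco{\ell}{|X|}$ and $Y\fragmentco{\ell'}{|Y|}$, capped at $\min(r-\ell, r'-\ell')$. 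For $\ipmOp{P}{T}$, I would simply concatenate the elements of $\X$ (with distinct separator characters) into one string $S$ of length $\Oh(n)$, so that every fragment of an element of $\X$ is a fragment of $S$, and invoke the internal pattern matching data structure of Kociumaka et al.~\cite{IPM,thesis} quoted just above \cref{thm:pilis}: it is built in $\Oh(n)$ time and answers $\ipmOp{P}{T}$ queries in $\Oh(1)$ time, returning the occurrence set as an arithmetic progression with difference $\per(P)$. The operation $\lcbOp{S}{T}$ is obtained by running the $\lceOpName$ machinery on the reversed strings, which costs another $\Oh(n)$ preprocessing.

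Finally I would conclude by noting that the total preprocessing is a sum of finitely many $\Oh(n)$ terms, hence $\Oh(n)$, and every \modelname operation runs in $\Oh(1)$ time, establishing the theorem. I do not expect any real obstacle here: the only point requiring a line of care is that separator characters in the concatenation for the IPM structure must not artificially extend any $\lceOpName$ or $\ipmOpName$ answer, which is handled by using symbols outside $\Sigma$ and by capping $\lceOpName$ values with fragment lengths as noted above; everything else is a direct citation of the stated lemmas.
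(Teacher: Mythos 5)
Your proposal matches the paper's own treatment: trivial handles for \extractOpName, \accOpName, and \lenOpName, a generalized suffix tree with constant-time LCA queries (Farach; Bender--Farach-Colton) for \lceOpName/\lcbOpName, and the internal pattern matching structure of Kociumaka et al.\ built over the concatenation of the collection for \ipmOpName. The extra care you take with separator characters and capping LCP values is a routine detail the paper leaves implicit, so the argument is correct and essentially identical to the paper's.
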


This yields the following theorems, which are not new, but as a warm-up,
they are perhaps somewhat instructive.

Combining~\cref{thm:pilis,thm:hdalg}, we obtain an algorithm for pattern matching with
mismatches with the same running time as the algorithm of~Clifford et
al.~\cite{CliffordFPSS16},
which is essentially optimal when $k=\Oh(\sqrt{m})$.\footnote{Strictly speaking, the algorithm in
\cite{CliffordFPSS16} runs in time $\Oh(n\, {\rm polylog}(m) + n/m \cdot k^2\log k)$, so our
algorithm is slightly faster. However, an even better improvement in the logarithmic
factors was already obtained recently in \cite{cgkkp20}.}
\begin{theorem}
    Given a text $T$ of~length $n$, a pattern $P$ of~length $m$ and a threshold $k$,
    we can compute the set $\Occ_k(P, T)$ in time $\Oh(n + n/m \cdot k^2 \log \log
    k)$.\lipicsEnd
\end{theorem}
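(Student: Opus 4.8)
The plan is to combine the \modelname-model algorithm of~\cref{thm:hdalg} with the standard-setting implementation of~the \modelname model from~\cref{thm:pilis}. First I would set up the collection $\X := \{P, T\}$ of~total length $\Oh(n)$ (recall $m \le n$) and preprocess it using~\cref{thm:pilis} in $\Oh(n)$ time, so that each subsequent \modelname operation costs $\Oh(1)$ time. At this point the input is presented to the meta-algorithm exactly in the form it expects: handles to $P$ and $T$ viewed as fragments $P\fragmentco{0}{m}$ and $T\fragmentco{0}{n}$.

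Next I would invoke \cref{thm:hdalg} on $P$, $T$, and $k$. That theorem produces a representation of~$\Occ_k(P,T)$ using $\Oh(n/m \cdot k^2 \log\log k)$ time plus $\Oh(n/m \cdot k^2)$ \modelname operations. Since each \modelname operation now runs in $\Oh(1)$ time by~\cref{thm:pilis}, the $\Oh(n/m \cdot k^2)$ \modelname operations contribute only $\Oh(n/m \cdot k^2)$ additional time, which is dominated by the $\Oh(n/m \cdot k^2\log\log k)$ term. Adding the $\Oh(n)$ preprocessing cost yields a total running time of~$\Oh(n + n/m \cdot k^2 \log\log k)$, as claimed. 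The representation returned (a union of~arithmetic progressions together with explicitly listed positions) can be expanded to the full set within time proportional to its size, which is $\Oh(n/m\cdot k^2)$ in the worst case and hence already absorbed.

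There is essentially no obstacle here: the statement is a direct corollary obtained by plugging the $\Oh(1)$-time \modelname implementation into the generic bound of~\cref{thm:hdalg}, and the only thing to check is the bookkeeping of~how the ``$\Oh(Q)$ \modelname operations plus $\Oh(P)$ time'' accounting collapses when $Q = \Oh(1)$ per operation. If one wanted to be careful, the single point worth a sentence is that the $\Oh(n)$ preprocessing term is needed precisely because building the generalized suffix tree, the lowest-common-ancestor structure, and the internal-pattern-matching index of~\cite{IPM,thesis} each take linear time in the total input length; after that, nothing in the pattern-matching phase touches $T$ outside of~\modelname calls. The claim about $k = \Oh(\sqrt m)$ optimality is a remark comparing against the $\tOh(n + k^2 n/m)$ bound of~Clifford et al.~\cite{CliffordFPSS16} and the conditional lower bound inherited from Boolean matrix multiplication, and requires no additional argument beyond observing that for $k = \Oh(\sqrt m)$ the running time is $\Oh(n\log\log k)$, i.e.\ near-linear.
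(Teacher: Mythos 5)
Your proposal is correct and matches the paper's own (implicit) proof exactly: the theorem is stated there as a direct combination of~\cref{thm:pilis} (linear-time preprocessing with $\Oh(1)$-time \modelname operations in the standard setting) and \cref{thm:hdalg}, with precisely the accounting you describe. Nothing further is needed.
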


Similarly, combining~\cref{thm:pilis,thm:edalgI}, we obtain an algorithm for pattern
matching with edits that is, again, not slower than the known algorithm \cite{ColeH98}.
\begin{theorem}
    Given a text $T$ of~length $n$, a pattern $P$ of~length $m$ and a threshold $k$,
    we can compute the set $\OccE_k(P, T)$ in time $\Oh(n + n/m \cdot k^4)$.\lipicsEnd
\end{theorem}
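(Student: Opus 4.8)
The plan is to combine the two main algorithmic results obtained for the \modelname model---namely \cref{thm:edalgI}, which gives an algorithm computing (a representation of) $\OccE_k(P,T)$ using $\Oh(n/m\cdot k^4)$ time in the \modelname model---with the implementation of the \modelname model in the standard setting provided by \cref{thm:pilis}. Recall that \cref{thm:pilis} states that, after an $\Oh(N)$-time preprocessing of a collection of strings of total length $N$, each \modelname operation can be performed in $\Oh(1)$ time.

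First I would set $\X := \{P, T\}$, so that $N = |P|+|T| = \Oh(n)$ (using $m \le n$), and invoke \cref{thm:pilis} to preprocess $\X$ in $\Oh(n)$ time; after this, each of the primitive \modelname operations $\extractOpName$, $\lceOpName$, $\lcbOpName$, $\ipmOpName$, $\accOpName$, $\lenOpName$ runs in $\Oh(1)$ time. I would then run the algorithm of \cref{thm:edalgI} on $P$ and $T$ with threshold $k$; since that algorithm performs $\Oh(n/m\cdot k^4)$ \modelname operations and each now costs $\Oh(1)$ time, together with the $\Oh(n)$ preprocessing this yields the claimed $\Oh(n + n/m\cdot k^4)$ running time for computing (a representation of) $\OccE_k(P,T)$.

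Finally, to match the exact statement of the theorem, which asks for the set $\OccE_k(P, T)$ itself, I would note that the representation returned by \cref{thm:edalgI} consists of $\Oh(n/m\cdot k^2)$ arithmetic progressions (see the proof of \cref{thm:edmain} and the bounds in \cref{lm:milpermat}, \cref{lm:impEdA}, \cref{lm:impEdB}), from which the at most $\Oh(n)$ individual positions of $\OccE_k(P,T)$ can be enumerated in $\Oh(n + n/m\cdot k^4)$ extra time; alternatively one may simply report the compact representation, as is standard for this problem. There is essentially no obstacle here: the entire content of the theorem is bundled into \cref{thm:edalgI} and \cref{thm:pilis}, and the only thing to check is that substituting $\Oh(1)$ for the cost of a \modelname operation indeed turns the $\Oh(n/m\cdot k^4)$ \modelname-operation bound into an $\Oh(n/m\cdot k^4)$ time bound, with the additive $\Oh(n)$ coming solely from the suffix-tree and \ipmOpName data-structure construction in \cref{thm:pilis}.
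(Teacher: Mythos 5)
Your proposal is correct and matches the paper's own argument, which likewise obtains this theorem by simply combining \cref{thm:pilis} (constant-time \modelname operations after $\Oh(n)$ preprocessing) with \cref{thm:edalgI}. The only tiny inaccuracy is the count of arithmetic progressions in the output representation (the periodic case yields $\Oh(n/m\cdot k^3)$ progressions, not $\Oh(n/m\cdot k^2)$), but this is immaterial to the stated time bound.
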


\begin{remark}
    The discussion of~this subsection implies that our algorithms also apply to the
    internal setting.
    That is, a string $S$ of~length $n$ can be preprocessed in $\cO(n)$ time, so that
    given fragments $P$~and~$T$ of~$S$, and a threshold $k$, we can compute
    $\Occ_k(P, T)$ in time $\Oh(|T|/|P| \cdot k^2 \log \log k)$ and $\OccE_k(P, T)$ in time
    $\Oh(|T|/|P| \cdot k^4)$.\lipicsEnd
\end{remark}

\subsection{Implementing the \modelname Model in the Fully Compressed Setting}

Next, we focus on the fully compressed setting, where we want to solve approximate pattern
matching when both the text and the pattern are given as a straight-line programs---that is, in this
setting, we maintain a collection $\X$ of~straight-line programs and show how to implement
the primitive \modelname operations on this collection.
We start with a short exposition on straight-line programs and related concepts.

\paragraph*{Straight-Line Programs}

We denote the set of~non-terminals of~a context-free grammar $\G$ by $N_\G$
and call the elements of~$\mathcal{S}_G = N_\G \cup \Sigma$ symbols.
Then, a \emph{straight line program} (\emph{SLP}) $\G$ is a context-free grammar
that consists of~a set $N_\G=\{ A_1 , \ldots , A_n \}$ of~non-terminals, such that each $A
\in N_\G$ is associated with a unique production rule $A \to f_\G(A)$, where $f_\G(A) \in
\mathcal{S}_\G^*$.
For SLPs given as input, we can assume without loss of~generality that each production
rule is of~the form $A \to BC$ for some symbols $B$ and $C$ (that is, the given SLP is in
Chomsky normal form).

Every symbol $A \in \mathcal{S}_\G$ generates a unique string, which we denote by $\gen(A) \in
\Sigma^*$. The string $\gen(A)$ can be obtained from $A$ by repeatedly replacing each
non-terminal by its production. In addition, $A$ is associated
with its \emph{parse tree} $\Tr(A)$ consisting of~a root labeled with $A$ to which zero or
more subtrees are attached:
\begin{itemize}
    \item If $A$ is a terminal, there are no subtrees.
    \item If $A$ is a non-terminal $A\to B_1 \cdots B_p$, then $\Tr(B_i)$ are attached in
        increasing order of~$i$.
\end{itemize}
\noindent Note that if we traverse the leaves of~$\Tr(A)$ from left to right, spelling out the
corresponding non-terminals, then we obtain $\gen(A)$.

The parse tree $\Tr_\G$ of~$\G$ is the parse tree of~the starting symbol $A_n \in N_\G$;
$\gen(A_n)=S$, where $S$ is the unique string generated by $\G$.
We write $\gen(\G) := S$.
Finally, an SLP can be represented naturally as a directed acyclic graph $H_\G$ of~size
$|\mathcal{S}_\G|$. Consult~\cref{fig:slp} for an example of~an SLP, its parse tree, and the
corresponding acyclic graph.

\begin{figure}
        \begin{center}
        \begin{subfigure}[b]{.22\textwidth}
            \begin{center}
                \begin{tikzpicture}
                    \foreach \x/\r in {5/{$A_4 A_4$}, 
                    4/{$A_1 A_3$}, 3/{$A_1 A_2$}, 2/{\tt b}, 1/{\tt a}}{
                        \node(\x) at (0,\x/1) {$A_{\x} \rightarrow$ \r};
                    }
                \end{tikzpicture}
                \caption{}
            \end{center}
        \end{subfigure}
        \begin{subfigure}[b]{.48\textwidth}
            \begin{center}
                \begin{tikzpicture}
                    \node(60) at (-3,1) {\tt a};
                    \node(61) at (-2,1) {\tt a};
                    \node(71) at (-1,1) {\tt b};
                    \node(63) at (0,1) {\tt a};
                    \node(62) at (1,1) {\tt a};
                    \node(72) at (2,1) {\tt b};
                    \node(14) at (-3,3) {$A_1$};
                    \node(11) at (-2,2) {$A_1$};
                    \node(21) at (-1,2) {$A_2$};
                    \node(13) at (0,3) {$A_1$};
                    \node(12) at (1,2) {$A_1$};
                    \node(22) at (2,2) {$A_2$};
                    \foreach \i in {1,...,3}{
                        \draw (1\i) to (6\i);
                        \if\i3\else
                            \draw (2\i) to (7\i);
                        \fi
                    }
                    \node(31) at (-1.5,3) {$A_3$};
                    \node(32) at (1.5,3) {$A_3$};
                    \foreach \i in {1, 2}{
                        \draw (3\i) to (1\i);
                        \draw (3\i) to (2\i);
                    }
                    \node(4) at (0, 4) {$A_4$};
                    \node(42) at (-1.5, 4) {$A_4$};
                    \draw (42) to (31);
                    \draw (4) to (13);
                    \draw (42) to (14);
                    \draw (14) to (60);
                    \draw (4) to (32);
                    \node(5) at (-.75, 5) {$A_5$};
                    \draw (5) to (4);
                    \draw (5) to (42);
                \end{tikzpicture}
                \caption{}
            \end{center}
        \end{subfigure}
        \begin{subfigure}[b]{.19\textwidth}
            \begin{center}
                \begin{tikzpicture}
                    \foreach \c in {3, ..., 5}{
                        \node(\c) at (0,\c/1) {$A_\c$};
                    }
                    \node(1) at (-.5,2/1) {$A_1$};
                    \node(2) at (.5,2/1) {$A_2$};
                    \node(6) at (-.5,1/1) {\tt a};
                    \node(7) at (.5,1/1) {\tt b};
                    \foreach \c [count=\x from 3] in { {1,2}, {1,3}, {4,4}}{
                        \foreach \d [count=\y from 0] in \c {
                            \if\y0
                                \draw[->] (\x.south)++(-.1,0) to[bend right] (\d.north);
                            \else
                                \draw[->] (\x.south)++(.05,0) to[bend left] (\d.north);
                            \fi
                        }
                    }
                    \draw[->] (1) to (6);
                    \draw[->] (2) to (7);
                \end{tikzpicture}
                \caption{}
            \end{center}
        \end{subfigure}
        \caption{(a) An SLP $\mathcal{G}$ generating {\tt aabaab}. (b) The corresponding parse tree $\mathsf{PT}_\mathcal{G}$.
        (c) The corresponding directed acyclic graph $H_\mathcal{G}$.}\label{fig:slp}
    \end{center}
\end{figure}
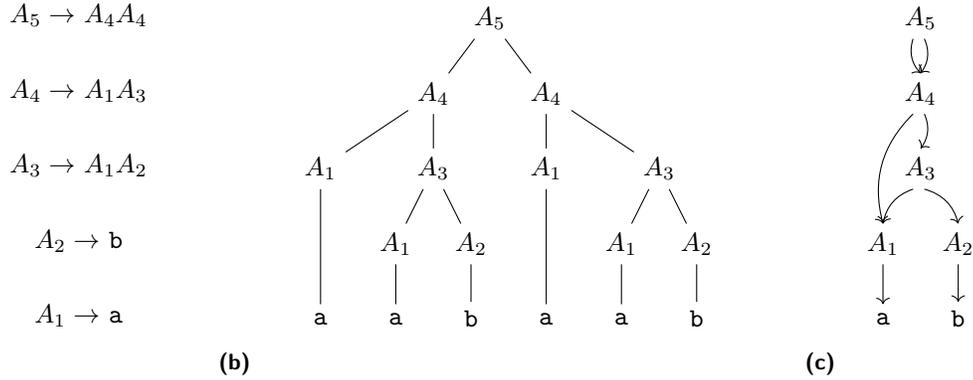

We define the \emph{value} $\val(v)$ of~a node $v$ in $\Tr_\G$ to be the fragment
$S\fragment{a}{b}$ corresponding to the leaves
$S\position{a},\ldots, S\position{b}$ in the subtree of~$v$.
Note that $\val(v)$ is an occurrence of~$\gen(A)$ in $\gen(\G)$, where $A$ is the label of~$v$.
A sequence of~nodes in $\Tr_\G$ is a \emph{chain} if their values are consecutive fragments in $T$.

Given an SLP $\G$ of~size $n$, we can compute $|\gen(\G)|$ in $\cO(n)$ time using dynamic
programming.
We compute the topological order of~$H_\G$ and process the nodes in the reverse order:
For each node corresponding to a non-terminal $A$ with production rule $A \to BC$, we just
need to compute $|\gen(B)|+|\gen(C)|$.

Bille et al.~\cite{BilleLRSSW15} have shown that we can efficiently access any character in $\gen(\G)$.

\begin{theorem}[\cite{BilleLRSSW15}]\label{thm:slp_access}
    An SLP $\G$ of~size $n$, generating a string $S$ of~size $N$, can be preprocessed in
    time $\cO(n \log (N/n))$ so that, for any $i \in \fragmentco{0}{N}$,
    we can access $\gen(\G)\position{i}$ in $\cO(\log N)$ time.\lipicsEnd
\end{theorem}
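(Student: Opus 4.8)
The statement to prove is \cref{thm:slp_access} (random access to SLP-compressed strings), which is quoted directly from Bille et al.~\cite{BilleLRSSW15}. Since the paper cites this as a known result, my plan is to reconstruct the standard heavy-path (weighted-ancestor) argument underlying it.

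\textbf{Plan.} First I would balance the SLP's derivation DAG so that it has depth $\cO(\log N)$ while keeping the size $\cO(n)$; this preprocessing costs $\cO(n\log(N/n))$ time by the rebalancing scheme of~\cite{r03} (or one can work with the unbalanced grammar and pay more care in the data structure, as \cite{BilleLRSSW15} actually do). The key object is the parse tree $\Tr_\G$, which has $N$ leaves but whose DAG representation $H_\G$ has only $\cO(n)$ nodes. For each non-terminal $A$ with rule $A\to BC$ I would precompute $|\gen(A)|=|\gen(B)|+|\gen(C)|$ in a single reverse-topological pass in $\cO(n)$ time, so that every node of~$H_\G$ stores the length of~the string it generates and the length of~the prefix contributed by its left child.

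\textbf{Query.} To access $\gen(\G)\position{i}$, I would descend from the start symbol $A_n$ through $H_\G$: at a node labeled $A\to BC$, compare $i$ against $|\gen(B)|$; if $i<|\gen(B)|$ recurse into $B$ with index $i$, otherwise recurse into $C$ with index $i-|\gen(B)|$; stop at a terminal. On a balanced SLP this descent has length $\cO(\log N)$, giving the claimed query time immediately. If one insists on not rebalancing, the descent can be long (up to $\Theta(n)$), and this is where the real work of~\cite{BilleLRSSW15} lies: they decompose $\Tr_\G$ into heavy paths, precompute for each heavy path a predecessor/search structure over the cumulative lengths of~the light subtrees hanging off~it, and observe that any root-to-leaf path in $\Tr_\G$ crosses $\cO(\log N)$ heavy paths (since entering a light child at least halves the remaining subtree size). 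Each heavy-path traversal is resolved by one predecessor query; implemented with a suitable static structure this is $\cO(\log N)$ per heavy path in the comparison model, or $\cO(1)$ with fusion-tree-style techniques on a word RAM, yielding the overall $\cO(\log N)$ bound. The preprocessing to build these per-heavy-path search structures over the $\cO(n)$ DAG nodes takes $\cO(n\log(N/n))$ time (the $\log(N/n)$ factor comes from the bit-lengths of~the length values / the predecessor structures).

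\textbf{Main obstacle.} The nontrivial point is the $\cO(n\log(N/n))$ preprocessing bound rather than the easy $\cO(n\log N)$ one: naively storing, for each heavy path, a balanced search tree over its light-subtree offsets costs $\cO(n\log N)$, and shaving it to $\cO(n\log(N/n))$ requires either the grammar-balancing step of~\cite{r03} or the careful amortization in~\cite{BilleLRSSW15} that charges the cost of~a node's search structure against the logarithm of~its subtree size. Since the paper merely invokes \cref{thm:slp_access} as a black box from~\cite{BilleLRSSW15}, I would not reprove this optimization in full; it suffices to state that the descent visits $\cO(\log N)$ heavy paths, each handled in $\cO(\log N)$ time, and cite~\cite{BilleLRSSW15} for the tighter preprocessing analysis. (Indeed in the excerpt the result is simply stated with a citation, so the ``proof'' here is really a pointer to~\cite{BilleLRSSW15} together with the sketch above.)
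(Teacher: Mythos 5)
The paper offers no proof of \cref{thm:slp_access}: it is imported verbatim from \cite{BilleLRSSW15}, and your citation-plus-sketch matches that treatment. One caveat on the sketch itself: resolving each of the $\cO(\log N)$ heavy paths by a separate $\cO(\log N)$-time predecessor search would only give $\cO(\log^2 N)$ — the actual argument of \cite{BilleLRSSW15} uses biased (interval-biased) search structures whose per-path costs telescope with the halving of subtree sizes to $\cO(\log N)$ overall (and the $\cO(n)$-size, $\cO(\log N)$-depth balancing you attribute to~\cite{r03} is not quite what that reference provides) — but since you explicitly defer to the cited work for these details, this does not affect the adequacy of your answer.
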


I in~\cite{I17} presented an efficient data structure for answering longest common prefix queries
for suffixes of~a string given by an SLP, which we encapsulate in the following theorem.
This data structure is based on the recompression technique, which we discuss in the next
subsection in more detail.

\begin{theorem}[\cite{I17}]\label{thm:slp_lce}
    An SLP $\G$ of~size $n$, generating a string $S$ of~size $N$, can be preprocessed in
    time $\cO(n \log (N/n))$ so that for any $i$ and $j$, we can compute
    $\lceOp{S\fragmentco{i}{N}}{S\fragmentco{j}{N}}$ in $\cO(\log N)$ time.\lipicsEnd
\end{theorem}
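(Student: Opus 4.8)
The plan is to reduce Theorem~\ref{thm:slp_lce} to the recompression technique of Jeż~\cite{talg/Jez15,jacm/Jez16} together with its \emph{local consistency} property, essentially following~\cite{I17}. During preprocessing I would run recompression on $\G$: this alternately applies \emph{block compression} (replacing every maximal run $a^p$ of a single symbol by a fresh run-length non-terminal with production $a^p$) and \emph{pair compression} (fixing a partition of the current symbol set into a ``left'' and a ``right'' part, and replacing every left-symbol immediately followed by a right-symbol by a fresh non-terminal with that pair as its production). After $h=\cO(\log N)$ rounds the string collapses to a single symbol, so we obtain a chain of grammars $\G=\G_0,\G_1,\dots,\G_h$; writing $S^{(t)}$ for the sequence of level-$t$ symbols whose expansions spell out $S$, we have $S^{(0)}=S$ and $|S^{(h)}|=1$. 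Jeż's analysis shows that, with the partitions in the pair-compression steps chosen appropriately, the total size of all the $\G_t$ (equivalently, the final run-length SLP, which we retain) is $\cO(n\log(N/n))$ and it can be produced within the same time bound. In the same time I would also compute bottom-up, for every symbol $A$ occurring at any level, the length $|\gen(A)|$ and, for each non-run-length symbol, its (bounded-length) list of level-below children.

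The second ingredient is that recompression is \emph{locally consistent}: whether a given position of $S$ is ``popped'' into a block or a pair at level $t$ depends only on a $\cO(1)$-radius neighbourhood of that position, so if $S\fragmentco{i}{i+\ell}=S\fragmentco{j}{j+\ell}$ then for every level $t$ the factorizations induced on these two equal fragments by $S^{(t)}$ agree except within $\cO(1)$ symbols of each endpoint. To answer $\lceOp{S\fragmentco{i}{N}}{S\fragmentco{j}{N}}$ I would descend the levels from $t=h$ down to $0$, maintaining two pointers into $S^{(t)}$ marking where the two suffixes currently begin (at level $h$ both sit in the unique symbol of $S^{(h)}$) and the common-prefix length $\ell$ accumulated so far. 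At each level I scan the two level-$t$ symbol sequences in parallel: identical symbols are consumed and $|\gen(\cdot)|$ is added to $\ell$, and identical run-length symbols $a^p$ versus $a^q$ contribute $\min(p,q)\cdot|\gen(a)|$; by local consistency the first genuine discrepancy is reached after $\cO(1)$ symbols. I then ``uncompress'' once: I replace the two offending level-$t$ symbols by their level-$(t-1)$ expansions — which, for pair symbols, exposes $\cO(1)$ new level-$(t-1)$ symbols near the boundary, and for run-length symbols means comparing the exponents and, at the point where they first differ, descending into a single copy of $a$ — and continue at level $t-1$. Since only $\cO(1)$ work is charged per level and there are $\cO(\log N)$ levels, the query runs in $\cO(\log N)$ time; bottoming out at level $0$ yields a mismatching pair of characters (or the end of $S$), and the accumulated $\ell$ is the answer.

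The main obstacle is making the descent and the local-consistency bound rigorous. One must formulate an invariant guaranteeing that after consuming the matched prefix at level $t$ the two suffixes remain ``aligned'' in a way that survives a single uncompression step — i.e. that the bounded-neighbourhood dependence really does cap how many level-$(t-1)$ symbols can differ before the comparison re-localizes — and one must treat the run-length case with care: a run $a^p$ may expand to arbitrarily many copies of $a$ and so must \emph{never} be fully expanded; only the exponents are compared directly, and the descent enters just one copy of $a$. Pinning down this invariant so that exactly $\cO(1)$ amortized work is attributable to each of the $\cO(\log N)$ levels, and checking that the stored auxiliary data (symbol lengths and boundary-restricted child lists) suffice to perform every step in $\cO(1)$ time, is the delicate part; the remaining bottom-up bookkeeping on the recompressed grammar is routine.
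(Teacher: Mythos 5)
The paper gives no proof of this theorem at all — it is imported verbatim from I~\cite{I17} — and your sketch follows exactly the route of that cited work: recompress the SLP into a locally consistent run-length SLP of size $\cO(n\log(N/n))$ within the same time bound, then answer an LCE query by a level-by-level parallel comparison of the two suffixes' decompositions, charging $\cO(1)$ work to each of the $\cO(\log N)$ levels. The invariant you yourself flag as delicate (that local consistency re-localizes the comparison after $\cO(1)$ symbols per level, with run-length symbols compared only through their exponents and never fully expanded) is precisely the technical content that the citation to~\cite{I17} outsources, so your proposal is the same approach rather than an alternative one.
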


Finally, we discuss how to ``concatenate'' two SLPs.
Given two SLPs $\G_1$ and $\G_2$, with $\gen(\G_1)=S_1$ and $\gen(\G_2)=S_2$,
we can construct an SLP generating $S_1 S_2$ in $\cO(|\G_1|+|\G_2|)$ time as follows.
We first rename the non-terminals in $N_{\G_2}$ to make sure that they are disjoint from
the non-terminals in $N_{\G_1}$.
Next, let $R_1$ and $R_2$ denote the starting non-terminals of~$\G_1$ and $\G_2$, respectively.
We construct a new SLP $\G$ with $N_\G := N_{\G_1}\cup N_{\G_2} \cup \{R\}$, where
$R$ has production rule $R \to R_1 R_2$.
Note that this procedure can be applied to more than two strings: We first apply a global renaming,
and then repeatedly ``concatenate'' two strings in the collection.

Let us now denote the total size of~all SLPs in $\X$ by $n$, and the total length of~all strings
generated by those SLPs by $N$.
We implement the handle of a fragment $S=X\fragmentco{\ell}{r}$ so that it consists of a pointer to the SLP $\G\in \X$ generating $X$ along with the positions $\ell$ and $r$; this makes $\extractOpName$
and $\lenOpName$ trivial.

The above discussion on computing the length of~the string generated by an SLP
implies that the handles to $\gen(\G)\fragmentco{0}{|\gen(\G)|}$ for all $\G\in \X$ can be constructed in $\Oh(n)$ time.
Moreover, \cref{thm:slp_access} implies that we can preprocess $\X$ in $\cO(n \log(N/n))$ time so that
operation $\accOpName$ requires $\cO(\log N)$ time.

For efficiently answering \lceOpName queries, we rely on~\cref{thm:slp_lce}. We build I's
data structure for an SLP that generates the concatenation of~all elements in the
multi-set $\{\gen(\G) \mid \G \in \X\}$. Thus, after an $\cO(n \log (N/n))$-time
preprocessing, each \lceOpName operation takes $\cO(\log N)$ time.

To implement the \ipmOpName operation efficiently, we rely on the recompression technique due to
Je{\.z}~\cite{talg/Jez15,jacm/Jez16}, which we discuss next.

\paragraph*{Recompression of~Straight-Line Programs}

We start with some additional notation.
A \emph{run-length straight line program} (\emph{RLSLP}) is a straight-line program $\G$
that contains two kinds of~non-terminals:
\begin{itemize}
    \item \emph{Concatenations}: Non-terminals with production rules of
        the form $A \to BC$ (for symbols $B$ and $C$).
    \item \emph{Powers}: Non-terminals with production rules of~the form $A \to B^p$ (for
        a symbol $B$ and an integer $p\ge 2$).
\end{itemize}

\newcommand{\R}{\mathcal{R}}

The key idea of~the \emph{recompression} technique by Je\.{z}~\cite{talg/Jez15,jacm/Jez16}
is the construction of~a particular RLSLP~$\mathcal{R}$ that generates the input string $S$.
The parse tree $\Tr_{\R}$ is of~depth $\Oh(\log N)$ and it can be traversed efficiently based on the underlying directed acyclic graph $H_{\R}$. 
In particular, the name of~the technique stems from the fact that an SLP $\G$ of~size $n$
generating a string $S$ of~length $N$ can be efficiently recompressed to
the RLSLP $\R(\G)$ in-place, that is, without first uncompressing $\G$;
efficiently here means in $\cO(n \log N)$ time.
As observed by I~\cite{I17}, the parse tree~$\Tr_{\R(\G)}$ is \emph{locally consistent}
in a certain sense. To formalize this property, he introduced the \emph{popped sequence}
of~every fragment $S\fragment{a}{b}$, which is a sequence of~symbols labeling a certain
chain of~nodes of $\Tr_{\R(\G)}$, whose values constitute $S\fragment{a}{b}$.


\begin{theorem}[\cite{I17}]\label{thm:recomp}
    If two fragments match, then their popped sequences are equal.
    Moreover, each popped sequence consists of~$\Oh(\log N)$ runs (maximal powers of~a
    single symbol) and can be constructed in $\Oh(\log N)$ time.
    The nodes corresponding to symbols in a run share a single parent.
    Furthermore, the popped sequence consists of~a single symbol only for fragments of
    length $1$. \lipicsEnd
\end{theorem}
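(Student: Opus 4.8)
The plan is to derive the theorem from the structure of the RLSLP $\R(\G)$ produced by Je\.z's recompression procedure~\cite{talg/Jez15,jacm/Jez16}, which for a string $S=\gen(\G)$ of length $N$ is obtained by performing $\Oh(\log N)$ rounds of compression on $S$. Each round is either a \emph{block compression} -- every maximal block $a^p$ with $p\ge 2$ of the current string is replaced by a fresh power non-terminal $A\to a^p$ -- or a \emph{pair compression} with a suitably chosen partition $\Sigma=\Sigma_L\sqcup\Sigma_R$, where every occurrence of a pair $ab$ with $a\in\Sigma_L$, $b\in\Sigma_R$ is replaced by a fresh concatenation non-terminal $A\to ab$. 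The round types and the partitions are chosen so that the length of the string drops by a constant factor per round; hence a single symbol remains after $\Oh(\log N)$ rounds, the created non-terminals form $\R(\G)$, and its parse tree $\Tr_{\R(\G)}$ has depth $\Oh(\log N)$. I will use only structural properties of $\Tr_{\R(\G)}$; that it can actually be \emph{built} in $\Oh(n\log N)$ time without decompressing is a separate (already available) ingredient.

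Next I would define the popped sequence of a fragment $F=S\fragment{a}{b}$. Let $W^{(i)}$ denote the level-$i$ string of the recompression, with $W^{(0)}=S$, and set $F^{(0)}=S\fragment{a}{b}$. Inductively, $F^{(i)}$ is a contiguous block of symbols of $W^{(i)}$ whose values tile $F$; passing to level $i+1$, the leftmost and rightmost symbols of $F^{(i)}$ may be compressed \emph{together with symbols of $W^{(i)}$ outside $F$}. In a block-compression round this can absorb the entire prefix run and suffix run of $F^{(i)}$ into symbols owned by the context; in a pair-compression round at most one symbol is absorbed on each side. Those absorbed symbols are \emph{popped}: the left-popped symbols are prepended (in order) to an accumulator $L$, the right-popped ones appended to an accumulator $R$, and the remaining symbols, renamed by the freshly created non-terminals, form $F^{(i+1)}$. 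The process stops at the first level $t$ with $|F^{(t)}|\le 1$; the popped sequence of $F$ is $L$, then the \emph{center} (namely $F^{(t)}$, with its unique symbol unfolded once into the run of its children when that symbol is a power non-terminal), then $R$. Since each of the $\Oh(\log N)$ rounds contributes at most one run on each side of the center and the center is itself one run, the popped sequence has $\Oh(\log N)$ runs; and the symbols within any run created by a block compression (or by unfolding a power at the center) are exactly the children of one fresh power non-terminal, so they share a parent, while a run of length one is trivially fine. Walking down the $\Oh(\log N)$ levels and reading off the boundary runs at each level reconstructs the popped sequence in $\Oh(\log N)$ time. Finally, a length-$1$ fragment participates in no compression, so its popped sequence is the single character it spells; and if $|F|\ge 2$ then $|F^{(0)}|\ge 2$, so at least one symbol must be popped before the length drops to $\le 1$, giving the claimed ``single symbol iff length $1$'' dichotomy.

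The remaining, and main, point is matching invariance: if $S\fragment{a}{b}=S\fragment{a'}{b'}$, then their popped sequences agree. I would prove by induction on $i$ that $F^{(i)}$ and ${F'}^{(i)}$ are equal as sequences of symbols and that the multisets of symbols popped at round $i+1$ from $F^{(i)}$ and from ${F'}^{(i)}$ coincide; then $L=L'$, $R=R'$, and $F^{(t)}={F'}^{(t)}$, which is the claim. The induction rests on the \emph{local consistency} of recompression: whether a given occurrence of a symbol in $W^{(i)}$ gets blocked (respectively paired to its left or right neighbour) is determined solely by that symbol's immediate neighbours and the global round type/partition, independent of the surrounding context. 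Two matching fragments can differ only in the outside context flanking their boundaries, and the pop rule is designed so that the popped symbols are precisely those whose fate at the boundary is \emph{fixed by $F$ itself} (its boundary run together with the round-uniform decision), while $F^{(i+1)}$ is exactly the part shielded from the context by those popped symbols. The delicate step -- and the one where I would lean directly on the local-consistency lemmas of~\cite{talg/Jez15,jacm/Jez16,I17} -- is to make the ``retained versus popped'' split at a block that straddles a fragment boundary canonical, so that the two matching fragments split it identically; once this is set up, the inductive step is routine. I expect this to be the only real obstacle, everything else being bookkeeping across the $\Oh(\log N)$ recompression levels.
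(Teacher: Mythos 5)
A preliminary remark: the paper contains no proof of this statement to compare against. \cref{thm:recomp} is imported verbatim, with citation, from I~\cite{I17}; both the definition of the popped sequence and all four claimed properties are taken from that work. So your sketch can only be judged against the known argument of~\cite{I17}, and in outline it does follow that route (level-by-level popping of boundary symbols across the $\Oh(\log N)$ recompression rounds, with local consistency supplying matching-invariance).

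There are, however, two genuine soft spots. First, you define the popped symbols as those that are ``compressed together with symbols of $W^{(i)}$ outside $F$''. That is a property of the occurrence, not of the string content, and with that definition the first claim is simply false: a boundary run that happens not to extend past the fragment in one occurrence would be retained and folded into a power non-terminal inside the fragment, while in another occurrence of the same string it would be popped symbol by symbol, so the two popped sequences differ. You notice the tension and switch, in the invariance argument, to a pop rule ``fixed by $F$ itself'', but that content-only rule (pop the entire maximal boundary run at a block-compression round, even if it has length one; pop the left endpoint iff it lies in $\Sigma_R$ and the right endpoint iff it lies in $\Sigma_L$ at a pair-compression round) \emph{is} the definition of the popped sequence, so deferring it to the local-consistency lemmas leaves the analysed object unspecified — this is the heart of the proof, not a routine detail. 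Second, and more concretely, the last claim does not follow from your sketch. Your justification ``if $|F|\ge 2$ then at least one symbol must be popped before the length drops to $\le 1$'' is a non sequitur: the middle can shrink by purely internal compression with no pops at all. For instance, a length-two fragment $ab$ with $a\in\Sigma_L$ and $b\in\Sigma_R$ at a pair round pops nothing, collapses to a single fresh pair non-terminal, and your stopping rule then outputs one symbol; the patch of unfolding the centre applies only to power non-terminals and does not rescue this case. The claim is true for the actual conventions of~\cite{I17} (boundary runs are popped unconditionally at block rounds, and a block round precedes the first pair round, so every fragment of length at least two contributes at least two popped symbols already at level zero), but those conventions are exactly what your sketch leaves open. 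Finally, a small imprecision: the popped boundary run is in general only a subset of the children of the power node covering the maximal block of $S$ (that block may extend outside the fragment — which is precisely why the run was popped), not ``exactly the children''; the shared-parent conclusion still holds.
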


Let $F_1^{p_1}\cdots F_t^{p_t}$ denote the run-length encoding of~the popped sequence of~a
substring $U$ of~$S$ and set \[L(U) :=
    \{|\gen(F_1)|, |\gen(F_1^{p_1})|, |\gen(F_1^{p_1}F_2^{p_2})|, \ldots,
        |\gen(F_1^{p_1}\cdots F_{t-1}^{p_{t-1}})|,|\gen(F_1^{p_1}\cdots
    F_{t-1}^{p_{t-1}}F_{t}^{p_{t}-1})|\}.
\]
By \cref{thm:recomp}, the set $L(U)$ can be constructed in $\Oh(\log N)$ time given a
fragment $S\fragment{a}{b}=U$.
Now, the following lemma from~\cite{IDM19} allows us to efficiently implement internal pattern
matching queries.

\begin{lemma}[\cite{IDM19}]\label{lem:rec_sync}
    Let $v$ denote a non-leaf node of~$\Tr_{\R(\G)}$ and let $S\fragment{a}{b}$ denote an occurrence of
    $S$ that is contained in $\val(v)$, but not contained in $\val(u)$ for any child $u$ of~$v$.
    If $S\fragment{a}{c}$ is the longest prefix of~$S\fragment{a}{b}$ contained in $\val(u)$ for
    a child $u$ of~$v$,
    then $|S\fragment{a}{c}| \in L(U)$.
    Symmetrically, if $S\fragment{c'+1}{b}$ is the longest suffix of~$S\fragment{a}{b}$
    contained in $\val(u)$ for a child $u$ of~$v$, then $|S\fragment{a}{c'}| \in L(U)$.\lipicsEnd
\end{lemma}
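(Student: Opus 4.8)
The plan is to prove the statement by unwinding the construction of the popped sequence of $U:=S\fragment{a}{b}$ and locating within it the first child boundary of~$v$. Recall (as recorded just before \cref{thm:recomp}) that the popped sequence of~$U$ is the sequence of labels of a canonical chain of nodes $w_1,\ldots,w_q$ of~$\Tr_{\R(\G)}$ whose values tile $U$ from left to right: one starts at the unique lowest node whose value contains $U$ (the nodes whose value contains $U$ form a path down from the root, and $v$, no child of which contains $U$, is its bottom, so this node is exactly $v$); one lists the children $u_1,\ldots,u_k$ of~$v$ that meet $U$, with $k\ge 2$ since $U$ lies in no single child; and then, as long as the left endpoint $a$ lies strictly inside $\val(u_1)$ (resp. the right endpoint lies strictly inside $\val(u_k)$), one recursively refines $u_1$ (resp. $u_k$) by replacing its label by the popped sequence of the overlapping suffix of~$\val(u_1)$ (resp. prefix of~$\val(u_k)$); finally one collapses maximal runs to obtain $F_1^{p_1}\cdots F_t^{p_t}$. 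Since $U$ is contained in no child of~$v$, the child $u$ of~$v$ containing $a$ is $u_1$, and the longest prefix of~$U$ contained in $\val(u)$ is $S\fragment{a}{c}$, where $c$ is the right endpoint of~$\val(u_1)$.

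I would first dispatch the \emph{aligned} case, where $a$ is the left endpoint of~$\val(u_1)$: then $u_1$ is never refined, so $w_1=u_1$ and the chain nodes contained in $\val(u_1)$ are exactly $\{u_1\}$; hence $|S\fragment{a}{c}|=|\gen(\operatorname{label}(u_1))|=|\gen(F_1)|\in L(U)$ directly from the definition of~$L(U)$. This already covers the subcases that $v$ is a power node (so that the first run has length $p_1\ge 2$) and that $u_1$ is a leaf. In the \emph{non-aligned} case, $u_1$ does get refined, so the chain nodes lying in $\val(u_1)$ — call them $w_1,\ldots,w_r$, which form a prefix of~$w_1,\ldots,w_q$ — are proper descendants of~$u_1$ and their values tile the suffix $S\fragment{a}{c}$ of~$\val(u_1)$. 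Writing $F_1^{p_1}\cdots F_j^{p_j}$ for the run-collapsed version of~$\operatorname{label}(w_1),\ldots,\operatorname{label}(w_r)$, we get $|S\fragment{a}{c}|=|\gen(F_1^{p_1}\cdots F_j^{p_j})|$; since $U$ is not contained in $\val(u_1)$ there is a further chain node $w_{r+1}$, so $j\le t-1$ and this value is one of the listed elements of~$L(U)$ — provided the run containing $w_r$ does not also contain $w_{r+1}$.

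That last point is the only step requiring a genuine (if short) argument, and I expect it to be the main obstacle. Using the ``nodes of a run share a single parent'' clause of~\cref{thm:recomp}, suppose $w_r$ and $w_{r+1}$ had a common parent $w$ in $\Tr_{\R(\G)}$. Then $\val(w)$ straddles the boundary between $\val(u_1)$ and $\val(u_2)$, so $w$ is a common ancestor of~$u_1$ and $u_2$, forcing $w=v$ or $w$ to be a strict ancestor of~$v$; but $w_r$ is a \emph{proper} descendant of~$u_1$ (we are in the non-aligned case), hence strictly below a child of~$v$, and no such node can be a child of~$v$ or of an ancestor of~$v$ — a contradiction. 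Hence $w_r$ and $w_{r+1}$ lie in different runs, which completes the prefix statement; the suffix statement follows symmetrically, refining $u_k$ from the right (equivalently, by reversing the string). Beyond this, the work is essentially definition-chasing, and the only real care needed is to get the boundary bookkeeping of the popped-sequence construction exactly right across the degenerate cases $k=2$, $v$ a power node, $u$ a leaf, and $U=\val(v)$.
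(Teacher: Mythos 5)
A preliminary remark on the comparison you were asked to survive: the paper contains no proof of \cref{lem:rec_sync} at all --- the lemma is imported from~\cite{IDM19}, and the popped sequence is specified in this paper only through the properties collected in \cref{thm:recomp}. In particular, the refine-the-extremes construction that you describe as ``recorded just before \cref{thm:recomp}'' is \emph{not} in the paper; it is a reconstruction of the definition from~\cite{I17,IDM19}. Judged on its own terms, your argument is essentially the standard one, and its crux is exactly right: if the two chain nodes $w_r$ and $w_{r+1}$ flanking the cut at $c$ lay in the same run, they would share a parent $w$ by \cref{thm:recomp}; by laminarity of parse-tree intervals, $\val(w)$ would contain portions of both $\val(u_1)$ and $\val(u_2)$, so $w$ would be $v$ or an ancestor of $v$, which is impossible once $w_r$ is a proper descendant of $u_1$. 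Your case split (aligned versus non-aligned left endpoint) and the symmetric treatment of the suffix claim are also fine.

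The one weak point is that the proof leans on the reconstructed definition, which you cannot justify from the paper. Fortunately, you do not need it: everything can be grounded in \cref{thm:recomp} as stated. Since the chain consists of parse-tree nodes whose values tile $S\fragment{a}{b}$, laminarity shows that a chain node straddling the boundary between two children of $v$ would have to be $v$ itself (or an ancestor), forcing the chain to be the single symbol labeling $v$, which the last clause of \cref{thm:recomp} forbids for fragments of length at least $2$. Hence the cut at $c$ falls between consecutive chain nodes $w_r,w_{r+1}$ with $w_r$ in the subtree of $u_1$ and $w_{r+1}$ in the subtree of $u_2$, and $|S\fragment{a}{c}|$ is a prefix sum of the popped sequence. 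Your shared-parent argument then shows that a mid-run cut forces $w_r=u_1$ and $w_{r+1}=u_2$, i.e.\ $r=1$ and the prefix sum is $|\gen(F_1)|\in L(U)$, while a cut at a run boundary yields $|\gen(F_1^{p_1}\cdots F_j^{p_j})|$ with $j\le t-1$; symmetrically on the right, the mid-run case yields $|\gen(F_1^{p_1}\cdots F_t^{p_t-1})|$. Phrased this way, the proof uses only the black-box properties the paper actually provides, which both removes the unsupported premise and explains why $L(U)$ contains precisely the two additional ``off-run-boundary'' values.
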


Finally, we discuss how to implement the \ipmOpName operation.

\begin{lemma}\label{lem:rec_ipm}
    Given $H_{\R(\G)}$, a fragment $T=S\fragmentco{j}{j+\nu}$, and a fragment
    $P=S\fragmentco{i}{i+\mu}$ with $|T|\leq 2|P|$, we can compute
    $\OccEx(P,T)$ in the time required by $\cO(\log^2 N)$ \lceOpName and \lcbOpName
    operations on fragments of~$S$.

    In particular, the \lceOpName and \lcbOpName operations are between pairs consisting in one of
    $\cO(\log N)$ fragments of $P$ and one of $\cO(\log N)$ fragments of $S$.
\end{lemma}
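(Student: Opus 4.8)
The plan is to leverage the local consistency of the recompression RLSLP $\R(\G)$, via \cref{thm:recomp} and \cref{lem:rec_sync}, to reduce the internal pattern matching query to a bounded number of longest-common-prefix and longest-common-suffix computations. First I would classify the occurrences of $P$ in $T$ according to the node of $\Tr_{\R(\G)}$ that is the \emph{lowest} node whose value contains a given occurrence (viewing the occurrence as a fragment of $S$, which lies within $\val(T)$). Since $T=S\fragmentco{j}{j+\nu}$ is itself contained in $\val(\mathcal{G})$, all such occurrences lie inside $\val(v)$ for nodes $v$ on the $\Oh(\log N)$-depth portion of $\Tr_{\R(\G)}$ spanned by $T$; more precisely, along the (two) root-to-leaf paths delimiting $T$, and the subtrees hanging between them. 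The key point is that an occurrence contained in $\val(v)$ but in no child's value must, by \cref{lem:rec_sync}, have its prefix-split-point at a position in $L(P)$ relative to some child boundary, and $L(P)$ has only $\Oh(\log N)$ elements (constructible in $\Oh(\log N)$ time from the popped sequence of $P$).

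The main steps, in order, are: (1) compute the popped sequence of $P$ (its run-length encoding $F_1^{p_1}\cdots F_t^{p_t}$ with $t=\Oh(\log N)$) and the associated set $L(P)$ of candidate prefix lengths; (2) for each relevant node $v$ of $\Tr_{\R(\G)}$ — there are $\Oh(\log N)$ of them, since $|T|\le 2|P|$ bounds how many children/nodes can accommodate an occurrence crossing a given boundary — enumerate the child boundaries of $v$ that could serve as the ``crossing point'' of an occurrence, and for each of the $\Oh(\log N)$ candidate offsets $d\in L(P)$, check whether $P$ occurs at the position determined by that boundary minus $d$; a single such check is one \lceOpName query between a fixed fragment of $P$ and a fragment of $S$ (verifying the prefix before the boundary) combined with one \lcbOpName-type or \lceOpName query for the suffix after the boundary — in fact comparing the whole of $P$ against the corresponding fragment of $S$ costs $\Oh(1)$ \lceOpName operations each; (3) handle the special case where an occurrence is entirely inside a single power non-terminal $A\to B^p$: here occurrences of $P$ form an arithmetic progression with difference $|\gen(B)|=\per(P)$ (or $P$ spans at most two copies, reducing to a concatenation case), and this is detected using at most $\Oh(1)$ \lceOpName queries per power node along the $\Oh(\log N)$ relevant ones; (4) collect all found occurrences — by the theory there are $\Oh(|T|/\per(P))$ of them, and they form a single arithmetic progression with difference $\per(P)$ by the standard periodicity structure of occurrences in a text of length at most twice the pattern — and output that progression. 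Combining, the query uses $\Oh(\log N \cdot \log N)=\Oh(\log^2 N)$ \lceOpName/\lcbOpName operations, with each such operation being between one of the $\Oh(\log N)$ fragments of $P$ (the prefixes/suffixes of $P$ aligned at popped-sequence boundaries, plus $P$ itself) and one of $\Oh(\log N)$ fragments of $S$ (the fragments aligned at child boundaries of the relevant nodes).

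The hard part will be the careful bookkeeping in step (2): arguing that \emph{every} occurrence of $P$ in $T$ is caught by some (node $v$, child boundary, offset $d\in L(P)$) triple. This requires invoking \cref{lem:rec_sync} with $v$ taken to be the lowest node of $\Tr_{\R(\G)}$ whose value contains the occurrence and then identifying the ``longest prefix of the occurrence contained in a child's value'' — its length lies in $L(P)$ — so that the occurrence's starting position is pinned down (relative to the child boundary) to one of $\Oh(\log N)$ candidates. One must also be careful that occurrences straddling the left or right boundary of $T$ itself (i.e., not fully inside any proper subtree but inside $\val(\mathcal{G})$) are covered; this is handled by letting $v$ range over the $\Oh(\log N)$ ancestors delimiting $T$ and restricting attention to offsets that keep the occurrence inside $\fragmentco{j}{j+\nu}$. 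A secondary subtlety, needing only \cref{thm:recomp}, is the degenerate case $|P|=1$, where the popped sequence is a single symbol and $L(P)$ must be handled directly; but this reduces to a trivial character-comparison scan bounded by $|T|/\per(P)=|T|$.

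Finally, to turn the enumerated candidate positions into the claimed compact output, I would observe that occurrences of $P$ in any window $T$ with $|T|\le 2|P|$ are governed by the Fine–Wilf periodicity lemma (\cref{ft:per}): either there are at most two of them, or they form an arithmetic progression with common difference equal to $\per(P)$, which in turn equals the difference returned by the \ipmOpName specification. So after collecting the (at most $\Oh(\log^2 N)$) verified candidate starting positions, I would sort them, detect the period, and return the progression $(a,\per(P),\ell)$; this post-processing is dominated by the \lceOpName/\lcbOpName cost. This establishes the lemma.
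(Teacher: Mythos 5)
Your proposal follows essentially the same route as the paper's proof: compute the popped sequence of $P$ and the $\Oh(\log N)$-size set $L(P)$, use \cref{lem:rec_sync} to restrict crossing occurrences at each of the $\Oh(\log N)$ relevant nodes (the paper makes your node-counting precise by descending from the lowest common ancestor of $T$'s endpoints into at most one child per level, which is where $|T|\le 2|P|$ enters) to $|L(P)|$ candidate split points, each checked by one \lceOpName plus one \lcbOpName query, handle power nodes $A\to B^p$ by verifying candidates at a single copy boundary and replicating by the periodicity of $\gen(B)^p$, and finally compress the answer into one arithmetic progression with difference $\per(P)$. Minor slips in your write-up (e.g., $|\gen(B)|$ need not equal $\per(P)$, and in the power case the occurrences must be kept as progressions rather than listed individually to stay within the stated budget) do not change the argument, which matches the paper's.
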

\begin{proof}
    We can assume that $|T|\ge |P|>1$; otherwise it suffices to perform a constant number of
    letter comparisons which can be done in $\cO(\log N)$ time.
    We first compute the popped sequence of~$P$ and $L(P)$ using~\cref{thm:recomp}.
    Let $v \in \Tr_{\R(\G)}$ denote the lowest common ancestor of~the leaves representing
    $S\position{j}$ and $S\position{j+\nu-1}$;
    the node $v$ can be naively computed in $\cO(\log N)$ time by a forward search from
    the root of~the parse tree.
    As $T$ is a fragment of~$\val(v)$, all occurrences of~$P$ in $T$ are also contained in
    $\val(v)$. Our first aim is to compute all occurrences of~$P$ in $T$ that are not
    contained in $\val(u)$ for any child $u$ of~$v$; we then appropriately recurse on the
    children of~$v$ that may contain sought occurrences.

    Let us first analyze the case that the label of~$v$ is a concatenation symbol $A \to
    BC$. Write $v_\ell$ for the left child of~$v$ and $v_r$ for the right child of~$v$.
    Further, let $T_{\ell}=S\fragmentco{j}{t}$ denote the longest prefix of~$T$ that is
    completely contained in $\val(v_\ell)$ and, similarly, let
    $T_{r}=S\fragmentco{t}{j+\nu}$ denote the longest suffix of~$T$ that is completely
    contained in $\val(v_r)$.
    Suppose that there is a fragment $U=T\fragment{a}{b}$ of~$T$ that equals $P$ and
    overlaps with both $\val(v_\ell)$ and $\val(v_r)$.
    The fragment $U$ can then be naturally decomposed into a non-empty suffix $U_\ell$ of
    $T_\ell$ and a non-empty prefix $U_r$ of~$T_r$.
    \Cref{lem:rec_sync} implies that $|U_\ell| \in L(P)$.
    It thus suffices to check for each $q \in L(P)$ whether $P\fragmentco{0}{q}$ is a
    suffix of~$T_\ell$ and $P\fragmentco{q}{m}$ is a prefix of~$T_r$.
    There are $|L(P)|=\Oh(\log N)$ choices for $q$, and for each of~them we can perform
    the check using $\lcbOp{P\fragmentco{0}{q}}{T_\ell}$ and $\lceOp{P\fragmentco{q}{m}}{T_r}$
    operations.

    We now consider the case that the label of~$v$ is a power symbol $A \to B^p$ and
    denote the children of~$v$ in the left-to-right order by $v_1, \ldots, v_p$.
    Let $T\fragmentco{x}{y}$ denote the overlap of~$T$ with $\val(v)$.

    If $T$ overlaps with $\val(v_d)$ for two children $v_d$ of~$v$,
    then we can process $v$ as in the previous case.
    In the case that~$T$ overlaps with
    exactly three children of~$v$, some care is needed
    to avoid double-counting occurrences that overlap with all of~them.
    In particular, let these three children be $v_x, v_{x+1}$, and $v_{x+2}$.
    We consider separately:
    \begin{itemize}
    \item occurrences that overlap with both $\val(v_x)$ and $\val(v_{x+1})$ by setting
    $T_{\ell}$ to be the longest prefix of~$T$ that is
    completely contained in $\val(v_x)$ and $T_{r}$ to be
     the longest suffix of~$T$ that is completely
    contained in $\val(v_{x+1})\val(v_{x+2})$, and
    \item occurrences that overlap with both $\val(v_{x+1})$ and $\val(v_{x+2})$, but not with $\val(v_{x})$, by setting
    $T_{\ell}$ to be $\val(v_{x+1})$ and $T_{r}=S\fragmentco{t}{j+\nu}$ to be
     the longest suffix of~$T$ that is completely
    contained in $\val(v_{x+2})$.
    \end{itemize}

    We can thus assume that $T$ overlaps with $\val(v_d)$ for more than three children
    $v_d$ of~$v$. In that case,  for all $d$, we have $\val(v_d)<\mu$ and
    hence no occurrence of~$P$ in $T\fragmentco{x}{y}$ can be completely contained in
    $\val(v_d)$. We set $T_\ell := \val(v_1)$ and $T_r := \val(v_2) \cdots \val(v_p)$.
    Using $|L(P)|$ many \lceOpName operations
    and $|L(P)|$ many \lcbOpName operations,
    we can compute the set $Y$ of~occurrences of~$P$ in~$S$ (but not necessarily in $T$)
    which can be decomposed into a prefix $U_\ell$ that is a suffix of~$T_\ell$ and a
    suffix~$U_r$ that is a prefix of~$T_r$.

    Then, by the periodicity of~$\val(v)=\gen(B)^p$, the desired set of~occurrences
    is \[
        Z := \{ i + j\cdot |\gen(B)| : i \in Y, j \in \fragment{0}{p-1} \} \cap
    \fragment{x}{y-\mu}.
\]
    Note that $Z$ trivially decomposes into $\cO(\log N)$ arithmetic progressions;
    these arithmetic progressions can be replaced by a single arithmetic progression  with
    difference $\per(P)$ in $\cO(\log N)$ time.

    If the overlap of~the value of~each child of~$v$ with $T$ has length less than
    $\mu$, we terminate the algorithm.
    Otherwise, if $v$ has two children whose values have an overlap of~length $\mu$ with
    $T$, we check whether either of~them is equal to $P$
    using a single \lceOpName operation and terminate the algorithm.
    Finally, at most one of~$v$'s children has length greater than $\mu$. In that
    case, we repeat the above procedure for this child.
    As the depth of~$\Tr_{\R(\G)}$ is $\cO(\log N)$, the overall running time is upper bounded
    by the time required for $\cO(\log^2 N)$ \lceOpName and \lcbOpName operations.

    In the end, we have $\OccEx(P,T)$ represented by at most one arithmetic progression
    and $\cO(\log^2 N)$ single occurrences.
    We postprocess this representation in $\cO(\log^2 N)$ time, in order to represent
    $\OccEx(P,T)$ with a single arithmetic progression with difference $\per (P)$.

    Note that, in each level, the \lceOpName (resp.~\lcbOpName) queries we perform are
    between $\cO(1)$ fragments of $S$ and each $P\fragmentco{0}{q}$ (resp.~$P\fragmentco{q}{m}$) for $q \in L(P)$.
    This observation implies the last claim of the statement of this lemma, concluding its proof.
\end{proof}

\begin{remark}
Essentially the same proof of the above lemma has recently appeared in~\cite{KK20}.\lipicsEnd
\end{remark}

We can get an $\cO(\log^3 N)$-time implementation of \ipmOpName by employing~\cref{lem:rec_ipm}, and answering each \lceOpName or \lcbOpName query in $\cO(\log N)$ time.
However, the structure of the queries allows for a more efficient implementation.

\begin{lemma}
We can preprocess an SLP $\G$ of size $n$, generating a string $S$ of length $N$, in $\cO(n \log N)$ time, so that
$\ipmOp{P}{T}$ queries for fragments $P$ and $T$ of $S$ can be answered in $\cO(\log^2 N \log\log N)$ time.\lipicsEnd
\end{lemma}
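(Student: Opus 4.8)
The plan is to speed up the $\cO(\log^3 N)$-time bound obtained by plugging the $\cO(\log N)$-time \lceOpName/\lcbOpName queries of~\cref{thm:slp_lce} into~\cref{lem:rec_ipm}. The key observation, recorded at the end of~\cref{lem:rec_ipm}, is that the $\cO(\log^2 N)$ \lceOpName and \lcbOpName queries triggered by a single $\ipmOp{P}{T}$ query are highly structured: at each of~the $\cO(\log N)$ levels of~$\Tr_{\R(\G)}$ visited, the queries compare one of~only $\cO(\log N)$ prefixes $P\fragmentco{0}{q}$ (or suffixes $P\fragmentco{q}{m}$), with $q \in L(P)$, against one of~$\cO(1)$ fixed fragments of~$S$. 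Since $L(P)$ is a fixed set of~$\cO(\log N)$ lengths determined once per query, all the relevant query arguments from the $P$ side come from the families $\{P\fragmentco{0}{q} : q\in L(P)\}$ and $\{P\fragmentco{q}{m} : q\in L(P)\}$, while the $S$-side arguments are $\cO(\log N)$ fragments of~$S$ in total (a constant number per level).

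First I would recall (or re-derive from the recompression data structure of~\cite{I17}, which is what underlies~\cref{thm:slp_lce}) that a length-$\ell$ \lceOpName query between two suffixes of~$S$ can be answered by a \emph{weighted ancestor}/\emph{level-ancestor-style} descent in the recompression grammar that, after $\cO(\log N)$-time preprocessing \emph{of~the query pattern}, supports subsequent comparisons against fragments of~$S$ faster than from scratch. Concretely, the plan is: given the $\ipmOp{P}{T}$ query, first spend $\cO(\log N\log\log N)$ time building an auxiliary structure over $P$ (its popped sequence, the set $L(P)$, and, for each $q\in L(P)$, a pointer into $\Tr_{\R(\G)}$ locating the node sequence realizing $P\fragmentco{0}{q}$ and $P\fragmentco{q}{m}$); then each individual \lceOpName or \lcbOpName query of~the form $\lceOp{P\fragmentco{q}{m}}{T_r}$ can be resolved in $\cO(\log\log N)$ amortized time by walking the two pre-located chains of~$\cO(\log N)$ runs in lockstep and using a precomputed fingerprint/comparison table to find the first mismatching run, then binary-searching inside that run (which is a power of~a single symbol, so an $\cO(\log\log N)$ predecessor query on block boundaries suffices). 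Over the $\cO(\log^2 N)$ queries of~one $\ipmOp{P}{T}$ call this gives $\cO(\log^2 N\log\log N)$ total, plus the $\cO(\log N\log\log N)$ per-query preprocessing, plus the $\cO(n\log N)$ one-time recompression of~$\G$ from~\cite{talg/Jez15,I17}.

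The steps, in order: (1) recompress $\G$ to $\R(\G)$ and build I's LCE structure and the directed acyclic graph $H_{\R(\G)}$ in $\cO(n\log N)$ time; (2) augment $H_{\R(\G)}$ with the auxiliary data (block-boundary predecessor structures per power node, symbol-comparison tables) needed to make a single run-by-run comparison between two pre-located chains cost $\cO(\log\log N)$; (3) given a query, compute the popped sequence of~$P$, the set $L(P)$, and locate the $\cO(\log N)$ relevant prefix/suffix chains of~$P$ in $\cO(\log N\log\log N)$ time; (4) run the algorithm of~\cref{lem:rec_ipm}, answering each of~its $\cO(\log^2 N)$ \lceOpName/\lcbOpName queries using the pre-located chains in $\cO(\log\log N)$ time each; (5) postprocess the $\cO(\log^2 N)$ reported single occurrences and the $\cO(1)$ arithmetic progressions per level into a single arithmetic progression with difference $\per(P)$ in $\cO(\log^2 N)$ time. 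I expect the main obstacle to be step (2)–(4): one must argue that the chains realizing $P\fragmentco{0}{q}$ for the various $q\in L(P)$ can be maintained together (they are nested, so they share long prefixes of~nodes, which is exactly what keeps the per-query cost at $\cO(\log\log N)$ rather than $\cO(\log N)$), and that comparing a chain against a fragment $T_\ell$ or $T_r$ of~$S$ reduces to $\cO(1)$ weighted-ancestor queries plus one $\cO(\log\log N)$ predecessor search inside the single mismatching run; getting the amortization right so that the $\log\log N$ — rather than $\log N$ — factor survives across all $\cO(\log^2 N)$ queries is the delicate part, and it is essentially a careful reimplementation of the innermost comparison primitive of~\cite{I17} tuned to the fact that all queries share the pattern $P$.
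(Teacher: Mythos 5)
You have correctly isolated the crucial structural fact from \cref{lem:rec_ipm} — that all $\cO(\log^2 N)$ \lceOpName/\lcbOpName queries involve only $\cO(\log N)$ fragments on the $P$ side and $\cO(\log N)$ fragments on the $S$ side — but the way you exploit it contains a genuine gap. The load-bearing claim in your steps (2)--(4) is that a single \lceOpName query between a pre-located chain for $P\fragmentco{q}{m}$ and a fragment $T_\ell$ or $T_r$ of~$S$ can be answered in $\cO(\log\log N)$ amortized time. As described, this does not hold: walking two popped sequences of $\cO(\log N)$ runs ``in lockstep'' costs $\Theta(\log N)$ per comparison in the worst case, and the ``precomputed fingerprint/comparison table'' that is supposed to jump to the first mismatching run is never constructed — the $S$-side fragments vary from query to query and are arbitrary fragments of a length-$N$ string, so one cannot precompute comparisons against them within the stated $\cO(n\log N)$ preprocessing, and computing a fingerprint of an arbitrary fragment of an SLP-compressed string itself takes $\Omega(\log N)$ time with the available machinery. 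The nestedness of the $P$-chains for $q\in L(P)$ only amortizes work on the $P$ side; it says nothing about the $\cO(\log N)$ distinct $S$-side arguments, each of which must still be compared afresh. In effect you are trying to reimplement the innermost LCE primitive of~\cite{I17} with a $\log N$-to-$\log\log N$ speedup, which is a substantially harder problem than the lemma requires and is not achieved by the sketch.

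The paper exploits the same observation in a much simpler way, without ever needing a sublogarithmic individual LCE query: collect the $\cO(\log N)$ fragments in scope (both sides), sort them lexicographically using $\cO(\log N\log\log N)$ comparisons, each comparison implemented in $\cO(\log N)$ time by one \lceOpName query of~\cref{thm:slp_lce} followed by two \accOpName operations — this is the $\cO(\log^2 N\log\log N)$ term; then build the array of longest common prefixes of consecutive elements in sorted order and preprocess it for constant-time range-minimum queries~\cite{Bender2000}. After that, every one of the $\cO(\log^2 N)$ queries generated by \cref{lem:rec_ipm} is answered in $\cO(1)$ time as a range minimum between the two fragments' positions in the sorted order. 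If you want to keep your architecture, you would need to either supply a concrete $\cO(\log\log N)$-time comparison primitive (which is not known in this setting) or switch to the sort-plus-LCP-array-plus-RMQ batching, which is where the $\log\log N$ factor in the statement actually comes from.
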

\begin{proof}
We first build, in $\cO(n \log(N/n))$ time, the data structures of~\cref{thm:slp_access,thm:slp_lce} for performing \accOpName,
\lceOpName, and \lcbOpName operations in $\cO(\log N)$ time.
Then, we recompress $\G$ to an RLSLP $\R(\G)$ in $\cO(n \log N)$ time.

\cref{lem:rec_ipm} then reduces the task at hand to answering $\cO(\log^2 N)$ \lceOpName and \lcbOpName queries between pairs consisting in one of
$\cO(\log N)$ fragments of $P$ and one of $\cO(\log N)$ fragments of $S$.
Let us focus on efficiently answering all such \lceOpName queries; \lcbOpName queries can be answered analogously.

We sort all fragments in scope using $\cO(\log N \log \log N)$ comparisons, implementing each comparison
in $\cO(\log N)$ time using an \lceOpName operation, followed by two \accOpName operations.
This step thus takes $\cO(\log^2 N \log\log N)$ time.
Then, we construct an array $A$ of size $\cO(\log N)$ such that $A\position{i}$ stores the length of the longest
common prefix of the $i$-th and the $(i+1)$-st elements in our sorted list.
After preprocessing array $A$ in $\cO(\log N)$ time so that range minimum queries over it can be answered in constant time~\cite{Bender2000},
we answer each of the $\cO(\log^2 N)$ \lceOpName queries in $\cO(1)$ time.
\end{proof}

In total, we have thus proved the following result.

\begin{theorem}\label{thm:pilgc}
    Given a collection of~SLPs of~total size $n$, generating strings of~total length $N$,
    each \modelname operation can be performed in $\cO(\log^2 N \log\log N)$ time
    after an $\cO(n \log N)$-time preprocessing.\lipicsEnd
\end{theorem}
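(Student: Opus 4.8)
\textbf{Proof plan for \cref{thm:pilgc}.}
The plan is to implement each \modelname primitive on top of two layers of machinery: the SLP-based access/LCE structures and the recompression-based RLSLP. First I would establish a global setup phase: given the collection of SLPs of total size $n$ generating strings of total length $N$, I would concatenate them into a single SLP (using the $\Oh(n)$-time concatenation described above, preceded by a global renaming) so that all fragments in scope become fragments of one string $S$ of length $\Oh(N)$ and the SLP has size $\Oh(n)$. Then I would run three preprocessing routines on this SLP: (i) the dynamic-programming computation of $|\gen(\G)|$ for all symbols, in $\Oh(n)$ time, which makes $\lenOpName$ and the construction of the initial handles trivial; (ii) the structure of \cref{thm:slp_access} for character access, in $\Oh(n\log(N/n))$ time; and (iii) the structure of \cref{thm:slp_lce} for longest-common-prefix (and, on the reversed string, longest-common-suffix) queries of suffixes, in $\Oh(n\log(N/n))$ time. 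After this, $\accOpName$, $\lceOpName$, and $\lcbOpName$ each cost $\Oh(\log N)$ time, and $\extractOpName$ is $\Oh(1)$ since a handle is just a pointer plus two indices. The total preprocessing so far is $\Oh(n\log N)$, within budget.

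Next I would handle $\ipmOpName$, which is the only nontrivial primitive. Here I would invoke the recompression technique: recompress the (concatenated) SLP $\G$ into the RLSLP $\R(\G)$ in $\Oh(n\log N)$ time, building $H_{\R(\G)}$, whose parse tree has depth $\Oh(\log N)$ and supports popped-sequence queries via \cref{thm:recomp}. Then \cref{lem:rec_ipm} reduces a single $\ipmOp{P}{T}$ query to $\Oh(\log^2 N)$ $\lceOpName$/$\lcbOpName$ queries, and crucially these queries are all between one of $\Oh(\log N)$ fragments of $P$ and one of $\Oh(\log N)$ fragments of $S$. Naively answering each in $\Oh(\log N)$ time would give $\Oh(\log^3 N)$; to shave a logarithmic factor down to $\Oh(\log^2 N\log\log N)$, I would exploit this batched structure exactly as in the lemma preceding the theorem: sort the $\Oh(\log N)$ relevant prefixes of $P$ using $\Oh(\log N\log\log N)$ pairwise comparisons (each comparison an $\lceOpName$ plus two $\accOpName$ calls, hence $\Oh(\log N)$ time, for $\Oh(\log^2 N\log\log N)$ total), build the adjacent-LCP array $A$ over the sorted list, preprocess $A$ for constant-time range-minimum queries in $\Oh(\log N)$ time via \cite{Bender2000}, and then answer each of the $\Oh(\log^2 N)$ LCE queries against fragments of $S$ in $\Oh(1)$ time after locating where the $S$-fragment sits in the sorted order (again via $\Oh(\log N)$-time comparisons, which is absorbed in the bound). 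The symmetric argument handles $\lcbOpName$ queries on the reversed string. This gives $\ipmOpName$ in $\Oh(\log^2 N\log\log N)$ time.

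Assembling the pieces: after the combined $\Oh(n\log N)$-time preprocessing (the recompression step dominates), each of $\extractOpName$, $\accOpName$, $\lenOpName$, $\lceOpName$, $\lcbOpName$ runs in $\Oh(\log N)\subseteq\Oh(\log^2 N\log\log N)$ time, and $\ipmOpName$ runs in $\Oh(\log^2 N\log\log N)$ time, so every \modelname operation meets the claimed bound. The main obstacle I anticipate is not any single primitive but the bookkeeping around $\ipmOpName$: one must carefully verify that \cref{lem:rec_ipm}'s reduction really does restrict to $\Oh(\log N)$ candidate fragments on each side (so that the sorting trick applies), and that the post-processing which collapses $\Oh(\log^2 N)$ partial occurrences and one arithmetic progression into a single arithmetic progression with difference $\per(P)$ costs only $\Oh(\log^2 N)$ time — both of these are already asserted in the cited results, so the theorem follows by combining them, and the write-up is essentially a matter of stating the preprocessing pipeline and quoting \cref{thm:slp_access,thm:slp_lce,thm:recomp,lem:rec_ipm} together with the batched-LCE lemma.
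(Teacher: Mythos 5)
Your proposal is correct and follows essentially the same route as the paper: the same preprocessing pipeline (length DP, the access structure of \cref{thm:slp_access}, I's LCE structure of \cref{thm:slp_lce} on the concatenation, and recompression to an RLSLP for \ipmOpName via \cref{lem:rec_ipm}), together with the same batched trick of sorting the $\Oh(\log N)$ relevant fragments with $\Oh(\log N\log\log N)$ comparisons and answering the $\Oh(\log^2 N)$ LCE/LCP$^R$ queries in $\Oh(1)$ time via an adjacent-LCP array with range-minimum queries. Your variant of sorting only the $P$-side fragments and locating the $S$-side fragments by comparisons is a cosmetic difference that yields the same $\Oh(\log^2 N\log\log N)$ bound.
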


In the next subsection (cf.~\cref{rem:rand_pilgc}), we discuss how to perform each \modelname operation in $\cO(\log^2 N)$
after an $\cO(n \log N)$-time preprocessing at the cost of randomization.

\paragraph*{Approximate Pattern Matching in Fully Compressed Strings}

We are now ready to present efficient algorithms for approximate pattern matching in the
fully compressed setting.
We choose to state our results using our deterministic implementation of the \modelname model, that is~\cref{thm:pilgc}.

We are given an SLP $\G_T$ of~size $n$ with $T=\gen(\G_T)$, an SLP $\G_P$ of~size $m$ with
$P=\gen(\G_T)$ and a threshold $k$ and are required to compute
the $k$-mismatch or $k$-error occurrences of~$P$ in $T$.

Set $N:=|T|+|P|$ and $\X := \{\G_T, \G_P\}$.
The overall structure of~our algorithm is as follows:
We first preprocess the collection $\X$ in $\cO((n+m) \log N)$ time according to~\cref{thm:pilgc}.
Next, we traverse~$\G_T$ and compute for every non-terminal~$A$ of~$\G_T$ the approximate
occurrences of~$P$ in $T$ that ``cross'' $A$. Depending on the setting,  we combine
\cref{thm:pilgc} with \cref{thm:hdalg} or \cref{thm:edalgI} to compute the occurrences.
Finally, we combine the computed occurrences using dynamic programming.

Formally, for each non-terminal $A \in N_{\G_T}$, with production rule $A \to BC$, we
wish to compute all approximate occurrences of~$P$ in the string
\[\gen(B)\fragmentco{|\gen(B)|-|P|+1}{|\gen(B)|}\gen(C)\fragmentco{0}{|P|-1},\] which
is indeed a fragment of~$\gen(\G_T)$ and is of~length $2|P|-2$.
These approximate occurrences can be computed in time:
\begin{itemize}
    \item $\cO(k^2 \log^2 N \log\log N)$ in the Hamming distance case by combining~\cref{thm:pilgc,thm:hdalg};
    \item $\cO(k^4 \log^2 N \log\log N)$ in the edit distance case, by combining~\cref{thm:pilgc,thm:edalgI}.
\end{itemize}
Other approximate occurrences in $\gen(A)$ lie entirely in $\gen(B)$ or $\gen(C)$; hence
they are computed when considering $B$ and $C$. (Compare \cite[Theorem~4.1]{bkw19} for
a similar algorithm.)

Now, the number of~approximate occurrences of~$P$ in $T$ (that is $|\Occ_k(P, T)|$ or
$|\OccE_k(P, T)|$) can be computed by dynamic programming that is analogous to the dynamic
programming used to compute the length of~a string generated by an SLP.
Further, all approximate occurrences can be reported in time proportional to their number by
performing a traversal of~$\Tr_\G$, avoiding to explore subtrees that correspond to
fragments of~$T$ that do not contain any approximate occurrences.

We hence obtain the following algorithm for pattern matching with
mismatches in the fully compressed setting.
\gchdalgmain*

Similarly, we obtain the following algorithm for pattern
matching with edits in the fully compressed setting.
\gcedalgmain*

\subsection{Implementing the \modelname Model in the Dynamic Setting}

Lastly, we consider the dynamic setting. In particular, we consider the dynamic
maintenance of a collection of non-empty persistent strings $\X$ that is initially empty
and undergoes updates specified by the following operations:
\begin{itemize}
    \item $\makestring(U)$: Insert a non-empty string $U$ to $\X$.
    \item $\concat(U,V)$: Insert $UV$ to $\X$, for $U,V \in \X$.
    \item $\splitOp(U,i)$: Insert $U\fragmentco{0}{i}$ and $U\fragmentco{i}{|U|}$ in $\X$,
        for $U \in \X$ and $i \in \fragmentco{0}{|U|}$.
\end{itemize}

Let $N$ denote an upper bound on the total length of~all strings in $\X$
throughout the execution of~the algorithm.
Gawrychowski et al.~\cite{ods} presented a data structure that efficiently maintains such
a collection and allows for efficient longest common prefix queries.

\begin{theorem}[\cite{ods}]\label{thm:ods}
    A collection $\X$ of non-empty persistent strings of~total length $N$ can be
    dynamically maintained with update operations $\makestring(U)$, $\concat(U,V)$,
    $\splitOp(U,i)$ requiring time $\cO(\log N +|U|)$, $\cO(\log N)$, and $\cO(\log
    N)$,\footnote{These running times hold w.h.p.}
    respectively, so that $\lceOp{U}{V}$ queries for $U,V \in \X$ can be answered in time
    $\cO(1)$.\lipicsEnd
\end{theorem}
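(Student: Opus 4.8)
This is the theorem of Gawrychowski et al.~\cite{ods}, so the plan is to outline the line of argument that establishes it rather than to reprove it in full. The central idea is to represent every string $S$ currently in $\X$ by a \emph{signature DAG}: a grammar-like hierarchical encoding produced by a fixed \emph{deterministic, locally consistent} parsing, namely a recompression-style scheme that alternates run-length compression of maximal powers with pair compression driven by a deterministic two-colouring of the current alphabet. Each round shrinks the string by a constant factor, so the encoding of $S$ has $\Oh(\log|S|)$ levels, and—crucially—because the parsing is a deterministic function of the string, equal strings are assigned identical signatures. To make this consistent across the whole collection, one maintains a single universe of symbols together with a dictionary mapping each production (a power $A\to B^p$ or a concatenation $A\to BC$) to the unique symbol that realises it; new symbols are created bottom-up only when the dictionary has no matching entry. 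This dictionary is the only randomised ingredient: implemented as a dynamic hash table it answers lookups and insertions in $\Oh(1)$ time with high probability, and it is exactly what makes equality of two strings in $\X$ testable in $\Oh(1)$ time by comparing their top-level symbols.

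For the update operations the key quantitative property is \emph{local consistency}: the parse of a string and the parse of any of its substrings agree except within an $\Oh(1)$-symbol window around each endpoint, at every level. Hence $\makestring(U)$ parses $U$ from scratch; since the levels shrink geometrically the parsing work telescopes to $\Oh(|U|)$, and an additional $\Oh(\log N)$ suffices to wire $U$ into the auxiliary navigation structures, giving the claimed $\Oh(|U|+\log N)$ bound. For $\concat(U,V)$ and $\splitOp(U,i)$, local consistency implies that the signature DAG of the result coincides with (persistent copies of) pieces of the DAGs of $U$ and $V$ everywhere except in an $\Oh(1)$-symbol neighbourhood of the junction—respectively of position $i$—on each of the $\Oh(\log N)$ levels; re-parsing only those neighbourhoods, bottom-up, costs $\Oh(1)$ per level and $\Oh(\log N)$ in total. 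Persistence is automatic, since symbols are never modified once created, so the arguments of $\concat$ and $\splitOp$ survive untouched.

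The remaining point—and the one I expect to be the main obstacle—is answering $\lceOp{U}{V}$ in $\Oh(1)$ time. The natural approach is a lockstep descent of the two signature DAGs from their roots: by local consistency, if $U$ and $V$ share a prefix of length $\ell$, then at each level the symbol sequences spelling out that prefix agree up to an $\Oh(1)$ boundary, so the first level at which the two left spines diverge pinpoints $\lcp(U,V)$ up to an $\Oh(1)$ additive error, after which the exact value is recovered with a constant number of character comparisons. Doing this naively costs $\Oh(\log N)$ per query; collapsing it to $\Oh(1)$ is the technical heart of \cite{ods}, and it is obtained by maintaining, alongside the signature DAG, enough precomputed navigation data on the spines (level-ancestor/weighted-ancestor–type information together with the generated length of each symbol) so that the divergence level and the offending position are located by a constant number of dictionary and table lookups instead of by walking the levels. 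Once those auxiliary structures are pinned down, correctness of $\lceOp{\cdot}{\cdot}$ follows from local consistency, and the update bounds stated above are preserved because each update touches only $\Oh(\log N)$ symbols; for the full construction and the amortisation details I would refer to \cite{ods}.
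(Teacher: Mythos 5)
This statement is quoted from Gawrychowski et al.~\cite{ods}; the paper you are working from offers no proof of it (it is used as a black box), so there is nothing internal to compare against, and your text is, as you acknowledge, an outline of the cited work rather than a proof. At the level of architecture your summary matches the known construction: a grammar-like signature representation built level by level by alternating run-length compression with a pairing step, a global dictionary mapping productions to symbols so that equal strings get equal top symbols, local consistency confining the effect of $\concat$ and $\splitOp$ to $\Oh(1)$ symbols per level, and extra navigation data to support fast $\lceOpName$ queries. Deferring the $\Oh(1)$-time LCP machinery to \cite{ods} is defensible given that the theorem is itself a citation, but it does mean the technical heart is not argued.

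There is, however, one substantive inaccuracy. You claim the parsing is \emph{deterministic} (a ``deterministic two-colouring'' driving pair compression) and that the hash-table dictionary is the only randomized ingredient. In \cite{ods} the pairing itself is randomized, and this is exactly why the update times and the $\Oh(\log N)$ grammar depth hold only with high probability --- see the footnote in the statement, the paper's remark that the structure is Las-Vegas randomized with w.h.p.\ update times, and the later observation that the maintained RLSLP has depth $\Oh(\log N)$ only w.h.p. A deterministic two-colouring that guarantees constant-factor shrinkage (as in Je\.z-style recompression) is chosen as a global function of the current string, so a local edit can change the colouring far away; such a scheme is not locally consistent, and the $\Oh(1)$-per-level repair argument you invoke for $\concat$ and $\splitOp$ would not go through with it. So either the parsing must be randomized (as in \cite{ods}), or one pays extra logarithmic factors with deterministic alternatives, as the paper itself notes; your sketch as written conflates these two regimes.
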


\lceOpName and \lcbOpName operations for arbitrary fragments of elements of $\X$ can be answered in $\cO(\log N)$ time (w.h.p.)
by first performing a constant number of \splitOp{} operations to add the corresponding fragments to the collection
and then asking an \lceOpName query between them.

The lengths of~the strings in $\X$ can be maintained explicitly.
Upon a $\makestring$ operation we naively compute the length of~$U$, while upon a
$\concat$ or a $\splitOp$ operation, we can compute the lengths of~the strings that are
inserted in $\X$ in constant time from the arguments of~the operation.

For each string of the collection $\X$, the data structure of~\cite{ods} maintains
an RLSLP stemming from recompression that is of~depth $\cO(\log N)$ w.h.p.
Given a string $X \in \X$, a pointer to the root of~the parse tree of~$X$ can be retrieved
in $\cO(1)$ time.

Each $\extractOpName(X,\ell,r)$ operation can be performed using at most two $\splitOp$ operations in $\cO(\log N)$ time w.h.p.

We now show that $\accOpName(X,i)$ for $X \in \X$ can be performed efficiently.
Although the parse trees of the strings in $\X$ are not maintained explicitly, given a pointer to some node $v$
in the parse tree of~$X$, we can retrieve in $\cO(1)$ time the endpoints $a,b$ of~the fragment
$\val(v)=X\fragment{a}{b}$, the degree of~$v$, a pointer to the parent of~$v$, and
a pointer to the $j$-th child of~$v$, provided that such a child exists.
Thus, an $\accOpName(X,i)$ operation can be implemented in time proportional to the height
of~the parse tree, that is, in $\cO(\log N)$ time w.h.p.

We are left with showing that \ipmOpName operations can be performed efficiently.
Let us remark that the RLSLPs of~all $X \in \X$ maintained by the data structure underlying~\cref{thm:ods}
are locally consistent with each other, that
is~\cref{thm:recomp} is also true for fragments of~different strings $X_1, X_2 \in \X$.
Thus, \cref{lem:rec_sync,lem:rec_ipm} also hold in this setting.
Combining~\cref{lem:rec_ipm,thm:ods} we get that $\ipmOp{P}{T}$ queries can be answered in
$\cO(\log^2 N)$ time (w.h.p.) by performing $\cO(\log N)$ \splitOp{} operations and $\cO(\log^2 N)$
\lceOpName queries (cf.~the last statement of~\cref{lem:rec_ipm}).
Note that the only other component of the proof of~\cref{lem:rec_ipm} is a forward search from the root of the
relevant parse tree,
which can be efficiently performed given the available pointers.

We summarize the above discussion in the following theorem.

\begin{restatable}{theorem}{thmpildyn}\label{thm:pildyn}
    A collection $\X$ of non-empty persistent strings of~total length $N$ can be
    dynamically maintained with operations $\makestring(U)$, $\concat(U,V)$,
    $\splitOp(U,i)$ requiring time $\cO(\log N +|U|)$,
    $\cO(\log N)$ and $\cO(\log N)$, respectively, so that \modelname operations can
    be performed in time $\Oh(\log^2 N)$.\footnote{All running time bounds hold w.h.p.}\ifx\thmpildynt\undefined\lipicsEnd\fi
\end{restatable}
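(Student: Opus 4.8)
The plan is to assemble \cref{thm:pildyn} from the components already discussed in the text preceding its statement, treating the theorem as essentially a bookkeeping result that collects these pieces. First I would recall that \cref{thm:ods} already gives us the update operations $\makestring$, $\concat$, $\splitOp$ with the claimed time bounds, together with $\Oh(1)$-time $\lceOp{U}{V}$ queries for full strings in $\X$; the only remaining work is to implement the six \modelname operations within the claimed $\Oh(\log^2 N)$ bound (w.h.p.). I would then go through them one by one, exactly in the order the excerpt does: $\lenOpName$ is trivial since the lengths are maintained explicitly (computed naively on $\makestring$, in $\Oh(1)$ from the arguments on $\concat$ and $\splitOp$); $\extractOpName$ reduces to at most two $\splitOp$ operations, hence $\Oh(\log N)$; $\lceOpName$ and $\lcbOpName$ for arbitrary fragments follow by first adding the relevant fragments to the collection via a constant number of $\splitOp$ operations and then asking a full-string \lceOpName query, giving $\Oh(\log N)$ each (the reverse case being symmetric); $\accOpName$ is performed by a top-down walk in the parse tree of the relevant $X$, whose height is $\Oh(\log N)$ w.h.p.\ since the data structure of~\cite{ods} maintains a recompression-based RLSLP of that depth, and each step uses only the $\Oh(1)$-time navigation primitives (endpoints of $\val(v)$, degree, parent pointer, child pointer).

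The heart of the proof is the implementation of \ipmOpName, which I would handle by invoking \cref{lem:rec_ipm}. The key observation — already flagged in the excerpt — is that the RLSLPs maintained by \cref{thm:ods} for all strings in $\X$ are mutually locally consistent, so \cref{thm:recomp}, and consequently \cref{lem:rec_sync} and \cref{lem:rec_ipm}, apply verbatim to fragments of (possibly different) strings in $\X$. By \cref{lem:rec_ipm}, answering $\ipmOp{P}{T}$ reduces to a forward search from the root of the parse tree of the string containing $T$ — feasible with the available $\Oh(1)$-time pointers — together with $\Oh(\log^2 N)$ \lceOpName and \lcbOpName queries between $\Oh(\log N)$ fragments of $P$ and $\Oh(\log N)$ fragments of $T$. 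Each such query costs $\Oh(\log N)$ in this setting (after a constant number of $\splitOp$ operations to register the fragments, amortized or just absorbed into the $\log$ factor), for a total of $\Oh(\log^3 N)$; alternatively, as in the fully compressed case one could sort the $\Oh(\log N)$ fragments of $P$ once using $\Oh(\log N\log\log N)$ comparisons and build a range-minimum structure to bring the $\Oh(\log^2 N)$ queries down, but since the theorem only asks for $\Oh(\log^2 N)$ I would instead note that the \splitOp{} calls needed to expose the fragments are bounded by $\Oh(\log N)$ and the \lceOpName queries themselves, once the fragments are in $\X$, take $\Oh(1)$ by \cref{thm:ods} — giving $\Oh(\log^2 N)$ overall. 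I should be careful here: the claimed bound is $\Oh(\log^2 N)$, matching $\Oh(\log N)$ split-induced work per level times $\Oh(\log N)$ levels, so the cleanest accounting is: $\Oh(\log N)$ \splitOp{} operations at $\Oh(\log N)$ each, plus $\Oh(\log^2 N)$ \lceOpName queries at $\Oh(1)$ each.

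The main obstacle I expect is precisely this running-time accounting for \ipmOpName: one must verify that the reduction in \cref{lem:rec_ipm} can be realized so that all the intermediate \lceOpName/\lcbOpName queries are between fragments that can be made available to the $\Oh(1)$-query mechanism of \cref{thm:ods} using only $\Oh(\log N)$ total $\splitOp$ calls, rather than a separate $\splitOp$ per query (which would cost $\Oh(\log^3 N)$). I would resolve this by observing that \cref{lem:rec_ipm} explicitly says the queries involve only $\Oh(\log N)$ distinct fragments of $P$ and $\Oh(\log N)$ distinct fragments of $T$ (namely the $P\fragmentco{0}{q}$ and $P\fragmentco{q}{m}$ for $q\in L(P)$, against $\Oh(1)$ fragments of $S$ per level), so it suffices to register these $\Oh(\log N)$ fragments once via $\Oh(\log N)$ $\splitOp$ operations and then answer every query in $\Oh(1)$. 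Finally I would state that, since the full-string $\lceOpName$ of~\cite{ods} and all the navigation primitives are deterministic-output but have guarantees only w.h.p.\ (the randomization of~\cite{ods} being Las Vegas), all stated running times hold with high probability, and the outputs themselves are always correct; this is exactly the footnote in the theorem statement, and with all six operations implemented the theorem follows.
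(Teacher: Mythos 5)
Your proposal is correct and follows essentially the same route as the paper: updates and full-string \lceOpName{} from \cref{thm:ods}, explicit lengths, \extractOpName{} via at most two \splitOp{} calls, \accOpName{} via a top-down walk in the $\Oh(\log N)$-depth RLSLP, and \ipmOpName{} via \cref{lem:rec_ipm} using the mutual local consistency of the maintained RLSLPs. In particular, your key accounting — registering the $\Oh(\log N)$ distinct fragments singled out by the last statement of \cref{lem:rec_ipm} with $\Oh(\log N)$ \splitOp{} operations and then answering the $\Oh(\log^2 N)$ \lceOpName/\lcbOpName{} queries in $\Oh(1)$ each — is exactly how the paper attains the $\Oh(\log^2 N)$ bound.
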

\def\thmpildynt{1}

\begin{remark}\label{rem:rand_pilgc}
Given an SLP $\G$ of size $n$, generating a string $S$ of size $N$, we can efficiently implement the \modelname operations through
dynamic strings. Let us start with an empty collection $\X$ of dynamic strings.
Using $\cO(n)$ $\makestring(a)$ operations, for $a\in \Sigma$, and $\cO(n)$ $\concat$ operations (one for each non-terminal
of $\G$), we can insert $S$ to $\X$ in $\cO(n \log N)$ time w.h.p.
Then, we can perform each \modelname operation in $\cO(\log^2 N)$ time w.h.p., due to~\cref{thm:pildyn},
thus outperforming~\cref{thm:pilgc} at the cost of randomization.\lipicsEnd
\end{remark}

Combining \cref{thm:pildyn,thm:hdalg,thm:edalgI}, we obtain an algorithm for approximate pattern matching
for dynamic strings.
\dynalgmain*

\clearpage

\bibliographystyle{plainurl}
\bibliography{ms}


\end{document}